\newcommand{\pfstep}[1]{\vspace{.5em} {\it \noindent #1.}}
\newcommand{\beq}{\begin{equation}}
\newcommand{\eeq}{\end{equation}}
\def\f {\frac}
\def\rd {\partial}
\def\ls {\lesssim}
\def\de {\delta}
\def\i {\infty}
\def\alp {\alpha}
\def\bt {\beta}
\def\nab {\nabla}
\newcommand{\ud}{\mathrm{d}}
\newcommand{\ve}{\varepsilon}
\def\kn {k^{\bf [n]}}
\def\kd {k^{(d)}}
\def\wtDn {\widetilde{\mathfrak D}^{\bf [n]}}
\theoremstyle{plain}
\newtheorem{theorem}{Theorem}[section]
\newtheorem{proposition}[theorem]{Proposition}
\newtheorem{lemma}[theorem]{Lemma}
\newtheorem{corollary}[theorem]{Corollary}
\theoremstyle{definition}
\newtheorem{definition}[theorem]{Definition}
\newtheorem{remark}[theorem]{Remark}
\numberwithin{equation}{section}
\begin{document}

\title{Asymptotically Kasner-like singularities}

\author{Grigorios Fournodavlos}
\address{Laboratoire Jacques-Louis Lions, Sorbonne Universit\'e, 75005 Paris, France}
\email{grigorios.fournodavlos@sorbonne-universite.fr}
\author{Jonathan Luk}
\address{Department of Mathematics, Stanford University, Stanford, CA 94305, USA}
\email{jluk@stanford.edu}

\date{}

\maketitle

\begin{abstract}
We prove existence, uniqueness and regularity of solutions to the Einstein vacuum equations taking the form 
\begin{align*}
\begin{split} 
{^{(4)}g} =& -dt^2 + \sum_{i,j = 1}^3 a_{ij}t^{2 p_{\max\{i,j\}}}\, \ud x^i\, \ud x^j
\end{split} 
\end{align*}
on $(0,T]_t \times \mathbb T^3_x$, where $a_{ij}(t,x)$ and $p_i(x)$ are regular functions without symmetry or analyticity assumptions. These metrics are singular and asymptotically Kasner-like as $t\to 0^+$. These solutions are expected to be highly non-generic, and our construction can be viewed as solving a singular initial value problem with Fuchsian-type analysis where the data are posed on the ``singular hypersurface'' $\{ t = 0\}$. This is the first such result without imposing symmetry or analyticity.

To carry out the analysis, we study the problem in a synchronized coordinate system. In particular, we introduce a novel way to perform (weighted) energy estimates in such a coordinate system based on estimating the second fundamental forms of the constant-$t$ hypersurfaces.

\end{abstract}

\parskip = 0 pt

\tableofcontents

\section{Introduction}

The Kasner spacetime $((0,+\infty)\times \mathbb T^3, ^{(4)}g)$, where
\begin{equation}\label{eq:Kasner}
^{(4)}g = -\ud t^2 + \sum_{i=1}^3 t^{2p_i} (\ud x^i)^2
\end{equation}
(with $p_i$ being constants such that $\sum_{i=1}^3 p_i = \sum_{i=1}^3 p_i^2 = 1$) is an explicit solution to the Einstein vacuum equations 
\begin{equation}\label{eq:EVE}
Ric(^{(4)}g)=0.
\end{equation}
As long as all $p_i\neq 0$, the Kasner solution moreover represents a \emph{singularity} as $t\to 0^+$. This is manifested in particular by the blowup of the Kretschmann scalar $R_{\mu\nu\alpha\bt} R^{\mu\nu\alpha\bt}$.

In an influential paper \cite{eLimK63}, Lifshitz--Khalatnikov considered the class of spacetimes solutions to \eqref{eq:EVE} with the form
\begin{equation}\label{eq:AVTD}
^{(4)}g = -\ud t^2 + \sum_{i=1}^3 t^{2p_i} \omega_i^2
\end{equation}
where $\omega_i$ are spatial $1$-forms with a ``finite limit'' as $t\to 0^+$ and $p_i = p_i(x)$ are now \emph{spatially-dependent} functions satisfying $\sum_{i=1}^3 p_i(x) = \sum_{i=1}^3 p_i^2(x) = 1$. The spacetime metrics \eqref{eq:AVTD} are Kasner-like asymptotically as $t\to 0^+$ except that the Kasner exponents are now functions. They are also sometimes called \emph{asymptotically velocity term dominated} (AVTD), a terminology that is used to mean that the asymptotics near the singularity is described by a simpler system of velocity term dominated equations \cite{pCjIvM1990,dEeLrS1972}.  Importantly, it is argued in \cite{eLimK63} that this class of spacetime solutions to \eqref{eq:EVE} depend only on three ``functional degrees of freedom'', which is one fewer than that for the Cauchy problem of \eqref{eq:EVE}, and they are therefore expected to be highly non-generic.

In this paper, we construct a large class of solutions to \eqref{eq:EVE} with the asymptotically Kasner-like behavior of \eqref{eq:AVTD}. Our construction in fact has full three functional degrees of freedom and includes all the spacetimes considered in the heuristics in  \cite{eLimK63} (see~Remark~\ref{rmk:counting}). Some previous constructions are known with either analyticity or symmetry assumptions (see Section~\ref{sec:construction}); our construction is the first without such assumptions. 

More precisely, our goal will be to construct a metric taking the form
\begin{align}\label{metricansatz}
\begin{split} 
{^{(4)}g} :=&\:  -\ud t^2 + g\\
:=&\: -\ud t^2 + \sum_{i,j=1}^3 a_{ij} t^{2p_{\max\{i,j\}}} \ud x^i\, \ud x^j, 
\end{split} 
\end{align}
where $(t,x^1, x^2, x^3)\in (0,T]\times \mathbb T^3$ for some $T>0$, $p_i:\mathbb T^3\to \mathbb R$ are smooth, time-independent functions, and $a_{ij}:(0,T]\times \mathbb T^3\to \mathbb R$ are smooth functions (symmetric in $i$ and $j$) which extend to continuous functions $:[0,T]\times \mathbb T^3\to \mathbb R$. Moreover, $a_{ij}$ obey
\begin{equation}\label{eq:a.limit}
\lim_{t\to 0^+} a_{ij}(t,x) = c_{ij}(x),
\end{equation}
where $c_{ij}$ are some prescribed smooth functions (symmetric in $i$ and $j$).

Notice that in the language of \eqref{eq:AVTD}, the ansatz \eqref{metricansatz} imposes the condition $\omega_1 \wedge \ud \omega_1 = 0$ for $\omega_1 = a_{11}^{\f 12} dx^1$. As we will explain in Remark~\ref{rmk:counting}, this condition is what restricts the functional degrees of freedom in our construction.

We will prove existence, uniqueness and regularity of solutions of the form \eqref{metricansatz}. The following is our main \emph{existence} theorem:
\begin{theorem}[Existence of solution]\label{mainthm}
Suppose the following assumptions hold:
\begin{enumerate}
\item  {The (time-independent) functions $c_{ij},\,p_i:\mathbb T^3\to \mathbb R$ are smooth for $i,j=1,2,3$, and that $c_{ij} = c_{ji}$.}
\item $\sum_{i=1}^3 p_i(x) = \sum_{i=1}^3 p_i^2(x) = 1$ pointwise.
\item It holds that $p_1(x) <p_2(x)< p_3(x)<1$ pointwise.
\item It holds that $c_{11}(x),\,c_{22}(x),\,c_{33}(x) >0$.
\item The following three \underline{asymptotic differential constraint equations} are satisfied:
\begin{align}\label{eq:cij}
\sum_{\ell=1}^3\bigg[\frac{\partial_ic_{\ell\ell}}{c_{\ell\ell}}(p_\ell-p_i)
+2\partial_\ell\kappa_i{}^\ell
+\mathbbm{1}_{\{\ell>i\}}\frac{\partial_\ell(c_{11}c_{22}c_{33})}{c_{11}c_{22}c_{33}}\kappa_i{}^\ell\bigg]=0,\qquad i=1,2,3,
\end{align}
where $\kappa_i{}^i = -p_i$ (without summing), $\kappa_1{ }^2 = (p_1 - p_2 )\f{c_{12}}{c_{22}}$, $\kappa_2{ }^3 = (p_2 - p_3)\f{c_{23}}{c_{33}}$, $\kappa_1{ }^3 = (p_2-p_1)\f{c_{12} c_{23}}{c_{22} c_{33}} + (p_1-p_3)\f{c_{13}}{c_{33}}$ and $\kappa_i{ }^\ell = 0$ if $\ell<i$, $\mathbbm{1}_{\{\ell>i\}}=1$ if $\ell>i$, 
$\mathbbm{1}_{\{\ell>i\}}=0$ if $\ell\leq i$.
\end{enumerate}
Then there is a  {$C^2$} solution to the Einstein vacuum equations  {\eqref{eq:EVE}} of the form \eqref{metricansatz}, for a $T>0$ depending on $c_{ij},p_i$, which satisfies \eqref{eq:a.limit}.
\end{theorem}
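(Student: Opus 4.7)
My plan is to set up the problem as a Fuchsian singular initial value problem in synchronized (Gaussian) coordinates and close it via a weighted energy estimate built from the second fundamental form. I would introduce $K_{ij} := -\tfrac{1}{2}\partial_t g_{ij}$ and $K_i{}^j := g^{jk}K_{ik}$ as the basic dynamical variables alongside $g_{ij}$. In these variables the Einstein vacuum equations decompose into (i) the ADM evolution system $\partial_t g_{ij} = -2K_{ij}$, $\partial_t K_{ij} = -\mathrm{tr}(K)K_{ij} + 2K_{ik}K_j{}^k + R_{ij}[g]$, and (ii) the Hamiltonian and momentum constraints on each slice $\{t = \text{const}\}$. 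By the twice-contracted Bianchi identity, the constraints propagate once they hold at any single time, so it suffices to arrange them at the singular hypersurface $t=0^+$ and then solve the evolution system forward from $t=0^+$.

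Next I would identify the Kasner-like leading behavior: the ansatz \eqref{metricansatz} forces $K_i{}^j = t^{-1}\kappa_i{}^j(x) + o(t^{-1})$, where $\kappa_i{}^j$ is the upper-triangular tensor defined in the statement (this can be verified by a direct block computation of $g^{jk}\partial_t g_{ik}$ using the $\max\{i,j\}$ convention and $p_1<p_2<p_3$). Renormalizing with $A_{ij} := a_{ij} - c_{ij}$ and $B_i{}^j := t K_i{}^j - \kappa_i{}^j$, the evolution system becomes a Fuchsian system
\[
t\partial_t U + M(x)U = t\,\mathcal{N}(U, t\partial_x U, t),
\]
whose indicial matrix $M(x)$ has eigenvalues given by differences $p_j-p_i$ (and $1-p_j+p_i$, $2p_j$ etc. in the various blocks). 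The assumption $p_1<p_2<p_3<1$ puts these eigenvalues on the correct side to allow a solution attaining the prescribed data $c_{ij}$ at $t=0^+$. I would verify, by computing the $t^{-2}$ coefficient of the Hamiltonian constraint and the $t^{-1}$ coefficient of the momentum constraints, that the cancellation conditions reduce precisely to $\sum p_i = \sum p_i^2 = 1$ and to \eqref{eq:cij}; this is the reason these conditions appear as hypotheses.

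I would then construct the solution by a Picard-type iteration on a shrinking time interval $(0,T]$. Each iterate solves a linear Fuchsian problem with data posed at $t=0^+$ (essentially $a_{ij}(0,x)=c_{ij}(x)$ and the leading order of $K_i{}^j$ prescribed by $\kappa$). The convergence, uniform smoothness, and attainment of the limit \eqref{eq:a.limit} all hinge on a weighted high-order energy estimate. Following the paper's suggested approach, I would build the energy out of $K_i{}^j$ and its spatial derivatives with block-dependent weights (of the schematic form $t^{2(p_j-p_i)}$ or $t^{-2p_i}$ on the $(i,j)$ component), chosen precisely so that differentiating the energy and substituting the ADM evolution equation yields: a good boundary term as $t\to 0^+$, spatial integration-by-parts producing controllable commutator terms, and error terms that are $O(t^\delta)$ relative to the energy for some $\delta>0$. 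Constraint propagation is then a standard Bianchi/Grönwall argument applied to the linearized constraint quantities, again with the same type of weights.

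The main obstacle will be the weighted energy estimate itself. Synchronized coordinates have no natural hyperbolic structure, so the energy must be manufactured directly from the extrinsic curvature; the coexistence of off-diagonal couplings $\kappa_1{}^2,\kappa_2{}^3,\kappa_1{}^3$, the mixed $t$-weights dictated by the ordering $p_1<p_2<p_3$, and the need to avoid derivative loss (no analyticity is assumed) make the choice of weights and the commutator bookkeeping delicate. The constraint $p_3<1$ is what keeps the Fuchsian eigenvalues away from the resonant threshold, and getting all weights to align simultaneously with the constraint-propagation estimate — producing a single energy that is coercive, finite at $t=0^+$, and closes by Grönwall — is the core technical difficulty of the proof.
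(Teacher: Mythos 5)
Your high-level Fuchsian framing (prescribe leading Kasner data, renormalize, exploit $p_1<p_2<p_3<1$, close with weighted energies) matches the paper's strategy in spirit, but two of your central steps would fail as stated. First, the energy estimate: you propose to build a weighted energy out of $K_i{}^j$ and its spatial derivatives and close it by substituting the \emph{first-order} ADM evolution system. In synchronized coordinates that system loses derivatives: $\partial_t K$ contains $Ric(g)$, i.e.\ two spatial derivatives of $g$, while $\partial_t g=-2Kg$ gains none back, and no choice of $t$-weights repairs a loss that is in the spatial-derivative count rather than in the time weights. The paper's key idea is structurally different: differentiate the evolution equation in $t$ and use the variation-of-Ricci formula to obtain a genuine wave equation for $k_i{}^j$ (see \eqref{eq:k.wave}), promote $h=k_\ell{}^\ell$ to an independent unknown with $\partial_t h=|k|^2$ as in \eqref{eq:hyp.sys}, and remove the residual loss coming from $\nabla_i\nabla^j h$ and from top-order commutators with $g$ by renormalized top-order quantities combined with elliptic estimates (Lemmas~\ref{lem:elliptic}, \ref{lem:remove.modified.2}). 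Relatedly, your Picard iteration "with data posed at $t=0^+$" has no convergence mechanism without analyticity: each linear Fuchsian solve consumes spatial derivatives of the previous iterate. This is exactly why the paper only constructs a \emph{finite-order} approximate solution by ODEs in $t$ (Theorem~\ref{thm:parametrix}), then solves the honest Cauchy problem from positive times $T_{\mathrm{aux}}$ with uniform $t^{-N}$-weighted estimates (the $N/t$ term absorbing the borderline $C_0/t$ terms) and extracts the solution by compactness as $T_{\mathrm{aux}}\to 0^+$.

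Second, constraint propagation is not a "standard Bianchi/Gr\"onwall argument" here. Hypotheses (2) and (5) only give the constraints to \emph{leading order} as $t\to 0^+$ (cf.\ \eqref{asymHamconst}--\eqref{asymmomconst}); there is no hypersurface on which they hold exactly, and the Bianchi propagation system for $G_{ti}$ and $G_i{}^j$ has borderline $O(t^{-1})$ coefficients, so Gr\"onwall from $t=0$ only closes if the Einstein tensor is already known to vanish faster than $t^{C_0}$ for the constant $C_0$ produced by those coefficients. The paper secures this by proving that the approximate solutions satisfy the constraints to arbitrarily high polynomial order (Proposition~\ref{prop:approx.G}, which is where the full iterated construction, not just conditions (2) and (5), is needed), and by recasting the contracted Bianchi identities as a coupled wave-transport system \eqref{eq:G.high.2}, \eqref{eq:wave.eqn.for.G} — again with renormalization and elliptic estimates to avoid derivative loss — rather than as naive transport ODEs. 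A further point your outline skips: in the reduced system one does not know a priori that $k$ is the second fundamental form (symmetry of $g_{\ell j}k_i{}^\ell$ is not automatic), and this must be recovered a posteriori (Lemmas~\ref{lem:eqn.for.anti.k} and \ref{lem:k.is.symmetric}) before the Einstein equations can be concluded.
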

\begin{remark}[\eqref{metricansatz} is a Lorentzian metric]
Notice that under condition (3), the eigenvalues of $g$ as in  \eqref{metricansatz} are approximately $t^{2p_i} c_{ii}$ ($i=1,2,3$) for small $t$. Hence, given $p_i$ as in the theorem and the condition \eqref{eq:a.limit}, it follows that \eqref{metricansatz} is a well-defined \emph{Lorentzian} metric in all of $(0,T_0]\times\mathbb{T}^3$, for some $T_0>0$.
\end{remark}
\begin{remark}[Localizing the assumptions]
For technical convenience, we assume that there is a \emph{global} system of coordinates on $\mathbb T^3$ so that the assumptions of Theorem~\ref{mainthm} hold. One may in principle hope to use a localization argument to construct more general spacetimes for which we require only that around every point in $\mathbb{T}^3$, there is a coordinate patch $(x_1,x_2,x_3)$ such that the assumptions of Theorem~\ref{mainthm} hold. This, however, is not carried out in the present paper.
\end{remark}
\begin{remark}[Asymptotic CMC condition and asymptotic constraints]
The conditions  {(2) and (5) in Theorem~\ref{mainthm}} guarantee that a metric of the form \eqref{metricansatz} satisfies asymptotically, along the level sets of $t$, 1) the constraint equations and 2) the CMC gauge to leading order, as $t\rightarrow 0^+$. More precisely, condition  {(2)} is equivalent to
\begin{align}\label{asymHamconst}
\lim_{t\rightarrow 0^+}t (\text{tr}k) =-1,&&\lim_{t\rightarrow 0^+}t^2[R(g)-|k|^2+(\text{tr}k)^2]=0,
\end{align}
while condition (5) is equivalent to
\begin{align}\label{asymmomconst}
\lim_{t\rightarrow 0^+}t(\nabla_jk_i{}^j-\nabla_i\text{tr}k)=0,&&i=1,2,3;
\end{align}
see Lemma~\ref{lem:asymconst}. Note that condition  {(2)} is algebraic in the Kasner exponents $p_i$'s, while condition  {(5)} is differential in the $c_{ij}$'s. 
\end{remark}
\begin{remark}[Functional degrees of freedom and considerations in \cite{eLimK63}]\label{rmk:counting}
Note that $c_{ij}$ and $p_i$ consist of $9$ functions. On the other hand, the assumptions (2) and (5) in Theorem~\ref{mainthm} impose a total of $5$ conditions, leaving $4$ functional degrees of function.

There is in fact an additional \emph{residual gauge freedom}, namely, we can introduce a change of coordinates
\begin{equation}\label{eq:residual.gauge}
\widetilde{x}^1 = x^1,\quad \widetilde{x}^2 = x^2,\quad \widetilde{x}^3 = f(x^1,x^2,x^3),
\end{equation}
for some smooth $f$ such that $\f{\rd f}{\rd x^3} \neq 0$, then the resulting metric will have the same form as \eqref{metricansatz} (in the sense that the new $\widetilde{g}_{11}$ term is $O(t^{2p_1})$, the new $\widetilde{g}_{12}$, $\widetilde{g}_{22}$ terms are $O(t^{2p_2})$, and the new $\widetilde{g}_{13}, \widetilde{g}_{23}$, $\widetilde{g}_{33}$ terms are $O(t^{2p_3})$.) 

Thus, there are a total of $3$ functional degrees of freedom, which is one fewer than that for the initial value problem for the Einstein vacuum equations. It is for this reason that \cite{eLimK63} argued that metrics of the form \eqref{metricansatz} are \emph{non-generic}.

Notice that while we only construct a non-generic class of spacetimes, we do construct a class that includes all the metrics considered in \cite{eLimK63} (modulo the endpoint case; see Remark~\ref{rmk:endpoint}). Indeed, using the change of coordinates in \eqref{eq:residual.gauge}, one can locally change coordinates to the form
$$g = a_{11} t^{2p_1} (\ud x^1)^2 + a_{22} t^{2p_2} (\ud x^2)^2 + a_{33} t^{2p_3}(\ud x^3)^2 + 2 a_{12} t^{p_2} \ud x^1 \ud x^2 + 2 a_{13} t^{p_3} \ud x^1 \ud x^3,$$
which is exactly the local form of the metrics considered in the work of Lifshitz--Khalatnikov; see \cite[equation (3.25)]{eLimK63}.
\end{remark}
\begin{remark}[Some limiting cases]\label{rmk:endpoint}
 Our analysis degenerates in any of the limits $p_3 \to 1$ or $p_{i+1} - p_i \to 0$ (see (3) in Theorem~\ref{mainthm}). A particularly interesting limiting case that we do not cover is when
$$\{x\in\mathbb{T}^3: p_1(x)=-\frac{1}{3},\;p_2(x)=p_3(x)=\frac{2}{3}\} \neq \emptyset,$$
but still assuming $p_3(x) <1, \, \forall x$.
While we do not cover this case, it is possible that \cite{sK2007} is relevant. Notice that to handle possible terms with $p_2(x) = p_3(x)$, we need a new argument in constructing the approximate solution in Section~\ref{sec:parametrix}, but the analysis in the subsequent sections could in principle be carried out along the same lines.

Finally, we note that allowing $p_1(x)=-\frac{1}{3},\;p_2(x)=p_3(x)=\frac{2}{3}$ would also be relevant to constructing Schwarzschild-like singularities since locally the Schwarzschild singularity could be modeled by the Kasner singularity with $p_1 = -\f 13$, $p_2 = p_3 = \f 23$ (cf.~\cite{Fo2016} and discussions in Section~\ref{sec:construction}).
\end{remark}
 {We now turn to \emph{uniqueness}.} It is hard to talk about geometric uniqueness in the above singular initial value problem, since the setup itself includes the expression \eqref{metricansatz} of the spacetime metric. However, we can obtain uniqueness {\it in our gauge}, i.e.~within the class of metrics satisfying \eqref{metricansatz} . More precisely, we prove that given two solutions of the form \eqref{metricansatz} which (1) obey the estimates \eqref{uniqueness.condition.0} and \eqref{uniqueness.condition.1} which is proven in Theorem~\ref{mainthm} and (2) converge to each other sufficiently fast as $t\to 0^+$, then they must in fact be the same.
\begin{theorem}[Uniqueness of solutions]\label{thm:uniq}
Given the assumptions of Theorem~\ref{mainthm}, there exists $M_u\in \mathbb N$ sufficiently large (depending on the given data $p_i$ and $c_{ij}$) such that the following holds.

Let ${^{(4)}g},{^{(4)}\tilde{g}}$ be two  {$C^3$} solutions to the Einstein vacuum equations \eqref{eq:EVE} of the form \eqref{metricansatz} in $(0,T]\times\mathbb{T}^3$ for some $T>0$, such that
\begin{itemize}
\item the corresponding $a_{ij}$ and $\tilde{a}_{ij}$ converge to $c_{ij}$ with the following rate
\begin{equation}\label{uniqueness.condition.0}
\sum_{|\alp|\leq 2} ( | \rd_x^\alp (a_{ij} - c_{ij})| +  | \rd_x^\alp (\tilde a_{ij} - c_{ij})| ) =O(t^{\ve});
\end{equation}
\item the corresponding $k_{i}{ }^j = -\f 12 (g^{-1})^{j\ell} \rd_t g_{j\ell}$ and $\tilde{k}_i{}^j = -\f 12 (\tilde g^{-1})^{j\ell} \rd_t \tilde g_{j\ell}$ obey the following estimates
\begin{equation}\label{uniqueness.condition.1}
\sum_{r=0}^{1} \sum_{|\alp|\leq 2-r} t^r( |\rd_t^r \rd_x^\alp (k_i{}^j - t^{-1} \kappa_i{}^j)| +  |\rd_t^r \rd_x^\alp (\tilde k_i{}^j - t^{-1} \kappa_i{}^j)| ) =O(\min\{ t^{-1+\ve},t^{-1+\ve-2p_j+2p_i}\});
\end{equation}
 and
\item the $g-\tilde{g}$ and $\rd_t(g-\tilde{g})$ converge to $0$ \underline{sufficiently fast} in the following sense:
\begin{equation}\label{uniqueness.condition.2}
\sum_{r=0}^{1} \sum_{|\alp|\leq  {3}-r} |\rd_t^r \rd_x^\alp (g - \tilde{g})| = O(t^{M_u}).
\end{equation}
\end{itemize}
Here, $\ve  = \min\{\min_x (p_3 - p_2)(x), \min_x (1-p_3)(x) \}>0$, and $\kappa_i{}^j$ as in Theorem~\ref{mainthm}.

Then ${^{(4)}g}={^{(4)}\tilde{g}}$  {on $(0,T]\times \mathbb T^3$.}
\end{theorem}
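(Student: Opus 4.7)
The plan is to derive a linear Fuchsian-type system for the differences $h := g - \tilde g$ and $\ell_i{}^j := k_i{}^j - \tilde k_i{}^j$, and then run a weighted energy estimate in the same spirit as the one used in the existence proof, now read backwards from $t = 0^+$. The strong decay hypothesis \eqref{uniqueness.condition.2} provides the vanishing boundary data that a backward Gronwall argument needs in order to conclude $h \equiv 0$.

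First, I would subtract the ADM first-order evolution equations satisfied by $(g,k)$ and $(\tilde g, \tilde k)$. Since both metrics take the synchronized form \eqref{metricansatz}, the lapse is $1$ and the shift vanishes, so the Einstein vacuum equations \eqref{eq:EVE} reduce to the system
\begin{align*}
\rd_t g_{ij} &= -2\, g_{i\ell}\, k_j{}^\ell,\\
\rd_t k_i{}^j &= -R_i{}^j(g) + (\mathrm{tr}\,k)\, k_i{}^j + \text{lower order},
\end{align*}
augmented by constraints that propagate. Differencing produces a coupled linear system of schematic form
$$\rd_t h = F_1(h, \ell),\qquad \rd_t \ell = F_2(h, \ell, \rd_x h, \rd_x^2 h),$$
whose coefficients carry a singular factor $t^{-1}$ (or $t^{-2}$ paired with $\rd_x^2$) inherited from $k \sim t^{-1} \kappa$ and $R(g) = O(t^{-2})$ in Theorem~\ref{mainthm}, together with remainders of size $O(t^{-1+\ve})$ thanks to \eqref{uniqueness.condition.0}--\eqref{uniqueness.condition.1}.

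Second, following the novel weighted energy approach of the paper, I would set up an $L^2$ energy that rescales each component of $h$ and $\ell$ according to its expected Kasner scaling. Schematically,
\begin{equation*}
E_M(t) := \int_{\mathbb T^3} t^{-2M}\! \sum_{|\alp|\leq 2}\! \bigg( \sum_{i,j} t^{-4 p_{\max\{i,j\}}} |\rd_x^\alp h_{ij}|^2 + \sum_{i,j} w_{ij}(t)\, |\rd_x^\alp \ell_i{}^j|^2 \bigg)\, \ud x,
\end{equation*}
with $w_{ij}(t)$ chosen to match the component-dependent rates in \eqref{uniqueness.condition.1} and $M$ a fixed large integer. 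The anisotropic weights are essential: the upper-triangular structure $\kappa_i{}^\ell = 0$ for $\ell < i$, combined with the strict ordering $p_1 < p_2 < p_3 < 1$, is precisely what allows the a~priori dangerous off-diagonal pieces of $R_i{}^j(g) - R_i{}^j(\tilde g)$ to be absorbed via integration by parts against the correctly scaled conjugate, rather than producing contributions worse than $t^{-1} E_M$.

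Third, differentiating $E_M$ and using the difference system together with the bounds \eqref{uniqueness.condition.0}--\eqref{uniqueness.condition.1} to control every coefficient, one obtains a differential inequality
$$\frac{\ud}{\ud t} E_M(t) \leq \frac{C(M)}{t}\, E_M(t),\qquad t \in (0, T],$$
for some $C(M) > 0$ depending on $M$, the $p_i$ and $c_{ij}$. Choosing $M_u$ larger than $C(M) + 2M$ plus a harmless loss from Sobolev embedding, \eqref{uniqueness.condition.2} forces $E_M(t) \to 0$ as $t \to 0^+$ faster than any prescribed polynomial. Integrating $\tfrac{\ud}{\ud t}\log E_M \leq C(M)/t$ from $t_1 > 0$ to $t$ yields $E_M(t) \leq E_M(t_1) (t/t_1)^{C(M)}$, and sending $t_1 \to 0^+$ gives $E_M \equiv 0$, hence $g \equiv \tilde g$ on $(0, T] \times \mathbb T^3$ after Sobolev embedding.

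The principal obstacle is the second step: identifying the anisotropic weights $w_{ij}(t)$ and the commutator structure so that differencing the spatial Ricci tensor never generates a term worse than $t^{-1} E_M$. This is structurally the same difficulty resolved in the existence proof; however, in the uniqueness setting one is working with an already linear system, and the decay hypothesis \eqref{uniqueness.condition.2} buys back arbitrarily many powers of $t$, so the analysis should be comparatively forgiving. The value of $M_u$ required in the statement is precisely the polynomial loss $C(M)$ in the exponent that this backward Gronwall argument incurs.
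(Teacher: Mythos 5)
Your overall skeleton (difference system, weighted $L^2$ energy, backward Gr\"onwall using the fast-convergence hypothesis \eqref{uniqueness.condition.2} to supply vanishing ``data'' at $t=0^+$) matches the paper, but the step you defer to as ``comparatively forgiving'' is precisely where the argument as proposed breaks down. You difference the \emph{first-order} ADM system $\rd_t g = -2gk$, $\rd_t k = -Ric(g) + \dots$ and claim that a weighted energy containing at most two spatial derivatives of $h=g-\tilde g$ and $\ell = k-\tilde k$ can be closed by ``integration by parts against the correctly scaled conjugate''. In the synchronized gauge this fails for a reason that has nothing to do with the anisotropic $t$-weights: $Ric_i{}^j(g)-Ric_i{}^j(\tilde g)$ contains second spatial derivatives of $h$, including Hessian-of-trace and gradient-of-divergence pieces, which are \emph{not} of the form $-\tfrac12\Delta h$ modulo controllable terms; they cannot be converted into total derivatives by pairing against $\ell$ and integrating by parts. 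Concretely, estimating $\f{\ud}{\ud t}\|\rd_x^\alp \ell\|_{L^2}^2$ for $|\alp|\leq 2$ requires up to four spatial derivatives of $h$, while $\f{\ud}{\ud t}\|\rd_x^\alp h\|_{L^2}^2$ at that higher order requires correspondingly more derivatives of $\ell$; the cascade never closes, and the decay bought by \eqref{uniqueness.condition.2} cannot rescue it because that hypothesis controls only three spatial derivatives of the difference. This derivative loss is exactly the obstruction the paper flags when it notes that in the gauge \eqref{eq:intro.metric.gauge} it is a priori unclear that the metric components satisfy any hyperbolic system, and that \eqref{eq:k.eqn.intro} is not a wave equation because of the $\nab_i\nab^j k_\ell{}^\ell$ term.

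The paper's proof of Theorem~\ref{thm:uniq} avoids this by differencing a different set of equations: since both metrics solve \eqref{eq:EVE}, both second fundamental forms satisfy the second-order wave-type equation \eqref{eq:k.wave} (derived via the variation-of-Ricci formula and the Bianchi/constraint structure), the traces satisfy the transport equation $\rd_t(\mathrm{tr}\,k)=|k|^2$ from \eqref{Rictt}, and $g$, $g^{-1}$ satisfy transport equations; the resulting homogeneous wave--transport system for $(g^{(d)},h^{(d)},k^{(d)})$ is then estimated with the same modified energy, renormalized top-order quantities and elliptic estimates as in Theorem~\ref{thm:main.reduced}, which is what produces the inequality $\f{\ud}{\ud t}\mathcal E_u \leq \f{C}{t}\mathcal E_u$ you need. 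A second, smaller omission: the energy must contain $\rd_t k^{(d)}$, whose smallness as $t\to 0^+$ does not follow directly from \eqref{uniqueness.condition.2}; the paper extracts it from the vanishing of \eqref{Ricij} for both solutions together with the decay of $Ric(g)-Ric(\tilde g)$ (this is why \eqref{uniqueness.condition.2} is assumed up to three spatial derivatives). To repair your proposal you would have to import the paper's wave-equation-for-$k$ formulation (or some other genuinely hyperbolic reformulation in this gauge); the first-order ADM differencing alone does not yield a closable estimate.
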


\begin{remark}[Asymptotics determined by approximate solutions]\label{rmk:uniqueness.1}
In the proof of our existence result (Theorem~\ref{mainthm}), we construct a sequence of smooth \emph{approximate solutions} $\{g^{\bf [n]}\}_{n=0}^{+\infty}$, for which we get more precise asymptotic information, as $t\to 0^+$, as $n$ increases; see already Sections~\ref{sec:ideas}, \ref{sec:parametrix} and \ref{sec:approx.constraints}. The actual solutions that we construct in Theorem~\ref{mainthm} then have asymptotics determined by an approximate solution $g^{\bf [n]}$(for some large $n$). From this point of view, one way to interpret our uniqueness result (Theorem~\ref{thm:uniq}) is to say that for $n$ sufficiently large, there is in fact only one solution whose asymptotics are governed by $g^{\bf [n]}$.
\end{remark}

\begin{remark}[Regularity implies asymptotic expansion]\label{rmk:uniqueness.2}
Given any $M_u \in \mathbb N$, there exists $A\in \mathbb N$ sufficiently large such that if \eqref{uniqueness.condition.0} and \eqref{uniqueness.condition.1} are replaced by the stronger regularity assumptions
\begin{equation}\label{uniqueness.condition.0.alt}
\sum_{|\alp|\leq A} ( |\rd_x^\alp (a_{ij} - c_{ij})| + |\rd_x^\alp (\tilde{a}_{ij} - c_{ij})|) =O(t^{\ve}),
\end{equation}
and
\begin{equation}\label{uniqueness.condition.0.alt.2}
\sum_{r=0}^{1} \sum_{|\alp|\leq A-r} t^r ( |\rd_t^r \rd_x^\alp (k_i{}^j - t^{-1} \kappa_i{}^j)| +  |\rd_t^r \rd_x^\alp (\tilde k_i{}^j - t^{-1} \kappa_i{}^j)| ) =O(\min\{t^{-1+\ve}, t^{-1+\ve-2p_j+2p_i}\}),
\end{equation}
then in fact the convergence condition \eqref{uniqueness.condition.2} \emph{follows as a consequence}. In fact, in this case both $g$ and $\tilde{g}$ have the leading asymptotics given by an approximate solution $g^{\bf [n]}$ for large $n$ (see Remark~\ref{rmk:uniqueness.1}), which then implies \eqref{uniqueness.condition.2}.  {This can be proven by revisiting the argument for constructing the approximate solutions in Theorem~\ref{thm:parametrix}. We omit the details.}
\end{remark}

Finally, we  {state} our main \emph{regularity} theorem.  {We remark that initially our proof of the existence theorem (Theorem~\ref{mainthm}) only constructs a solution with finite regularity. In order to obtain smoothness, we need an additional argument which relies on the uniqueness result (Theorem~\ref{thm:uniq}); see Section~\ref{sec:intro.uniqueness.regularity}.}
\begin{theorem}[Smoothness of solutions]\label{thm:smooth}
Given the assumptions of Theorem~\ref{mainthm}, there is a \underline{smooth} solution to the Einstein vacuum equations \eqref{eq:EVE} of the form \eqref{metricansatz} in $(0,T]\times \mathbb T^3$, for a $T>0$ depending on $c_{ij},p_i$, which satisfies \eqref{eq:a.limit}.
\end{theorem}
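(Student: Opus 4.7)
The strategy is to bootstrap the $C^2$ existence result of Theorem~\ref{mainthm} to smoothness by combining higher-regularity iterates of the same construction with the uniqueness result Theorem~\ref{thm:uniq}. The construction underlying Theorem~\ref{mainthm} is parameterized by two integers: the order $n$ at which the approximate solution $g^{\bf [n]}$ is built in Section~\ref{sec:parametrix}, and the Sobolev index at which the Fuchsian energy estimates for the remainder $g - g^{\bf [n]}$ are closed. The first task is to verify that both parameters can be enlarged freely: given any target $k \geq 3$, for $n = n(k)$ sufficiently large the same scheme should produce a $C^k$ solution $g^{(k)}$ of the form \eqref{metricansatz} on some interval $(0, T_k] \times \mathbb T^3$, satisfying the enhanced asymptotic bounds \eqref{uniqueness.condition.0.alt} and \eqref{uniqueness.condition.0.alt.2} with respect to $g^{\bf [n(k)]}$.

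Given the family $\{g^{(k)}\}$, I would then apply Theorem~\ref{thm:uniq} pairwise. For $k, k'$ with $n(k), n(k')$ both large, Remark~\ref{rmk:uniqueness.2} ensures that both solutions lie within $O(t^{M_u})$ of a common approximate solution $g^{\bf [n]}$, so their difference satisfies \eqref{uniqueness.condition.2}, and Theorem~\ref{thm:uniq} forces $g^{(k)} = g^{(k')}$ on the intersection of their domains. In particular the $C^2$ solution $g$ from Theorem~\ref{mainthm} (which is itself the $k=2$ instance of the scheme, and whose asymptotics are governed by the same $g^{\bf [n]}$) coincides with $g^{(k)}$ on $(0, \min(T, T_k)]$. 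Hence $g$ is $C^k$ on an interval $(0, T_\infty]$ near the singularity, for every $k$, which means $g$ is smooth there. Smoothness on the remaining part $[T_\infty, T]$ then follows from standard local well-posedness for the Einstein vacuum equations away from $t=0$: starting from smooth Cauchy data at $t = T_\infty/2$ and regarding the ansatz \eqref{metricansatz} as a quasilinear wave-type system for the $a_{ij}$, one obtains a smooth solution on $[T_\infty/2, T]$ which must agree with $g$ by $C^2$ uniqueness for \eqref{eq:EVE}.

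The main obstacle is the first step, namely closing the Fuchsian energy estimates at arbitrary Sobolev level. One must check that commuting with many additional spatial derivatives on $\mathbb T^3$ does not generate uncontrollable borderline or logarithmic corrections, that the approximate solution $g^{\bf [n]}$ can be iterated to arbitrarily high order so that its error in $Ric(g^{\bf [n]})=0$ decays as fast as desired in $t$, and that the weighted norms based on the second fundamental form of the constant-$t$ hypersurfaces accommodate the extra derivatives uniformly. With the spatial separations $p_{i+1} - p_i > 0$ and $1 - p_3 > 0$ fixed (these already dictate the weights used in the $C^2$ proof), enlarging $k$ should only affect constants, not the structural form of the estimates, so this step should be a matter of careful bookkeeping—essentially an induction on $k$ built on top of the base case established for Theorem~\ref{mainthm}—rather than a conceptually new argument.
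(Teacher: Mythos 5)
Your overall strategy is the paper's: produce, for each regularity level, a solution from the same scheme with larger parameters $(s,n,N_0)$, use Theorem~\ref{thm:uniq} to identify it with the fixed low-regularity solution (both being $O(t^{M_u})$-close to a common approximate solution $g^{\bf [n]}$), and then propagate regularity away from $t=0$. However, there is a genuine gap in how you pass from the pairwise identifications to smoothness near the singularity. The identification only gives that $g$ is $C^k$ on $(0,\min(T,T_k)]$, and the times of existence $T_k$ of the higher-regularity solutions shrink as $k$ (equivalently $s$, $n$, $N_0$) grows — nothing in the construction gives a lower bound on $T_k$ uniform in $k$. So the interval $(0,T_\infty]$ with $T_\infty=\inf_k\min(T,T_k)$ on which you claim $g$ is $C^k$ for every $k$ may be empty, and your subsequent step of launching smooth data from $t=T_\infty/2$ has nothing to stand on. The correct order of operations (and what the paper does) is: for \emph{each fixed} regularity level $s_0$, first use uniqueness on the possibly tiny common interval, and then run a propagation-of-regularity argument forward in time (energy estimates as in Theorem~\ref{thm:main.reduced}) to conclude that the \emph{fixed} solution is in $H^{s_0+1}\times H^{s_0}$ on all of its \emph{original} interval $(0,T]$; only after that does one let $s_0\to\infty$ on the fixed interval.

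A secondary point: for the forward propagation you appeal to ``standard local well-posedness ... regarding the ansatz \eqref{metricansatz} as a quasilinear wave-type system for the $a_{ij}$'' and to ``$C^2$ uniqueness for \eqref{eq:EVE}.'' In the synchronized gauge \eqref{metricansatz} it is precisely \emph{not} known that the metric components satisfy a hyperbolic system (this is the point of Section~\ref{sec:EEinsyncoord}), and geometric uniqueness for \eqref{eq:EVE} is only up to diffeomorphism, so it does not directly identify two solutions in this gauge. The propagation of regularity and the identification away from $t=0$ should instead be run through the reduced system \eqref{eq:hyp.sys} (wave equation for $k$, transport equations for $g$ and $h$, together with the renormalization/elliptic estimates), which is how the paper closes this step. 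With these two corrections — propagating $H^{s_0}$ regularity to the fixed interval before taking $s_0\to\infty$, and doing so via the reduced system rather than a putative wave system for the $a_{ij}$ — your argument matches the paper's proof.
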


 {In the remainder of the introduction, we will briefly discuss the ideas of the proof (\textbf{Section~\ref{sec:ideas}}) and some related works (\textbf{Section~\ref{sec:related.works}}).}

\subsection{Ideas of the proof}\label{sec:ideas}

\subsubsection{Fuchsian analysis of a model wave equation}\label{sec:ideas.Fuchsian} 

As far as the singularity is concerned, our basic strategy (which is quite standard, see for instance \cite{sK2007}) 
can be most easily explained by a model semilinear equation.

Consider the following nonlinear wave equation
\begin{equation}\label{eq:intro.model}
\Box_g \phi = (\rd_t \phi)^2
\end{equation}
on a Kasner spacetime \eqref{eq:Kasner} with constants $p_1< p_2< p_3<1$ satisfying $\sum_{i=1}^3 p_i = \sum_{i=1}^3 p_i^2 = 1$. (Note that the structure of the nonlinear terms plays no role here,  and the nonlinearity $(\rd_t\phi)^2$ is chosen here for its simplicity.)

The analogue of our main result in this setup would be to construct \emph{bounded} solutions to the nonlinear model equation \eqref{eq:intro.model}. However, the results of  \cite{aAgFaF2018} imply that even for the linear wave equation, generic data on say, $\{t=1\}$, give rise to solutions that blow up as $O(\log \f 1t)$ as $t\to 0^+$. Thus, in order to obtain bounded solutions to \eqref{eq:intro.model}, the solution that we build has to be special. This is achieved by imposing the leading order behavior of $\phi(t,x) = \phi_0(x) + \mbox{error}$, where $\phi_0(x)$ is a prescribed smooth function which is the limit of $\phi(t,x)$ as $t\to 0^+$. In fact, we build our solution as $\phi(t,x) = \sum_{j=0}^n \phi_j(t,x) + \phi^{(d)}$, where $\phi_j$ are increasingly precise approximations of $\phi$, and $\phi^{(d)}$ is determined by the condition $\lim_{t\to 0} \phi^{(d)} = 0$.

Our strategy contains two steps:
\begin{enumerate}
\item (Approximate solution) It is easy to first build an \emph{approximate solution} by stipulating an ansatz $\phi^{\bf [n]}(t,x) = \sum_{j=0}^n \phi_j(t,x)$, where 
\begin{itemize}
\item $\phi_0(t,x) = \phi_0(x)$ is the prescribed leading order behavior, 
\item $\phi_j$ obeys the better estimates $|\rd_x^\alp \phi_j(t,x)|\lesssim_{\alp,j} t^{j\ve}$, and 
\item $| \rd_x^\alp \{ \Box_g\phi^{\bf [n]}(t,x) - (\rd_t\phi^{\bf [n]})^2(t,x) \}| \lesssim_{\alp,n} t^{-2+(n+1)\ve}$.
\end{itemize}
 This expansion can simply be obtained inductively by solving \eqref{eq:intro.model} iteratively as an ODE in $t$. Here, we have the flexibility to carry out the expansion to an arbitrary order $n$ so as to achieve an arbitrarily good (in terms of the $t$-rates as $t\to 0^+$) approximation to a solution to \eqref{eq:intro.model}.

With\underline{out} \emph{analyticity}, however, one can\underline{not} hope to show that this series \emph{converges}. Instead we perform energy estimates for the error.
\item (Energy estimates) First notice that for an energy defined by
$$
\mathcal E(\tau):= \sum_{|\bt|\leq 4} \int_{\{t=\tau \} } (|\rd_t \rd_j^\bt\phi|^2 + \sum_{i=1}^3 t^{-2p_i} |\rd_i \rd_j^\bt\phi|^2) \,\ud x,
$$
it is easy to obtain an estimate of the form (e.g.~with $C_0=2$)
$$ \f{d}{dt} \mathcal E(t)  \leq \f{C_0}{t} \mathcal E(t) + C_1(\mathcal E(t))^2.$$

The issue is with the borderline singular term $\f{C_0}{t} \mathcal E(t)$, which cannot be treated by Gr\"onwall's inequality (since $\limsup_{t\to 0^+} t^{-{C_0}} \mathcal E(t) = +\infty$). Nevertheless, this is where the approximation constructed in the previous step becomes useful: instead of controlling the full solution $\phi$, we bound the difference quantity $\phi^{(d)}:=\phi - \phi^{\bf [n]}$, which for $n$ sufficiently large
\begin{itemize}
\item can be made to approach $0$ with a fast polynomial rate as $t\to 0^+$, and 
\item satisfies an inhomogeneous nonlinear wave equation where the inhomogeneity also $\to 0$ with a fast polynomial rate.
\end{itemize}

Define now an energy $\mathcal E^{(d)}$ with $\phi$ replaced by $\phi^{(d)}$. For any large $N\in \mathbb N$, we can find $n\in \mathbb N$ large enough (corresponding to a good enough approximation) such that under appropriate bootstrap assumptions, 
$$ \f{d}{dt} \mathcal E^{(d)}(t) \leq (\f{C_0}{t} + \f{C_n}{t^{1-\ve}})\mathcal E^{(d)}(t)  + C_n t^N,$$
where $C_n$ may depend on $n$, but importantly, the constant $C_0$ in the borderline term is \underline{in}dependent of $n$. The inhomogeneous $C_n t^N$ term arises from the fact that $\phi^{(d)}$ satisfies an inhomogeneous equation, and $N$ can be arbitrarily chosen as long as $n$ is also taken to be large. Thus, we obtain an estimate
$$ \f{d}{dt} (t^{-N} \mathcal E^{(d)}(t)) + \f N t (t^{-N} \mathcal E^{(d)}) \leq (\f{C_0}{t} + \f{C_n}{t^{1-\ve}})(t^{-N} \mathcal E^{(d)}(t))  + C_n.$$
Recall now moreover that for $n$ sufficiently large we have $\lim_{t\to 0^+} (t^{-N} \mathcal E^{(d)}(t)) = 0$. Moreover, first choosing $N$ large (by taking $n$ large) and then taking $t$ small (depending on $n$), it follows that $\f N t (t^{-N} \mathcal E^{(d)})$ on the LHS dominates $(\f{C_0}{t} + \f{C_n}{t^{1-\ve}})(t^{-N} \mathcal E^{(d)}(t))$ on the RHS. This gives an estimate for $t^{-N} \mathcal E^{(d)}(t)$.
\end{enumerate}

Once such energy estimates can be proven for the error $\phi^{(d)}$, we can in fact deduce \emph{existence} of solutions as follows. Choosing a sequence $t_I \to 0^+$, we solve for a sequence of solutions $\{\phi_I\}_{i=0}^\infty$ to \eqref{eq:intro.model} with $(\phi_I, \rd_t \phi_I)\restriction_{t=t_I} = (\phi^{\bf [n]}, \rd_t \phi^{\bf [n]})\restriction_{t=t_I}$. The energy estimates above allows us to show that $\{\phi_I\}_{i=0}^\infty$ can be solved in $[t_I,T] \times \mathbb T^3$ for uniform $T>0$ and that there is a limit which solves  \eqref{eq:intro.model} in $(0,T] \times \mathbb T^3$.

\subsubsection{Construction of solutions to the Einstein vacuum equations in synchronized coordinates} While the Fuchsian analysis is quite robust, we must also address the quasilinear, tensorial nature, as well as the gauge invariance, of the Einstein equations.

If one were to prescribe a wave-coordinate-type gauge, then the construction of the approximate solution will be algebraically very complex. Instead, we consider a system of synchronized coordinates, i.e.~we impose that the metric takes the form
\begin{equation}\label{eq:intro.metric.gauge}
^{(4)}g = -\ud t^2 + g_{ij}\, \ud x^i\, \ud x^j = -\ud t^2 + t^{2p_{\max\{i,j\}}} a_{ij}\,\ud x^i \, \ud x^j.
\end{equation}
This gauge captures important anisotropic features of Kasner-like singularities. In particular, assuming that the $a_{ij}$'s are $C^2$ up to $\{t = 0\}$, we know that $|g_{ij}| \sim t^{2p_{\max\{i,j\}}}$, $|(g^{-1})^{ij}| \sim t^{- 2p_{\min\{i,j\}}}$; and importantly (see Lemma~\ref{lem:Ricci}) that
\begin{equation}\label{eq:intro.Ric.bd}
|Ric_i{ }^j (g)| \sim t^{-2+\ve},\quad | Ric(g) |_g \sim t^{-2+\ve}.
\end{equation} 

In such a gauge, the construction of an approximate solution becomes more tractable. The difficulty, however, is shifted to the estimates for the error terms. Indeed, even when no singularities are present, it is a priori not clear that the Einstein vacuum equations are hyperbolic in the gauge \eqref{eq:intro.metric.gauge}; see discussions in Section~\ref{sec:EEinsyncoord}..

\subsubsection{Constructing approximate solutions}

Following ideas laid out in Section~\ref{sec:ideas.Fuchsian}, we first construct approximate solutions and then use energy estimates to obtain actual solutions to the Einstein vacuum equation. In order to construct approximate solutions, the first step is to solve a system of first order \emph{evolutionary equations}. The evolutionary equations will be treated as a system of ODEs in $t$ (compare Step~1 in Section~\ref{sec:ideas.Fuchsian}). In order to close the ODE analysis, we crucially rely on the bounds \eqref{eq:intro.Ric.bd}, which show that the spatial Ricci curvature is slightly better than critical, but we also need to additionally make use of the structure of the full system. We outline some main points here:
\begin{itemize}
\item The main difficulty in solving the system of ODEs is that there are many \emph{borderline} terms, i.e.~linear terms with $O(t^{-1})$ coefficients. It turns out that these terms have a \emph{reductive} structure. By this we mean that we can consider different components in a sequence of steps. In each step, there are two type of terms with a borderline $O(t^{-1})$ coefficient with the following properties.
\begin{itemize}
\item One type can be handled by introducing an integration factor. The integration factor gives a power of $t$ which is consistent with the initial conditions that we impose.
\item Another type of terms with borderline coefficients involve \emph{only terms which have been controlled in previous steps}. 
\item Any other linear terms must have a coefficient that is better, at least $O(t^{-1+\ve})$. 
\end{itemize}
Such a structure is important both in estimating the metric components (Lemmas~\ref{lem:an.well.defined}, \ref{lem:an-an-1}) and the components of the (approximate) second fundamental form (Lemmas~\ref{lem:kn.gen.bounds}, \ref{lem:kn-kn-1}).
\item In anticipation of the energy estimates needed to construct an actual solution, we also need to treat different components on different footing in the ODE analysis. An example of this is that while for $i\leq j$, we only prove that $(k^{\bf [n]})_i{ }^j = O(t^{-1})$; for $i> j$, we need a better estimate and the improvement we need depends on the precise $i$, $j$ under consideration; see Lemma~\ref{lem:kn.gen.bounds}. Such estimates can be traced back to \eqref{eq:intro.Ric.bd}, but also require the precise structure of the system.
\item Another technical difficulty is that the variable $k^{\bf [n]}$ we work with is only approximately the second fundamental form.
\end{itemize}	

The evolutionary equations solved in the first step roughly asserts that the spacetime Ricci curvature components $Ric(^{(4)}g^{\bf [n]})_i{}^j$ vanish with a very fast rate. Our second step is then to show that
\begin{itemize} 
\item $k^{\bf [n]}$ is asymptotically (as $t\to 0^{+}$) approximately the second fundamental form of the constant-$t$ hypersurfaces, and 
\item all other spacetime Ricci curvature components also vanish sufficiently fast as $t\to 0^+$.
\end{itemize} 
Both of these are achieved again by ODE analysis. For the first point, we need again a reductive structure, which is similar to the type used for the evolutionary equations. For the second point, the constraints as manifested both in the conditions on the Kasner exponents and asymptotic constraint equations \eqref{eq:cij} play a crucial role. See already Lemmas~\ref{lem:D.est}--\ref{lem:k.II} and Proposition~\ref{prop:approx.G}.

\subsubsection{Energy estimates in synchronized coordinates}\label{sec:EEinsyncoord}

It is a priori unclear that under a gauge condition as in \eqref{eq:intro.metric.gauge}, the metric components themselves satisfy any hyperbolic system. The main new ingredient is to consider a ``wave-type equation'' satisfied by the second fundamental form $k_i{ }^j$ of the spatial slice. 
Since this is already new for a local existence problem \emph{without singularities}, we will indicate the ingredients needed only for a local existence result for \emph{regular} data, i.e.~for this subsubsection suppose we are given geometric data $(\Sigma, g, k)$ satisfying the (usual) constraint equations and we explain how to construct a spacetime solution to the Einstein vacuum equations in the gauge \eqref{eq:intro.metric.gauge}.

Assuming that a metric of the form \eqref{eq:intro.metric.gauge} obeys the Einstein vacuum equations, we can deduce that the second fundamental form $k_i{ }^j$ obeys the following system of second order equations:
\begin{equation}\label{eq:k.eqn.intro}
\rd_t^2 k_i{ }^j = \Delta_g k_i{ }^j -\nab_i\nab^j k_\ell{ }^\ell + (k\star k\star k)_i{ }^j + (\rd_t k \star k)_i{ }^j,
\end{equation}
where $k\star k\star k$ and $\rd_t k \star k$ are nonlinear terms to be specified in \eqref{kstark} in Section~\ref{sec:derivation.of.equations}.

Notice that \eqref{eq:k.eqn.intro} is \underline{not} actually a wave equation, due to the term $\nab_i \nab^j k_\ell{ }^\ell$ on the RHS. The key is that the trace of $k$, i.e.~$k_\ell{ }^\ell$ in fact can be proven to have additional regularity if we further use the Einstein vacuum equations. First, the Einstein vacuum equations imply that 
$$\rd_t k_\ell{ }^\ell = |k|^2.$$
Now we consider $h = k_\ell{ }^\ell$ to be a separate variable and consider the coupled system for $(g,h,k)$:
\begin{equation}\label{eq:intro.hyp.sys}
\begin{split}
\rd_t h =&\:  |k|^2,\\
\rd_t^2 k_i{ }^j =&\: \Delta_g k_i{ }^j -\nab_i\nab^j h + (k\star k\star k)_i{ }^j + (\rd_t k \star k)_i{ }^j,\\
\rd_t g_{ij} = & - 2k_{i}{ }^{\ell} g_{j\ell}.
\end{split}
\end{equation}
(This system must hold for $h = k_\ell{ }^\ell$ if the Einstein vacuum equations are satisfied.) We then attempt to solve \eqref{eq:intro.hyp.sys} with initial data where $(g_{ij}, k_i{ }^j)$ is as given, $h= k_\ell{ }^\ell$ and $\rd_t k_i{ }^j = Ric(g)_i{ }^j + k_\ell{ }^\ell k_i{ }^j$ (which is completely determined by the geometric data).

The apparent difficulty in solving \eqref{eq:intro.hyp.sys} is a potential loss of derivatives. 
For instance, energy estimates for the second equation requires two derivative of $h$ and give only first-derivative estimates for $k$. The first equation, however, does not seem to give two derivatives for $h$ if we only have one derivative for $k$. A similar issue arises for $g$ and $k$ when we consider commutators for the second equation.

This can nevertheless be resolved by a renormalization together with elliptic estimates. As an example, we illustrate how to obtain second derivative estimates for $h$ when only controlling one derivative of $k$. Commute the first equation with $\Delta_g$ so that we have, up to error terms,
$$\rd_t \Delta_g h = 2 k_i{ }^j \Delta_g k_j{ }^i + \ldots.$$
The idea now is to use the second equation in \eqref{eq:intro.hyp.sys} so that we obtain
$$\rd_t (\Delta_g h - 2 k_i{ }^j \rd_t k_j{ }^i) = 2 k_i{ }^j (-\rd_t^2 +\Delta_g) k_j^i + \ldots = \ldots.$$
This allows us to control $\Delta_g h$ even only controlling one derivative of $k$. The other second derivatives of $h$ can then be bounded by $\Delta_g h$ using elliptic estimates. This avoids the loss of derivatives.

Standard energy estimates together with this renormalization/elliptic estimates trick indeed give a solution to \eqref{eq:intro.hyp.sys}. Furthermore, the choice of initial data and the structure of the equations allow one to propagate the symmetry of $g_{ij}$ and $g_{i\ell} k_j{ }^\ell$. Using moreover the Hamiltonian constraint $R(g)-|k|^2+(\text{tr}k)^2=0$, it can be shown a posteriori that $h = k_\ell{ }^\ell$. In particular, we also have that $\rd_t k_\ell{ }^\ell = |k|^2$, which implies that $Ric(^{(4)}g)_{tt} = 0$.

Finally, we need to upgrade the existence result \eqref{eq:intro.hyp.sys} to a bona fide existence result of solutions to the Einstein vacuum equations in the gauge \eqref{eq:intro.metric.gauge}, i.e.~we need to show that all the Ricci components vanish (in addition to $Ric(^{(4)}g)_{tt}$). For this purpose, first note that (after accounting for symmetries) the second equation in \eqref{eq:intro.hyp.sys} gives a system of $6$ first order homogeneous equations in $Ric_i{ }^j(^{(4)}g)$ and $Ric_{ti}(^{(4)} g)$. At the same time, three of the (contracted) second Bianchi equations give another $3$ first order homogeneous equations in $Ric_i{ }^j(^{(4)}g)$ and $Ric_{ti}(^{(4)} g)$. (The fourth equation is redundant, and does not give us extra information.) It turns out that these $9$ equations form  {a coupled system of wave-transport equations (see \eqref{eq:G.high.2} and \eqref{eq:wave.eqn.for.G}). This wave-transport equations is similar in structure to \eqref{eq:intro.hyp.sys}, and can also be treated using energy estimates together with renormalization/elliptic estimates. Moreover,} the momentum constraint  and the choice of initial data, when solving \eqref{eq:intro.hyp.sys}, together, guarantee that $Ric_i{ }^j(^{(4)}g)$ and $Ric_{ti}(^{(4)} g)$ are initially vanishing.  Combining all these we obtain that 
$Ric_i{ }^j(^{(4)}g) =0$ and $Ric_{ti}(^{(4)} g) =0$ everywhere, implying that the constructed solution to \eqref{eq:intro.hyp.sys} indeed obeys the Einstein vacuum equations.

Obviously, in our setting, we need to handle simultaneously the existence theory and the fact that the metric becomes singular as $t\to 0^+$. For this we combine the ideas here and Section~\ref{sec:ideas.Fuchsian}. A few technical issues arise. For instance, the Kasner-type geometry dictate that we do not have uniform control of the isoperimetric constants as $t\to 0^{+}$. Some care is therefore needed in the application of Sobolev embedding; in particular we need to be careful which terms are to be put in $L^2$/$L^\i$ type spaces. Finally, we note that the Fuchsian ideas in Section~\ref{sec:ideas.Fuchsian} are used not only in solving the system \eqref{eq:intro.hyp.sys}, but are also used in verifying that the solution to \eqref{eq:intro.hyp.sys} is indeed a solution to the Einstein vacuum equations. 

\subsubsection{Uniqueness and regularity}\label{sec:intro.uniqueness.regularity}

To prove uniqueness, we again rely on the wave equation satisfied by the second fundamental form, and perform $t$-weighted energy estimates in a similar way as proving existence. The only subtlety here is that we must impose that the metrics converge to each other sufficiently fast as $t\to 0^+$ in order to close the estimate (cf.~the statement of Theorem~\ref{thm:uniq}).

Finally, we prove higher regularity \emph{relying on the uniqueness result}. The issue at stake here is that for each additional derivative we try to control, the estimate in terms of $t$ worsens by one power. Thus, the approximation we choose has to be successively better for higher and higher derivatives. We then redo the construction of solutions for better and better choices of the approximations. The uniqueness result ensures that we have in fact constructed the \emph{same} solution, thus showing that the already constructed solution has arbitrarily high derivative bounds.

\subsection{Related works}\label{sec:related.works}

\subsubsection{ {Fuchsian} constructions of singular spacetimes}\label{sec:construction}

Many works have been carried out to construct AVTD singularities in $(3+1)$-dimensional vacuum spacetimes. All previous works assume either symmetry or analyticity (or both). The symmetry classes are typically chosen so that AVTD singularities are expected to be stable \emph{within} that class. We give a sample of such results, but refer the reader also to the references therein for further details.

\textbf{Gowdy symmetry.} AVTD singularities in (unpolarized) Gowdy symmetry was first constructed by Kichenassamy and Rendall \cite{sKaR1998} in the analytic category, in part based on the formal expansion carried out in \cite{GrMo1993}. A similar analysis was carried out by Rendall without the analyticity assumption in \cite{aR2000b}. See also \cite{fS2002} for more general topologies, and \cite{eAfBjIpL2017} for a treatment in generalized wave gauges.

\textbf{Polarized $\mathbb T^2$ symmetry.} Analytic AVTD singularities under polarized $\mathbb T^2$ symmetry were first constructed in \cite{sKjI1999}; analyticity was later removed in \cite{eAfBjIpL2013}.

\textbf{$\mathbb U(1)$ polarized or half-polarized symmetry.} Analytic solutions with AVTD behavior in polarized or half-polarized symmetry with $\mathbb T^3$ topology were constructed by Isenberg--Moncrief in \cite{jIvM2002}. That for more general topology was later carried out in \cite{yCBjIvM2004}.

\textbf{Beyond $(3+1)$-dimensional vacuum spacetimes.} The first construction of \emph{analytic} solutions with AVTD behavior \emph{without symmetries} was carried out in \cite{lAaR2001}, albeit \emph{not for the Einstein vacuum equations}. Indeed, the construction in \cite{lAaR2001} was for the Einstein--scalar field or Einstein--stiff fluid system. An important difference is that in the presence of a scalar field or stiff fluid, one expects AVTD singularities to be stable \cite{vBiK1972, jB1978}. A similar stability phenomenon is expected to occur in vacuum for spacetime dimensions $\geq 11$ \cite{DeHeSp1985}. Correspondingly, there is a construction of AVTD singularities for high dimensional vacuum (and more general) solutions in \cite{DaHeReWe2002}. See also Section~\ref{sec:stable}.

\textbf{Analytic singular spacetimes without symmetry assumptions.} All the works above concern regimes (either in symmetry classes or with matter, or in high dimensions) which at least heuristically should generically have AVTD behavior near the spacelike singularity. In a recent work of Klinger \cite{pK2015}, \emph{analytic} vacuum AVTD spacetimes \emph{with no symmetry assumptions} have been constructed. The work \cite{pK2015} can be viewed as similar to our result except for requiring the analyticity assumption and some additional inequalities on the Kasner exponents $p_i$'s. (These additional inequalities were used in \cite{pK2015} to apply a black-box Fuchsian theorem.)

\textbf{Asymptotically Schwarzschild singularity up to a singular $2$-sphere.} Finally, we mention the work \cite{Fo2016} of the first author, who constructed a class of spacetimes approaching the Schwarzschild black hole singularity. The construction requires no symmetry or analyticity assumptions. While it does not include a full spacelike singular hypersurface, the construction does include a spacelike singular $2$-sphere.

\subsubsection{Stable singularities in general relativity}\label{sec:stable} By ``function-counting'' arguments (cf.~Remark~\ref{rmk:counting}), the class of spacetimes we construct are not expected to be stable. For the vacuum equations in $(3+1)$ dimensions, the only known stable singularities are in fact null; see \cite{aOeF1996, jL2013, mDjL2017}.  These singularities are in stark contrast with the AVTD ones, which are spacelike.

 {As we already mentioned in Section~\ref{sec:construction}}, it has been suggested that in the presence of a scalar field or stiff fluid \cite{vBiK1972, jB1978}, or in the vacuum case in spacetime dimensions $\geq 11$ \cite{DeHeSp1985}, there is an open set of initial data which give rise to asymptotically Kasner-like singularities.  {It is also for this reason that in these settings, the construction of spacelike singularities with AVTD behavior is simpler.}


Spectacular progress has recently been made which indeed proves \emph{stability} of  {spacelike singularities in the aforementioned settings}. In the case of Einstein--scalar field or Einstein--stiff fluid, this was carried out in the breakthrough work by Rodnianski--Speck \cite{RoSp2018a, RoSp2018b} and later generalized by Speck \cite{Sp2018}. In the case of high dimensions, assuming spacetime dimensions $\geq 39$, Rodnianski--Speck has recently also constructed a class of stable spacelike singularities  {in vacuum} \cite{jSiR2018c}. (Note that the remarkable works of Rodnianski--Speck do not cover the whole regimes in \cite{vBiK1972, jB1978, DeHeSp1985}. Whether \emph{all} of the solutions discussed in \cite{vBiK1972, jB1978, DeHeSp1985} are stable remains an open problem.)

Very recently, the first author and Alexakis considered the stability problem for the Schwarzschild singularity \cite{sAgF2020}. Unlike the settings studied by Rodnianski--Speck, the Schwarzschild singularity is \emph{unstable}, but nonetheless it was shown in \cite{sAgF2020} to be stable \emph{within the class of polarized axisymmetric perturbations}.

\subsubsection{Strong cosmic censorship}

The understanding of AVTD singularities played an important role in understanding the strong cosmic censorship conjecture, at least under Gowdy symmetry. 

The strong cosmic censorship conjecture has first been resolved in the polarized Gowdy case in \cite{pCjIvM1990}. The work relies in particular on \cite{jIvM1990}, in which AVTD singularities in this setting were studied.

The more general case of the strong cosmic censorship conjecture in \emph{unpolarized} Gowdy symmetry turned out to be significantly more difficult in view of the so-called ``spikes''. This has been treated in the seminal work of Ringstr\"om  {\cite{hR2009b} (see also \cite{hR2008a}). Here, a form of asymptotic velocity term domination has been established \cite{hR2006} and plays an important role.}

It should again be stressed that outside symmetry classes (Gowdy, polarized $\mathbb T^2$, polarized $\mathbb U(1)$, etc.), AVTD singularities are most likely not generic, and the role of the study of AVTD singularities in the ultimate resolution of strong cosmic censorship conjecture is quite unclear. 

\subsubsection{Numerical works}

 {A discussion of the large number of related numerical works will take us too far afield. For this we will refer the reader to \cite{bB2002} and the many references therein.}

\subsubsection{Linear wave equations on singular spacetimes}

A closely related thread of works concerns solving the \emph{linear} wave equation on a spacetime with a spacelike singularity, including Kasner, FLRW and Schwarzschild. See for instance \cite{sKpS1981, pAaR2010, hR2017, gFjB2018, aAgFaF2018, hR2019, aB2019, pGjNjS2019}.

\subsubsection{Einstein equations in transport coordinates}

At the heart of our approach is the ability to perform energy estimates in the gauge \eqref{eq:intro.metric.gauge}, corresponding to a choice of coordinates such that $(t,x^i)$ are all transported by the unit normal to the spacelike hypersurfaces $\{t=\mathrm{constant}\}$; recall Section~\ref{sec:EEinsyncoord}. We highlight previous works where smooth solutions to the Einstein equations are constructed in gauge where the spatial $x^i$ coordinates are transported, i.e~the metric takes the form
\begin{equation}\label{eq:spatially.transport}
-\alp^2 \ud t^2 + g_{ij} \ud x^i \ud x^j.
\end{equation}

The first is the work of Rodnianski--Speck \cite{RoSp2018a, RoSp2018b} (in which they constructed stable spacelike singularities; see discussions in Section~\ref{sec:stable} above), where $\alp$ is determined by stipulating that each constant-$t$ hypersurface has constant mean curvature. See also \cite{gFjS2019} for a different approach in handling this gauge. (Constant mean curvature foliations, but without spatially transported coordinates, have been previously used. See for instance \cite{lAvM2003}, which used spatially harmonic instead of spatially transported coordinates.)

The second is the work of Choquet-Bruhat--Ruggeri \cite{yCBtR1983}, in which the authors consider the spacetime metric of the form \eqref{eq:spatially.transport} and impose the condition $\alp = \sqrt{\f{\det g}{\det e}}$, where $e$ is some arbitrary but fixed (i.e.~$t$-independent) Riemannian metric. They show that in such a gauge, the Einstein equations are hyperbolic.

\subsection{Outline of the paper}  {We end the introduction with an outline of the remainder of the paper.

The first part of the existence proof will be to construct an \emph{approximate} solution. This will be carried out in \textbf{Section~\ref{sec:parametrix}}, where we give the construction and show that \emph{evolutionary} equations are approximately satisfied. In \textbf{Section~\ref{sec:approx.constraints}} we then show that the \emph{constraint} equations are also approximately satisfied.

In Section~\ref{sec:actual} and \ref{subsec:vanEVE} we then construct an \emph{actual} solution, thus completing the proof of Theorem~\ref{mainthm}. This will be carried out in two steps: in \textbf{Section~\ref{sec:actual}} we will solve an appropriate system of reduced equations, then in \textbf{Section~\ref{subsec:vanEVE}} we show that the solutions to the reduced equations that we have constructed in fact obey the Einstein vacuum equations.

Finally, in \textbf{Section~\ref{sec:uni}}, we end with the proofs of \emph{uniqueness} (Theorem~\ref{thm:uniq}) and \emph{smoothness} (Theorem~\ref{thm:smooth}).}

\subsection{Acknowledgements} G.F.~would like to thank Lars Andersson, Satyanad Kichenassamy, Jacques Smulevici and Jared Speck for useful communications.

G.F. is supported by the ERC grant 714408 GEOWAKI, under the European Union's Horizon 2020 research and innovation program.  J.L.~gratefully acknowledges the support of the NSF grant DMS-1709458.

\section{Construction of an approximate solution}\label{sec:parametrix}

We work under the assumptions of Theorem~\ref{mainthm}. In particular, we fix $p_i$ and $c_{ij}$ to be as in Theorem~\ref{mainthm}. 

Unless explicitly stated otherwise, all the implicit constants (given either in the $\ls$ or the big-O or the $\cdot \leq C \cdot$ notation) that we have in our arguments, from now on, may depend on $p_i$ and $c_{ij}$. Many estimates in this section will involve an $n\in \mathbb N$ or a multi-index $\alp$. Unless otherwise stated, all constants may depend also on $n$ and $\alp$.

 Our goal in this section is to construct an approximate solution, i.e.~we will construct inductively a metric ${^{(4)}g}^{\bf [n]}$ ($n \in \mathbb N\cup \{0\}$), which takes the form \eqref{metricansatz}, but with $a_{ij}^{\bf [n]}$ in place of $a_{ij}$; as well as an approximate second fundamental form $(k^{\bf [n]})_i{ }^j$. These $a_{ij}^{\bf [n]}$ are constructed so that $\lim_{t\to 0^+} a^{[{\bf n}]}_{ij}(t,x) = c_{ij}(x)$.  We will moreover show that the pairs $(g^{\bf [n]},\,k^{\bf [n]})$ we construct indeed form an approximate solution to the evolution equation, i.e.~as $n$ becomes larger,  $\rd_t (k^{\bf [n]})_i{ }^j - Ric(g^{\bf [n]})_i{ }^j - (k^{\bf [n]})_\ell{ }^\ell(k^{\bf [n]})_i{ }^j$ tends to $0$ faster as $t\to 0^+$; see already Theorem~\ref{thm:parametrix}.

 Unless otherwise stated, we will also be using the Einstein summation convention for repeated indices, with lower case Latin indices running through $1,2,3$. It should be noted that sometimes we will still write out the sum explicitly in situations that confusion might arise (e.g.~when one has factors of $t^{p_{\max\{i,j\}}}$).

 {\textbf{Definition of ${^{(4)}g}^{\bf [n]}$ and $k^{\bf [n]}$.}} Define ${^{(4)}g}^{\bf [0]}$ by setting 
\begin{equation}\label{eq:a0.def}
a_{ij}^{\bf [0]} = c_{ij}.
\end{equation} 
Now given $g^{\bf [n-1]}$, $n\in\mathbb{N}$  {(and assuming that it is a Riemannian metric on $(0,t_n]\times \mathbb T^3$)}, define $k^{\bf [n]}$ by
\begin{align}
\rd_t (k^{\bf [n]})_i{ }^j =&\: Ric( {g^{\bf [n-1]}})_i{ }^j + (k^{\bf [n]})_\ell{ }^\ell(k^{\bf [n]})_i{ }^j, \label{eq:k.transport}
\end{align}
subject to the following condition at $t=0$:
\begin{equation}\label{eq:initial.parametrix.k}
|(k^{\bf [n]})_i{ }^j -  {t^{-1}\kappa_i{ }^j}|(t,x) = O(t^{-1+\ve}),
\end{equation}
where $ {\kappa}$ is defined by $ {\kappa}_i{ }^i = - p_i$ ( {for every $i=1,2,3$,} without summing), $ {\kappa}_1{ }^2 = (p_1 - p_2 )\f{c_{12}}{c_{22}}$, $ {\kappa}_2{ }^3 = (p_2 - p_3)\f{c_{23}}{c_{33}}$, $ {\kappa}_1{ }^3 = (-p_1+p_2)\f{c_{12} c_{23}}{c_{22} c_{33}} + (p_1-p_3)\f{c_{13}}{c_{33}}$ and $ {\kappa}_i{ }^j = 0$ if $i>j$;
and given $k^{\bf [n]}$, $n \in \mathbb N$, define $g^{\bf [n]}$ by
\begin{align}
\label{eq:g.transport}\rd_t g^{\bf [n]}_{ij} = &\: -   (k^{\bf [n]})_i{ }^\ell g^{\bf [n]}_{\ell j} -  (k^{\bf [n]})_j{ }^\ell g^{\bf [n]}_{\ell i}, 
\end{align}
subject to the following condition at $t=0$:
\begin{equation}\label{eq:initial.parametrix.g}
|a^{\bf [n]}_{ij} - c_{ij}|(t,x) = O(t^\ve),
\end{equation}
where we recall that $a^{\bf [n]}$ is related to $g^{\bf [n]}$ via \eqref{metricansatz}.
It readily follows from \eqref{eq:g.transport} that the inverse metric $(g^{[\bf n]})^{-1}$ satisfies the equation:
\begin{equation}\label{eq:gn-1.transport}
\rd_t ((g^{[\bf n]})^{-1})^{ij} =  (k^{[\bf n]})_\ell{ }^j ((g^{[\bf n]})^{-1})^{i\ell} + (k^{\bf[n]})_\ell{ }^i ((g^{[\bf n]})^{-1})^{j\ell}.
\end{equation}

 {Our goal in this section is to establish} the properties of the above sequences $\{g^{\bf [n]}\}_{n=0}^{+\infty}$, $\{k^{\bf [n]}\}_{n=1}^{+\infty}$ given in the following theorem:
\begin{theorem}\label{thm:parametrix}
Let $p_i$ and $c_{ij}$ be as in the Theorem \ref{mainthm}. Define 
$$ {\ve = \min\{\min_x (p_3 - p_2)(x), \min_x (1-p_3)(x)\} > 0}.$$
Then for $n\in \mathbb N$, there exist $t_n>0$  {(depending on $p_i$, $c_{ij}$ and $n$),} a  {smooth Lorentzian} metric ${^{(4)}g}^{\bf [n]}$ and a $(1,1)$-tensor $(\kn)_i{ }^j$ on $(0,t_n]\times \mathbb T^3$ such that the following holds. 
\begin{enumerate}
\item ${^{(4)}g} ^{\bf [n]}$ takes the following form for some smooth functions $a_{ij}^{\bf [n]}:(0,t_n]\times \mathbb T^3\to \mathbb R$ (symmetric in $i,j$):
\begin{align*}
\begin{split} 
{^{(4)}g}^{\bf [n]} =  -\ud t^2 + \sum_{i,j=1}^3 g^{\bf [n]}_{ij} \ud x^i\, \ud x^j
=&\: - \ud t^2 +  \sum_{i,j=1}^3 a^{\bf [n]}_{ij} t^{2p_{\max\{i,j\}}} \ud x^i\, \ud x^j. 
\end{split} 
\end{align*}
\item  {(Convergence to initial data)} For every multi-index $\alp$, every $i$, $j$ and every $n \in \mathbb N$, the functions $a_{ij}^{\bf [n]}$ and $(k^{\bf [n]})_i{ }^j$ satisfy
\begin{equation}\label{eq:main.parametrix.a.bd}
\sup_{x\in \mathbb T^3} |\rd_x^\alp (a_{ij}^{\bf [n]}(t,x) - c_{ij}(x))|\leq C_{\alp,n}t^{\ve}, 
\end{equation}
\begin{equation}\label{eq:main.parametrix.k.bd}
\sup_{x\in \mathbb T^3}| \rd_x^\alp [(k^{\bf [n]})_i{}^j(t,x)-t^{-1}\kappa_i{ }^j(x)]| \leq C_{\alp,n} \min\{ t^{-1+\ve},\, t^{-1+\ve-2p_j+2p_i} \},
\end{equation}
for some $C_{\alp,n}>0$ depending on $p_i$, $c_{ij}$, in addition to $\alp$ and $n$. (Recall the definition of $\kappa_i{}^j$ immediately after \eqref{eq:initial.parametrix.k}.)
\item  (Estimates for spatial curvature) For every multi-index $\alp$, every $i$, $j$ and every $n \in \mathbb N$, the spatial Ricci curvature satisfies
\begin{equation}\label{eq:main.parametrix.Ric.bd}
\sup_{x\in \mathbb T^3} \sum_{r=0}^1 t^r |\rd_x^\alp \rd_t^r Ric(g^{\bf [n]})_i{}^j(t,x) | \leq C_{\alp,n} \min\{ t^{-2+\ve},\, t^{-2+\ve-2p_j+2p_i} \},
\end{equation}
for some $C_{\alp,n}>0$ depending on $p_i$, $c_{ij}$, in addition to $\alp$ and $n$. 
\item  ($k^{\bf [n]}$ is an approximate second fundamental form) For every multi-index $\alp$, every $i$, $j$ and every $n \in \mathbb N$,  
\begin{equation}\label{eq:main.parametrix.k.and.II.1}
\sum_{r=0}^{2} t^r | \rd_x^{\alp} \rd_t^r (2(k^{\bf [n]})_i{ }^j +  (g^{\bf [n]})^{j\ell} \rd_t g^{\bf [n]}_{i\ell}) |(t,x) \leq C_{n,\alp} t^{-1+(n+1)\ve},
\end{equation}
for some $C_{\alp,n}>0$ depending on $p_i$, $c_{ij}$, in addition to $\alp$ and $n$.
\item  {(Evolution equations approximately satisfied) For every multi-index $\alp$, t}he tensors $(k^{[\bf n]})_i{}^j$, $g^{[\bf n]}_{ij}$ also satisfy
\begin{equation}\label{eq:main.parametrix.Ric}
\begin{split}
\sup_{x\in \mathbb T^3} \sum_{r=0}^1 t^r  \left|\rd_x^\alp \rd_t^r \left(\rd_t (k^{\bf [n]})_i{ }^j - Ric(g^{\bf [n]})_i{ }^j - (k^{\bf [n]})_\ell{ }^\ell(k^{\bf [n]})_i{ }^j\right)\right|(t,x) \leq &\:C_{\alp,n} t^{-2+(n+1)\ve},
\end{split}
\end{equation}
for some $C_{\alp,n}>0$ depending on $p_i$, $c_{ij}$, in addition to $\alp$ and $n$.
\end{enumerate}
\end{theorem}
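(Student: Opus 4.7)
My plan is to proceed by induction on $n$, with the base case $n=0$ handled trivially by the definition $a_{ij}^{\bf [0]} := c_{ij}$ (so $g^{\bf [0]}$ is a variable-exponent Kasner metric whose spatial Ricci tensor can be computed directly). For the inductive step, assuming $g^{\bf [n-1]}$ (and, if $n\geq 2$, $k^{\bf [n-1]}$) satisfies all the claimed bounds, one must first solve the transport equation \eqref{eq:k.transport} for $k^{\bf [n]}$ subject to the asymptotic matching condition \eqref{eq:initial.parametrix.k}, and then the transport equation \eqref{eq:g.transport} for $g^{\bf [n]}$ subject to \eqref{eq:initial.parametrix.g}. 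Because both are Fuchsian ODEs with singular coefficients near $t=0$, standard existence theory does not apply; instead one prescribes the leading asymptotic profile and integrates the subleading corrections from $t = 0^+$ upward.

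The heart of the argument is the Fuchsian analysis for $k^{\bf [n]}$. Using $\sum_i p_i = 1$, equation \eqref{eq:k.transport} takes the schematic form $\rd_t (\kn)_i{}^j + t^{-1} [\mathcal{L}(\kn)]_i{}^j = R_i{}^j + E_i{}^j$, where $R := Ric(g^{\bf [n-1]})$ is controlled by the inductive hypothesis \eqref{eq:main.parametrix.Ric.bd}, $E$ is a nonlinear remainder, and $\mathcal{L}$ is a linear operator whose eigenstructure is governed by the Kasner exponents. I would solve this component-by-component using the reductive ordering highlighted in the introduction: first the strictly subdiagonal entries $i>j$ (which must be shown to satisfy the improved bound $O(t^{-1+\e-2p_j+2p_i})$ even though $\kappa_i{}^j=0$ there), then the diagonal entries, then the strictly superdiagonal entries $i<j$. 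At each step, the borderline $t^{-1}$ terms in $\mathcal{L}$ either come with an explicit integrating factor $t^{\gamma_{ij}}$ (with $\gamma_{ij}$ an eigenvalue of $\mathcal{L}$) that selects the prescribed leading profile $t^{-1}\kappa_i{}^j$, or else involve only components already controlled in earlier sub-steps; all remaining linear terms carry an extra power of $t^\e$ and are harmless. Spatial derivatives are controlled by commuting $\rd_x^\alp$ through the equations and iterating the same cascade.

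With $k^{\bf [n]}$ in hand, \eqref{eq:g.transport} becomes a linear ODE system for $g^{\bf [n]}_{ij}$. Substituting the ansatz $g^{\bf [n]}_{ij} = a^{\bf [n]}_{ij}\, t^{2p_{\max\{i,j\}}}$ converts it into an ODE for $a^{\bf [n]}_{ij}$, whose would-be $t^{-1}$ terms cancel exactly against the prescribed leading order $t^{-1}\kappa_i{}^j$ of $k^{\bf [n]}$; integrating then yields $a^{\bf [n]}_{ij} \to c_{ij}$ with the rate in \eqref{eq:main.parametrix.a.bd}. The remaining statements follow by direct (but intricate) computation. The spatial Ricci bound (3) comes from expanding $Ric(g^{\bf [n]})_i{}^j$ in components and exploiting the cancellations guaranteed by $\sum_i p_i = \sum_i p_i^2 = 1$. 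The bound (4) saying $k^{\bf [n]}$ is an approximate second fundamental form follows by computing $2(\kn)_i{}^j + (g^{\bf [n]})^{j\ell} \rd_t g^{\bf [n]}_{i\ell}$ directly from \eqref{eq:g.transport} and tracking the symmetrization error. Finally, the evolutionary error bound (5) is essentially the defining equation \eqref{eq:k.transport}, with the extra factor of $t^\e$ per iteration coming from the improvement of $Ric(g^{\bf [n-1]})$ over $Ric(g^{\bf [n-2]})$ propagated through the ODE for $k^{\bf [n]}$.

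I expect the main obstacle to be the simultaneous bookkeeping of two distinct levels of improvement: the component-dependent $t^\e$-improvement in the convergence estimates \eqref{eq:main.parametrix.a.bd}--\eqref{eq:main.parametrix.k.bd}, and the compounding $t^{(n+1)\e}$-improvement in \eqref{eq:main.parametrix.k.and.II.1}--\eqref{eq:main.parametrix.Ric}. The former depends on $\e$ lying below every gap $p_{i+1}-p_i$ and below $1-p_3$, which is precisely what makes each step of the reductive cascade lose at most one factor of $t^\e$. The latter requires a delicate algebraic argument showing that $Ric(g^{\bf [n-1]}) - Ric(g^{\bf [n-2]})$ is one order of $t^\e$ smaller than $Ric(g^{\bf [n-1]})$ itself, which then propagates through the ODEs for $k^{\bf [n]}-k^{\bf [n-1]}$ and $g^{\bf [n]}-g^{\bf [n-1]}$. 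Keeping these two compound inductions synchronized, while respecting the component ordering and absorbing all spatial commutators into the $n$-dependent constants $C_{\alp,n}$, is the most intricate part of the argument.
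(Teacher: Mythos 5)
Your high-level strategy — iterate \eqref{eq:k.transport} and \eqref{eq:g.transport} as Fuchsian ODEs with prescribed asymptotic data, and let the compounding $t^{n\ve}$ smallness come from differences of successive approximations through $Ric(g^{\bf [n]})-Ric(g^{\bf [n-1]})$ — is indeed the paper's approach, and it does account for points (1)--(3) and (5). The genuine gap is point (4). The quantity $2(k^{\bf [n]})_i{}^j + (g^{\bf [n]})^{j\ell}\rd_t g^{\bf [n]}_{i\ell}$ is, up to sign, the contraction of $(g^{\bf [n]})^{-1}$ with the antisymmetric part $\mathfrak D^{\bf [n]}_{ij} = (k^{\bf [n]})_i{}^\ell g^{\bf [n]}_{\ell j} - (k^{\bf [n]})_j{}^\ell g^{\bf [n]}_{\ell i}$, and ``tracking the symmetrization error'' directly from the bounds \eqref{eq:main.parametrix.a.bd}--\eqref{eq:main.parametrix.k.bd} (or by telescoping the differences $k^{\bf [n]}-k^{\bf [n-1]}$, $g^{\bf [n]}-g^{\bf [n-1]}$ back to $\mathfrak D^{\bf [0]}=0$) only gives an error of size $t^{-1+\ve}$, independently of $n$ — not the required $t^{-1+(n+1)\ve}$. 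The compounding smallness of the antisymmetric defect is not a byproduct of the iteration; it needs a dedicated argument (Lemmas~\ref{lem:D.est}--\ref{lem:k.II}): one derives a transport equation for $\mathfrak D^{\bf [n]}$ from \eqref{eq:k.transport}--\eqref{eq:g.transport}, observes that because $Ric(g^{\bf [n-1]})_i{}^\ell g^{\bf [n-1]}_{\ell j}$ is \emph{exactly} symmetric the inhomogeneity reduces to the antisymmetrization of $Ric(g^{\bf [n-1]})_i{}^\ell (g^{\bf [n]}-g^{\bf [n-1]})_{\ell j}$, which already carries the factor $t^{n\ve}$, checks that the weighted quantities $t^{p_i}\mathfrak D^{\bf [n]}_{j\ell}$ vanish as $t\to 0^+$, and closes a component-wise bootstrap with weights $t^{p_1},t^{p_2},t^{p_3}$, using $p_1+p_2+p_3=1$ to handle the borderline $(p_i+p_j-1)/t$ coefficients. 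Without identifying this cancellation and the ODE for $\mathfrak D^{\bf [n]}$, \eqref{eq:main.parametrix.k.and.II.1} — which is exactly what feeds into the approximate constraint propagation later — is not established.

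Two further structural points in your ODE analysis would need repair, though they are fixable. For \eqref{eq:k.transport} the components couple only through the trace via the quadratic term $(k^{\bf [n]})_\ell{}^\ell(k^{\bf [n]})_i{}^j$, so a cascade ``subdiagonal, then diagonal, then superdiagonal'' cannot start: one must first solve the scalar Riccati equation for the trace (the paper renormalizes to $\rd_t[t^2(h^{\bf [n]}+\tfrac 1t)] = t^2 R(g^{\bf [n-1]}) + t^2(h^{\bf [n]}+\tfrac 1t)^2$ and invokes Lemma~\ref{lem:transport}), after which each component equation is linear with integrating factor $t$ and Gr\"onwall gives the improved rates for $i>j$ directly from the improved Ricci bound. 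Likewise, after substituting $g^{\bf [n]}_{ij}=a^{\bf [n]}_{ij}t^{2p_{\max\{i,j\}}}$, the borderline $t^{-1}$ terms do \emph{not} cancel exactly: there remain homogeneous $O(t^{-1})$ terms acting on $a^{\bf [n]}-c$ (e.g.\ $\tfrac{p_2-p_3}{t}(a^{\bf [n]}-c)_{23}-\tfrac{\kappa_2{}^3}{t}(a^{\bf [n]}-c)_{33}$), and a naive Gr\"onwall argument would destroy the $t^{\ve}$ rate; one needs the reductive ordering of Lemma~\ref{lem:an.well.defined} (diagonal entries first, then $(2,3)$ and $(1,2)$, then $(1,3)$) together with the signs $p_i-p_j<0$ for $i<j$, which is precisely where the hypotheses $p_1<p_2<p_3<1$ and the definition of $\ve$ enter.
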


\begin{remark}\label{rem:tvarepsilon}
All the $\ve$ in the error terms in Theorem~\ref{thm:parametrix} can be improved almost to $2\ve$ (or exactly to $2\ve$ if we allow some powers of $\log t$ in the error terms). Some estimates can even be further sharpened. We will be content with the weaker estimates for the sake of simplicity of the exposition.
\end{remark}

\begin{remark}\label{rem:varepsilon}
The definition of $\varepsilon$, together with conditions (2)--(3) in Theorem \ref{mainthm}, imply that 
\begin{align}\label{pi.range}
-\frac{1}{3}\leq p_1\leq -\varepsilon,\qquad
\varepsilon\leq p_2\leq \f 23,\qquad
\f 23 \leq p_3\leq 1-\varepsilon,\qquad p_3-p_2\ge\varepsilon.
\end{align}
This can be easily checked by using the following parametric form of the Kasner exponents $p_1,p_2,p_3$:
\begin{align}\label{eq:pi.param}
p_1=\frac{-u}{1+u+u^2},\qquad p_2=\frac{1+u}{1+u+u^2},\qquad p_3=\frac{u(1+u)}{1+u+u^2},\qquad u\in[1,+\infty),
\end{align}
which is valid at each point $x\in\mathbb{T}^3$, $u=u(x)$.
\end{remark}

In the rest of the section, we will prove  {Theorem~\ref{thm:parametrix}}; see the conclusion of the proof at the end of the section. (In particular, in the course of the proof, it can be seen that $g^{\bf [n]}$ and $k^{\bf [n]}$ are well-defined.)

\subsection{ {Estimates for $g^{\bf[0]}$}}

\begin{lemma}\label{lem:inverse.0}
There exists $t_0>0$ (depending on $c_{ij}$ and $p_i$) such that the following are true for $(t,x)\in (0,t_0 {]} \times \mathbb T^3$:
\begin{enumerate}
\item The determinant of $\det g^{\bf [0]}$ satisfies, for some $C>0$ (depending on $c_{ij}$ and $p_i$),
\begin{align}\label{detg0}
| \det g^{\bf [0]}(t,x) - c_{11} c_{22} c_{33} t^{2}|\leq C t^{2+\ve}.
\end{align}
\item  {The eigenvalues $\lambda_1 \leq \lambda_2 \leq \lambda_3$ of $g^{\bf [0]}$ satisfy, for some $C>0$ (depending on $c_{ij}$ and $p_i$),
$$|\lambda_i - t^{2p_i} c_{ii}| \leq C t^{2p_i + \ve}.$$
In particular, choosing $t_0$ smaller if necessary, $g^{\bf [0]}$ is a Lorentzian metric on $(0,t_0]\times \mathbb T^3$.}
\item For every multi-index $\alp$, the inverse metric $(g^{\bf [0]})^{-1}$ satisfies, for some $C_\alp>0$ (depending on $\alp$, $c_{ij}$ and $p_i$),
\begin{align}\label{invg0}
(g^{\bf [0]})^{-1} = 
\begin{bmatrix}
\f {t^{-2p_1}}{c_{11}} & -\f{c_{12} c_{33} t^{-2p_1}}{c_{11} c_{22} c_{33}} & \f{(c_{12} c_{23}-c_{13}c_{22}) t^{-2p_1}}{c_{11} c_{22} c_{33}} \\
-\f{c_{12} c_{33} t^{-2p_1}}{c_{11} c_{22} c_{33}} & \f {t^{-2p_2}}{c_{22}} & \f{(c_{12} c_{13}-c_{11}c_{23}) t^{-2p_2}}{c_{11} c_{22} c_{33}}\\
\f{(c_{12} c_{23}-c_{13}c_{22}) t^{-2p_1}}{c_{11} c_{22} c_{33}} & \f{(c_{12} c_{13}-c_{11}c_{23}) t^{-2p_2}}{c_{11} c_{22} c_{33}} & \f {t^{-2p_3}}{c_{33}}
\end{bmatrix} + (g^{-1})^{\bf [0]}_{error},
\end{align}
where $|\rd_x^\alp ((g^{-1})^{\bf [0]}_{error})^{ij}|\leq C t^{-2p_{\min\{i,j\}}+\ve}$.
\end{enumerate}
\end{lemma}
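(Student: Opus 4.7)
The plan is to derive all three statements by direct calculation from the explicit formula $g^{\bf[0]}_{ij} = c_{ij}\, t^{2p_{\max\{i,j\}}}$, using the trace constraint $p_1+p_2+p_3=1$ and the gap bounds $p_{i+1}-p_i\geq 2\varepsilon$ of Remark~\ref{rem:varepsilon} to control error exponents.

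For (1), I would expand $\det g^{\bf[0]}$ via the Leibniz formula. The identity permutation contributes $c_{11}c_{22}c_{33}\, t^{2(p_1+p_2+p_3)}=c_{11}c_{22}c_{33}\, t^2$; each of the other five permutations involves at least two off-diagonal entries, and a direct exponent count shows the total $t$-exponent exceeds $2$ by at least $2\min\{p_2-p_1,\,p_3-p_2\}\geq 2\varepsilon$. Summing the six terms gives the claimed $O(t^{2+\varepsilon})$ remainder, and shrinking $t_0$ ensures positivity.

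For (2), I would perform the diagonal congruence $g^{\bf[0]} = D\widetilde{C}D$ with $D=\mathrm{diag}(t^{p_1},t^{p_2},t^{p_3})$; a one-line calculation gives $\widetilde{C}_{ij}=c_{ij}\,t^{|p_i-p_j|}$, so $\widetilde{C}=\mathrm{diag}(c_{11},c_{22},c_{33})+O(t^{2\varepsilon})$. Since $c_{ii}>0$, this matrix is positive definite for small $t$, and Sylvester's law of inertia transfers this to $g^{\bf[0]}$, making ${}^{(4)}g^{\bf[0]}$ Lorentzian. The eigenvalue estimate does not follow directly from the congruence (which fails to preserve eigenvalues), but a parallel exponent count shows $\mathrm{tr}\,g^{\bf[0]}$, the second elementary symmetric function $e_2(g^{\bf[0]})$, and $\det g^{\bf[0]}$ agree with the corresponding symmetric functions of $(c_{11}t^{2p_1},c_{22}t^{2p_2},c_{33}t^{2p_3})$ up to multiplicative errors of $1+O(t^{2\varepsilon})$. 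Since these three candidate values are separated by factors at least $t^{-2\varepsilon}$, locating the roots of the characteristic polynomial $\det(g^{\bf[0]}-\lambda I)$ via the intermediate value theorem isolates three eigenvalues obeying the stated estimate, up to the obvious relabeling.

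For (3), I would use the cofactor formula $(g^{\bf[0]})^{-1}=\mathrm{adj}(g^{\bf[0]})/\det g^{\bf[0]}$: each of the nine $2\times 2$ minors evaluates in closed form, and dividing by $\det g^{\bf[0]} = c_{11}c_{22}c_{33}\,t^2\bigl(1+O(t^{2\varepsilon})\bigr)$ recovers the matrix displayed in the statement, with every subleading contribution carrying an extra factor of $t^{2(p_{i+1}-p_i)}\geq t^{2\varepsilon}$. The derivative estimates follow by differentiating the explicit expressions; the only subtlety is that $\partial_{x^k}t^{-2p_i(x)}=-2(\log t)(\partial_{x^k}p_i)\,t^{-2p_i}$ produces powers of $|\log t|$, which are absorbed at the cost of an arbitrarily small loss in $\varepsilon$. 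The only real obstacle in the whole lemma is step (2): since the rescaling $D$ is a congruence rather than a similarity, one cannot read off eigenvalues from $\widetilde{C}$ and must instead argue through the characteristic polynomial, where care is needed to ensure the separation of scales is genuinely large enough to isolate three distinct roots.
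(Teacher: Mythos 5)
Your overall strategy---direct computation from $g^{\bf[0]}_{ij}=c_{ij}t^{2p_{\max\{i,j\}}}$---is exactly the ``simple computation'' the paper omits, and parts (1) and (2) are sound. For (1), each non-identity permutation indeed exceeds the exponent $2$ by at least $2\min\{p_2-p_1,\,p_3-p_2\}$; just note the paper only guarantees $p_3-p_2\geq\ve$ (not $2\ve$), which costs you nothing since only an $O(t^{2+\ve})$ remainder is claimed. For (2), you are right that the congruence $D\widetilde{C}D$ does not control eigenvalues, and your sign-change argument for $\det(g^{\bf[0]}-\lambda I)$ at $\lambda=c_{ii}t^{2p_i}(1\pm Kt^{\ve})$ can be pushed through, because the competing terms in the characteristic polynomial are smaller by factors $t^{2(p_2-p_1)}$ or $t^{2(p_3-p_2)}$. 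A shorter route to the same bound: conjugate by $\mathrm{diag}(t^{p_1},t^{p_2},t^{p_3})$ (a similarity, hence spectrum-preserving) and apply Gershgorin; the discs are centered at $c_{ii}t^{2p_i}$ with radii $O(t^{2p_i+\ve})$ and are pairwise disjoint for small $t$.

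The genuine gap is in part (3), where you assert that the adjugate computation ``recovers the matrix displayed in the statement'' within the stated errors. Carrying out the $(2,3)$ entry shows it does not: the relevant cofactor is $g^{\bf[0]}_{12}g^{\bf[0]}_{13}-g^{\bf[0]}_{11}g^{\bf[0]}_{23}=c_{12}c_{13}t^{2p_2+2p_3}-c_{11}c_{23}t^{2p_1+2p_3}$, so dividing by $\det g^{\bf[0]}=c_{11}c_{22}c_{33}t^{2}(1+O(t^{\ve}))$ gives $((g^{\bf[0]})^{-1})^{23}=\f{c_{12}c_{13}\,t^{-2p_1}-c_{11}c_{23}\,t^{-2p_2}}{c_{11}c_{22}c_{33}}+\dots$, and since $t^{-2p_1}=t^{-2p_2}\cdot t^{2(p_2-p_1)}=O(t^{-2p_2+\ve})$, the $c_{12}c_{13}$ piece belongs to the error: the true leading term is $-\f{c_{23}}{c_{22}c_{33}}t^{-2p_2}$. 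The entry printed in \eqref{invg0} carries $c_{12}c_{13}$ at order $t^{-2p_2}$, so it differs from the actual inverse by $\f{c_{12}c_{13}}{c_{11}c_{22}c_{33}}t^{-2p_2}$, which is \emph{not} $O(t^{-2p_{\min\{2,3\}}+\ve})$ unless $c_{12}c_{13}\equiv0$; the displayed $(2,3)$/$(3,2)$ entry is evidently a misprint (the $(1,3)$ entry, where the two cofactor terms genuinely share the power $t^{-2p_1}$, is probably the source of the slip). You can confirm this independently: with the corrected entry, the $t^{-1}$ coefficient of $(k^{\bf[0]})_2{}^3=-\f12((g^{\bf[0]})^{-1})^{3\ell}\rd_t g^{\bf[0]}_{2\ell}$ is $(p_2-p_3)\f{c_{23}}{c_{33}}=\kappa_2{}^3$, as Lemma~\ref{lem:k0} and the definition after \eqref{eq:initial.parametrix.k} require, whereas the printed entry would produce an extra term proportional to $c_{12}c_{13}$. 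So your write-up must either state the corrected entry or record the discrepancy; as it stands, the claim that all nine entries match the display within the stated error is false for one entry, which shows the verification was not actually carried out entry by entry. The remaining remarks in your part (3)---the $|\log t|$ factors from differentiating $t^{-2p_i(x)}$ being absorbed because every error term has at least a factor $t^{\ve}$ to spare---are fine.
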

\begin{proof}
This is a simple computation and the proof is omitted.
\end{proof}

It will be convenient to define 
also 
\begin{equation}\label{eq:k0.def}
(k^{\bf [0]})_i{ }^j:= -\f 12 ((g^{\bf [0]})^{-1})^{j\ell}\rd_t g_{i\ell}^{\bf [0]}.
\end{equation}
The following lemma gives an estimate for $(k^{\bf [0]})_i{ }^j$.
\begin{lemma}\label{lem:k0}
 {F}or every multi-index $\alp$, there exists $C_{\alp}>0$ (depending on $\alp$, in addition to $c_{ij}$ and $p$) such that the following estimate holds for all $(t,x)\in (0,t_0  {]}\times \mathbb T^3$:
$$|\rd_x^\alp[(k^{\bf [0]})_i{ }^j -  {t^{-1}\kappa}_i{ }^j]|(t,x)\leq C_\alp   t^{-1+\ve}.$$
\end{lemma}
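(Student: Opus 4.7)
The plan is to prove Lemma~\ref{lem:k0} by a direct leading-order computation, using the fact that $a^{\bf [0]}_{ij}=c_{ij}$ is time-independent, so that
\[
\rd_t g^{\bf [0]}_{i\ell}=2p_{\max\{i,\ell\}}c_{i\ell}\,t^{2p_{\max\{i,\ell\}}-1}.
\]
Substituting this and the leading-order expression for $((g^{\bf [0]})^{-1})^{j\ell}$ from Lemma~\ref{lem:inverse.0} into the defining formula
$(k^{\bf [0]})_i{}^j=-\tfrac12((g^{\bf [0]})^{-1})^{j\ell}\rd_t g^{\bf [0]}_{i\ell}$ writes $(k^{\bf [0]})_i{}^j$ as a finite sum of terms of the form $C_{ij\ell}(x)\,t^{2p_{\max\{i,\ell\}}-2p_{\min\{j,\ell\}}-1}$, plus errors controlled by the error bound of Lemma~\ref{lem:inverse.0} multiplied by $\rd_t g^{\bf [0]}$, which are uniformly $O(t^{-1+\ve})$.

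Next, by the strict ordering $p_1<p_2<p_3$ (quantified by Remark~\ref{rem:varepsilon}, which gives a uniform gap $\ge\ve$), the exponent $2p_{\max\{i,\ell\}}-2p_{\min\{j,\ell\}}-1$ equals $-1$ only for those $\ell$ with $p_{\max\{i,\ell\}}=p_{\min\{j,\ell\}}$; every other term is bounded by $t^{-1+2\ve}$. The borderline $t^{-1}$ contributions must then be summed case by case for the nine components $(i,j)$. For $i\le j$ the surviving coefficients can be read off directly and, after simplification, reproduce exactly the values $\kappa_1{}^1=-p_1$, $\kappa_2{}^2=-p_2$, $\kappa_3{}^3=-p_3$, $\kappa_1{}^2=(p_1-p_2)c_{12}/c_{22}$, $\kappa_2{}^3=(p_2-p_3)c_{23}/c_{33}$, and $\kappa_1{}^3=(p_2-p_1)c_{12}c_{23}/(c_{22}c_{33})+(p_1-p_3)c_{13}/c_{33}$ prescribed in Theorem~\ref{mainthm}. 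For $i>j$ (the entries $(2,1)$, $(3,1)$, $(3,2)$), one checks that the $t^{-1}$ contributions cancel in pairs, consistent with $\kappa_i{}^j=0$.

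For $|\alp|\ge 1$, each $\rd_x$ either differentiates a smooth coefficient $c_{ij}$ (which introduces no $t$-loss) or hits a factor $t^{2p_i(x)}$, producing an additional $\log t$. Such logarithmic factors can be absorbed for small $t$ by replacing $\ve$ with $\ve'<\ve$; this is consistent with the fixed choice of $\ve$ in Theorem~\ref{thm:parametrix} because the sharp rates of the borderline-exponent gap in the previous paragraph are in fact at least $2\ve$ (cf.\ Remark~\ref{rem:tvarepsilon}). Uniformity in $x\in\mathbb T^3$ follows from smoothness of $p_i,\,c_{ij}$ and compactness of $\mathbb T^3$.

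The main obstacle is purely bookkeeping: one must, for each of the nine $(i,j)$'s, identify the $\ell$'s giving a borderline contribution and collect the coefficients. The most intricate case is $(i,j)=(1,3)$, where three distinct $\ell$'s contribute and one relies on the specific algebraic structure of the (approximate) inverse matrix to recover $\kappa_1{}^3$; also the $i>j$ cases require exact cancellation between pairs of borderline terms. These cancellations/matchings are direct consequences of the form of the Kasner ansatz and involve no new ideas beyond the careful use of Lemma~\ref{lem:inverse.0}.
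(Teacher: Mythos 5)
Your proposal is correct and follows essentially the same route as the paper, which likewise just computes $\rd_t g^{\bf [0]}_{i\ell}=2p_{\max\{i,\ell\}}c_{i\ell}t^{2p_{\max\{i,\ell\}}-1}$ and contracts it with the approximate inverse of Lemma~\ref{lem:inverse.0}, identifying the borderline $t^{-1}$ contributions with $t^{-1}\kappa_i{}^j$. One small inaccuracy: for $i>j$ a borderline term would require $i\le\ell\le j$ (since $\min\{j,\ell\}\le\ell\le\max\{i,\ell\}$ with equality forcing $i\le \ell\le j$), so no $t^{-1}$ contributions occur at all and no pairwise cancellation is needed; aside from this, and from the care needed with the genuine $t$-powers in the $(2,3)$-entry of the inverse when recovering $\kappa_1{}^3$ and $\kappa_2{}^3$, your bookkeeping and the log-absorption for $|\alp|\geq 1$ are fine.
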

\begin{proof}
By the definition of $g^{\bf [0]}$, it is easy to see that 
$$ \rd_t g^{\bf [0]} = 
\begin{bmatrix}
2p_1 t^{2p_1-1} c_{11} & 2p_2 t^{2p_2-1} c_{12} & 2p_3 t^{2p_3-1} c_{13} \\
2p_2 t^{2p_2-1} c_{12} & 2p_2 t^{2p_2-1} c_{22} & 2p_3 t^{2p_3-1} c_{23}\\
2p_3 t^{2p_3-1} c_{13} & 2p_3 t^{2p_3-1} c_{23} & 2p_3 t^{2p_3-1} c_{33}
\end{bmatrix} + (\rd_t g)^{\bf [0]}_{error},
$$
where $|\rd_x^\alp((\rd_t g)^{\bf [0]}_{error})_{ij}|\leq C_\alp t^{2p_{\max\{i,j\}}-1+\ve}$. Recalling that $(k^{\bf [0]})_i{ }^j:= -\f 12 ((g^{\bf [0]})^{-1})^{j\ell}\rd_t g_{i\ell}^{\bf [0]}$, the conclusion of the lemma can be achieved by combining the above computation with Lemma~\ref{lem:inverse.0}. \qedhere
\end{proof}

The next lemma estimates the Ricci curvature of a general metric $g = \sum_{i,j=1}^3 a_{ij} t^{2p_{\max\{i,j\}}} \ud x^i\, \ud x^j$ when $a_{ij}$ satisfies some basic bounds. This in particular gives an estimate for $Ric(g^{\bf [0]})_i{}^j$.
\begin{lemma}\label{lem:Ricci}
Suppose ${^{(4)}g}$ is a metric on $(0,T]\times \mathbb T^3$ taking the form ${^{(4)}g} = - \ud t^2 + \sum_{i,j=1}^3 a_{ij} t^{2p_{\max\{i,j\}}} \ud x^i\, \ud x^j$, where $a_{ij}$ are smooth, symmetric and obey the estimates 
$$|\rd_x^\alp a_{ij}|(t,x) \leq C_{\alp},\qquad |\rd_x^\alp\partial_t a_{ij}|(t,x) \leq C_{\alp}t^{-1+\varepsilon},$$
for some $C_\alp >0$. 

Then for every multi-index $\alp$, there exists $C'_{\alp}>0$ (depending on $C_{\alp}$, in addition to $c_{ij}$ and $p$) such that the following estimate holds for all $(t,x)\in (0, {T]} \times \mathbb T^3$:
\begin{equation}\label{eq:Ricci.without.logs}
\sum_{r=0}^1 t^r |\rd_x^\alp \rd_t^r Ric(g)_i{ }^j|(t,x)\leq C'_{\alp} { \min\{ t^{-2+\ve},\, t^{-2+\ve -2p_j + 2p_i} \} }.
\end{equation}
In fact, the following slightly stronger estimate holds:
\begin{equation}\label{eq:Ricci.improved}
 \sum_{r=0}^1 t^r|\rd_x^\alp \rd_t^r Ric(g)_i{ }^j|(t,x) \leq C'_\alp \min\{t^{-2+2\ve} |\log t|^{2+|\alp|},\, t^{-2 + 2\ve - 2p_j + 2p_i}|\log t|^{2+|\alp|} \}.
\end{equation}
\end{lemma}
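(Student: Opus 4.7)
The plan is to track the $t$-scaling of the spatial Ricci tensor systematically, exploiting the anisotropic block structure of $g$. I would begin by recording pointwise bounds on the metric and its derivatives. Writing $g_{ij} = a_{ij}\, t^{2p_{\max\{i,j\}}(x)}$, spatial differentiation produces factors of $|\log t|$ (since $\partial_x t^{2p(x)} = 2(\log t)(\partial_x p) t^{2p}$), so $|\partial_x^\alpha g_{ij}| \lesssim t^{2p_{\max\{i,j\}}} |\log t|^{|\alpha|}$, while $\partial_t$ costs one power of $t$, yielding $|\partial_x^\alpha \partial_t g_{ij}| \lesssim t^{2p_{\max\{i,j\}}-1}|\log t|^{|\alpha|}$. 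Inverting the leading block structure as in Lemma~\ref{lem:inverse.0} gives $|\partial_x^\alpha (g^{-1})^{ij}| \lesssim t^{-2p_{\min\{i,j\}}} |\log t|^{|\alpha|}$, with analogous bounds containing $\partial_t$.

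The heart of the computation is the Ricci formula $Ric_{ij} = \partial_k \Gamma^k_{ij} - \partial_j \Gamma^k_{ki} + \Gamma^k_{k\ell}\Gamma^\ell_{ij} - \Gamma^k_{j\ell}\Gamma^\ell_{ki}$, expanded using $\Gamma^k_{ij} = \tfrac{1}{2}g^{k\ell}(\partial_i g_{j\ell} + \partial_j g_{i\ell} - \partial_\ell g_{ij})$. A useful preliminary simplification is the identity $\Gamma^k_{k\ell} = \tfrac12 \partial_\ell \log \det g$. Expanding $\det g$ by permutations, only the identity permutation produces the clean leading power $\det(a_{ij})\, t^{2(p_1+p_2+p_3)} = \det(a_{ij})\,t^2$ (using the Kasner constraint $\sum_i p_i = 1$), while every other permutation contributes a strictly higher power; using the gaps $p_2-p_1,\, p_3-p_2 \geq \varepsilon$ from Remark~\ref{rem:varepsilon}, those errors are $O(t^{2+2\varepsilon})$. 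Hence $\Gamma^k_{k\ell} = O(1)$ with no $|\log t|$ growth, eliminating a potentially singular source of terms.

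Tracking the case analysis of index combinations, the worst contribution to $Ric(g)_i{}^j$ scales like $t^{-2p_3}|\log t|^{2+|\alpha|}$, where the $t^{-2p_3}$ arises from the product of the index-raising $(g^{-1})^{jm}$ with a second $(g^{-1})$ factor coming from a Christoffel. Since $p_3 \leq 1 - \varepsilon$ by Remark~\ref{rem:varepsilon}, this yields $|\partial_x^\alpha Ric(g)_i{}^j| \lesssim t^{-2+2\varepsilon}|\log t|^{2+|\alpha|}$, which is the stronger bound \eqref{eq:Ricci.improved} in the case $r=0$. The weaker bound \eqref{eq:Ricci.without.logs} then follows by absorbing $|\log t|^{2+|\alpha|} \lesssim_\alpha t^{-\varepsilon}$ for small $t$. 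The off-diagonal improvement $t^{-2p_j + 2p_i}$ when $i > j$ comes from $Ric(g)_i{}^j = (g^{-1})^{jm} Ric_{im}$: since $(g^{-1})^{jm} \lesssim t^{-2p_{\min\{j,m\}}} \leq t^{-2p_j}$ and the remaining factors depend on $i$ rather than $j$, one saves a factor $t^{2p_i - 2p_j}$ relative to the diagonal case. The $r=1$ bound is obtained by differentiating term by term in $t$: each $\partial_t$ hitting $t^{2p}$ costs $t^{-1}$ (matching the $t^r$ weighting), while each $\partial_t a_{ij}$ already carries the $t^{-1+\varepsilon}$ improvement from the hypothesis.

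The main obstacle is purely combinatorial: the expansion of $Ric(g)_i{}^j$ produces a large number of terms with distinct index placements, and every one must be checked to conform to the claimed $\min\{\cdot,\cdot\}$ bound. Once the block scalings from the first paragraph are codified and the cancellation $\Gamma^k_{k\ell} = O(1)$ is in hand, each term is bounded by a mechanical (if tedious) case analysis on the indices, and no further structural insight is required.
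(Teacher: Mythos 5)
Your overall route is the same as the paper's: expand $Ric(g)_i{}^j$ in coordinates via \eqref{Ricciformula}--\eqref{Chformula}, track the anisotropic scalings $g_{ij}=O(t^{2p_{\max\{i,j\}}})$, $(g^{-1})^{ij}=O(t^{-2p_{\min\{i,j\}}})$ with $|\log t|$ losses from spatial derivatives, use $\sum_i p_i=1$ and $p_3\le 1-\ve$, and handle $r=1$ by noting each $\partial_t$ costs at worst $t^{-1}$. The asserted outcomes (worst case $t^{-2p_3}|\log t|^{2+|\alp|}\le t^{-2+2\ve}|\log t|^{2+|\alp|}$, logs absorbed into $t^{-\ve}$ for the weaker bound) are correct, and your determinant identity $\Gamma^k_{k\ell}=\tfrac12\partial_\ell\log\det g=O(1)$ is true, though not needed: the paper's pairing observation already gives $O(|\log t|)$ for that factor, which suffices.

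However, there are two places where you assert precisely what the paper has to prove, and one of them is a genuine gap. First, the claim that "the worst contribution scales like $t^{-2p_3}$" is not a consequence of the block scalings alone: a term such as $g^{ab}\Gamma^j_{ac}\Gamma^c_{ib}$ contains three inverse-metric factors and two $\partial g$ factors, so a naive count gives powers far worse than $t^{-2+2\ve}$. What makes the case analysis come out right is the structural observation (the paper's Step 1) that whenever $g^{-1}$ shares a \emph{tensor} index with a $\partial g$ factor the pair contributes only $O(|\log t|)$, and that in every term of \eqref{Ricciterms} all but one inverse metric can be so paired; you should state and use this, otherwise "mechanical case analysis" is an unsupported promise. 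Second, and more seriously, your mechanism for the off-diagonal improvement --- "$(g^{-1})^{jm}\lesssim t^{-2p_j}$ and the remaining factors depend on $i$, saving $t^{2p_i-2p_j}$" --- fails for the terms in which $i$ occurs only as a differentiation index. For example, $g^{ab}\partial_i\Gamma^j_{ab}$ contains the piece $g^{ab}g^{j\ell}\,\partial_i\partial_a g_{b\ell}$; taking $a=b=3$, $\ell=j$, its size is $t^{-2p_3}\cdot t^{-2p_j}\cdot t^{2p_3}|\log t|^2=t^{-2p_j}|\log t|^2$, with no factor $t^{2p_i}$ anywhere since nothing but $\partial_i$ carries the index $i$. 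This still conforms to \eqref{eq:Ricci.improved}, but only because $t^{-2p_j}\le t^{-2+2\ve+2p_i-2p_j}$ is exactly the condition $p_i\le 1-\ve$ --- i.e.\ the improvement must be verified term by term (this is the paper's Step 2, which after extracting $g^{jb}$ isolates the residual structures such as $g^{jb}g^{ac}g^{df}\partial_a g_{df}\partial_c g_{bi}$), not deduced from the heuristic that the remaining factors "depend on $i$". As written, your argument for the $\min\{\cdot,\cdot\}$ bound therefore has a hole precisely where the paper does its real work.
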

\begin{proof}
Clearly \eqref{eq:Ricci.improved} implies \eqref{eq:Ricci.without.logs}; from now on we focus on the proof of \eqref{eq:Ricci.improved}.

For notational convenience, in this proof we write $g^{ab} = (g^{-1})^{ab}$.

Here is the basic observation. For a pairing $g^{ab} \rd_c g_{ae}$ (note the one contracted index), we have 
$$g^{ab} \rd_c g_{ae} = O(t^{-2p_{\min\{a,b\}} + 2p_{\max\{a,e\}}}) |\log t| \leq O(|\log t|).$$
Similarly,
$$(\rd_c g^{ab})( \rd_{d} g_{ae}),\, g^{ab} \rd^2_{cd} g_{ae} = O(|\log t|^2).$$
So in order to give an estimate for the Ricci curvature, we will find pairs of $g^{-1}$ and derivatives of $g$ which \emph{share at least one index}.

To make the algebraic structure clear, we will focus on proving the estimate with $|\alp|=0$ and $r =0$ in Steps~1 and 2, and then indicate the necessary changes in Steps~3 and 4.

\pfstep{Step~1: Proof of the upper bound $t^{-2+\ve}$}
We recall the formula for the Ricci curvature:
\begin{align}\label{Ricciformula}
Ric(g)_i{ }^j= g^{ab}\partial_i \Gamma_{ab}^j- g^{ab}\partial_a \Gamma_{bi}^j + g^{ab} \Gamma_{ic}^j \Gamma_{ab}^c - g^{ab} \Gamma_{ac}^j \Gamma_{ib}^c
\end{align} 
and that for the Christoffel symbols
\begin{align}\label{Chformula}
\Gamma_{ab}^c=\frac{1}{2} g^{c\ell} (\partial_a g_{b\ell}+\partial_b g_{a\ell} - \partial_\ell g_{ab}) {.}
\end{align}
Hence, we notice that every term in \eqref{Ricciformula} has either of the forms
\begin{align}\label{Ricciterms}
g^{ab}\partial_{\ell_1} [g^{\ell_2\ell_3}\partial_{\ell_4} g_{\ell_5\ell_6}],\qquad g^{ab}g^{\ell_1\ell_2}\partial_{\ell_3} g_{\ell_4\ell_5}g^{\ell_6\ell_7}\partial_{\ell_8} g_{\ell_9\ell_{10}}
\end{align}
where among the $\ell_i$'s there is an upper $j$ and a lower $i$ index, while the rest are contractions among themselves and with respect to $a,b$. 

For the first kind of terms in \eqref{Ricciterms}, using Lemma \ref{lem:inverse.0}, we notice that they are of order 
\begin{align}
|g^{ab}\partial_{\ell_1} [g^{\ell_2\ell_3}\partial_{\ell_4} g_{\ell_5\ell_6}]|\lesssim |\log t|^2t^{-2p_{\min\{a,b\}}-2p_{\min\{\ell_2,\ell_3\}}+2p_{\max\{\ell_5,\ell_6\}}},
\end{align}
where the pair $\{\ell_5,\ell_6\}$ contains at least one of the indices $a,b,\ell_2,\ell_3$. Hence, we have either $-2p_{\min\{a,b\}}+2p_{\max\{\ell_5,\ell_6\}}\ge0$ or $-2p_{\min\{\ell_2,\ell_3\}}+2p_{\max\{\ell_5,\ell_6\}}\ge0$, leaving 
\begin{align*}
|g^{ab}\partial_{\ell_1} [g^{\ell_2\ell_3}\partial_{\ell_4} g_{\ell_5\ell_6}]|\lesssim |\log t|^2t^{-2p_\ell}\lesssim |\log t|^2 t^{-2+2\varepsilon},
\end{align*}
for some $\ell$. On the other hand, the second term in \eqref{Ricciterms} satisfies:
\begin{align*}
&|g^{ab}g^{\ell_1\ell_2}\partial_{\ell_3} g_{\ell_4\ell_5}g^{\ell_6\ell_7}\partial_{\ell_8} g_{\ell_9\ell_{10}}|\\
\lesssim&\, |\log t|^2t^{ -2p_{\min\{a,b\}}-2p_{\min\{\ell_1,\ell_2\}}+2p_{\max\{\ell_4,\ell_5\}}-2p_{\min\{\ell_6,\ell_7\}}+2p_{\max\{\ell_9,\ell_{10}\}}  } {,}
\end{align*}
where at least three from the indices $a,b,\ell_1,\ell_2,\ell_6,\ell_7$ are contracted against three of the indices $\ell_4,\ell_5,\ell_9,\ell_{10}$. This implies that at least two pairs of exponents having opposite signs, among $$\{-2p_{\min\{a,b\}},-2p_{\min\{\ell_1,\ell_2\}},2p_{\max\{\ell_4,\ell_5\}},-2p_{\min\{\ell_6,\ell_7\}},2p_{\max\{\ell_9,\ell_{10}\}}\},$$ yield non-negative sums, thus, leaving only 
\begin{align*}
|g^{ab}g^{\ell_1\ell_2}\partial_{\ell_3} g_{\ell_4\ell_5}g^{\ell_6\ell_7}\partial_{\ell_8} g_{\ell_9\ell_{10}}|\lesssim |\log t|^2t^{-2p_\ell}\lesssim |\log t|^2 t^{-2+2\varepsilon}.
\end{align*}

\pfstep{Step~2: Proof of the upper bound $t^{-2+\ve-2p_j+2p_i}$}
We now move on to prove the improved estimates when $i>j$ (when $i\leq j$ the desired estimate follows from that proven in Step~1). As we are now familiar with this type of argument, let us just consider the contribution from the second type of term in \eqref{Ricciterms} (the first type of terms can be treated similarly). We now separate out the factor of $g^{j\ell}$ (which gives a contribution of at worst of $O(t^{-2p_j})$), i.e.~we write
$$g^{j b} g^{\ell_1\ell_2}\partial_{\ell_3} g_{\ell_4\ell_5}g^{\ell_6\ell_7}\partial_{\ell_8} g_{\ell_9\ell_{10}},$$
where exactly one of the $\ell_m$ is $b$ and exactly one of the $\ell_m$ is $i$. It is easy to check that at least one of the following must hold:
\begin{itemize}
\item After relabelling $g^{\ell_1\ell_2}\partial_{\ell_3} g_{\ell_4\ell_5}g^{\ell_6\ell_7}\partial_{\ell_8} g_{\ell_9\ell_{10}} = g^{\ell_1 c}\partial_{\ell_3} g_{\ell_4c}g^{\ell_6 d}\partial_{\ell_8} g_{\ell_9 d}$, so that by our basic observation  $g^{\ell_1\ell_2}\partial_{\ell_3} g_{\ell_4\ell_5}g^{\ell_6\ell_7}\partial_{\ell_8} g_{\ell_9\ell_{10}}  = O(|\log t|^2)$. As a result, the whole term contributes $O(t^{-2p_j}|\log t|^2)$, which is better than $O(t^{-2+\ve-2p_j+2p_i})$.
\item After relabelling, we have one of the following
$$g^{j b} g^{ac} g^{df} \rd_a g_{df} \rd_c g_{bi}, \quad g^{jb} g^{ac} g^{df} \rd_a g_{db} \rd_c g_{fi}.$$
For the first term, after noting $g^{df} \rd_a g_{df} = O(|\log t|)$, $g^{ac} = O(t^{-2+2\ve})$, $g^{jb} = O(t^{-2p_j})$ and $\rd_c g_{bi} = O(t^{2p_i}|\log t|)$, we have $g^{j b} g^{ac} g^{df} \rd_a g_{df} \rd_c g_{bi} = O(t^{-2+2\ve -2p_j+2p_i}|\log t|^2)\leq O(t^{-2+\ve-2p_j+2p_i})$. For the second term, note that $g^{df} \rd_a g_{db} = O(|\log t|)$, $g^{ac} = O(t^{-2+2\ve })$, $g^{jb} = O(t^{-2p_j})$ and $\rd_c g_{bi} = O(t^{2p_i}|\log t|)$, which then again gives the desired estimate.
\end{itemize}

 \pfstep{Step~3: Higher derivative bounds} It is easy to see that after differentiating by $\rd_x^\alp$, we at worst pick up additional powers of $|\log t|^{|\alp|}$, we then obtain the desired estimate also for higher derivatives of $Ric(g)_i{ }^j$.

 \pfstep{Step~4: Time derivative}  For $\partial_x^\alpha \rd_t Ric(g)_i{}^j$, the argument is almost identical. Indeed, exploiting the form of the metric and using the estimate for $\partial_x^\alpha\partial_t a_{ij}$, we notice that $\partial_tg_{ij}=O(t^{2p_{\max\{i,j\}}-1}), \partial_tg^{ij}=O(t^{-2p_{\min\{i,j\}}-1})$ and similar behaviors for their spatial derivatives (up to logarithms). Hence, a power of $t^{-1}$ can be factored out, leaving terms with factors that behave as in the previous steps.  
This completes the proof of the lemma.
\end{proof}

\subsection{ {Estimates for $k^{\bf [n]}$}}\label{sec:parametrix.induction}

\begin{lemma}\label{lem:transport}
Consider the nonlinear transport equation 
$$\rd_t u = f + \f {u^2}{t^2},$$
where $f:(0,1)\times \mathbb T^3\to \mathbb R$ is a function such that $|f|(t,x)\ls t^\delta$ for some $\de>0$. Then there exist $t_*\in (0,1)$ and a unique solution $u:(0,t_*)\times \mathbb T^3\to \mathbb R$ such that $|u|(t,x)\ls t^{1+\de}$.

Assuming moreover that $|\rd_x^\alp f|(t,x)\ls_\alp t^{\delta}$. It also follows that $|\rd_x^\alp u|\ls_\alp t^{1+\de}$.
\end{lemma}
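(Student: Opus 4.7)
The plan is to recast the singular initial value problem as a fixed-point equation whose scaling already encodes the decay $|u|\ls t^{1+\de}$. For any candidate $u$ satisfying $|u|(s,x)\ls s^{1+\de}$, both $f(s,x)$ and $u^2(s,x)/s^2=O(s^{2\de})$ are integrable down to $s=0$, so the ODE is equivalent to
\[
u(t,x) = \int_0^t \Bigl(f(s,x) + \f{u(s,x)^2}{s^2}\Bigr)\,ds.
\]
There is no integration constant to fix: the required decay picks out a unique boundary condition at $t=0$ (namely $u(0,x)=0$). The whole analysis takes place in the weighted Banach space $X_{t_*}=\{u\in C^0((0,t_*]\times\mathbb{T}^3): \|u\|_X:=\sup t^{-(1+\de)}|u|<\i\}$.

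The core step is to show that $T[u](t,x):=\int_0^t (f+u^2/s^2)\,ds$ is a contraction on a closed ball $\mc{B}_{C_0}\subset X_{t_*}$. Self-mapping follows from the straightforward estimate $|T[u]|\leq \f{\|f\|}{1+\de} t^{1+\de} + \f{C_0^2}{1+2\de}t^{1+2\de}$, which stays below $C_0 t^{1+\de}$ once $C_0$ is chosen a bit larger than $\|f\|/(1+\de)$ and $t_*$ is small so that $C_0 t_*^{2\de}/(1+2\de)\leq 1/2$. For contractivity, factor $u_1^2-u_2^2=(u_1+u_2)(u_1-u_2)$, use $|u_1+u_2|\ls C_0 s^{1+\de}$, and get $\|T[u_1]-T[u_2]\|_X \leq \f{2C_0}{1+2\de}t_*^{2\de}\|u_1-u_2\|_X$, again a contraction after shrinking $t_*$. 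Banach's fixed point theorem then delivers the desired unique $u$ in $\mc{B}_{C_0}$. Uniqueness in the full class $\{|u|\ls t^{1+\de}\}$ follows because any such solution lies in some $\mc{B}_{C_0'}$ after possibly enlarging $C_0$ (and the same contraction argument works uniformly on any such ball for $t_*$ correspondingly smaller).

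For the higher-derivative claim, I would induct on $|\alp|$, differentiating the integral equation
\[
\rd_x^\alp u(t,x) = \int_0^t \rd_x^\alp f\,ds + \int_0^t \f{1}{s^2}\sum_{\bt+\gamma=\alp}\binom{\alp}{\bt}\rd_x^\bt u\,\rd_x^\gamma u\,ds.
\]
Isolate the two top-order terms ($\bt=0$ or $\gamma=0$): they produce a linear $\int_0^t 2u\,s^{-2}\rd_x^\alp u\,ds$ with $O(s^{-1+\de})$ coefficient, which is integrable and harmless (an integrating factor, or a re-run of the contraction on the same space $X_{t_*}$ with the source $\rd_x^\alp f + \sum_{|\bt|,|\gamma|<|\alp|}\ldots$, handles it). The remaining bilinear terms involve only strictly lower-order derivatives, which by the induction hypothesis are $O(s^{1+\de})$, producing an $O(t^{1+2\de})$ contribution that fits inside the $O(t^{1+\de})$ budget. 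This closes the induction.

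The main point (really the only point) is the strict subcriticality of $u^2/s^2$ relative to the source $f=O(s^\de)$: at the class $|u|\ls t^{1+\de}$ the nonlinear term is $O(s^{2\de})$, strictly better than $f$ thanks to $\de>0$. It is this gap that yields smallness from $t_*^{2\de}$ and lets the Picard iteration close; if we had only $\de=0$ the nonlinear term would be borderline non-integrable near $s=0$ and the whole scheme would collapse.
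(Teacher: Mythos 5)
Your proposal is correct and is essentially the paper's own approach: the paper simply states that the lemma "is proven by a standard Picard iteration, with some extra care tracing the $t$ dependence," which is precisely your contraction argument in the weighted space $\sup_t t^{-(1+\de)}|u|$, together with the induction on $|\alp|$ for spatial derivatives. One harmless bookkeeping slip: since $|u|\leq C_0 s^{1+\de}$ gives $u^2/s^2 = O(s^{2\de})$, whose integral is $O(t^{1+2\de})$, the gain over $t^{1+\de}$ is a factor $t_*^{\de}$ (not $t_*^{2\de}$) in both the self-mapping and contraction estimates, but as $\de>0$ this does not affect the conclusion.
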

\begin{proof}
This is proven by a standard Picard iteration, with some  {extra} care tracing the $t$ dependence; we omit the details.
\end{proof}

\begin{lemma}\label{lem:kn.gen.bounds}
Suppose the following holds for some $N\geq 1$: there exists $t_{N-1}>0$ such that for every $0\leq n \leq N-1$ and every multi-index $\alp$,  $g^{\bf [n]}$ satisfies the following estimate for some $C_{\alp,n}>0$ (depending on $\alp$, $n$, in addition to $c_{ij}$ and $p_i$) for all $(t,x)\in (0,t_{N-1})\times \mathbb T^3$:
\begin{equation}\label{eq:a.diff.with.data.assump}
|\rd_x^{\alp} (a^{\bf [n]}_{ij} -c_{ij}) |(t,x) \leq C_{\alp, n} t^{\ve}.
\end{equation}

Then, there exists $t_N\in (0,t_{N-1})$ sufficiently small  {such that for every $1\leq n\leq N$} and  {every} multi-index $\alp$, the  {following holds for all $(t,x)\in (0,t_{N-1})\times \mathbb T^3$ for some $C'_{\alp,n}>0$ (depending on $\alp$, $n$, in addition to $c_{ij}$ and $p_i$):}
$$|\rd_x^\alp[(k^{\bf [n]})_i{ }^j - t^{-1}\kappa_i{ }^j]|(t,x)\leq C_{\alp,n}' \min\{ t^{-1+\ve},\, t^{-1+\ve-2p_j+2p_i} \}.$$
\end{lemma}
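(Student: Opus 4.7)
The plan is to treat \eqref{eq:k.transport} as a singular ODE system for $(\kn)_i{}^j$ on $(0,t_N]\times\mathbb T^3$, driven by the Ricci tensor of $g^{\bf[n-1]}$. Since the hypothesis \eqref{eq:a.diff.with.data.assump} holds for $g^{\bf[n-1]}$ and the transport equation \eqref{eq:g.transport} combined with the inductively known estimate on $k^{\bf[n-1]}$ supplies the remaining $\rd_t a^{\bf[n-1]}$ bound (the base case $n=1$ uses Lemma~\ref{lem:k0}), Lemma~\ref{lem:Ricci} provides
\[
\sum_{r=0}^1 t^r|\rd_x^\alp\rd_t^r Ric(g^{\bf[n-1]})_i{}^j|(t,x)\ls \min\{t^{-2+\ve},\,t^{-2+\ve-2p_j+2p_i}\}.
\]

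The key renormalization is to set $v_i{}^j := (\kn)_i{}^j - t^{-1}\kappa_i{}^j$. Using $\sum_\ell\kappa_\ell{}^\ell = -\sum_\ell p_\ell = -1$ (assumption~(2) of Theorem~\ref{mainthm}), the equation \eqref{eq:k.transport} rearranges to
\[
\rd_t v_i{}^j + t^{-1}v_i{}^j = Ric(g^{\bf[n-1]})_i{}^j + t^{-1}(\mathrm{tr}\,v)\,\kappa_i{}^j + (\mathrm{tr}\,v)\,v_i{}^j.
\]
Multiplying by the integrating factor $t$ turns the borderline linear term into a total derivative,
\[
\rd_t(tv_i{}^j) = t\,Ric(g^{\bf[n-1]})_i{}^j + (\mathrm{tr}\,v)\,\kappa_i{}^j + t(\mathrm{tr}\,v)\,v_i{}^j,
\]
and under a bootstrap ansatz $|v_i{}^j|\ls t^{-1+\ve}$ the right-hand side is $O(t^{-1+\ve})$. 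A Picard iteration modeled on Lemma~\ref{lem:transport}, together with the asymptotic condition \eqref{eq:initial.parametrix.k}, produces a unique solution on $(0,t_N]\times\mathbb T^3$ for $t_N$ small; integrating from $0$ closes the bootstrap at the rate $|v_i{}^j|\ls t^{-1+\ve}$.

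The sharper rate for $i>j$ is where the \emph{reductive structure} of the system manifests: the lower-triangular components of $\kappa$ vanish by definition, so the borderline source $t^{-1}(\mathrm{tr}\,v)\kappa_i{}^j$ drops out entirely. Combined with the improved Ricci bound $t\,Ric_i{}^j=O(t^{-1+\ve-2p_j+2p_i})$ and with the quadratic term $t(\mathrm{tr}\,v)v_i{}^j$ being lower order, a second bootstrap with ansatz $|v_i{}^j|\ls t^{-1+\ve-2p_j+2p_i}$ closes (the exponent is positive because $p_i>p_j$ when $i>j$, so the integral from $0$ converges). Higher-derivative estimates come from commuting $\rd_x^\alp$ through the renormalized ODE and inducting on $|\alp|$; because $\kappa_i{}^j\equiv 0$ for $i>j$ is preserved under $x$-differentiation, the reductive structure persists at every order, and the commutator terms distribute derivatives among $v$ (controlled at lower orders by induction), $\kappa$ (smooth), and $Ric(g^{\bf[n-1]})$ (controlled by the higher-order part of Lemma~\ref{lem:Ricci}).

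The main obstacle is the borderline source $t^{-1}(\mathrm{tr}\,v)\kappa_i{}^j$: without any further input the integrating-factor bootstrap can only yield the uniform rate $t^{-1+\ve}$, never the anisotropic improvement $t^{-1+\ve-2p_j+2p_i}$. Resolving this hinges on the \emph{a priori} vanishing of $\kappa$ on the lower triangle combined with the matching anisotropic enhancement of $Ric(g^{\bf[n-1]})_i{}^j$ supplied by Lemma~\ref{lem:Ricci}; this is precisely the reductive mechanism advertised in the introduction.
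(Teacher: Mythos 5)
Your renormalization $v_i{}^j=(\kn)_i{}^j-t^{-1}\kappa_i{}^j$, the integrating factor coming from $\kappa_\ell{}^\ell=-1$, and the mechanism for the anisotropic rate (vanishing of $\kappa_i{}^j$ for $i>j$ plus the improved bound on $Ric(g^{\bf[n-1]})_i{}^j$ from Lemma~\ref{lem:Ricci}) are exactly the ingredients the paper uses. However, there is a genuine gap at the central step, namely the claim that under the ansatz $|v|\lesssim t^{-1+\ve}$ ``integrating from $0$ closes the bootstrap.'' In your equation $\rd_t(tv_i{}^j)=t\,Ric_i{}^j+(\mathrm{tr}\,v)\,\kappa_i{}^j+t(\mathrm{tr}\,v)v_i{}^j$, the middle term is \emph{linear} in the unknown with the borderline coefficient $t^{-1}\kappa_i{}^j$ after undoing the integrating factor. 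If you assume $|v|\le A\,t^{-1+\ve}$ and integrate, this term returns $\int_0^t|\mathrm{tr}\,v|\,|\kappa_i{}^j|\,ds\le C A\,\|\kappa\|_{L^\infty}\,t^{\ve}/\ve$, i.e.\ the output constant is $C_{Ric}+CA\|\kappa\|_{L^\infty}/\ve$ against the input $A$. Since $\|\kappa\|_{L^\infty}/\ve$ is not small and the defect carries no extra positive power of $t$, shrinking $t_N$ does not help: the bootstrap does not self-improve for the components with $\kappa_i{}^j\neq 0$. The same term spoils a naive Picard contraction ``modeled on Lemma~\ref{lem:transport},'' because that lemma covers only the pure quadratic nonlinearity $u^2/t^2$, not a borderline linear term with an $O(1)$ coefficient.

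The missing idea is precisely the paper's first reduction: treat the trace separately before the general components. Taking the trace of your equation and using $\mathrm{tr}\,\kappa=-1$ \emph{twice} gives $\rd_t\bigl(t^2\,\mathrm{tr}\,v\bigr)=t^2R(g^{\bf[n-1]})+t^2(\mathrm{tr}\,v)^2$, a scalar Riccati equation with no borderline linear term, which is exactly the setting of Lemma~\ref{lem:transport} and yields $|\mathrm{tr}\,v|\lesssim t^{-1+\ve}$ with a constant independent of any bootstrap parameter (the paper does this via the auxiliary unknown $h^{\bf[n]}$ and then identifies $h^{\bf[n]}=(\kn)_\ell{}^\ell$ by uniqueness). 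Once the trace is known, the equation for each $(\kn)_i{}^j$ is linear with the integrable coefficient $\mathrm{tr}\,v=O(t^{-1+\ve})$, so Gr\"onwall with the initial condition \eqref{eq:initial.parametrix.k} gives both the $t^{-1+\ve}$ bound and, using $\kappa_i{}^j=0$ for $i>j$ together with the improved Ricci bound, the rate $t^{-1+\ve-2p_j+2p_i}$ --- your argument for this last step and for the higher spatial derivatives is then fine. (An alternative repair is to integrate the borderline linear system exactly, using that the $t$-independent matrix $-\mathrm{Id}+\kappa\,\mathrm{tr}$ has spectrum $\{-1,-2\}$, but some such exploitation of the trace structure is indispensable; your write-up invokes the reductive structure only for the $i>j$ improvement, not where it is first needed. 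A minor further point: the $\rd_t Ric(g^{\bf[n-1]})$ bound you quote is neither assumed via the hypotheses of this lemma nor needed for it.)
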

\begin{proof}

The key difficulty in solving \eqref{eq:k.transport} is that there are borderline terms with $O(t^{-1})$ coefficients so that we cannot directly apply Gr\"onwall's lemma. One can nevertheless analyze the precise structure of the equations.

\pfstep{Step~1: Solving an auxiliary system} We first solve an auxiliary system 
\begin{equation}\label{eq:aux.1}
\begin{cases}
\rd_t  {h^{\bf [n]}} = R(g^{\bf [n-1]}) +  {(h^{\bf [n]})^2} \\
\rd_t  {(k^{\bf [n]})}_i{ }^j = Ric(g^{\bf [n-1]})_i{ }^j + { h^{\bf [n]}(k^{\bf [n]})_i{ }^j} 
\end{cases}.
\end{equation}

The first equation in \eqref{eq:aux.1} can be rearranged to
\begin{equation}\label{eq:aux.1.2}
\rd_t [t^2( {h^{\bf [n]}}+ \f 1t)] = t^2 R(g^{\bf [n-1]}) + t^2( {h^{\bf [n]}}+\f 1t)^2.
\end{equation}
Using the bound $|R(g^{\bf [n-1]})|\ls t^{-2+\ve}$ from the assumptions on $g^{\bf [n-1]}$ together with Lemma~\ref{lem:Ricci}, \eqref{eq:aux.1.2} can be solved using Lemma~\ref{lem:transport} with  {$h^{\bf [n]}$} satisfying
\begin{equation}\label{eq:h1.est}
|\rd_x^\alp ( {h^{\bf [n]}} + \f 1t)|\ls t^{-1+\ve}.
\end{equation}

Now the second equation in \eqref{eq:aux.1} can be rearranged to
$$\rd_t [t(k^{\bf [n]})_i{ }^j] = t Ric(g^{\bf [n-1]})_i{ }^j + (h^{\bf [n]}+ \f 1t) t(k^{\bf [n]})_i{ }^j .$$
Using \eqref{eq:h1.est}, Gr\"onwall's inequality  {and the estimate in Lemma~\ref{lem:Ricci}}, it follows that there is a unique solution $(k^{\bf [ {n}]})_i{ }^j$ that obeys the initial condition \eqref{eq:initial.parametrix.k} and satisfies 
\begin{equation}\label{eq:k1.est}
|\rd_x^\alp [(k^{\bf [n]})_i{ }^j -  {t^{-1} \kappa}_i{ }^j]|\ls  {\min\{t^{-1+\ve}, \, t^{-1+\ve -2p_j + 2p_i}\}}.
\end{equation}

\pfstep{Step~2: Finishing the argument} Now that we have solved \eqref{eq:aux.1}  {and obtained estimates} \eqref{eq:h1.est} and \eqref{eq:k1.est}, in order to conclude the argument, it suffices to show that in fact $h^{\bf [n]} = (k^{\bf [n]})_\ell{ }^\ell$. To this end, it suffices to note that
$$\rd_t [t^2((k^{\bf [n]})_\ell{ }^\ell+ \f 1t)] = t^2 R(g^{\bf [n-1]}) + t^2((k^{\bf [n]})_\ell{ }^\ell+\f 1t)^2.$$
Hence, comparing this equation with \eqref{eq:aux.1.2}, we obtain $h^{\bf [n]} = (k^{\bf [n]})_\ell{ }^\ell$ by the uniqueness statement in Lemma~\ref{lem:transport}. \qedhere
\end{proof}

\begin{lemma}\label{lem:kn-kn-1}
Suppose  {the following holds for some $N\geq 2$:} there exists $t_{N-1}>0$ such that for every  {$1\leq n \leq N-1$} and every multi-index $\alp$,  $g^{\bf [n]}$ satisfies the following estimate for some $C_{\alp,n}>0$ (depending on $\alp$, $n$, in addition to $c_{ij}$ and $p_i$) for all $(t,x)\in (0,t_{N-1})\times \mathbb T^3$:
\begin{equation}\label{eq:a.diff.assump}
|\rd_x^{\alp} (a^{\bf [n]}_{ij} - a^{\bf [n-1]}_{ij})|(t,x) \leq C_{\alp, n} t^{n \ve}.
\end{equation}

Then,  {taking} $t_N\in (0,t_{N-1})$  {smaller (compared to Lemma~\ref{lem:kn.gen.bounds}) if necessary,}  {for every $2\leq n\leq N$} and  {every} multi-index $\alp$, the  {following holds for all $(t,x)\in (0,t_{N-1})\times \mathbb T^3$ for some $C'_{\alp,n}>0$ (depending on $\alp$, $n$, in addition to $c_{ij}$ and $p_i$):}
\begin{equation}\label{eq:k.diff.est}
|\rd_x^\alp [(k^{\bf [n]})_i{ }^j - (k^{\bf [n-1]})_i{ }^j]|(t,x) \leq C'_{\alp, n} t^{-1+n\ve}.
\end{equation}
\end{lemma}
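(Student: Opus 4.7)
The plan is to mimic the structure of the proof of Lemma~\ref{lem:kn.gen.bounds}: first difference the evolution equation \eqref{eq:k.transport}, then decouple the trace part of $\delta k_i{}^j := (k^{\bf [n]})_i{}^j - (k^{\bf [n-1]})_i{}^j$ by tracing, solve the resulting scalar Fuchsian equation using the integrating factor $t^2$, and finally substitute back and close with an integrating factor of $t$ for the full tensorial difference.

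Differencing \eqref{eq:k.transport} for $n$ and $n-1$ gives
\begin{align*}
\rd_t \delta k_i{}^j = \bigl[Ric(g^{\bf [n-1]})-Ric(g^{\bf [n-2]})\bigr]_i{}^j + \delta h\cdot (k^{\bf [n]})_i{}^j + (k^{\bf [n-1]})_\ell{}^\ell\cdot \delta k_i{}^j,
\end{align*}
where $\delta h := \delta k_\ell{}^\ell$. The critical input is a bound for the difference of Ricci tensors. By writing $Ric(g^{\bf [n-1]})-Ric(g^{\bf [n-2]}) = \int_0^1 DRic(g_\lambda)[h]\,d\lambda$ with $h = g^{\bf [n-1]}-g^{\bf [n-2]} = O(t^{2p_{\max}+(n-1)\ve})$, and observing that the linearized Ricci operator has the same algebraic structure as $Ric$ with one factor of $(g,\rd g,\rd^2 g)$ replaced by a factor of $(h,\rd h,\rd^2 h)$, the contraction-counting argument of Lemma~\ref{lem:Ricci} yields
\begin{align*}
|\rd_x^{\alp}[Ric(g^{\bf [n-1]})-Ric(g^{\bf [n-2]})]_i{}^j|(t,x) \ls t^{-2+n\ve}|\log t|^{2+|\alp|},
\end{align*}
after using \eqref{eq:a.diff.with.data.assump} (for the background factors) and \eqref{eq:a.diff.assump} (for the one differenced factor).

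Tracing the equation and using $(k^{\bf [n]})_\ell{}^\ell + (k^{\bf [n-1]})_\ell{}^\ell = -2/t + O(t^{-1+\ve})$ (from Lemma~\ref{lem:kn.gen.bounds}), one obtains
\begin{align*}
\rd_t(t^2\delta h) = t^2[R(g^{\bf [n-1]})-R(g^{\bf [n-2]})] + O(t^{1+\ve})\delta h.
\end{align*}
Since the source is $O(t^{n\ve})$ and $t^2\delta h \to 0$ as $t\to 0^+$ (by Lemma~\ref{lem:kn.gen.bounds} applied to each of $k^{\bf [n]}$, $k^{\bf [n-1]}$), a Picard-type argument as in Lemma~\ref{lem:transport} yields $\delta h = O(t^{-1+n\ve})$. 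Substituting back and using $(k^{\bf [n]})_i{}^j = t^{-1}\kappa_i{}^j + O(t^{-1+\ve})$ and $(k^{\bf [n-1]})_\ell{}^\ell = -1/t + O(t^{-1+\ve})$, we rewrite the equation for $\delta k_i{}^j$ as
\begin{align*}
\rd_t(t\,\delta k_i{}^j) = t\cdot\bigl[\text{Ricci diff} + \delta h\cdot (k^{\bf [n]})_i{}^j\bigr] + O(t^{\ve})\,\delta k_i{}^j = O(t^{-1+n\ve}) + O(t^{-1+\ve})\cdot(t\,\delta k_i{}^j),
\end{align*}
which, combined with $t\,\delta k_i{}^j \to 0$ as $t\to 0^+$, gives $\delta k_i{}^j = O(t^{-1+n\ve})$ by Gr\"onwall.

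Higher spatial derivatives are handled by commuting $\rd_x^\alp$ through the system: each commutator produces at worst powers of $|\log t|$ (from derivatives of $g^{\pm 1}$, cf.\ Lemma~\ref{lem:Ricci}) together with lower-order difference terms already controlled inductively in $|\alp|$. The main obstacle is the linearized Ricci estimate above, since one must verify that the structural pairing-of-indices mechanism of Lemma~\ref{lem:Ricci} still produces the gain $t^{(n-1)\ve}$ uniformly in the branch of the integrand $g_\lambda$; this however only requires that each $g_\lambda$ satisfies the hypotheses of Lemma~\ref{lem:Ricci}, which follows from \eqref{eq:a.diff.with.data.assump} applied to both $g^{\bf [n-1]}$ and $g^{\bf [n-2]}$. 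After this, everything else is routine integrating-factor analysis.
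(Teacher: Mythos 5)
Your proposal is correct and follows essentially the same route as the paper's proof: difference the transport equation \eqref{eq:k.transport}, bound $Ric(g^{\bf [n-1]})-Ric(g^{\bf [n-2]})$ by the structural argument of Lemma~\ref{lem:Ricci} applied to the difference of the $a$'s, control the trace first via the integrating factor $t^2$, then the full tensor via the integrating factor $t$, closing with Gr\"onwall and the vanishing of $t\,\delta k$ and $t^2\,\delta h$ as $t\to 0^+$ coming from Lemma~\ref{lem:kn.gen.bounds}. The only slip is bookkeeping: the Ricci-difference computation actually gives $\ls t^{-2+(n+1)\ve}|\log t|^{2+|\alp|}\ls t^{-2+n\ve}$ (as in \eqref{eq:Ric.diff.est.more.precise}--\eqref{eq:Ric.diff.est}), and you should absorb the logarithms using the spare $t^{\ve}$ before integrating, since your stated source bound $t^{-2+n\ve}|\log t|^{2+|\alp|}$ would otherwise leave logarithmic factors in \eqref{eq:k.diff.est}.
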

\begin{proof}

\pfstep{Step~1: Estimates on the Ricci curvature} The estimate \eqref{eq:a.diff.assump} implies that 
\begin{equation}\label{eq:Ric.diff.est.more.precise}
|\rd_x^\alp [Ric(g^{\bf [n]})_i{ }^j - Ric(g^{\bf [n-1]})_i{ }^j]|(t,x)\ls t^{-2+(n+2)\ve}|\log t|^{2+|\alp|}
\end{equation}
for every $0\leq n \leq N-1$. Indeed, arguing as in the proof of Lemma~\ref{lem:Ricci}, we notice that the difference of the $\rd_x^\alp$ derivative of the Ricci curvatures can be bounded by the differences $a^{\bf [n]}_{ij} - a^{\bf [n-1]}_{ij}$ (and their spatial derivatives), multiplied by a term that is controlled by $t^{-2+2\varepsilon}|\log t|^{2+|\alp|}$. In particular, \eqref{eq:Ric.diff.est.more.precise} implies
\begin{equation}\label{eq:Ric.diff.est}
|\rd_x^\alp [Ric(g^{\bf [n]})_i{ }^j - Ric(g^{\bf [n-1]})_i{ }^j]|(t,x)\ls t^{-2+(n+1)\ve}.
\end{equation}

\pfstep{Step~2: Estimates on $(k^{\bf [n]})_i{ }^j$} The assumption \eqref{eq:a.diff.assump} implies the assumption of Lemma~\ref{lem:kn.gen.bounds} holds. Hence by Lemma~\ref{lem:kn.gen.bounds},
\begin{equation}\label{eq:kn.est}
|\rd_x^\alp [(k^{\bf [n]})_i{ }^j -  {t^{-1}\kappa}_i{ }^j]|(t,x)\ls t^{-1+\ve}
\end{equation}
for every $2\leq n \leq N$.

In particular, since (by definition) $t^{-1}\kappa_i{ }^i = \f 1t$, \eqref{eq:kn.est} implies that
\begin{equation}\label{eq:hn.est}
|\rd_x^\alp [(k^{\bf [n]})_i{ }^i +\f 1t]|(t,x)\ls t^{-1+\ve}.
\end{equation}

\pfstep{Step~3: Estimates on the difference $(k^{\bf [n]})_i{ }^j- (k^{\bf [n-1]})_i{ }^j$} Using \eqref{eq:k.transport}, we obtain, for $2\leq n \leq N$, that
\begin{equation}\label{eq:kn.diff}
\begin{split}
&\: \rd_t [(k^{\bf [n]})_i{ }^j - (k^{\bf [n-1]})_i{ }^j] \\
= &\: Ric(g^{\bf [n-1]})_i{ }^j -Ric(g^{\bf [n-2]})_i{ }^j + [(k^{\bf [n]})_\ell{ }^\ell- (k^{\bf [n-1]})_\ell{ }^\ell](k^{\bf [n]})_i{ }^j + (k^{\bf [n-1]})_\ell{ }^\ell[(k^{\bf [n]})_i{ }^j - (k^{\bf [n-1]})_i{ }^j].
\end{split}
\end{equation}

It turns out to be useful to first control the trace of $k^{\bf [n]} - k^{\bf [n-1]}$. Taking the trace of \eqref{eq:kn.diff}, we obtain
$$\rd_t ((k^{\bf [n]})_i{ }^i - (k^{\bf [n-1]})_i{ }^i) = R(g^{\bf [n-1]}) - R(g^{\bf [n-2]}) + ((k^{\bf [n]})_i{ }^i + (k^{\bf [n-1]})_i{ }^i)((k^{\bf [n]})_i{ }^i - (k^{\bf [n-1]})_i{ }^i).$$
This implies
$$\rd_t [t^2((k^{\bf [n]})_i{ }^i - (k^{\bf [n-1]})_i{ }^i)] = t^2(R(g^{\bf [n-1]}) - R(g^{\bf [n-2]})) + ((k^{\bf [n]})_i{ }^i + (k^{\bf [n-1]})_i{ }^i+ \f 2 t)t^2((k^{\bf [n]})_i{ }^i - (k^{\bf [n-1]})_i{ }^i).$$
By \eqref{eq:Ric.diff.est} in Step~1, the estimate \eqref{eq:hn.est} in Step~2, the condition \eqref{eq:initial.parametrix.k} and Gr\"onwall's inequality, it easily follows that
\begin{equation}\label{eq:kn.diff.est}
|\rd_x^\alp (k^{\bf [n]} - k^{\bf [n-1]})_i{ }^i|(t,x) \ls t^{-1+n\ve}
\end{equation}
 for every $2\leq n\leq N$.

We now return to \eqref{eq:kn.diff}, which we rewrite as follows. 
\begin{equation*}
\begin{split}
&\: \rd_t [t((k^{\bf [n]}- k^{\bf [n-1]})_i{ }^j)] \\
= &\: t(Ric(g^{\bf [n-1]})_i{ }^j -Ric(g^{\bf [n-2]})_i{ }^j) + t(k^{\bf [n]}- k^{\bf [n-1]})_\ell{ }^\ell (k^{\bf [n]})_i{ }^j + [(k^{\bf [n-1]})_\ell{ }^\ell+ \f 1t]t(k^{\bf [n]} - k^{\bf [n-1]})_i{ }^j .
\end{split}
\end{equation*}
By \eqref{eq:Ric.diff.est} in Step~1, the estimates \eqref{eq:kn.est} and \eqref{eq:hn.est} in Step~2, the estimate \eqref{eq:kn.diff.est} that we just proved, the condition \eqref{eq:initial.parametrix.k} and Gr\"onwall's inequality, we obtain
$$|\rd_x^\alp (k^{\bf [n]} - k^{\bf [n-1]})_i{ }^j|(t,x) \ls t^{-1+n\ve}$$
for every $2\leq n\leq N$, which is what we want to prove. \qedhere
\end{proof}

\subsection{Estimates for $a^{\bf [n]}_{ij}$}

\begin{lemma}\label{lem:a.eq}
For $n \in \mathbb N$ and $g^{\bf [n]}_{ij}$ defined by \eqref{eq:g.transport}--\eqref{eq:initial.parametrix.g}, the corresponding $a^{\bf [n]}_{ij}$ obeys the equation
\begin{align}\label{eq:a.n}
\begin{split}
 \rd_t a^{\bf [n]}_{ij} 
=&\: - \sum_\ell t^{2 p_{\max\{ \ell,j\}} - 2 p_{\max\{i,j\}}} \bigg( (k^{\bf [n]} - k^{\bf [0]})_i{}^\ell a_{\ell j}^{\bf [n]}
 +  (k^{\bf [0]})_i{}^\ell(a^{\bf [n]}_{\ell j} -c_{\ell j}) \bigg)\\
 &\: - \sum_\ell t^{2 p_{\max\{ \ell,i\}} - 2 p_{\max\{i,j\}}} \bigg( (k^{\bf [n]} - k^{\bf [0]})_j{}^\ell a_{\ell i}^{\bf [n]}
 + (k^{\bf [0]})_j{}^\ell(a^{\bf [n]}_{\ell i} -c_{\ell i}) \bigg)
- {\f{2p_{\max\{i,j\}}}{t}}(a^{\bf [n]}_{ij}-c_{ij}),
\end{split}
\end{align}
where $k^{\bf [0]}$ is as defined in \eqref{eq:k0.def}.
\end{lemma}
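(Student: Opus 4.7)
The lemma is a purely algebraic identity; I do not expect a real obstacle, only careful bookkeeping. The plan is to differentiate the defining relation $g^{\bf [n]}_{ij} = a^{\bf [n]}_{ij} t^{2p_{\max\{i,j\}}}$ in $t$ and compare with the transport equation \eqref{eq:g.transport}.

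Explicitly, differentiating gives
\begin{equation*}
\partial_t g^{\bf [n]}_{ij} = (\partial_t a^{\bf [n]}_{ij})\, t^{2p_{\max\{i,j\}}} + \frac{2p_{\max\{i,j\}}}{t}\, a^{\bf [n]}_{ij}\, t^{2p_{\max\{i,j\}}},
\end{equation*}
while substituting $g^{\bf [n]}_{\ell j} = a^{\bf [n]}_{\ell j} t^{2p_{\max\{\ell,j\}}}$ and the symmetric counterpart into \eqref{eq:g.transport} yields
\begin{equation*}
\partial_t g^{\bf [n]}_{ij} = -\sum_\ell (k^{\bf [n]})_i{}^\ell a^{\bf [n]}_{\ell j}\, t^{2p_{\max\{\ell,j\}}} - \sum_\ell (k^{\bf [n]})_j{}^\ell a^{\bf [n]}_{\ell i}\, t^{2p_{\max\{\ell,i\}}}.
\end{equation*}
Equating and dividing by $t^{2p_{\max\{i,j\}}}$ gives a preliminary form of the evolution equation for $a^{\bf [n]}_{ij}$ in which $k^{\bf [n]}$ acts directly on $a^{\bf [n]}$ and the last term reads $-\frac{2p_{\max\{i,j\}}}{t}\, a^{\bf [n]}_{ij}$ instead of the $-\frac{2p_{\max\{i,j\}}}{t}\, (a^{\bf [n]}_{ij}-c_{ij})$ appearing in \eqref{eq:a.n}.

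To reach the exact form stated in the lemma, I would decompose
\begin{equation*}
(k^{\bf [n]})_i{}^\ell a^{\bf [n]}_{\ell j} = (k^{\bf [n]}-k^{\bf [0]})_i{}^\ell a^{\bf [n]}_{\ell j} + (k^{\bf [0]})_i{}^\ell (a^{\bf [n]}_{\ell j}-c_{\ell j}) + (k^{\bf [0]})_i{}^\ell c_{\ell j},
\end{equation*}
and symmetrically for the other term. The discrepancy between the preliminary form and \eqref{eq:a.n} is then exactly
\begin{equation*}
\sum_\ell t^{2p_{\max\{\ell,j\}}-2p_{\max\{i,j\}}} (k^{\bf [0]})_i{}^\ell c_{\ell j} + \sum_\ell t^{2p_{\max\{\ell,i\}}-2p_{\max\{i,j\}}} (k^{\bf [0]})_j{}^\ell c_{\ell i} + \frac{2p_{\max\{i,j\}}}{t}\, c_{ij},
\end{equation*}
which must vanish. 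This is nothing but the $n=0$ analogue of the identity we just derived: applying the same differentiation to $g^{\bf [0]}_{ij} = c_{ij} t^{2p_{\max\{i,j\}}}$ and using the definition \eqref{eq:k0.def} of $k^{\bf [0]}$ (which, by the symmetry $g^{\bf [0]}_{i\ell}(k^{\bf [0]})_j{}^\ell = g^{\bf [0]}_{j\ell}(k^{\bf [0]})_i{}^\ell$, implies $\partial_t g^{\bf [0]}_{ij} = -(k^{\bf [0]})_i{}^\ell g^{\bf [0]}_{\ell j} - (k^{\bf [0]})_j{}^\ell g^{\bf [0]}_{\ell i}$) produces exactly this vanishing identity after clearing the common factor of $t^{2p_{\max\{i,j\}}}$. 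Carrying out this reduction line by line yields \eqref{eq:a.n}.
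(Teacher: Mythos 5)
Your proof is correct and follows essentially the same route as the paper: differentiate $g^{\bf [n]}_{ij}=a^{\bf [n]}_{ij}t^{2p_{\max\{i,j\}}}$, insert \eqref{eq:g.transport}, split $k^{\bf [n]}a^{\bf [n]}$ into $(k^{\bf [n]}-k^{\bf [0]})a^{\bf [n]}+k^{\bf [0]}(a^{\bf [n]}-c)+k^{\bf [0]}c$, and cancel the residual $k^{\bf [0]}c$ and $\f{2p_{\max\{i,j\}}}{t}c_{ij}$ terms via the identity $\rd_t g^{\bf [0]}_{ij}=-(k^{\bf [0]})_i{}^\ell g^{\bf [0]}_{\ell j}-(k^{\bf [0]})_j{}^\ell g^{\bf [0]}_{\ell i}$, which is immediate from \eqref{eq:k0.def} and the symmetry of $g^{\bf [0]}$. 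The paper organizes the same cancellation by writing $\rd_t(g^{\bf [n]}_{ij}-g^{\bf [0]}_{ij})$ instead of exhibiting the vanishing residual explicitly, but the computation is identical.
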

\begin{proof}
By  {\eqref{metricansatz} and \eqref{eq:g.transport}}, with repeated indices \underline{not} summed, we have  {on the one hand}
$$\rd_t g^{\bf [n]}_{ij} = 2 p_{\max\{i,j\}} t^{2 p_{\max\{i,j\}}-1} a^{\bf [n]}_{ij} + t^{2 p_{\max\{i,j\}}} \rd_t a^{\bf [n]}_{ij},$$
 {and on the other hand}
$$ { \rd_t g^{\bf [n]}_{ij} =  - \sum_{\ell} (k^{\bf [n]})_i{}^\ell t^{2p_{\max\{\ell,j\}}} a^{\bf [n]}_{\ell j} - \sum_{\ell} (k^{\bf [n]})_j{}^\ell t^{2p_{\max\{\ell,i\}}} a^{\bf [n]}_{\ell i} }.$$
 {Similarly, by \eqref{metricansatz}, \eqref{eq:a0.def} and \eqref{eq:k0.def},}
$$-  {\sum_{\ell} (k^{\bf [0]})_i{}^\ell t^{2p_{\max\{\ell,j\}}}c_{\ell j} -\sum_{\ell} (k^{\bf [0]})_j{}^\ell t^{2p_{\max\{\ell,i\}}}c_{\ell i} }=\partial_tg^{\bf [0]}_{ij}= 2  {p_{\max\{i,j\}} t^{2 p_{\max\{i,j\}}-1}} c_{ij}.$$

Therefore,  {we obtain}
\begin{align*}
 {t^{2 p_{\max\{i,j\}}} } \rd_t a^{\bf [n]}_{ij} = &\:  {\rd_t g^{\bf [n]}_{ij} - 2 p_{\max\{i,j\}} t^{2 p_{\max\{i,j\}}-1} a^{\bf [n]}_{ij} = \rd_t (g^{\bf [n]}_{ij} -g^{\bf [0]}_{ij}) - 2 p_{\max\{i,j\}} t^{2 p_{\max\{i,j\}}-1} (a^{\bf [n]}_{ij} - c_{ij})}\\
=&\: -  \sum_{\ell} (k^{\bf [0]})_i{}^\ell t^{2p_{\max\{\ell, j \}}}(a^{\bf [n]} -c)_{\ell j} -  \sum_{\ell} (k^{\bf [0]})_j{}^\ell t^{2p_{\max\{\ell, i\}}}(a^{\bf [n]} -c )_{\ell i} \\
&\: -  { \sum_{\ell} t^{2p_{\max\{\ell,j\}}} (k^{\bf [n]}-k^{\bf [0]})_i{}^\ell a_{\ell j}^{\bf [n]} - \sum_{\ell} t^{2p_{\max\{\ell,i\}}} (k^{\bf [n]}-k^{\bf [0]})_j{}^\ell a_{\ell i}^{\bf [n]}}\\
&\: -2  {p_{\max\{i,j\}} t^{2 p_{\max\{i,j\}}-1}} (a^{\bf [n]}_{ij}-c_{ij}).
\end{align*}
Canceling $ {t^{2 p_{\max\{i,j\}}} }$ on both sides, we obtain the desired equation.
\end{proof}

\begin{lemma}\label{lem:an.well.defined}
Suppose the following holds for some $N \geq 1$: there exists $t_{N}>0$ such that  {for every $1\leq n \leq N$} and every multi-index $\alp$, $k^{\bf [n]}$ satisfies the estimate for some $C_{\alp,n}>0$ (depending on $\alp$, $n$, in addition to $c_{ij}$ and $p_i$) for all $(t,x)\in (0,t_{N-1})\times \mathbb T^3$:
$$|\rd_x^\alp[(k^{\bf [n]})_i{ }^j - t^{-1}\kappa_i{ }^j]|(t,x)\leq C_{\alp,n} t^{-1+\ve}.$$

Then, after choosing $t_N>0$ smaller if necessary, $a^{\bf [n]}_{ij}(t,x)$ is well-defined and symmetric for all $(t,x)\in (0,t_{N}] \times \mathbb T^3$ and for every $1\leq n \leq N$. In addition, by reducing $t_N>0$ further, $g^{\bf [n]}_{ij}(t,x)$ is a Lorentzian metric.

Moreover, for every multi-index $\alp$ and $1\leq n \leq N$, there exists $C'_{\alp,n}>0$ such that
\begin{align}\label{eq:a-c.est}
|\rd_x^\alp (a^{\bf [n]}_{ij} - c_{ij})|(t,x) \leq C'_{\alp,n} t^{\ve},\qquad |\rd_x^\alp \partial_t a^{\bf [n]}_{ij}|(t,x) \leq C'_{\alp,n} t^{-1+\varepsilon}
\end{align}
for all $(t,x)\in (0,t_{N} ] \times \mathbb T^3$.
\end{lemma}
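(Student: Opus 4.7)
The plan is to treat equation \eqref{eq:a.n} as a Fuchsian-type ODE system for the deviations $b^{\bf [n]}_{ij} := a^{\bf [n]}_{ij} - c_{ij}$, subject to the singular boundary condition $b^{\bf [n]}_{ij}(0^+, x) = 0$, and to solve it by a bootstrap argument in the class $\{|b^{\bf [n]}_{ij}| \leq C t^{\ve}\}$. First I would multiply \eqref{eq:a.n} by the integrating factor $t^{2p_{\max\{i,j\}}}$, which absorbs the borderline term $-\frac{2p_{\max\{i,j\}}}{t}b^{\bf [n]}_{ij}$ into $\partial_t(t^{2p_{\max\{i,j\}}}b^{\bf [n]}_{ij})$, and then integrate from $0$ to $t$, using the boundary condition to fix the constant of integration.

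The right-hand side then splits into two types of contributions: sources of the form $(k^{\bf [n]} - k^{\bf [0]}) \cdot a^{\bf [n]}$, which are $O(t^{-1+\ve})$ by the hypothesis on $k^{\bf [n]}$, Lemma~\ref{lem:k0}, and the bootstrap on $a^{\bf [n]}$; and self-coupling terms of the form $k^{\bf [0]} \cdot b^{\bf [n]}$ with a borderline $O(t^{-1})$ coefficient. The key structural observation is the \emph{reductive/hierarchical} nature of the leading-order coupling in the second type: since $\kappa_i{}^\ell = 0$ whenever $\ell < i$, the leading part $t^{-1}\kappa$ of $k^{\bf [0]}$ is upper triangular, so the system can be organized in the order $b_{33} \to b_{23} \to b_{13}, b_{22} \to b_{12} \to b_{11}$. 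The diagonal ($\ell = i$ or $\ell = j$) contribution from $k^{\bf [0]}$ combines with the integrating factor to produce the \emph{dissipative} coefficient $-\frac{|p_i - p_j|}{t}$ (since $p_i + p_j - 2p_{\max\{i,j\}} = -|p_i - p_j|$), and off-diagonal couplings to already-solved components come with multiplicative weights $t^{2p_{\max\{\ell,\cdot\}} - 2p_{\max\{i,j\}}}$, which are typically $\leq 1$ and yield integrable sources. The sensitive subcase is the coupling of $b_{ij}$ (with $i < j$) to $b_{\ell i}$ for $\ell \leq i$, appearing in the second sum of \eqref{eq:a.n}: here the weight $t^{2p_i - 2p_j}$ blows up as $t \to 0^+$ and must be compensated by the improved ``lower-triangular'' estimate $|(k^{\bf [n]})_j{}^\ell| = O(t^{-1+\ve - 2p_\ell + 2p_j})$ for $\ell < j$. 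This refinement is not made explicit in the hypothesis but follows by applying the sharper Ricci estimate of Lemma~\ref{lem:Ricci} to \eqref{eq:k.transport} (and is precisely the content of the second bound in the conclusion of Lemma~\ref{lem:kn.gen.bounds}). After integrating from $0$, one recovers the bootstrap $|b^{\bf [n]}_{ij}| \leq C t^{\ve}$ with a better constant, closing the argument.

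With $|b^{\bf [n]}_{ij}| = O(t^{\ve})$ established, the time-derivative bound $|\partial_t a^{\bf [n]}_{ij}| = O(t^{-1+\ve})$ follows by direct substitution into \eqref{eq:a.n}. Spatial derivatives are obtained by differentiating \eqref{eq:a.n} in $x$: the commuted equation retains the same Fuchsian structure, with additional sources involving strictly lower-order derivatives, so an induction on $|\alpha|$ yields all the desired bounds. Symmetry of $a^{\bf [n]}$ is preserved by the symmetric structure of \eqref{eq:g.transport}, and the Lorentzian character of $g^{\bf [n]}$ on $(0, t_N] \times \mathbb{T}^3$ follows by continuity from Lemma~\ref{lem:inverse.0} together with the smallness $|a^{\bf [n]} - c| = O(t^{\ve})$, after possibly shrinking $t_N$. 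The main obstacle is organizing the hierarchical scheme so that all borderline $O(t^{-1})$ terms are either dissipative or couple only to already-controlled components, while simultaneously exploiting the improved off-diagonal bounds on $k^{\bf [n]}$ to prevent the weights $t^{2p_i - 2p_j}$ from generating non-integrable sources in the off-diagonal equations.
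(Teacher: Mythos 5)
Your proposal follows essentially the same route as the paper's proof: the same reductive/hierarchical treatment of \eqref{eq:a.n} (diagonal components first, then $(a^{\bf [n]}-c)_{23}$, $(a^{\bf [n]}-c)_{12}$, then $(a^{\bf [n]}-c)_{13}$), the same dissipative self-coupling coefficient $(p_{\min\{i,j\}}-p_{\max\{i,j\}})/t$ handled by an integrating factor, the improved lower-triangular bounds on $k^{\bf [n]}$ from Lemma~\ref{lem:kn.gen.bounds} to compensate the growing weights $t^{2p_{\max\{\ell,i\}}-2p_{\max\{i,j\}}}$, and induction on $|\alpha|$ for the higher derivatives, with symmetry and the Lorentzian property obtained exactly as in the paper. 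The one small caveat is that these weights also multiply $(k^{\bf [0]})_j{}^\ell$ (and the case $i<\ell<j$ arises as well), so the improved decay of the lower-triangular entries is needed for $k^{\bf [0]}$ too, which is not in the statement of Lemma~\ref{lem:k0} but follows by the same direct computation from \eqref{eq:k0.def}; with that observation your argument matches the paper's.
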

\begin{proof}

Clearly $\rd_t (a_{ij}^{\bf [n]} - a_{ji}^{\bf [n]}) = 0$. Moreover, at $\{ t = 0\}$, $a_{ij}^{\bf [n]} = c_{ij}$ which is symmetric. It follows that $a_{ij}^{\bf [n]}$ is symmetric.

Now given that $a_{ij}^{\bf [n]}$ is symmetric, we will only estimate the six components $\{a_{ij}^{\bf [n]}: i\leq j \}$. Using the equation in Lemma~\ref{lem:a.eq} and the bounds in Lemmas~\ref{lem:k0} and \ref{lem:kn.gen.bounds} (and implicitly using the symmetry of $a_{ij}^{\bf [n]}$ in the derivation), we obtain the following schematic equations:
\begin{align}
\rd_t (a^{\bf [n]}- c)_{33} 
=&\,  O(t^{-1+\varepsilon}) (a^{\bf [n]} - c) + O(t^{-1+\varepsilon}) a^{\bf [n]}, \label{eq:basic.eq.for.a-c.1}\\
\rd_t (a^{\bf [n]}- c)_{22} 
=&\,  O(t^{-1+\varepsilon}) (a^{\bf [n]} - c) + O(t^{-1+\varepsilon}) a^{\bf [n]}, \label{eq:basic.eq.for.a-c.2}\\
\rd_t (a^{\bf [n]}- c)_{11} 
=&\,  O(t^{-1+\varepsilon}) (a^{\bf [n]} - c) + O(t^{-1+\varepsilon}) a^{\bf [n]}, \label{eq:basic.eq.for.a-c.3}\\
\rd_t (a^{\bf [n]}- c)_{23} 
=&\,  \frac{p_2 - p_3}{t} (a^{\bf [n]} - c)_{23} -\frac{\kappa_2{}^3}{t} (a^{\bf [n]} - c)_{33} +O(t^{-1+\varepsilon}) (a^{\bf [n]} - c) + O(t^{-1+\varepsilon}) a^{\bf [n]}, \label{eq:basic.eq.for.a-c.4}\\
\rd_t (a^{\bf [n]}- c)_{12} 
=&\,  \frac{p_1 - p_2}{t} (a^{\bf [n]} - c)_{12} -\frac{\kappa_1{}^2}{t} (a^{\bf [n]} - c)_{22} +O(t^{-1+\varepsilon}) (a^{\bf [n]} - c) + O(t^{-1+\varepsilon}) a^{\bf [n]}, \label{eq:basic.eq.for.a-c.5}\\
\label{eq:basic.eq.for.a-c.6}\rd_t (a^{\bf [n]}- c)_{13} 
=&\, \frac{p_1-p_3}{t} (a^{\bf [n]} - c)_{13} - \frac{\kappa_1{}^2}{t}(a^{\bf [n]} - c)_{23} - \frac{\kappa_1{}^3}{t} (a^{\bf [n]} - c)_{33} \\
&\,+O(t^{-1+\varepsilon}) (a^{\bf [n]} - c) + O(t^{-1+\varepsilon}) a^{\bf [n]}. \notag
\end{align}
Here, we have used the schematic notation that when we write $(a^{\bf [n]} - c)$ or $a^{\bf [n]}$ without explicit indices, it can represent any component.

The key point is a reductive structure for terms with $O(t^{-1})$ coefficients: The diagonal $(a^{\bf [n]}- c)_{ii}$ terms do not see any terms with $O(t^{-1})$ coefficients on the right hand side. For the remaining terms, we make the observations that (1) the linear term has coefficients which is \emph{negative} and (2) by estimating the terms in the order as listed above, the only terms with $O(t^{-1})$ coefficients have already been estimated in the previous step.

Indeed, the first three equations (\eqref{eq:basic.eq.for.a-c.1}--\eqref{eq:basic.eq.for.a-c.3}) give
\begin{equation}\label{eq:a-c.est.1}
|(a^{\bf [n]}- c)_{33}|(t) + |(a^{\bf [n]}- c)_{22}|(t) + |(a^{\bf [n]}- c)_{11}|(t) \ls t^{\ve} \sup_{[0,t]}( |a^{\bf [n]} - c| + |c|),
\end{equation}
where we have used the initial condition \eqref{eq:initial.parametrix.g}.

Using the fourth and fifth equations (\eqref{eq:basic.eq.for.a-c.4}--\eqref{eq:basic.eq.for.a-c.5}) and plugging in \eqref{eq:a-c.est.1}, we obtain
\begin{equation}\label{eq:a-c.est.2}
\begin{split}
t^{p_3-p_2}|(a^{\bf [n]}- c)_{23}|(t) 
\ls &\: t^{p_3-p_2}|(a^{\bf [n]}- c)_{33}|(t) + t^{p_3-p_2+\ve} \sup_{[0,t]}( |a^{\bf [n]} - c| + |c|) \\
\ls &\: t^{p_3-p_2+\ve} \sup_{[0,t]}( |a^{\bf [n]} - c| + |c|)
\end{split}
\end{equation}
and
\begin{equation}\label{eq:a-c.est.3}
\begin{split}
 t^{p_2-p_1}|(a^{\bf [n]}- c)_{12}|(t) 
\ls &\: t^{p_2-p_1} |(a^{\bf [n]}- c)_{22}|(t) + t^{p_2-p_1+\ve} \sup_{[0,t]}( |a^{\bf [n]} - c| + |c|) \\
\ls &\: t^{p_2-p_1+\ve} \sup_{[0,t]}( |a^{\bf [n]} - c| + |c|).
\end{split}
\end{equation}
 {The estimates} \eqref{eq:a-c.est.2} and \eqref{eq:a-c.est.3} imply
\begin{equation}\label{eq:a-c.est.4}
|(a^{\bf [n]}- c)_{23}|(t) + |(a^{\bf [n]}- c)_{12}|(t) \ls  t^{\ve} \sup_{[0,t]}( |a^{\bf [n]} - c| + |c|).
\end{equation}
Finally, we consider the last equation, argue as above and plug in \eqref{eq:a-c.est.1} and \eqref{eq:a-c.est.4} to obtain
\begin{equation}\label{eq:a-c.est.5}
|(a^{\bf [n]}- c)_{13}|(t) \ls  t^{\ve} \sup_{[0,t]}( |a^{\bf [n]} - c| + |c|).
\end{equation}
Combining \eqref{eq:a-c.est.1}, \eqref{eq:a-c.est.4}, \eqref{eq:a-c.est.5}, and choosing $t_N$ to be sufficiently small, we obtain
$$\sup_{[0,t]} |a^{\bf [n]} - c| \ls t^{\ve} \sup |c| \ls t^{\ve}.$$
This proves that $a^{\bf [n]}_{ij}$ is well-defined and moreover shows the first inequality in \eqref{eq:a-c.est} in the case $|\alp| =0$. 

The second inequality in \eqref{eq:a-c.est} (that for $\rd_t a^{\bf [n]}_{ij}$) follows by applying the already derived bounds to the RHS of the system \eqref{eq:basic.eq.for.a-c.1}--\eqref{eq:basic.eq.for.a-c.6}.

We then obtain the desired higher order estimates by induction on $|\alpha|$. For example, differentiating the equation \eqref{eq:a.n} by $\partial_x^{\alp}$ for $|\alp| =1$, we may treat the zeroth order terms in the differences $a^{\bf [n]} - c$ as already estimated inhomogeneous terms and repeat the above argument. The same goes for $\partial_x^\alp$ with $|\alp| = 2$ etc. From this we deduce the estimate \eqref{eq:a.diff.est} in general. We omit the details. \qedhere

\end{proof}

\begin{lemma}\label{lem:an-an-1}
Suppose the following holds for some $N \geq 1$: there exists $t_N >0$ such that for every $1\leq n \leq N$ and for every multi-index $\alp$,  $k^{\bf [n]}$ satisfies the following estimate for some $C_{\alp,n}>0$ (depending on $\alp$, $n$, in addition to $c_{ij}$ and $p_i$) for all $(t,x)\in (0,t_{N}] \times \mathbb T^3$:
\begin{equation}\label{eq:k.diff.assump}
|\rd_x^{\alp} (k^{\bf [n]} - k^{\bf [n-1]})_i{ }^j|(t,x) \leq C_{\alp, n} t^{-1+ n \ve}
\end{equation}
for every $1\leq n\leq N$. 

Then, after choosing $t_N>0$ smaller if necessary, for every multi-index $\alp$ and $1\leq n \leq N$, there exists $C'_{\alp,n}>0$ such that
\begin{align}\label{eq:a.diff.est}
|\rd_x^\alp (a^{\bf [n]}_{ij} -a^{\bf [n-1]}_{ij})|(t,x) \leq C'_{\alp,n} t^{n\ve},\qquad | \rd_x^\alp \rd_t(a^{\bf [n]}_{ij} -a^{\bf [n-1]}_{ij})|(t,x) \leq C'_{\alp,n} t^{-1+n\ve},
\end{align}
for all $(t,x)\in (0,t_{N}] \times \mathbb T^3$ and for every $1\leq n \leq N$.
\end{lemma}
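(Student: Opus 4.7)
The plan is to mirror the proof of Lemma~\ref{lem:an.well.defined} (which estimated $a^{\bf [n]} - c$), but now applied to the difference $a^{\bf [n]} - a^{\bf [n-1]}$. Subtracting \eqref{eq:a.n} at level $n-1$ from its counterpart at level $n$ produces a system of the schematic form
\begin{equation*}
\rd_t (a^{\bf [n]} - a^{\bf [n-1]})_{ij} = \mathcal{L}_{ij}(a^{\bf [n]} - a^{\bf [n-1]}) + \mathcal{S}_{ij},
\end{equation*}
where, using the telescoping identity $k^{\bf [n-1]} - k^{\bf [0]} = \sum_{m=1}^{n-1}(k^{\bf [m]} - k^{\bf [m-1]})$ together with hypothesis \eqref{eq:k.diff.assump} (which gives $|k^{\bf [n-1]} - k^{\bf [0]}| \ls t^{-1+\ve}$), the linear operator $\mathcal{L}_{ij}$ has exactly the borderline structure of \eqref{eq:basic.eq.for.a-c.1}--\eqref{eq:basic.eq.for.a-c.6}: its $O(t^{-1})$ part is contributed entirely by $k^{\bf [0]}$ paired against $a^{\bf [n]} - a^{\bf [n-1]}$ together with the explicit $-\frac{2p_{\max\{i,j\}}}{t}$ term, modulo $O(t^{-1+\ve})$ corrections. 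The source $\mathcal{S}_{ij}$ consists of the contributions $t^{2p_{\max\{\ell,j\}} - 2p_{\max\{i,j\}}}(k^{\bf [n]} - k^{\bf [n-1]})_i{}^\ell a^{\bf [n]}_{\ell j}$ and the symmetric pairing with $i \leftrightarrow j$; restricting to $i \leq j$ (which suffices by symmetry of $a^{\bf [n]} - a^{\bf [n-1]}$), the prefactors are bounded, so combined with \eqref{eq:k.diff.assump} and the bound $|a^{\bf [n]}| \ls 1$ from Lemma~\ref{lem:an.well.defined}, we obtain $|\mathcal{S}_{ij}| \ls t^{-1+n\ve}$.

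I would then run the same reductive estimate as in Lemma~\ref{lem:an.well.defined}. First, the diagonal components $(a^{\bf [n]} - a^{\bf [n-1]})_{ii}$ see only $O(t^{-1+\ve})$ linear coefficients, so Gr\"onwall with the $O(t^{-1+n\ve})$ source and the vanishing limit as $t \to 0^+$ (both $a^{\bf [n]}$ and $a^{\bf [n-1]}$ tend to $c$ by Lemma~\ref{lem:an.well.defined}) yields a $t^{n\ve}$ bound. The 12- and 23-components carry a \emph{negative} borderline coefficient $(p_i - p_j)/t$ on themselves, handled by the integrating factor $t^{p_j - p_i}$, together with borderline coefficients acting only on already-controlled diagonal entries; they satisfy the same $t^{n\ve}$ bound. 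Finally the 13-component depends only on quantities controlled in the preceding steps. This closes the first inequality in \eqref{eq:a.diff.est} when $|\alp| = 0$, after which induction on $|\alp|$ (commuting the difference equation with $\rd_x^\alp$ and treating the resulting commutators with the coefficients as controlled inhomogeneities via \eqref{eq:k.diff.assump} and the inductive hypothesis) extends the bound to all orders. The $\rd_t$ estimate then follows by plugging the spatial bounds back into the difference equation itself.

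The main technical point to verify is that the $t$-weight $t^{2p_{\max\{\ell,j\}} - 2p_{\max\{i,j\}}}$ appearing in $\mathcal{S}_{ij}$ stays bounded whenever $i \leq j$. This is a short case check: if $\ell \leq j$ then $\max\{\ell,j\} = j$ and the exponent vanishes, while if $\ell > j$ then $p_\ell > p_j = p_{\max\{i,j\}}$ so the exponent is positive; the symmetric pairing $t^{2p_{\max\{\ell,i\}} - 2p_{\max\{i,j\}}}$ is handled analogously. Beyond this bookkeeping, the argument is a direct reapplication of the reductive scheme of Lemma~\ref{lem:an.well.defined}, and I do not anticipate any new obstacle; $t_N$ may need to be reduced only to absorb the $O(t^{-1+\ve})$ perturbations of $\mathcal{L}_{ij}$ into the leading borderline structure, exactly as in the proof of Lemma~\ref{lem:an.well.defined}.
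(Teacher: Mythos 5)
Your proposal is correct and follows essentially the same route as the paper: subtract the two instances of \eqref{eq:a.n}, use \eqref{eq:k.diff.assump} (telescoped, together with Lemma~\ref{lem:k0}) to see that the borderline $O(t^{-1})$ coefficients reproduce exactly the reductive structure of \eqref{eq:basic.eq.for.a-c.1}--\eqref{eq:basic.eq.for.a-c.6} while the source is $O(t^{-1+n\ve})$, and then rerun the diagonal/off-diagonal integrating-factor scheme of Lemma~\ref{lem:an.well.defined}, with higher derivatives by induction on $|\alp|$ and the $\rd_t$ bound by substituting back into the equation. The only cosmetic difference is that you pair the $k$-difference with $a^{\bf [n]}$ and the $a$-difference with $k^{\bf [n-1]}$, whereas the paper pairs them the other way; this is an equivalent reorganization and introduces no gap.
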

\begin{proof}

First, we note that by Lemma~\ref{lem:k0} and \eqref{eq:k.diff.assump}, 
\begin{equation}\label{eq:kn.-.0}
|\rd_x^{\alp} (k^{\bf [n]} - k^{\bf [0]})_i{ }^j|(t,x)\ls t^{-1+\ve}
\end{equation}
for every $1\leq n \leq N$.

Subtracting the $n$ and $n-1$ versions of \eqref{eq:a.n}, for $i\leq j$, we have
\begin{align}\label{dtan-an-1}
\begin{split}
&\: \rd_t (a^{\bf [n]}- a^{\bf [n-1]})_{ij} \\
= &\: -  \sum_{\ell} t^{2 p_{\max\{ \ell,j\}} -2p_{\max\{i,j\}}} [ (k^{\bf [n]} - k^{\bf [0]})_i{ }^\ell (a^{\bf [n]} - a^{\bf [n-1]})_{\ell j} + (k^{\bf [n]} - k^{\bf [n-1]})_i{ }^\ell a^{\bf [n-1]}_{\ell j}]\\
&\: -  \sum_{\ell} t^{2 p_{\max\{ \ell,i\}} -2p_{\max\{i,j\}}} [ (k^{\bf [n]} - k^{\bf [0]})_j{ }^\ell (a^{\bf [n]} - a^{\bf [n-1]})_{\ell i} + (k^{\bf [n]} - k^{\bf [n-1]})_j{ }^\ell a^{\bf [n-1]}_{\ell i}]\\
&\:- \sum_{\ell} t^{2 p_{\max\{ \ell,j\}} -2p_{\max\{i,j\}}} (k^{\bf [0]})_i{}^\ell(a^{\bf [n]} -a^{\bf [n-1]})_{\ell j}  - \sum_{\ell} t^{2 p_{\max\{ \ell,i\}} -2p_{\max\{i,j\}}} (k^{\bf [0]})_j{}^\ell(a^{\bf [n]} -a^{\bf [n-1]})_{\ell i} \\
&\: -\frac{2p_{\max\{i,j\}}}{t}(a^{\bf [n]}_{ij}-a^{\bf [n-1]}_{ {i} j}) {.}
\end{split}
\end{align}

Using the equation \eqref{dtan-an-1} and the estimates in Lemmas~\ref{lem:k0}, \ref{lem:kn.gen.bounds} and \eqref{eq:k.diff.assump}, we deduce a system of schematic equations in a similar manner as \eqref{eq:basic.eq.for.a-c.1}--\eqref{eq:basic.eq.for.a-c.6}, namely,
\begin{align}
\rd_t (a^{\bf [n]}- a^{\bf [n-1]})_{33} 
=&\,  O(t^{-1+\varepsilon}) (a^{\bf [n]} - a^{\bf [n-1]}) + O(t^{-1+n\varepsilon}) a^{\bf [n-1]}, \label{eq:basic.eq.for.an-an-1.1}\\
\rd_t (a^{\bf [n]}- a^{\bf [n-1]})_{22} 
=&\,  O(t^{-1+\varepsilon}) (a^{\bf [n]} - a^{\bf [n-1]}) + O(t^{-1+n\varepsilon}) a^{\bf [n-1]}, \label{eq:basic.eq.for.an-an-1.2}\\
\rd_t (a^{\bf [n]}- a^{\bf [n-1]})_{11} 
=&\,  O(t^{-1+\varepsilon}) (a^{\bf [n]} - a^{\bf [n-1]}) + O(t^{-1+n\varepsilon}) a^{\bf [n-1]}, \label{eq:basic.eq.for.an-an-1.3}\\
\rd_t (a^{\bf [n]}- a^{\bf [n-1]})_{23} 
=&\,  \frac{p_2 - p_3}{t} (a^{\bf [n]} - a^{\bf [n-1]})_{23} -\frac{\kappa_2{}^3}{t} (a^{\bf [n]} -a^{\bf [n-1]}c)_{33} \label{eq:basic.eq.for.an-an-1.4}\\
&\: +O(t^{-1+\varepsilon}) (a^{\bf [n]} - a^{\bf [n-1]}) + O(t^{-1+n\varepsilon}) a^{\bf [n-1]}, \notag\\
\rd_t (a^{\bf [n]}- a^{\bf [n-1]})_{12} 
=&\,  \frac{p_1 - p_2}{t} (a^{\bf [n]} - a^{\bf [n-1]})_{12} -\frac{\kappa_1{}^2}{t} (a^{\bf [n]} - a^{\bf [n-1]})_{22} \label{eq:basic.eq.for.an-an-1.5}\\
&\: +O(t^{-1+\varepsilon}) (a^{\bf [n]} - a^{\bf [n-1]}) + O(t^{-1+n\varepsilon}) a^{\bf [n01]}, \notag\\
\label{eq:basic.eq.for.an-an-1.6}\rd_t (a^{\bf [n]}- a^{\bf [n-1]})_{13} 
=&\, \frac{p_1-p_3}{t} (a^{\bf [n]} - a^{\bf [n-1]})_{13} - \frac{\kappa_1{}^2}{t}(a^{\bf [n]} - a^{\bf [n-1]})_{23} - \frac{\kappa_1{}^3}{t} (a^{\bf [n]} - a^{\bf [n-1]})_{33} \\
&\,+O(t^{-1+\varepsilon}) (a^{\bf [n]} - a^{\bf [n-1]}) + O(t^{-1+n\varepsilon}) a^{\bf [n-1]}. \notag
\end{align}

From this point on we can argue as in Lemma~\ref{lem:an.well.defined}, using the reductive structure of the system. Note that the system \eqref{eq:basic.eq.for.an-an-1.1}--\eqref{eq:basic.eq.for.an-an-1.6} is better than the system \eqref{eq:basic.eq.for.a-c.1}--\eqref{eq:basic.eq.for.a-c.6} in that the inhomogeneous terms $O(t^{-1+n\varepsilon}) a^{\bf [n-1]} = O(t^{-1+n\varepsilon})$. As a result, the argument in Lemma~\ref{lem:an.well.defined} gives the better estimate $|\rd_x^\alp \partial_t^r(a^{\bf [n]}_{ij} -a^{\bf [n-1]}_{ij})|(t,x) \leq C'_{\alp,n} t^{-r+n\ve}$, $r=0,1$. \qedhere

\end{proof}

 {\textbf{Now a straightforward induction argument using Lemmas~\ref{lem:kn.gen.bounds}, \ref{lem:kn-kn-1}, \ref{lem:an.well.defined}, \ref{lem:an-an-1} shows there exists a decreasing sequence of positive times $t_n$, such that $g^{\bf [n]}$ and $\kn$ are well-defined and smooth in $(0,t_n]\times \mathbb T^3$, for all $n\in \mathbb N$. Moreover, all the estimates in the conclusions (and proofs) of Lemmas~\ref{lem:kn.gen.bounds}, \ref{lem:kn-kn-1}, \ref{lem:an.well.defined}, \ref{lem:an-an-1} hold. In particular, points (1), (2) in Theorem~\ref{thm:parametrix} hold true; and after using also Lemma~\ref{lem:Ricci}, it can be checked that (3) in Theorem~\ref{thm:parametrix} is also verified.}}

In the remaining subsections, we prove points (4) and (5) in Theorem~\ref{thm:parametrix}, thus completing the proof of Theorem~\ref{thm:parametrix}.

\subsection{Comparing $\kn$ with the second fundamental form}
In this subsection, we prove point (4) of Theorem~\ref{thm:parametrix}; see the main estimate in Lemma~\ref{lem:k.II}.

The heart of the matter is the following estimates for $\mathfrak D^{\bf [n]}_{ij}:= (k^{\bf [n]})_{i}{ }^\ell g^{\bf [n]}_{\ell j} - (k^{\bf [n]})_{j}{ }^\ell g^{\bf [n]}_{\ell i}$.

\begin{lemma}\label{lem:D.est}
For each $n \in \mathbb N$, define $\mathfrak D^{\bf [n]}_{ij} = (k^{\bf [n]})_{i}{ }^\ell g^{\bf [n]}_{\ell j} - (k^{\bf [n]})_{j}{ }^\ell g^{\bf [n]}_{\ell i}$. Then if $(n+1)\ve >2$, after choosing $t_n$ smaller if necessary, the following estimate holds for some $C_{\alp,n}>0$ (depending on $\alp$, $n$, in addition to $c_{ij}$ and $p_i$):
$$|\rd_x^\alp \mathfrak D^{\bf [n]}_{ij}|(t,x) \leq C_{\alp,n} t^{-1+(n+2)\ve+2p_{\max\{i,j\}}} |\log t|^{2+|\alp|},\quad |\rd_x^\alp \rd_t \mathfrak D^{\bf [n]}_{ij}|(t,x) \leq C_{\alp,n} t^{-2+(n+2)\ve+2p_{\max\{i,j\}}} |\log t|^{2+|\alp|}$$
for all $(t,x) \in (0,t_n]\times \mathbb T^3$.
\end{lemma}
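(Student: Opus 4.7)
The plan is to derive an evolution equation for $\mathfrak D^{\bf [n]}_{ij}$, bound its source using the symmetry of $Ric(g^{\bf [n-1]})$, and then integrate using well-chosen weighted variables on which Gr\"onwall's inequality closes cleanly. Differentiating $\mathfrak D^{\bf [n]}_{ij}$ in $t$ and substituting \eqref{eq:k.transport}, \eqref{eq:g.transport}, the cubic-in-$k$ terms cancel pairwise after renaming indices and using $g^{\bf [n]}_{ij}=g^{\bf [n]}_{ji}$, leaving
\[
\partial_t \mathfrak D^{\bf [n]}_{ij}=F_{ij}+(k^{\bf [n]})_\ell{}^\ell\,\mathfrak D^{\bf [n]}_{ij}-(k^{\bf [n]})_i{}^\ell\,\mathfrak D^{\bf [n]}_{\ell j}+(k^{\bf [n]})_j{}^\ell\,\mathfrak D^{\bf [n]}_{\ell i},
\]
where $F_{ij}:=Ric(g^{\bf [n-1]})_i{}^\ell g^{\bf [n]}_{\ell j}-Ric(g^{\bf [n-1]})_j{}^\ell g^{\bf [n]}_{\ell i}$.

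To bound the source, I use the symmetry $Ric(g^{\bf [n-1]})_{ij}=Ric(g^{\bf [n-1]})_{ji}$ to replace $g^{\bf [n]}$ by $g^{\bf [n]}-g^{\bf [n-1]}$ inside $F_{ij}$. Combining Lemma~\ref{lem:Ricci} (which gives $|Ric(g^{\bf [n-1]})_i{}^\ell|\lesssim t^{-2+2\ve+2\max\{0,p_i-p_\ell\}}|\log t|^{2+|\alp|}$) with $|(g^{\bf [n]}-g^{\bf [n-1]})_{\ell j}|\lesssim t^{2p_{\max\{\ell,j\}}+n\ve}$ from Lemma~\ref{lem:an-an-1}, a short case analysis on $\ell$ (the worst case being $\ell=\max\{i,j\}$) yields $|\partial_x^\alp F_{ij}|\lesssim t^{-2+(n+2)\ve+2p_{\max\{i,j\}}}|\log t|^{2+|\alp|}$.

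For the linear block, the antisymmetry of $\mathfrak D^{\bf [n]}$ collapses the self-coupling to $(k^{\bf [n]})_m{}^m\,\mathfrak D^{\bf [n]}_{ij}\sim -(p_m/t)\mathfrak D^{\bf [n]}_{ij}$, where $m$ is the unique index in $\{1,2,3\}\setminus\{i,j\}$. This selects the weighted variables $U_{23}=t^{p_1}\mathfrak D^{\bf [n]}_{23}$, $U_{13}=t^{p_2}\mathfrak D^{\bf [n]}_{13}$, $U_{12}=t^{p_3}\mathfrak D^{\bf [n]}_{12}$, in terms of which the system takes the schematic form $\partial_tU=t^{p_a}\mathbf F+M\,U$ with every entry of $M$ equal to $O(t^{-1+\delta})$ for some $\delta>0$. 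Indeed, the diagonal entries are $(k^{\bf [n]})_m{}^m+p_m/t=O(t^{-1+\ve})$ by \eqref{eq:main.parametrix.k.bd}, while each off-diagonal coupling $(k^{\bf [n]})_q{}^r\,t^{p_a-p_b}$ appearing in the system carries an exponent gain $p_a-p_b$ (with $a\neq b$) from the weight, further augmented by $2p_q-2p_r$ when $q>r$ via \eqref{eq:main.parametrix.k.bd}; the Kasner inequalities $p_1<p_2<p_3$ together with $p_1\leq -\ve$ and $p_3-p_2\geq\ve$ from Remark~\ref{rem:varepsilon} make each such net exponent strictly positive.

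One then integrates the weighted system from $t=0$, which is justified because the baseline bounds on $k^{\bf [n]},g^{\bf [n]}$ yield the a priori decay $|U_{(ij)}(t)|=O(t^{p_a-1+2p_3})\to 0$ (the required positivity $p_a+2p_{\max\{i,j\}}>1$ follows from the Kasner constraints with $p_1<0$ and $p_3>p_2$). Since $\int_0^t s^{-1+\delta}\,ds$ is finite, Gr\"onwall's inequality gives $|U(t)|\lesssim\int_0^t|s^{p_a}\mathbf F|\,ds$, and the source bound $|s^{p_a}F_{ij}|\lesssim s^{-2+p_a+2p_{\max\{i,j\}}+(n+2)\ve}|\log s|^{2+|\alp|}$, which has exponent strictly $>-1$ precisely because the hypothesis $(n+1)\ve>2$ is in force, integrates to $\lesssim t^{-1+p_a+2p_{\max\{i,j\}}+(n+2)\ve}|\log t|^{2+|\alp|}$. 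Dividing by $t^{p_a}$ yields the stated bound on $|\mathfrak D^{\bf [n]}_{ij}|$. Spatial derivatives are obtained by commuting $\partial_x^\alp$ through the system and inducting on $|\alp|$, with commutators contributing at most $|\log t|^{|\alp|}$ which fits within the $|\log t|^{2+|\alp|}$ budget; the $\partial_t$ estimate is then read off from the evolution equation itself, since every right-hand side term is controlled. I expect the main obstacle to lie in the third paragraph above: carefully verifying that \emph{every} off-diagonal entry of $M$ genuinely gains the necessary positive $t^\delta$ factor --- an enumeration analogous in spirit to the reductive analyses already carried out in Lemmas~\ref{lem:kn.gen.bounds}--\ref{lem:an-an-1}.
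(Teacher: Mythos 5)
Your setup coincides with the paper's: the evolution equation for $\mathfrak D^{\bf [n]}_{ij}$, the source bound obtained by exploiting the symmetry of $Ric(g^{\bf [n-1]})_i{}^\ell g^{\bf [n-1]}_{\ell j}$ to insert the difference $g^{\bf [n]}-g^{\bf [n-1]}$, and the weights $t^{p_1},t^{p_2},t^{p_3}$ on $\mathfrak D^{\bf [n]}_{23},\mathfrak D^{\bf [n]}_{13},\mathfrak D^{\bf [n]}_{12}$ are all exactly what the paper does, and the observation you flag as the ``main obstacle'' is in fact fine: one can check (as you indicate) that every entry of your matrix $M$ acting on $U=(t^{p_1}\mathfrak D^{\bf [n]}_{23},t^{p_2}\mathfrak D^{\bf [n]}_{13},t^{p_3}\mathfrak D^{\bf [n]}_{12})$ is $O(t^{-1+\delta})$ with $\delta\geq\ve$, e.g.\ the dangerous coefficient $(k^{\bf [n]})_1{}^2=O(t^{-1})$ enters the $U_{13}$ equation multiplied by $t^{p_2-p_1}$.

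The genuine gap is the final step, ``Gr\"onwall's inequality gives $|U(t)|\ls\int_0^t|s^{p_a}\mathbf F|\,ds$, \ldots dividing by $t^{p_a}$ yields the stated bound.'' For a \emph{coupled} system this inequality holds only for a norm of the full vector, with the full source on the right; it does not hold component by component, and the componentwise version is precisely what the lemma asserts, since the three components have \emph{different} target rates ($2p_3$ for $\mathfrak D^{\bf [n]}_{23},\mathfrak D^{\bf [n]}_{13}$ versus $2p_2$ for $\mathfrak D^{\bf [n]}_{12}$), hence sources $t^{p_1}F_{23},t^{p_2}F_{13},t^{p_3}F_{12}$ of different strengths. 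A vector Gr\"onwall gives every component the worst rate, which falls short of the claim by fixed negative powers such as $t^{p_1-p_2}$ or $t^{1-3p_2}$; and these sharp componentwise rates are exactly what is needed later (Lemma~\ref{lem:D.est.2}). The obstruction is concrete: if you normalize each $U$ by its own target rate, the coupling of $U_{23}$ into the $U_{13}$ equation (coming from $(k^{\bf [n]})_1{}^2$) becomes exactly $O(t^{-1})$ with an order-one constant, and the normalized source is $\sim t^{-1}|\log t|^2$, so no one-shot Gr\"onwall closes. One must exploit that this borderline transmission is lower-triangular with respect to the order $(23),(13),(12)$ --- which is the paper's bootstrap \eqref{eq:BA.for.D} with component-dependent rates, improving $\mathfrak D^{\bf [n]}_{23}$ first (with a constant independent of the bootstrap constant $A$) and then feeding that improved bound into the $\mathfrak D^{\bf [n]}_{13}$ and $\mathfrak D^{\bf [n]}_{12}$ equations --- or, equivalently, an iteration gaining a factor $t^{\ve}$ per pass starting from the a priori bounds $\mathfrak D^{\bf [n]}_{23},\mathfrak D^{\bf [n]}_{13}=O(t^{2p_3-1})$, $\mathfrak D^{\bf [n]}_{12}=O(t^{2p_2-1})$. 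Without some such component-by-component bookkeeping your argument does not yield the stated estimate.
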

\begin{proof}
\pfstep{Step~1: Derivation of an equation for $\mathfrak D^{\bf [n]}_{ij}$} By \eqref{eq:g.transport},
\begin{equation}\label{eq:D.est.prelim}
\begin{split}
&\: (\rd_t g^{\bf [n]}_{\ell j}) (k^{\bf [n]})_i{ }^\ell - (\rd_t g^{\bf [n]}_{\ell i}) (k^{\bf [n]})_j{ }^\ell \\
= &\: - g^{\bf [n]}_{\ell b} (k^{\bf [n]})_j{ }^b (k^{\bf [n]})_i{ }^\ell - g^{\bf [n]}_{j b} (k^{\bf [n]})_\ell{ }^b (k^{\bf [n]})_i{ }^\ell + g^{\bf [n]}_{\ell b} (k^{\bf [n]})_i{ }^b (k^{\bf [n]})_j{ }^\ell + g^{\bf [n]}_{i b} (k^{\bf [n]})_\ell{ }^b (k^{\bf [n]})_j{ }^\ell \\
= &\: - (g^{\bf [n]}_{j b} (k^{\bf [n]})_\ell{ }^b -g^{\bf [n]}_{\ell b} (k^{\bf [n]})_j{ }^b) (k^{\bf [n]})_i{ }^\ell + (g^{\bf [n]}_{i b} (k^{\bf [n]})_\ell{ }^b -g^{\bf [n]}_{\ell b} (k^{\bf [n]})_i{ }^b) (k^{\bf [n]})_j{ }^\ell.
\end{split}
\end{equation}
Therefore, \eqref{eq:D.est.prelim} and the equation \eqref{eq:k.transport} that define $k^{\bf [n]}$, it follows that
\begin{equation}\label{D.est.Ricci.diff.exact}
\begin{split}
\rd_t \mathfrak D_{ij}^{\bf [n]} = &\: \rd_t [(k^{\bf [n]})_{i}{ }^\ell g^{\bf [n]}_{\ell j} - (k^{\bf [n]})_{j}{ }^\ell g^{\bf [n]}_{\ell i}]  \\
= &\: Ric(g^{\bf [n-1]})_i{ }^\ell g^{\bf [n]}_{\ell j} - Ric(g^{\bf [n-1]})_j{ }^\ell g^{\bf [n]}_{\ell i} + (k^{\bf [n]})_a{}^a[(k^{\bf [n]})_i{ }^\ell g^{\bf [n]}_{\ell j} - (k^{\bf [n]})_j{ }^\ell g^{\bf [n]}_{\ell i}] \\
&\: - (g^{\bf [n]}_{j b} (k^{\bf [n]})_\ell{ }^b -g^{\bf [n]}_{\ell b} (k^{\bf [n]})_j{ }^b) (k^{\bf [n]})_i{ }^\ell + (g^{\bf [n]}_{i b} (k^{\bf [n]})_\ell{ }^b -g^{\bf [n]}_{\ell b} (k^{\bf [n]})_i{ }^b) (k^{\bf [n]})_j{ }^\ell \\
= &\: Ric(g^{\bf [n-1]})_i{ }^\ell g^{\bf [n]}_{\ell j} - Ric(g^{\bf [n-1]})_j{ }^\ell g^{\bf [n]}_{\ell i} + (k^{\bf [n]})_a{}^a\mathfrak D^{\bf [n]}_{ij} - \mathfrak D^{\bf [n]}_{\ell j} (k^{\bf [n]})_i{ }^\ell + \mathfrak D^{\bf [n]}_{\ell i} (k^{\bf [n]})_j{ }^\ell.
\end{split}
\end{equation}

Now since $Ric(g^{\bf [n-1]})_i{ }^\ell g^{\bf [n-1]}_{\ell j}$ is symmetric in $i$ and $j$, we have
\begin{equation}\label{D.est.Ricci.diff}
\begin{split}
&\: Ric(g^{\bf [n-1]})_i{ }^\ell g^{\bf [n]}_{\ell j} - Ric(g^{\bf [n-1]})_j{ }^\ell g^{\bf [n]}_{\ell i} \\
= &\: Ric(g^{\bf [n-1]})_i{ }^\ell ( g^{\bf [n]} - g^{\bf [n-1]})_{\ell j} - Ric(g^{\bf [n-1]})_j{ }^\ell ( g^{\bf [n]} - g^{\bf [n-1]})_{\ell i} =O(t^{-2+(n+2)\ve+ 2p_{\max\{i,j\}}}|\log t|^2),
\end{split}
\end{equation}
where the final estimate follows from the form of the metric, Lemmas \ref{lem:Ricci}, \ref{lem:an-an-1}, and the 
fact that
\begin{equation*}
\begin{split}
&\: O(\min\{t^{-2+2\ve}, t^{-2+2\ve-2p_\ell+2p_i}\}|\log t|^2 \times t^{2p_{\max\{j,\ell \}}}) \\
= &\: O(\min\{t^{-2+2\ve+2p_j},\, t^{-2+2\ve+2p_i}\}|\log t|^2 ) = O(t^{-2+2\ve+2p_{\max\{i,j\}}} |\log t|^2).
\end{split}
\end{equation*}

Therefore, combining \eqref{D.est.Ricci.diff.exact} and \eqref{D.est.Ricci.diff}, we have obtained that \begin{equation}\label{eq:D.original}
\rd_t \mathfrak D^{\bf [n]}_{ij} = -\mathfrak D^{\bf [n]}_{\ell j} (k^{\bf [n]})_i{ }^\ell + \mathfrak D^{\bf [n]}_{\ell i} (k^{\bf [n]})_j{ }^\ell + (k^{\bf [n]})_a{}^a \mathfrak D^{\bf [n]}_{ij} + O(t^{-2+(n+2)\ve + 2p_{\max\{i,j\}}} |\log t|^2).
\end{equation}

\pfstep{Step~2: Estimating $\mathfrak D^{\bf [n]}_{ij}$} Since $\mathfrak D^{\bf [n]}_{ij}$ is manifestly anti-symmetric, it suffices to estimate $\mathfrak D^{\bf [n]}_{23}$, $\mathfrak D^{\bf [n]}_{13}$ and $\mathfrak D^{\bf [n]}_{12}$. By \eqref{eq:D.original}, they satisfy the following equations: 
$$\rd_t \mathfrak D^{\bf [n]}_{23} = [\f {p_2+p_3-1}t+O(t^{-1+\varepsilon})] \mathfrak D^{\bf [n]}_{23} - (k^{\bf [n]})_2{ }^1 \mathfrak D^{\bf [n]}_{13} + (k^{\bf [n]})_3{ }^1 \mathfrak D^{\bf [n]}_{12} + O(t^{-2+(n+2)\ve+ 2p_{\max\{i,j\}}} |\log t|^2),$$
$$\rd_t \mathfrak D^{\bf [n]}_{13} = [\f {p_1+p_3-1}t+O(t^{-1+\varepsilon})]  \mathfrak D^{\bf [n]}_{13} - (k^{\bf [n]})_3{ }^2 \mathfrak D^{\bf [n]}_{12} - (k^{\bf [n]})_1{ }^2 \mathfrak D^{\bf [n]}_{23} + O(t^{-2+(n+2)\ve+ 2p_{\max\{i,j\}}} |\log t|^2),$$
$$\rd_t \mathfrak D^{\bf [n]}_{12} = [\f {p_1+p_2-1}t+O(t^{-1+\varepsilon})] \mathfrak D^{\bf [n]}_{12} - (k^{\bf [n]})_2{ }^3 \mathfrak D^{\bf [n]}_{13} - (k^{\bf [n]})_1{ }^3 \mathfrak D^{\bf [n]}_{23} + O(t^{-2+(n+2)\ve+ 2p_{\max\{i,j\}}} |\log t|^2).$$

Applying the estimates for $k^{\bf [n]}$ from Lemma~\ref{lem:kn.gen.bounds}, we obtain
\begin{align}
\rd_t (t^{p_1}\mathfrak D^{\bf [n]}_{23}) =&\, O(t^{-1+\ve}) t^{p_1}\mathfrak D^{\bf [n]}_{23}+ O(t^{-1+\ve}) t^{p_1}\mathfrak D^{\bf [n]}_{13}\notag \\
&\: + O(t^{-1+\ve-2p_1+2p_3}) t^{p_1}\mathfrak D^{\bf [n]}_{12} + O(t^{-2+p_1+(n+2)\ve+2p_3}|\log t|^2),\label{eq:D.transport.23}\\
\rd_t (t^{p_2} \mathfrak D^{\bf [n]}_{13}) =&\, O(t^{-1+\ve}) t^{p_2}\mathfrak D^{\bf [n]}_{13} + O(t^{-1}) t^{p_2} \mathfrak D^{\bf [n]}_{23} \notag\\
&\: + O(t^{-1+\ve-2p_2+2p_3}) t^{p_2} \mathfrak D^{\bf [n]}_{12} + O(t^{-2+p_2+(n+2)\ve+2p_3}|\log t|^2), \label{eq:D.transport.13}\\
\rd_t (t^{p_3} \mathfrak D^{\bf [n]}_{12}) =&\,  O(t^{-1+\ve}) t^{p_3}\mathfrak D^{\bf [n]}_{12} + O(t^{-1}) t^{p_3} \mathfrak D^{\bf [n]}_{13} \notag\\
&\: + O(t^{-1}) t^{p_3} \mathfrak D^{\bf [n]}_{23} + O(t^{-2+p_3+(n+2)\ve+2p_2}|\log t|^2). \label{eq:D.transport.12}
\end{align}

To use these equations, note that when $i$, $j$, $\ell$ are all distinct,
\begin{equation}\label{eq:D.initial}
\lim_{t\to 0^+} t^{p_i} \mathfrak D^{\bf [n]}_{j \ell} = 0.
\end{equation}
Indeed, using the estimates in  {Lemmas~\ref{lem:kn.gen.bounds} and \ref{lem:an.well.defined},} one checks that $\mathfrak D^{\bf [n]}_{23},\, \mathfrak D^{\bf [n]}_{13} = O(t^{2p_3-1})$ and $\mathfrak D^{\bf [n]}_{12} = O(t^{2p_2-1})$. This implies $t^{p_1}\mathfrak D^{\bf [n]}_{23} = O(t^{p_1+2p_3-1}) = O(t^{p_3-p_2})$, $t^{p_2} \mathfrak D^{\bf [n]}_{13} = O(t^{p_2 +2p_3-1}) = O(t^{p_3-p_1})$ and $t^{p_3} \mathfrak D^{\bf [n]}_{12} = O(t^{p_3 +2p_2-1}) = O(t^{p_2-p_1})$. We then obtain \eqref{eq:D.initial} using $p_1< p_2 < p_3$.

We now use equations \eqref{eq:D.transport.23}--\eqref{eq:D.transport.12} to estimate $\mathfrak D^{\bf [n]}_{ij}$. The key is to notice a reductive structure similar to that in the proof of Lemma~\ref{lem:an.well.defined}, except in this situation since the different components have different rates, we argue with a bootstrap argument. 

Make the bootstrap assumptions that
\begin{equation}\label{eq:BA.for.D}
\begin{split}
|\mathfrak D^{\bf [n]}_{23}|(t,x) \leq A&t^{-1+(n+2)\ve+2p_3} |\log t|^2,\quad |\mathfrak D^{\bf [n]}_{13}|(t,x) \leq At^{-1+(n+2)\ve+2p_3} |\log t|^2,\\
&\:\quad|\mathfrak D^{\bf [n]}_{12}|(t,x) \leq At^{-1+(n+2)\ve+2p_2} |\log t|^2,
\end{split}
\end{equation}
where $A$ is a large constant, such that denoting the implicit constant in the big-$O$ notation in \eqref{eq:D.transport.23}--\eqref{eq:D.transport.12} by $C$, we require $C\ll A$.

Plugging \eqref{eq:BA.for.D} into \eqref{eq:D.transport.23}, integrating, and using $p_2>p_1$,
we obtain 
\begin{equation}\label{eq:D.improved.23}
 |\mathfrak D^{\bf [n]}_{23}|(t,x) \leq Ct^{-1+(n+2)\ve+2p_3} |\log t|^2 + CA t^{-1+(n+3)\ve+2p_3} |\log t|^2 .
\end{equation}
Arguing similarly, first for $\mathfrak D^{\bf [n]}_{13}$ and then for $\mathfrak D^{\bf [n]}_{12}$, we also obtain
\begin{equation}\label{eq:D.improved.13}
 |\mathfrak D^{\bf [n]}_{13}|(t,x) \leq Ct^{-1+(n+2)\ve+2p_3} |\log t|^2 + CA t^{-1+(n+3)\ve+2p_3} |\log t|^2,
\end{equation}
\begin{equation}\label{eq:D.improved.12}
 |\mathfrak D^{\bf [n]}_{12}|(t,x) \leq Ct^{-1+(n+2)\ve+2p_2} |\log t|^2 + CA t^{-1+(n+3)\ve+2p_2} |\log t|^2.
\end{equation}

Choosing $t_n$ sufficiently small (so that $A t^\ve \leq 1$), it is easy to check that \eqref{eq:D.improved.23}--\eqref{eq:D.improved.12} improves the bootstrap assumptions in \eqref{eq:BA.for.D}. This gives the stated estimates for $\mathfrak D^{\bf [n]}_{ij}$ in the lemma when $|\alp| = 0$.

The estimates for the spatial derivatives are similar, except that we lose a factor of $|\log t|$ for each derivative we take (cf.~\eqref{eq:Ricci.improved}). \qedhere

\pfstep{Step~3: Estimating $\rd_t \mathfrak D^{\bf [n]}_{ij}$} Finally, we plug in the estimates for $\mathfrak D^{\bf [n]}_{ij}$ into \eqref{eq:D.transport.23}--\eqref{eq:D.transport.12} to obtain the desired estimates for $\rd_x^\alp \rd_t \mathfrak D^{\bf [n]}_{ij}$. \qedhere
\end{proof}

\begin{lemma}\label{lem:D.est.2}
For each $n \in \mathbb N$ and $\mathfrak D^{\bf [n]}_{ij}$ as in Lemma~\ref{lem:D.est}, define $(\wtDn)_i{}^j := (g^{\bf [n]})^{j\ell} \mathfrak D^{\bf [n]}_{i\ell}$. Then if $(n+1)\ve >2$, the following estimates hold for $(t,x) \in (0,t_n]\times \mathbb T^n$ for some $C_{\alp,n}>0$ (depending on $\alp$, $n$, in addition to $c_{ij}$ and $p_i$):
\begin{equation}\label{eq:D.est.2.main}
\sum_{r=0}^2 t^r |\rd_x^\alp \rd_t^r (\wtDn)_i{}^j|(t,x) \leq C_{n,\alp} \min\{ t^{-1+(n+2)\ve} |\log t|^{2+|\alp|}, t^{-1+(n+2)\ve-2p_j+2p_i} |\log t|^{2+|\alp|}\}.
\end{equation}
\end{lemma}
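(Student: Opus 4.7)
The strategy is to reduce everything to the estimates already established in Lemma~\ref{lem:D.est} for $\mathfrak D^{\bf [n]}_{i\ell}$ and $\rd_t \mathfrak D^{\bf [n]}_{i\ell}$, combined with bounds on $(g^{\bf [n]})^{-1}$ and its time derivatives coming from the transport equation \eqref{eq:gn-1.transport}.

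\textbf{Step 1 (inverse metric bounds).} First I would record the fact that under the standing estimates (Lemmas~\ref{lem:inverse.0}, \ref{lem:an.well.defined}, \ref{lem:kn.gen.bounds}), the inverse metric satisfies $|\rd_x^\alp (g^{\bf [n]})^{j\ell}|\lesssim t^{-2p_{\max\{j,\ell\}}}|\log t|^{|\alp|}$, and by \eqref{eq:gn-1.transport} together with the $t^{-1}$-bounds on $\kn$, that $|\rd_x^\alp \rd_t^r (g^{\bf [n]})^{j\ell}|\lesssim t^{-r-2p_{\max\{j,\ell\}}}|\log t|^{|\alp|}$ for $r=0,1,2$ (for $r=2$ the factor $\rd_t \kn$ is bounded by differentiating \eqref{eq:k.transport} and invoking the Ricci estimate \eqref{eq:main.parametrix.Ric.bd} already in hand).

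\textbf{Step 2 (the $r=0$ exponent bookkeeping).} Write $(\wtDn)_i{}^j=\sum_\ell (g^{\bf [n]})^{j\ell}\mathfrak D^{\bf [n]}_{i\ell}$, insert the bounds from Step~1 and Lemma~\ref{lem:D.est} to get a sum of terms of size $t^{-1+(n+2)\ve+2p_{\max\{i,\ell\}}-2p_{\max\{j,\ell\}}}|\log t|^{2+|\alp|}$. I would then argue by cases on the ordering of $i,j,\ell$ that the exponent $p_{\max\{i,\ell\}}-p_{\max\{j,\ell\}}$ is bounded above by $\min\{0,\,p_i-p_j\}$. Indeed, if $i\leq j$ then $\max\{i,\ell\}\leq\max\{j,\ell\}$ so the exponent is $\leq 0$; while for $i>j$ one checks in the three subcases $\ell\leq j$, $j<\ell\leq i$, $\ell>i$ that the exponent is $\leq p_i-p_j$ (with equality only in the first subcase). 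This gives the asserted minimum for $r=0$.

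\textbf{Step 3 ($r=1,2$ by Leibniz).} For $r=1$, I would expand
\[
\rd_t(\wtDn)_i{}^j=\sum_\ell\bigl[\rd_t(g^{\bf [n]})^{j\ell}\bigr]\mathfrak D^{\bf [n]}_{i\ell}+(g^{\bf [n]})^{j\ell}\rd_t\mathfrak D^{\bf [n]}_{i\ell}
\]
and estimate each term using Step~1 and Lemma~\ref{lem:D.est}; the first piece gains one extra $t^{-1}$ from $\rd_t g^{-1}$, and the second piece gains one $t^{-1}$ from $\rd_t\mathfrak D^{\bf [n]}$. For $r=2$ I would Leibniz once more; the only new input needed is a bound on $\rd_t^2\mathfrak D^{\bf [n]}_{i\ell}$, which I would obtain by differentiating the transport identity \eqref{eq:D.original} in $t$ and substituting the already-proved bounds on $\mathfrak D^{\bf [n]}$, $\rd_t\mathfrak D^{\bf [n]}$, $\kn$ and on $\rd_t\kn$ (the latter via \eqref{eq:k.transport} and \eqref{eq:main.parametrix.Ric.bd}). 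The resulting estimate is $|\rd_x^\alp\rd_t^2\mathfrak D^{\bf [n]}_{i\ell}|\lesssim t^{-3+(n+2)\ve+2p_{\max\{i,\ell\}}}|\log t|^{2+|\alp|}$, and each of the three terms in the Leibniz expansion produces the same exponent $-3+(n+2)\ve+2p_{\max\{i,\ell\}}-2p_{\max\{j,\ell\}}$. Multiplying by $t^2$ and applying the case analysis of Step~2 yields \eqref{eq:D.est.2.main}.

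\textbf{Step 4 (spatial derivatives).} Higher $\rd_x^\alp$ derivatives are handled exactly as in the proof of Lemma~\ref{lem:D.est}: each spatial derivative either lands on an $a^{\bf [n]}$-factor (producing an $O(1)$ term) or on a power $t^{2p_i(x)}$ inside $(g^{\bf [n]})^{\pm 1}$ or $\mathfrak D^{\bf [n]}$, producing a factor of $|\log t|$. This accounts for the $|\log t|^{2+|\alp|}$ on the right-hand side.

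The only nontrivial point is Step~2, namely the combinatorial verification that for every choice of $i,j,\ell\in\{1,2,3\}$ one has $p_{\max\{i,\ell\}}-p_{\max\{j,\ell\}}\leq\min\{0,\,p_i-p_j\}$. Everything else is a direct application of Lemma~\ref{lem:D.est} and the already-established pointwise estimates on $g^{\bf [n]}$, $(g^{\bf [n]})^{-1}$ and $\kn$.
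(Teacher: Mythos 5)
There is a genuine gap in Step~2, and it propagates through the rest of the argument. Two things go wrong at once. First, the inverse-metric bound you invoke is not the one you need: from the explicit inversion in Lemma~\ref{lem:inverse.0} (propagated to $g^{\bf [n]}$ via Lemma~\ref{lem:an.well.defined}) one has $|\rd_x^\alp (g^{\bf [n]})^{j\ell}|\lesssim t^{-2p_{\min\{j,\ell\}}}|\log t|^{|\alp|}$; your bound $t^{-2p_{\max\{j,\ell\}}}$ is true but strictly lossy, and the loss is fatal here. Second, and more fundamentally, the right-hand side of \eqref{eq:D.est.2.main} is the \emph{minimum} of two quantities, so for small $t$ it equals $t^{-1+(n+2)\ve+\max\{0,\,2(p_i-p_j)\}}|\log t|^{2+|\alp|}$; to dominate it you must prove a \emph{lower} bound on the exponent gained, namely $2p_{\max\{i,\ell\}}-2p_{\min\{j,\ell\}}\geq \max\{0,\,2(p_i-p_j)\}$ for every $\ell$, not an upper bound by $\min\{0,\,p_i-p_j\}$ as in your case analysis. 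With the correct $\min\{j,\ell\}$ bookkeeping this lower bound is immediate, since $p_{\max\{i,\ell\}}\geq p_i,\,p_\ell$ while $p_{\min\{j,\ell\}}\leq p_j,\,p_\ell$, and the $r=0$ estimate then follows exactly as in the paper (using also $t^{\ve}|\log t|^m\lesssim_m 1$ to absorb the extra logarithms). With your $\max\{j,\ell\}$ bookkeeping the needed inequality is false: e.g.\ for $(\wtDn)_1{}^3$ the $\ell=2$ term is only bounded by $t^{-1+(n+2)\ve+2p_2-2p_3}$, which misses even the weaker alternative $t^{-1+(n+2)\ve}$, and for $(\wtDn)_3{}^1$ the $\ell=2$ term gives $t^{-1+(n+2)\ve+2p_3-2p_2}$ instead of the required $t^{-1+(n+2)\ve+2p_3-2p_1}$. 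As written, your Step~2 at best yields the \emph{larger} of the two quantities in the min, which is strictly weaker than the lemma; the improved rate for $i>j$ is genuinely needed later (e.g.\ in \eqref{eq:Ric.diff.with.n.n-1.1} and Lemma~\ref{lem:k.II}).

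On the time derivatives, your Leibniz route differs from the paper: for $r=1,2$ the paper contracts the evolution identity \eqref{D.est.Ricci.diff.exact} with $(g^{\bf [n]})^{jb}$, uses \eqref{eq:gn-1.transport} and the symmetry of $Ric(g^{\bf [n-1]})_i{}^\ell g^{\bf [n-1]}_{\ell j}$ to get the transport equation \eqref{eq:wtDn.prelim} for $\wtDn$, and then feeds the $r=0$ bound into it. Your alternative (Leibniz on $(g^{\bf [n]})^{j\ell}\mathfrak D^{\bf [n]}_{i\ell}$, plus a bound on $\rd_t^2\mathfrak D^{\bf [n]}$ from differentiating \eqref{eq:D.original}) could be made to work once the $r=0$ bookkeeping is corrected; note, though, that differentiating \eqref{eq:D.original} in $t$ also requires controlling the $t$-derivative of its error term, i.e.\ of the Ricci-difference expression \eqref{D.est.Ricci.diff}, which needs \eqref{eq:main.parametrix.Ric.bd} (with $r=1$) and the $\rd_t$ part of \eqref{eq:a.diff.est}. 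None of this, however, rescues the estimate until the exponent analysis in Step~2 is fixed.
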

\begin{proof}
\pfstep{Step~1: Estimates for $\wtDn$ (when $r=0$)} By Lemma~\ref{lem:D.est},the estimate $\rd_x^\alp (g^{\bf [n]})^{j\ell}= O(t^{-2p_{\min\{j,\ell\}}}|\log t|^{|\alpha|})$ and the fact $t^{\ve}|\log t|^\ell \ls_\ell 1$, we immediately obtain 
\begin{equation}\label{eq:D.est.2.main.r=0}
 |\rd_x^\alp (\wtDn)_i{}^j|(t,x) \leq C_{n,\alp} \min\{ t^{-1+(n+2)\ve} |\log t|^{2+|\alp|}, t^{-1+(n+2)\ve-2p_j+2p_i} |\log t|^{2+|\alp|}\}.
\end{equation}

\pfstep{Step~2: Deriving evolution equations for $\wtDn$} Contracting \eqref{D.est.Ricci.diff.exact} with $(g^{\bf [n]})^{j b}$, using \eqref{eq:gn-1.transport} and the anti-symmetry of $\mathcal D^{\bf [n]}_{ij}$, we obtain 
\begin{equation}\label{eq:wtDn.prelim}
\rd_t (\wtDn)_{i}{}^b = (k^{\bf [n]})_a{}^a (\wtDn)_i{}^b + Ric(g^{\bf [n-1]})_i{ }^b  - (g^{\bf [n]})^{j b} Ric(g^{\bf [n-1]})_j{ }^\ell g^{\bf [n]}_{\ell i}.
\end{equation}

We notice now that since 
$$(g^{\bf [n-1]})^{j b} Ric(g^{\bf [n-1]})_j{ }^\ell g^{\bf [n-1]}_{\ell i} = Ric(g^{\bf [n-1]})_i{ }^b,$$
we have
\begin{equation}\label{eq:Ric.diff.with.n.n-1.1}
\begin{split}
&\: Ric(g^{\bf [n-1]})_i{ }^b  - (g^{\bf [n]})^{j b} Ric(g^{\bf [n-1]})_j{ }^\ell g^{\bf [n]}_{\ell i} \\
= &\: - (g^{\bf [n]})^{j b} Ric(g^{\bf [n-1]})_j{ }^\ell g^{\bf [n]}_{\ell i} + (g^{\bf [n-1]})^{j b} Ric(g^{\bf [n-1]})_j{ }^\ell g^{\bf [n-1]}_{\ell i} \\
= &\: - [ (g^{\bf [n]})^{j b} - (g^{\bf [n-1]})^{j b}] Ric(g^{\bf [n-1]})_j{ }^\ell g^{\bf [n]}_{\ell i} - (g^{\bf [n-1]})^{j b} Ric(g^{\bf [n-1]})_j{ }^\ell [ g^{\bf [n]}_{\ell i} - g^{\bf [n-1]}_{\ell i}] \\
= &\: O(t^{-2p_{\min\{j,b\}}})\times O( t^{n\ve}) \times O(\min\{ t^{-2+2\ve}|\log t|^{2+|\alp|},\, t^{-2+\ve-2p_\ell+2p_j}|\log t|^{2+|\alp|} \}) \times O(t^{2p_{\max\{\ell,i\}}}) \\
= &\: O(|\log t|^2\times \min\{ t^{-2+(n+2)\ve},\, t^{-2+(n+2)\ve-2p_b+2p_i} \}),
\end{split}
\end{equation}
where in estimating the terms we have used the form of the metric, computation of the inverse metric (see \eqref{invg0}), Lemmas~\ref{lem:an.well.defined} and \ref{lem:an-an-1}, and \eqref{eq:Ricci.improved}.

Differentiating \eqref{eq:Ric.diff.with.n.n-1.1} by $\rd_x^\alp \rd_t^r$, and arguing similarly, we also obtain the following higher derivative bounds for $r = 0,1$:
\begin{equation}\label{eq:Ric.diff.with.n.n-1.2}
\begin{split}
&\: \rd_x^\alp \{ t^r\rd_t^r [(Ric(g^{\bf [n-1]})_i{ }^b  - (g^{\bf [n]})^{j b} Ric(g^{\bf [n-1]})_j{ }^\ell g^{\bf [n]}_{\ell i})] \} \\
= &\: O(|\log t|^{2+|\alp|}\times \min\{ t^{-2+(n+2)\ve},\, t^{-2+(n+2)\ve-2p_b+2p_i} \}).
\end{split}
\end{equation}

Plugging the estimate \eqref{eq:Ric.diff.with.n.n-1.2} into \eqref{eq:wtDn.prelim}, using the estimates for $\kn$ (by Lemma~\ref{lem:kn.gen.bounds}) and $\rd_t \kn$ (by \eqref{eq:k.transport}, Lemma~\ref{lem:Ricci}, \eqref{eq:a-c.est} and Lemma~\ref{lem:kn.gen.bounds}), (and relabelling the indices,) we obtain
\begin{equation}\label{eq:wtDn}
\begin{split}
&\: \rd_x^\alp \rd_t  (\wtDn)_{i}{}^b \\
= &\: O(t^{-1}) (\sum_{|\bt|\leq |\alp|} \rd_x^\bt (\wtDn)_i{}^b) + O(|\log t|^{2+|\alp|} \times \min\{ t^{-2+(n+2)\ve}, t^{-2+(n+2)\ve-2p_b+2p_i}\}),
\end{split}
\end{equation}
and
\begin{equation}\label{eq:dtwtDn}
\begin{split}
 \rd_x^\alp \rd_t^2 (\wtDn)_{i}{}^b 
=&\: O(t^{-1})\sum_{|\bt|\leq |\alp|} \rd_x^\bt \rd_t (\wtDn)_i{}^b + O(t^{-2})\sum_{|\bt|\leq |\alp|} \rd_x^\bt (\wtDn)_i{}^b \\
&\: +  O(|\log t|^{2+|\alp|} \times \min\{ t^{-3+(n+2)\ve},\, t^{-3+(n+2)\ve-2p_b+2p_i}\}).
\end{split} 
\end{equation}

\pfstep{Step~3: Estimates for $\rd_t \wtDn$ and $\rd_t^2 \wtDn$ (when $r=1,\,2$)} Plugging \eqref{eq:D.est.2.main.r=0} into \eqref{eq:wtDn}, we obtain
\begin{equation}\label{eq:D.est.2.main.r=1}
 |\rd_x^\alp \rd_t(\wtDn)_i{}^j|(t,x) \leq C_{n,\alp} \min\{ t^{-2+(n+2)\ve} |\log t|^{2+|\alp|}, t^{-2+(n+2)\ve-2p_j+2p_i} |\log t|^{2+|\alp|}\}.
\end{equation}
Similarly, plugging in both \eqref{eq:D.est.2.main.r=1} and \eqref{eq:D.est.2.main.r=0} into \eqref{eq:dtwtDn}, we obtain
\begin{equation}\label{eq:D.est.2.main.r=2}
 |\rd_x^\alp \rd_t^2(\wtDn)_i{}^j|(t,x) \leq C_{n,\alp} \min\{ t^{-3+(n+2)\ve} |\log t|^{2+|\alp|}, t^{-3+(n+2)\ve-2p_j+2p_i} |\log t|^{2+|\alp|}\}.
\end{equation}

Combining \eqref{eq:D.est.2.main.r=0}, \eqref{eq:D.est.2.main.r=1} and \eqref{eq:D.est.2.main.r=2} yields \eqref{eq:D.est.2.main}. \qedhere

\end{proof}

The next lemma shows that even though $k^{\bf [n]}$ is \emph{not} the second fundamental form associated to $g^{\bf [n]}$, it is close to being the second fundamental form up to an error that vanishes sufficiently fast as $t\to 0^+$.
\begin{lemma}\label{lem:k.II}
When $(n+1)\ve >2$, the following estimates hold for $(t,x) \in (0,t_n]\times \mathbb T^n$ for some $C_{\alp,n}>0$ (depending on $\alp$, $n$, in addition to $c_{ij}$ and $p_i$):
\begin{equation}\label{eq:diff.k.II.1}
\sum_{r=0}^{2} t^r| \rd_x^{\alp} \rd_t^r (2(k^{\bf [n]})_i{ }^j +  (g^{\bf [n]})^{j\ell} \rd_t g^{\bf [n]}_{i\ell}) |(t,x) \leq C_{n,\alp} t^{-1+(n+2)\ve} |\log t|^{2+|\alp|}.
\end{equation}
\end{lemma}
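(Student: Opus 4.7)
The plan is to observe that the quantity $2(k^{\bf [n]})_i{ }^j +  (g^{\bf [n]})^{j\ell} \rd_t g^{\bf [n]}_{i\ell}$ is \emph{exactly} equal to $(\wtDn)_i{ }^j$, so that Lemma~\ref{lem:D.est.2} gives the result directly (in fact a sharper bound than stated, which can be simplified by dropping the second factor in the min).

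More precisely, first I would use the defining transport equation \eqref{eq:g.transport} for $g^{\bf [n]}$ to write
\begin{align*}
(g^{\bf [n]})^{j\ell}\rd_t g^{\bf [n]}_{i\ell}
&= -(g^{\bf [n]})^{j\ell}\bigl[(k^{\bf [n]})_i{}^a g^{\bf [n]}_{a\ell} + (k^{\bf [n]})_\ell{}^a g^{\bf [n]}_{ai}\bigr] \\
&= -(k^{\bf [n]})_i{}^j - (g^{\bf [n]})^{j\ell}(k^{\bf [n]})_\ell{}^a g^{\bf [n]}_{ai},
\end{align*}
so that after adding $2(k^{\bf [n]})_i{}^j$ we obtain
\begin{equation*}
2(k^{\bf [n]})_i{}^j + (g^{\bf [n]})^{j\ell}\rd_t g^{\bf [n]}_{i\ell}
= (g^{\bf [n]})^{j\ell}\bigl[(k^{\bf [n]})_i{}^a g^{\bf [n]}_{\ell a} - (k^{\bf [n]})_\ell{}^a g^{\bf [n]}_{ai}\bigr]
= (g^{\bf [n]})^{j\ell}\mathfrak D^{\bf [n]}_{i\ell}
= (\wtDn)_i{}^j,
\end{equation*}
using the definition of $\mathfrak D^{\bf [n]}_{ij}$ from Lemma~\ref{lem:D.est} and of $(\wtDn)_i{}^j$ from Lemma~\ref{lem:D.est.2}.

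Once this identity is in hand, the estimate \eqref{eq:diff.k.II.1} is an immediate consequence of \eqref{eq:D.est.2.main} in Lemma~\ref{lem:D.est.2}: indeed \eqref{eq:D.est.2.main} gives the minimum of two bounds, and the first of these, $t^{-1+(n+2)\ve}|\log t|^{2+|\alp|}$, is already the bound claimed in \eqref{eq:diff.k.II.1}. No further argument is required.

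There is essentially no obstacle: the algebraic identity is a direct consequence of the transport equation \eqref{eq:g.transport}, and the hard analytic work has already been carried out in the proof of Lemma~\ref{lem:D.est.2} (where the reductive structure was used to estimate $\mathfrak D^{\bf [n]}_{ij}$ despite the borderline $O(t^{-1})$ coefficients). The only thing worth noting is why this identity is to be expected: if $k^{\bf [n]}$ were literally the second fundamental form of $g^{\bf [n]}$, then $(k^{\bf [n]})_i{}^a g^{\bf [n]}_{ja}$ would be symmetric in $(i,j)$, i.e.\ $\mathfrak D^{\bf [n]}_{ij}$ would vanish; the content of the lemma is precisely the fact that $\mathfrak D^{\bf [n]}$ measures how far $k^{\bf [n]}$ is from being the genuine second fundamental form.
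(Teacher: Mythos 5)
Your proof is correct and follows essentially the same route as the paper: both reduce \eqref{eq:diff.k.II.1} to the algebraic identity expressing $2(k^{\bf [n]})_i{ }^j + (g^{\bf [n]})^{j\ell}\rd_t g^{\bf [n]}_{i\ell}$ in terms of $(\wtDn)_i{}^j$ via \eqref{eq:g.transport}, and then invoke Lemma~\ref{lem:D.est.2} (dropping the second entry of the min). The only discrepancy is a sign --- the paper records the identity as $-(\wtDn)_i{}^j$ while your computation yields $+(\wtDn)_i{}^j$ --- but since the estimate concerns absolute values and $\mathfrak D^{\bf [n]}$ is antisymmetric, this is immaterial.
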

\begin{proof}
By \eqref{eq:g.transport} and the definition of $\mathfrak D^{\bf [n]}_{ij}$ (in Lemma~\ref{lem:D.est}) and $\wtDn$ (in Lemma~\ref{lem:D.est.2}),
\begin{equation}\label{truekn}
\begin{split}
&\: 2(k^{\bf [n]})_i{ }^j + (g^{\bf [n]})^{j\ell} \rd_t g^{\bf [n]}_{i\ell}\\
= &\: 2(k^{\bf [n]})_i{ }^j - 2(k^{\bf [n]})_i{ }^j - (g^{\bf [n]})^{j\ell}\mathfrak D^{\bf [n]}_{i\ell} = - (g^{\bf [n]})^{j\ell}\mathfrak D^{\bf [n]}_{i\ell}=-(\wtDn)_i{}^j.
\end{split}
\end{equation}
The desired estimate is then an immediate consequence of Lemma~\ref{lem:D.est.2}. \qedhere
\end{proof}

\textbf{Lemma~\ref{lem:k.II} gives point (4) in Theorem~\ref{thm:parametrix}.}

\subsection{$(k^{[\bf n]})_i{}^j$ and $g^{[\bf n]}_{ij}$ satisfy evolution equations approximately}

In this subsection we prove point (5) in Theorem~\ref{thm:parametrix} (see Proposition~\ref{prop:approx.ev.1}), which then completes the proof of the theorem.
\begin{lemma}\label{lem:aux.dtRic}
For every $n\in \mathbb N$, the following estimates hold for $(t,x) \in (0,t_n]\times \mathbb T^n$, for some $C_{\alp,n}>0$ (depending on $\alp$, $n$, in addition to $c_{ij}$ and $p_i$):
\begin{equation*}
\begin{split}
&\: \left|\rd_x^\alp \rd_t \left( Ric(g^{\bf [n]})_i{ }^j- Ric(g^{\bf [n-1]})_i{ }^j\right)\right|(t,x) \leq C_{\alp,n} t^{-3+(n+2)\ve} |\log t|^{2+|\alp|}.
\end{split}
\end{equation*}
\end{lemma}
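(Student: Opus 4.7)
The plan is to extend the reasoning already used to establish \eqref{eq:Ric.diff.est.more.precise} in the proof of Lemma~\ref{lem:kn-kn-1} by differentiating once more in $t$, exploiting the fact that, in our gauge, applying $\partial_t$ to any metric or inverse metric factor costs only one extra power of $t^{-1}$ (up to logarithms). The estimate we must prove simply says that the bound in \eqref{eq:Ric.diff.est.more.precise}, namely $t^{-2+(n+2)\ve}|\log t|^{2+|\alpha|}$, worsens by exactly one factor of $t^{-1}$ when we take an additional time derivative, so the target bound $t^{-3+(n+2)\ve}|\log t|^{2+|\alpha|}$ is precisely the natural scaling.

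Concretely, I would first use the formulas \eqref{Ricciformula}--\eqref{Ricciterms} to write each component of $Ric(g^{\bf [n]})_i{}^j - Ric(g^{\bf [n-1]})_i{}^j$ as a finite sum of terms of the two schematic types
\[
g^{ab}\partial_{\ell_1}[g^{\ell_2\ell_3}\partial_{\ell_4}g_{\ell_5\ell_6}],\qquad
g^{ab}g^{\ell_1\ell_2}\partial_{\ell_3}g_{\ell_4\ell_5}g^{\ell_6\ell_7}\partial_{\ell_8}g_{\ell_9\ell_{10}},
\]
in which one of the factors is replaced by the corresponding \emph{difference} between the $n$-th and $(n-1)$-th levels, while the remaining factors involve either $g^{\bf [n]}$ or $g^{\bf [n-1]}$ (this is the standard telescoping used in Step~1 of Lemma~\ref{lem:kn-kn-1}). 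By the contraction/index-sharing argument from the proof of Lemma~\ref{lem:Ricci}, the non-difference factors together contribute $t^{-2+2\ve}|\log t|^{2+|\alpha|}$, while the difference factor contributes $t^{n\ve}$ via Lemma~\ref{lem:an-an-1}, reproducing \eqref{eq:Ric.diff.est.more.precise}.

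Next I would apply $\partial_x^\alpha\partial_t$ and distribute via the Leibniz rule, considering two cases for where $\partial_t$ lands.

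\begin{itemize}
\item If $\partial_t$ hits the difference factor $(a^{\bf [n]}_{ij}-a^{\bf [n-1]}_{ij})$ or its spatial derivatives, we use the second bound of Lemma~\ref{lem:an-an-1}, namely $|\partial_x^\beta\partial_t(a^{\bf [n]}_{ij}-a^{\bf [n-1]}_{ij})|\lesssim t^{-1+n\ve}$. Combined with the contribution $t^{-2+2\ve}|\log t|^{2+|\alpha|}$ from the remaining factors, this yields $t^{-3+(n+2)\ve}|\log t|^{2+|\alpha|}$.
\item If $\partial_t$ hits one of the non-difference factors, we use that, exactly as recorded in Step~4 of the proof of Lemma~\ref{lem:Ricci}, each of $\partial_t g_{ab}=O(t^{2p_{\max\{a,b\}}-1})$, $\partial_t g^{ab}=O(t^{-2p_{\min\{a,b\}}-1})$, and their spatial derivatives (up to logs) satisfies the same scaling as the undifferentiated quantity multiplied by $t^{-1}$. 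The contraction/index-sharing argument in Lemma~\ref{lem:Ricci} is therefore unaffected, and the non-difference factors now contribute $t^{-3+2\ve}|\log t|^{2+|\alpha|}$, again giving $t^{-3+(n+2)\ve}|\log t|^{2+|\alpha|}$ after multiplying by the $t^{n\ve}$ coming from the difference factor.
\end{itemize}

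Summing the finitely many terms yields the claimed bound. The bookkeeping of the spatial derivatives $\partial_x^\alpha$ is identical to Step~3 of Lemma~\ref{lem:Ricci}: each spatial derivative costs at most a factor of $|\log t|$, producing the exponent $2+|\alpha|$ on $|\log t|$. The only mildly delicate point is to make sure that, when $\partial_t$ hits a non-difference factor, the contraction pattern that gave the $t^{2\ve}$ improvement in Lemma~\ref{lem:Ricci} is still present; but since $\partial_t$ does not change which indices are contracted, this is automatic. I would expect this to be the only genuine obstacle, and it is resolved by inspection of the same two-type decomposition \eqref{Ricciterms} used before.
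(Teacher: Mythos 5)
Your proposal is correct and follows essentially the same route as the paper: the paper likewise revisits the decomposition from Lemma~\ref{lem:Ricci} and the telescoping used for \eqref{eq:Ric.diff.est.more.precise}, and then observes that an extra $\partial_t$ costs at most one power of $t^{-1}$, whether it hits an explicit power of $t$, an $a^{\bf [n]}$/$a^{\bf [n-1]}$ factor (via \eqref{eq:a-c.est}), or the difference factor (via \eqref{eq:a.diff.est}). Your case analysis on where $\partial_t$ lands, together with the remark that time differentiation does not alter the index-contraction pattern, is exactly the content of the paper's (terser) argument.
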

\begin{proof}
Going back to the proof of Lemma \ref{lem:Ricci} and using the form of the metrics $g^{\bf[n]}$ and $g^{[\bf n-1]}$, we notice that the each term in the difference of $Ric(g^{\bf [n]})_i{ }^j,Ric(g^{\bf [n-1]})_i{ }^j$ has the form:
\begin{align*}
&[\text{explicit powers of $t$ and $\log t$ with behavior $O(t^{-2+2\varepsilon}|\log t|^{2+|\alp|})$}]\\ 
\times&[\text{non-linear terms in $\partial_x^\alpha a^{[\bf n]},\partial_x^\alpha a^{[\bf n-1]}$ which are linear in the difference 
$\partial_x^\alpha(a^{[\bf n]}-a^{[\bf n-1]})$, $|\alpha|\leq2$}]
\end{align*}
The fact that $a^{[\bf n]},a^{[\bf n-1]}$ and their spatial derivatives are bounded, while $|\partial_x^\alpha(a^{[\bf n]}-a^{[\bf n-1]})|\lesssim t^{n\varepsilon}$ (see Lemma \ref{lem:an-an-1}), was then used in Lemma \ref{lem:kn-kn-1} to infer the bound \eqref{eq:Ric.diff.est.more.precise}. 
 
Now we verify that a time derivative acting on any of the previous type of terms, adds at worst a power of $t^{-1}$ in their behavior. For the factors which are explicit powers of $t$ this is evident. If $\partial_t$ hits either $a^{[\bf n]},a^{[\bf n-1]}$ factor or their difference $a^{[\bf n]}-a^{[\bf n-1]}$, we make use of \eqref{eq:a-c.est}, \eqref{eq:a.diff.est} and the conclusion follows.
\end{proof}

\begin{proposition}\label{prop:approx.ev.1}
For every $n\in \mathbb N$, the following estimates hold for $(t,x) \in (0,t_n]\times \mathbb T^n$, for some $C_{\alp,n}>0$ (depending on $\alp$, $n$, in addition to $c_{ij}$ and $p_i$):
\begin{align*}
\sum_{r=0}^{1} t^r\left|\rd_x^\alp \rd_t^r\left(\rd_t (k^{\bf [n]})_i{ }^j - Ric(g^{\bf [n]})_i{ }^j - (k^{\bf [n]})_\ell{ }^\ell(k^{\bf [n]})_i{ }^j\right)\right|(t,x) \leq&\, C_{\alp,n} t^{-2+(n+2)\varepsilon}|\log t|^{2+|\alp|}.
\end{align*}
\end{proposition}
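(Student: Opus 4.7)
The plan is straightforward once one observes that the quantity inside $\rd_t^r$ collapses to a difference of Ricci curvatures of consecutive approximate metrics, which has already been estimated. Specifically, I would first use the defining equation \eqref{eq:k.transport} for $(k^{\bf [n]})_i{}^j$ to rewrite
\begin{equation*}
\rd_t (k^{\bf [n]})_i{ }^j - Ric(g^{\bf [n]})_i{ }^j - (k^{\bf [n]})_\ell{ }^\ell(k^{\bf [n]})_i{ }^j = Ric(g^{\bf [n-1]})_i{ }^j - Ric(g^{\bf [n]})_i{ }^j.
\end{equation*}
Thus proving the proposition reduces to bounding $|\rd_x^\alp \rd_t^r (Ric(g^{\bf [n-1]})_i{ }^j - Ric(g^{\bf [n]})_i{ }^j)|$ for $r = 0, 1$.

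For the $r = 0$ contribution I would appeal directly to the estimate \eqref{eq:Ric.diff.est.more.precise} established within the proof of Lemma~\ref{lem:kn-kn-1}, which yields exactly the bound $C_{\alp,n} t^{-2+(n+2)\ve}|\log t|^{2+|\alp|}$ required on the right-hand side.

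For the $r = 1$ contribution I would invoke Lemma~\ref{lem:aux.dtRic}, whose bound of $C_{\alp,n} t^{-3+(n+2)\ve}|\log t|^{2+|\alp|}$ becomes $C_{\alp,n} t^{-2+(n+2)\ve}|\log t|^{2+|\alp|}$ after multiplying by the prefactor $t^r = t$ appearing on the left-hand side of the proposition.

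There is no serious obstacle here: the entire content of the proposition is a bookkeeping consequence of the Ricci-difference estimates combined with the ODE \eqref{eq:k.transport} used to define $k^{\bf [n]}$. The only thing to verify is that the inductive construction of $g^{\bf [n]}, k^{\bf [n]}$ has been carried out for all $n \in \mathbb N$, so that the hypotheses of Lemmas~\ref{lem:kn-kn-1} and \ref{lem:aux.dtRic} are available for the relevant $n$; this was arranged in the bold induction paragraph preceding the current subsection.
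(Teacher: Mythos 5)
Your proposal is correct and follows essentially the same route as the paper: rewrite the bracketed quantity via \eqref{eq:k.transport} as the difference $Ric(g^{\bf [n-1]})_i{}^j - Ric(g^{\bf [n]})_i{}^j$, bound the $r=0$ term by the Ricci-difference estimate and the $r=1$ term by Lemma~\ref{lem:aux.dtRic} (with the factor $t^r=t$ absorbing one power). Your citation of \eqref{eq:Ric.diff.est.more.precise} for the $r=0$ case is in fact the sharper estimate that produces the stated rate $t^{-2+(n+2)\ve}|\log t|^{2+|\alp|}$, matching what the paper actually uses.
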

\begin{proof}
Using the equation \eqref{eq:k.transport}, the estimate \eqref{eq:Ric.diff.est}, and Lemma \ref{lem:aux.dtRic}, we obtain
\begin{align*}
&\: \left|\rd_x^\alp\left(\rd_t (k^{\bf [n]})_i{ }^j - Ric(g^{\bf [n]})_i{ }^j - (k^{\bf [n]})_\ell{ }^\ell(k^{\bf [n]})_i{ }^j\right)\right|(t,x) \\
= &\: 
\left|\rd_x^\alp\left( Ric(g^{\bf [n]})_i{ }^j- Ric(g^{\bf [n-1]})_i{ }^j\right)\right|(t,x)\lesssim t^{-2+(n+2)\varepsilon}|\log t|^{2+|\alp|}
\end{align*}
and 
\begin{align*}
&\: \left|\rd_x^\alp\partial_t\left(\rd_t (k^{\bf [n]})_i{ }^j - Ric(g^{\bf [n]})_i{ }^j - (k^{\bf [n]})_\ell{ }^\ell(k^{\bf [n]})_i{ }^j\right)\right|(t,x) \\
= &\: 
\left|\rd_x^\alp\partial_t\left( Ric(g^{\bf [n]})_i{ }^j- Ric(g^{\bf [n-1]})_i{ }^j\right)\right|(t,x)\lesssim t^{-3+(n+2)\varepsilon}|\log t|^{2+|\alp|},
\end{align*}
as desired. \qedhere
\end{proof}

\textbf{Proposition~\ref{prop:approx.ev.1} implies point (5) of Theorem~\ref{thm:parametrix}. Together with the previous subsections, this completes the proof of Theorem~\ref{thm:parametrix}.}

\section{Approximate propagation of constraints}\label{sec:approx.constraints}

We continue to work under the assumptions of Theorem~\ref{mainthm} and take $g^{\bf [n]}$ and $k^{\bf [n]}$ as constructed in the beginning of Section~\ref{sec:parametrix} (so that for appropriately chosen $t_N$,  the estimates in Theorem~\ref{thm:parametrix} hold).

The goal of this section is to show that metrics ${^{(4)}g^{[{\bf n}]}}$ are also approximate solutions to the constraints, as $t\rightarrow0$, to an order that improves with the increase of $n$.  {To achieve this we argue by   \emph{propagation of constraints}, i.e.~we use the second Bianchi identity as propagation equations and use that} the constraints are asymptotically valid in the renormalized sense \eqref{asymHamconst}--\eqref{asymmomconst}.

It will be useful to setup some notations that we use in this section. For the remainder of this section, \textbf{$D$ will denote the Levi--Civita connection of the spacetime metric $^{(4)} g^{\bf [n]}$ and $\nab$ will denote the Levi--Civita connection of the metric $g^{\bf [n]}$ on the (spacelike) constant-$t$ hypersurfaces.} Moreover, \textbf{indices are lowered and raised with respect to the metric $g^{\bf [n]}$} (in particular $(g^{\bf [n]})^{ij} = ((g^{\bf [n]})^{-1})^{ij}$ in this section).

\begin{proposition}\label{prop:curv.id}
Let $^{(4)}g = -\ud t^2 + g$, where $g$ is a Riemannian metric. Define $\widetilde{k}_i{ }^j := - \f 12 (g^{-1})^{j\ell} \rd_t g_{i\ell}$ (the second fundamental form). Then the following identities hold:
\begin{align}
\label{Ricij}Ric(^{(4)}g)_i{ }^j = &\: -\rd_t \widetilde{k}_i{ }^j + Ric(g)_i{ }^j + \widetilde{k}_\ell{ }^\ell \widetilde{k}_i{ }^j,\\
\label{Ricti}Ric(^{(4)}g)_{ti} = &\: -(\mathrm{div}_g \widetilde{k})_i + \nab_i (\widetilde{k}_\ell{ }^\ell),\\
\label{Rictt}Ric(^{(4)}g)_{tt} = &\:\rd_t (\widetilde{k}_\ell{ }^\ell) - |\widetilde{k}|^2,\\
\label{Scalar}R(^{(4)}g) = &\:-2 \rd_t (\widetilde{k}_\ell{ }^\ell) + R(g) + |\widetilde{k}|^2 + (\widetilde{k}_\ell{ }^\ell)^2,\\
\label{Gij}G(^{(4)}g)_i{ }^j =&\: -\rd_t \widetilde{k}_i{ }^j + Ric(g)_i{ }^j + \widetilde{k}_\ell{ }^\ell \widetilde{k}_i{ }^j -\f 12 \delta_i{ }^j [-2 \rd_t ( \widetilde{k}_\ell{ }^\ell) + R(g) + |\widetilde{k}|^2 + (\widetilde{k}_\ell{ }^\ell)^2],\\
\label{Gti}G(^{(4)}g)_{ti} =&\: -(\mathrm{div} \widetilde{k})_i + \nab_i (\widetilde{k}_\ell{ }^\ell),\\
\label{Gtt}G(^{(4)}g)_{tt} = &\:\f 12[R(g)  - |\widetilde{k}|^2 + (\widetilde{k}_\ell{ }^\ell)^2],
\end{align}
where $G(^{(4)}g)_{\alpha\beta}$ is the Einstein tensor of ${^{(4)}g}$.  
\end{proposition}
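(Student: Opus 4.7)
The statement is a standard $3+1$ decomposition in Gaussian normal (synchronized) coordinates, so my plan is to verify all seven identities by a direct computation from the spacetime Christoffel symbols, essentially reproducing the Gauss, Codazzi, and Raychaudhuri identities in this special gauge.

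Step 1 (Christoffels). Since $g_{tt}=-1$ and $g_{ti}=0$, a direct computation from $\Gamma^{\alpha}_{\beta\gamma}=\tfrac{1}{2}g^{\alpha\delta}(\partial_{\beta}g_{\gamma\delta}+\partial_{\gamma}g_{\beta\delta}-\partial_{\delta}g_{\beta\gamma})$ shows that the only non-vanishing Christoffel symbols of ${}^{(4)}g$ are $\Gamma^t_{ij}=-\widetilde{k}_{ij}$, $\Gamma^i_{tj}=-\widetilde{k}_j{}^i$, and the spatial Christoffel symbols of $g$. Everything else vanishes since $g_{tt}$ is constant and $g_{ti}=0$.

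Step 2 (Ricci components). Substituting into $R^{\alpha}{}_{\beta\gamma\delta}=\partial_{\gamma}\Gamma^{\alpha}_{\delta\beta}-\partial_{\delta}\Gamma^{\alpha}_{\gamma\beta}+\Gamma\cdot\Gamma$ and taking the appropriate contractions, I decompose $Ric({}^{(4)}g)_{ij}=R^t{}_{itj}+R^k{}_{ikj}$: the first piece yields a $-\partial_t\widetilde{k}_{ij}$ term plus a quadratic $\widetilde{k}\cdot\widetilde{k}$ correction, while the second piece splits as $Ric(g)_{ij}$ plus trace and quadratic $\widetilde{k}$ terms coming from Christoffels with a time index. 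Passing to the $(1,1)$ form by raising with $g^{j\ell}$ requires $\partial_t(g^{-1})^{j\ell}=2\widetilde{k}^{j\ell}$; the resulting extra quadratic terms exactly cancel the quadratic $\widetilde{k}\cdot\widetilde{k}$ terms from the previous step, producing \eqref{Ricij}. The identity \eqref{Ricti} follows similarly from $Ric({}^{(4)}g)_{ti}=R^k{}_{tki}$, after recognizing that the bare partials together with the spatial Christoffels reassemble into the covariant divergence $\nabla_k\widetilde{k}_i{}^k$. Finally, $Ric({}^{(4)}g)_{tt}=R^k{}_{tkt}$ collapses immediately since the only surviving contributions are $-\partial_t\Gamma^k_{tk}$ and $-\Gamma^k_{tj}\Gamma^j_{kt}$, yielding \eqref{Rictt}.

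Step 3 (Scalar and Einstein tensor). Contracting via $R({}^{(4)}g)=-Ric_{tt}+Ric_i{}^i$ and substituting the formulas from Step 2 gives \eqref{Scalar} directly. The Einstein tensor identities \eqref{Gij}, \eqref{Gti}, \eqref{Gtt} then follow by plugging into $G_{\alpha\beta}=Ric_{\alpha\beta}-\tfrac{1}{2}g_{\alpha\beta}R$ and using $g_{tt}=-1$, $g_{ti}=0$; for \eqref{Gtt} the $\partial_t\widetilde{k}_\ell{}^\ell$ terms cancel algebraically, leaving the familiar Hamiltonian constraint expression.

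The only step requiring genuine care is the index-raising in \eqref{Ricij}, where the quadratic $\widetilde{k}\cdot\widetilde{k}$ terms produced by $\partial_t(g^{-1})^{j\ell}=2\widetilde{k}^{j\ell}$ must be tracked and seen to cancel those arising from the $R^t{}_{itj}$ and $R^k{}_{ikj}$ pieces; everything else is sign-tracking and relabeling.
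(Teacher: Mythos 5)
Your proposal is correct, and it is essentially the same argument as in the paper: the paper simply cites the standard $3+1$ formulas (with lapse $\equiv 1$) from Choquet-Bruhat for \eqref{Ricij}--\eqref{Rictt} and obtains \eqref{Scalar}--\eqref{Gtt} by the same algebraic manipulations you describe, whereas you carry out the underlying Gaussian-coordinate computation yourself. Your Christoffel symbols, the sign conventions for $\widetilde{k}$, and the key cancellation coming from $\rd_t (g^{-1})^{j\ell} = 2\widetilde{k}^{j\ell}$ when passing to the $(1,1)$ form in \eqref{Ricij} all check out.
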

\begin{proof}
The first three identities can be found in \cite[Chapter~6, (3.20)--(3.22)]{yCB09} (after substituting the lapse to be identically $1$). The remaining identities follow from the first three by simple algebraic manipulations.
\end{proof}
\begin{lemma}\label{lem:asymconst}
Given $n\in\mathbb{N}\cup\{0\}$, ${^{(4)}g^{\bf[n]}}$ and $\kn$ given by Theorem \ref{thm:parametrix} satisfy the estimates:
\begin{align}\label{asymconst}
\begin{split}
\big|\partial_x^\alpha [R(g^{\bf[n]})-|k^{\bf[n]}|^2+(\mathrm{tr}k^{\bf[n]})^2]\big|\leq C_{\alpha,n}t^{-2+\varepsilon},&\qquad\big|\partial_x^\alpha [\partial_t( {\mathrm{tr}k^{\bf[n]}})-|k^{\bf[n]}|^2]\big|\leq C_{\alpha,n}t^{-2+\varepsilon} \\
\big|\partial_x^\alpha[\nabla_j(k^{\bf[n]})_i{}^j-\partial_i(k^{\bf[n]})_\ell{}^\ell]\big|\leq &\,C_{\alpha,n}t^{-1+\varepsilon}
\end{split}
\end{align}
\end{lemma}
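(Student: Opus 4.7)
The plan is to verify each of the three asymptotic identities by expanding the relevant quantities using the leading behavior $(k^{\bf [n]})_i{ }^j = t^{-1}\kappa_i{ }^j + O(t^{-1+\varepsilon})$ from Theorem~\ref{thm:parametrix}(2) (with the sharpened bound $O(t^{-1+\varepsilon-2p_j+2p_i})$ for $i>j$), and using conditions (2) and (5) of Theorem~\ref{mainthm} to eliminate the potentially singular leading-order contributions. The point is that (2) produces algebraic cancellation at the $t^{-2}$ level and (5) is built to kill the $t^{-1}$ terms in the momentum constraint.

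For the Hamiltonian constraint, $|\partial_x^\alpha R(g^{\bf [n]})| \lesssim t^{-2+\varepsilon}$ follows by tracing Lemma~\ref{lem:Ricci}. Writing $(k^{\bf [n]})_i{ }^j = t^{-1}\kappa_i{ }^j + e_i{ }^j$, the $t^{-2}$ coefficient of $|k^{\bf [n]}|^2 - (\text{tr}k^{\bf [n]})^2$ equals $\kappa_i{ }^j\kappa_j{ }^i - (\kappa_\ell{ }^\ell)^2$. The upper-triangular structure of $\kappa$ forces $\kappa_i{ }^j\kappa_j{ }^i = \sum_i (\kappa_i{ }^i)^2 = \sum p_i^2$ (the off-diagonal pairs vanish since whenever $\kappa_i{}^j\neq 0$ with $i<j$, the partner $\kappa_j{}^i$ is zero), while $(\kappa_\ell{ }^\ell)^2 = (\sum p_i)^2$, so the Kasner condition (2) provides exact cancellation. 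The potentially dangerous cross terms $t^{-1}\kappa_i{ }^j e_j{ }^i$ with $i<j$ are saved by the refined bound $|e_j{ }^i| \lesssim t^{-1+\varepsilon - 2p_i+2p_j}$, which since $p_j>p_i$ gives $O(t^{-2+\varepsilon})$; diagonal cross terms and pure-error terms are immediately of this size. Spatial derivatives are handled identically using the $\alpha$-versions of the same bounds.

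For the evolution equation, taking the trace of \eqref{eq:k.transport} (for $n\geq 1$) yields
\[\partial_t(\text{tr}k^{\bf [n]})-|k^{\bf [n]}|^2 = \left[R(g^{\bf [n]})-|k^{\bf [n]}|^2+(\text{tr}k^{\bf [n]})^2\right] + \left[R(g^{\bf [n-1]})-R(g^{\bf [n]})\right].\]
The first bracket was just bounded, and the second is $O(t^{-2+(n+1)\varepsilon})$ by the trace of \eqref{eq:Ric.diff.est.more.precise}. For $n=0$, $(k^{\bf [0]})$ is the exact second fundamental form of $g^{\bf [0]}$ by \eqref{eq:k0.def}, so the identity \eqref{Rictt} reduces the estimate to $|Ric({^{(4)}g^{\bf [0]}})_{tt}| \lesssim t^{-2+\varepsilon}$, which is obtained by the same kind of Christoffel-symbol bookkeeping used in Lemma~\ref{lem:Ricci} (now applied to a spacetime Ricci component rather than a spatial one).

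The momentum constraint is the main obstacle; it is precisely where the asymptotic differential constraint \eqref{eq:cij} enters. Expanding
\[\nabla_j(k^{\bf [n]})_i{ }^j - \partial_i(\text{tr}k^{\bf [n]}) = \partial_j(k^{\bf [n]})_i{ }^j + \Gamma^j_{j\ell}(k^{\bf [n]})_i{ }^\ell - \Gamma^\ell_{ji}(k^{\bf [n]})_\ell{ }^j - \partial_i(\text{tr}k^{\bf [n]}),\]
the last term is $O(t^{-1+\varepsilon})$ because $\text{tr}\kappa=-\sum p_i = -1$ is spatially constant. Using $\Gamma^j_{j\ell} = \tfrac12 \partial_\ell \log\det g^{\bf [n]}$ together with the expansion $\det g^{\bf [n]} = c_{11}c_{22}c_{33}t^2 + O(t^{2+\varepsilon})$ (Lemma~\ref{lem:inverse.0}), the second term contributes $\tfrac12 t^{-1}\frac{\partial_\ell(c_{11}c_{22}c_{33})}{c_{11}c_{22}c_{33}}\kappa_i{ }^\ell$ at order $t^{-1}$. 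The key computation is to identify the $t^{-1}$ coefficient of $-\Gamma^\ell_{ji}(k^{\bf [n]})_\ell{ }^j$ using the explicit inverse from Lemma~\ref{lem:inverse.0}(3) and the anisotropic form of $g^{\bf [n]}$; because $\kappa_\ell{ }^j = 0$ for $\ell>j$, many index combinations drop out, and the surviving $t^{-1}$ contributions combine (up to an overall multiplicative constant) with $t^{-1}\partial_j\kappa_i{ }^j$ from $\partial_j(k^{\bf [n]})_i{ }^j$ and with the $\Gamma^j_{j\ell}$ piece above into exactly the left-hand side of \eqref{eq:cij}. Condition (5) of Theorem~\ref{mainthm} then forces the coefficient to vanish, leaving the desired $O(t^{-1+\varepsilon})$ bound; higher spatial derivatives follow by differentiating and repeating the argument. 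The principal technical challenge is this reorganization: the anisotropic scaling $g^{\bf [n]}_{ij} \sim t^{2p_{\max\{i,j\}}}c_{ij}$ makes different index configurations of $\Gamma^\ell_{ji}(k^{\bf [n]})_\ell{ }^j$ contribute at different orders in $t$, and one must carefully collect exactly the terms surviving at $t^{-1}$ so that they match the structure prescribed by \eqref{eq:cij}.
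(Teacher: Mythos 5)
Your treatment of the first two estimates is correct and essentially the paper's own argument: the Hamiltonian part is exactly the paper's decomposition of $|k^{\bf[n]}|^2-(\mathrm{tr}k^{\bf[n]})^2$ around $t^{-1}\kappa$ with the cancellation $\sum_i p_i^2=(\sum_i p_i)^2$, and your identity $\partial_t(\mathrm{tr}k^{\bf[n]})-|k^{\bf[n]}|^2=[R(g^{\bf[n]})-|k^{\bf[n]}|^2+(\mathrm{tr}k^{\bf[n]})^2]+[R(g^{\bf[n-1]})-R(g^{\bf[n]})]$ is just an unrolled form of tracing \eqref{eq:main.parametrix.Ric}, which is what the paper does. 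Your separate $n=0$ discussion addresses a case the paper glosses over, though "Christoffel bookkeeping as in Lemma~\ref{lem:Ricci}" is not quite enough there: the leading $t^{-2}$ terms of $\partial_t\mathrm{tr}k^{\bf[0]}-|k^{\bf[0]}|^2$ only cancel because $\sum_i p_i=\sum_i p_i^2$, so condition (2) must be invoked again, not just size estimates.

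The momentum constraint is where your proposal has a real gap: you assert that "the surviving $t^{-1}$ contributions combine into exactly the left-hand side of \eqref{eq:cij}", but that assertion hides the hardest step. The term $-\Gamma^\ell_{ji}(k^{\bf[n]})_\ell{}^j=-\tfrac12 g^{\ell b}(\partial_i g_{jb}+\partial_j g_{ib}-\partial_b g_{ij})(k^{\bf[n]})_\ell{}^j$ contains, besides the diagonal piece $\tfrac1{2t}\sum_\ell\frac{\partial_i c_{\ell\ell}}{c_{\ell\ell}}p_\ell$ that you need, the contraction $g^{\ell b}(\partial_j g_{ib}-\partial_b g_{ij})(k^{\bf[n]})_\ell{}^j$, whose individual index configurations are a priori of size $t^{-1}|\log t|$ and are \emph{not} part of \eqref{eq:cij}; they must be shown to cancel, and nothing in your outline supplies the mechanism. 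The paper does this via point (4) of Theorem~\ref{thm:parametrix} (Lemma~\ref{lem:k.II}): it replaces $(k^{\bf[n]})_\ell{}^j$ by $-\tfrac12(g^{\bf[n]})^{jc}\partial_t g^{\bf[n]}_{c\ell}$ up to an $O(t^{-1+(n+1)\ve})$ error and then an exact index-relabelling cancellation (see \eqref{eq:approx.momentum.2}) kills this block; equivalently one must use the (approximate) symmetry of $g_{j\ell}k_i{}^\ell$, a nontrivial fact you never invoke. Two further points your sketch misses: (i) condition (2) is needed a second time at order $t^{-1}$, since $\partial_i g^{\bf[n]}_{jj}$ produces $\log t$ terms $2(\partial_i p_j)\log t\,g^{\bf[n]}_{jj}$ whose sum only vanishes because $\sum_j\partial_\ell p_j=0$ and $\sum_j p_j\partial_i p_j=0$ (your $\log\det$ trick handles this for the $\Gamma^j_{j\ell}$ piece but not for $-\Gamma^\ell_{ji}k_\ell{}^j$); and (ii) if you only substitute $(k^{\bf[n]})_\ell{}^j=t^{-1}\kappa_\ell{}^j+O(t^{-1+\ve})$, the error multiplies factors like $g^{\ell b}\partial_j g_{ib}$ that are merely $O(|\log t|)$, so you end with $O(t^{-1+\ve}|\log t|)$ rather than the stated $O(t^{-1+\ve})$; the paper avoids this either through the much smaller error in \eqref{eq:diff.k.II.1} or by pairing every logarithm with a positive power of $t$ coming from the $\min$-structure of \eqref{eq:main.parametrix.k.bd}. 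Without these ingredients the key step of your argument is a claim rather than a proof, even though the claim itself is true.
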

\begin{proof}
By point (3) in Theorem~\ref{thm:parametrix}, it follows that$$|\partial_x^\alpha R(g^{\bf[n]})|\leq C_{\alpha,n}t^{-2+\varepsilon}.$$
Writing also
\begin{align*}
|k^{\bf[n]}|^2-(\text{tr}k^{\bf[n]})^2=&\,(k^{\bf[n]}- t^{-1}\kappa)_i{}^j(k^{\bf[n]} - t^{-1}\kappa)_j{}^i-[(k^{\bf[n]} - t^{-1}\kappa)_\ell^\ell]^2
+\frac{1}{t^2}\sum_ip^2_i-\frac{1}{t^2}\bigg(\sum_ip_i\bigg)^2\\
&+(k^{\bf[n]} - t^{-1}\kappa)_i{}^j(t^{-1}\kappa)_j{}^i+(k^{\bf[n]} - t^{-1}\kappa)_j{}^i(t^{-1}\kappa)_i{}^j
-2(t^{-1}\kappa)_\ell{}^\ell(k^{\bf[n]} - t^{-1}\kappa)_\ell{}^\ell,
\end{align*}
we conclude the first estimate using condition $2.$ in Theorem \ref{mainthm} and the second inequality in $3.$, Theorem \ref{thm:parametrix}. 

For the second estimate, first note that after tracing the first inequality in $3.$ of Theorem~\ref{thm:parametrix}, we obtain
$$\left| \rd_x^\alp \left(\rd_t (\mathrm{tr}k^{\bf[n]}) - R(g^{\bf [n]}) -(\mathrm{tr}k^{\bf[n]})^2\right)  \right| \leq C_{\alp,n} t^{-2+(n+1)\ve}.$$
Combining this with the first estimate in \eqref{asymconst} that we have just established, we obtain the second estimate in \eqref{asymconst}.

%
%

We now turn to the third estimate in \eqref{asymconst}. For notational clarity, we focus on the case $|\alp| = 0$. All the higher derivative bounds are be derived analogously after noticing the crucial algebraic structure. We compute:
\begin{equation}\label{eq:approx.momentum.1}
\begin{split}
&\nabla_j(k^{\bf[n]})_i{}^j-\partial_i (k^{\bf[n]})_\ell{}^\ell \\
=&\,\partial_j(k^{\bf[n]} - t^{-1}\kappa)_i{}^j
-\partial_i(k^{\bf[n]} - t^{-1}\kappa)_\ell{}^\ell+\frac{\partial_j\kappa_i{}^j}{t}
-(\Gamma^{[\bf n]})_{ij}^\ell (k^{[\bf n]})_\ell{}^j
+(\Gamma^{[\bf n]})_{j\ell}^j (k^{[\bf n]})_i{}^\ell\\
=&\,\partial_j(k^{\bf[n]} - t^{-1}\kappa)_i{}^j
-\partial_i(k^{\bf[n]} - t^{-1}\kappa)_\ell{}^\ell+\frac{\partial_j\kappa_i{}^j}{t}\\
&-\frac{1}{2}(g^{[\bf n]})^{\ell b}(\partial_ig^{[\bf n]}_{jb}+\partial_jg^{[\bf n]}_{ib}-\partial_bg^{[\bf n]}_{ij})(k^{[\bf n]})_\ell{}^j
+\frac{1}{2}(g^{[\bf n]})^{j b}(\partial_jg^{[\bf n]}_{\ell b}+\partial_\ell g^{[\bf n]}_{jb}-\partial_bg^{[\bf n]}_{j\ell})(k^{[\bf n]})_i{}^\ell .
\end{split}
\end{equation}

Notice now that by \eqref{eq:diff.k.II.1}, we have:
\begin{equation*}
 (k^{\bf [n]})_i{ }^j 
=-\f 12  (g^{\bf [n]})^{j\ell} \rd_t g^{\bf [n]}_{i\ell} + O(t^{-1+(n+1)\ve}).
\end{equation*}
Therefore, 
\begin{equation}\label{eq:approx.momentum.2}
\begin{split}
&\: \frac{1}{2}(g^{[\bf n]})^{\ell b}(\partial_jg^{[\bf n]}_{ib}-\partial_bg^{[\bf n]}_{ij})(k^{[\bf n]})_\ell{}^j \\
= &\: -\frac{1}{4}(g^{[\bf n]})^{\ell b}\partial_jg^{[\bf n]}_{ib}  (g^{\bf [n]})^{jc} \rd_t g^{\bf [n]}_{c\ell} + \f 14(g^{[\bf n]})^{\ell b}\partial_bg^{[\bf n]}_{ij}(g^{\bf [n]})^{jc} \rd_t g^{\bf [n]}_{c\ell} + O(t^{-1+\ve}) = O(t^{-1+\ve}),
\end{split}
\end{equation}
where in order to show that this $ = O(t^{-1+\ve})$, we look at the second term, relabel the indices $b \leftrightarrow j$ and then swap $c \leftrightarrow \ell$ (using that $g^{\bf [n]}$ is symmetric), which then gives the negative of the first term.

For the term $-\frac{1}{2}(g^{[\bf n]})^{\ell b} \partial_ig^{[\bf n]}_{jb} (k^{[\bf n]})_\ell{}^j$, we first note that if $\ell>j$, then $(k^{[\bf n]})_\ell{}^j = O(t^{-1-2p_j+2p_\ell}) $ and $(g^{[\bf n]})^{\ell b} \partial_ig^{[\bf n]}_{jb} = O(|\log t|)$, so altogether we get an $O(t^{-1+\ve})$ contribution. If $\ell <j$, then $(g^{[\bf n]})^{\ell b} \partial_ig^{[\bf n]}_{jb} = O(t^\ve)$, which together with $(k^{[\bf n]})_\ell{}^j = O(t^{-1})$, we get a combined contribution of $O(t^{-1+\ve})$. We therefore only get the contribution when $j = \ell$, i.e.
\begin{equation}\label{eq:approx.momentum.3}
\begin{split}
&\: -\frac{1}{2}(g^{[\bf n]})^{\ell b} \partial_ig^{[\bf n]}_{jb} (k^{[\bf n]})_\ell{}^j = - \sum_{j=1}^3 \f 12 (g^{[\bf n]})^{j b} \partial_ig^{[\bf n]}_{jb} (k^{[\bf n]})_j{}^j + O(t^{-1+\ve}).
\end{split}
\end{equation}

Combining \eqref{eq:approx.momentum.2} and \eqref{eq:approx.momentum.3}, we have
\begin{equation}\label{eq:approx.momentum.4}
-\frac{1}{2}(g^{[\bf n]})^{\ell b}(\partial_ig^{[\bf n]}_{jb}+\partial_jg^{[\bf n]}_{ib}-\partial_bg^{[\bf n]}_{ij})(k^{[\bf n]})_\ell{}^j = - \sum_{j=1}^3 \f 12 (g^{[\bf n]})^{j b} \partial_ig^{[\bf n]}_{jb} (k^{[\bf n]})_j{}^j + O(t^{-1+\ve}).
\end{equation}

Plugging \eqref{eq:approx.momentum.4} into \eqref{eq:approx.momentum.1}, using the estimate \eqref{eq:main.parametrix.k.bd}, and noting that by symmetry $(g^{[\bf n]})^{j b}(\partial_jg^{[\bf n]}_{\ell b}-\partial_bg^{[\bf n]}_{j\ell}) =0$, we obtain
\begin{equation}\label{eq:approx.momentum.5}
\nabla_j(k^{\bf[n]})_i{}^j-\partial_i\text{tr}k^{\bf[n]} =\frac{\partial_j\kappa_i{}^j}{t} - \sum_{j=1}^3 \f 12 (g^{[\bf n]})^{j b} \partial_ig^{[\bf n]}_{jb} (k^{[\bf n]})_j{}^j + \f 12(g^{[\bf n]})^{j b}\partial_\ell g^{[\bf n]}_{jb} (k^{[\bf n]})_i{}^\ell + O(t^{-1+\ve}).
\end{equation}

Finally, notice that the second and third terms in \eqref{eq:approx.momentum.5}, when $j\neq b$, contribute only $O(t^{-1+\ve})$. We have thus obtained 
\begin{align*}
\nabla_j(k^{\bf[n]})_i{}^j-\partial_i\text{tr}k^{\bf[n]}
=&\,\frac{\partial_j\kappa_i{}^j}{t}
-\sum_{\ell=1}^3\frac{1}{2}(g^{[\bf n]})^{\ell \ell}\partial_ig^{[\bf n]}_{\ell\ell}(k^{[\bf n]})_\ell{}^\ell
+  {\sum_{j,\ell=1}^3} \frac{1}{2}(g^{[\bf n]})^{j j}\partial_\ell g^{[\bf n]}_{jj}(k^{[\bf n]})_i{}^\ell
+O(t^{-1+\varepsilon})\\
=&\,\frac{\partial_j\kappa_i{}^j}{t}+\sum_{\ell=1}^3\bigg(\frac{1}{2}\frac{\partial_ic_{\ell\ell}}{c_{\ell\ell}}\frac{p_\ell}{t}+\frac{p_\ell\partial_ip_\ell}{t}\log t\bigg)+O(t^{-1+\varepsilon})\\
&+\sum_{j {,\ell}=1}^3\bigg(\frac{\kappa_i{}^\ell\partial_\ell p_j}{t}\log t+\mathbbm{1}_{\{\ell>i\}}\frac{\partial_\ell c_{jj}}{2c_{jj}}\frac{\kappa_i{}^\ell}{t}-\frac{1}{2}\frac{\partial_i c_{jj}}{c_{jj}}\frac{p_i}{t}\bigg)\\
=&\,\frac{1}{2t}\sum_{\ell=1}^3\bigg(\frac{\partial_ic_{\ell\ell}}{c_{\ell\ell}}(p_\ell-p_i)
+2\partial_\ell\kappa_i{}^\ell
+\mathbbm{1}_{\{\ell>i\}}\frac{\partial_\ell(c_{11}c_{22}c_{33})}{c_{11}c_{22}c_{33}}\kappa_i{}^\ell\bigg)+O(t^{-1+\varepsilon}),
\end{align*}
where in the last equality we use condition (2) in Theorem \ref{mainthm}.
The desired estimate now follows by employing condition (4) in Theorem \ref{mainthm}.
\end{proof}
%
%
Combining Proposition~\ref{prop:curv.id}, Lemma \ref{lem:asymconst} and Theorem \ref{thm:parametrix}, we deduce the following bounds for the relevant curvature components of ${^{(4)}g^{[{\bf n}]}}$.
\begin{proposition}\label{prop:approx.G}
Given $n\in \mathbb N$ such that $(n+1)\ve >2$ and $^{(4)}g^{\bf [n]}$ as in Theorem~\ref{thm:parametrix}, the following estimates hold:
\begin{align}
\label{est:curv.n} |\partial_x^\alpha {Ric_i}^j({^{(4)}g^{[{\bf n}]}})|,|\partial_x^\alpha Ric_{tt}({^{(4)}g^{[{\bf n}]}})|,|\partial_x^\alpha R({^{(4)}g^{[{\bf n}]}})|,|\partial_x^\alpha {G_i}^j({^{(4)}g^{[{\bf n}]}})|,|\partial_x^\alpha G_{tt}({^{(4)}g^{[{\bf n}]}})| \leq C_{\alpha,n}t^{-2+(n+1)\varepsilon},\\
\label{est:curv.n.dt} |\partial_x^\alpha\partial_t {Ric_i}^j({^{(4)}g^{[{\bf n}]}})|\leq C_{\alpha,n}t^{-3+(n+1)\varepsilon}, \\
\label{est:curv.n.better}|\partial_x^\alpha G_{ti}({^{(4)}g^{[{\bf n}]}})|,|\partial_x^\alpha Ric_{ti}({^{(4)}g^{[{\bf n}]}})|\leq C_{ {\alp},n}t^{- {1}+(n+1)\varepsilon},
\end{align}
for all $(t,x)\in(0,t_N ] \times \mathbb{T}^3$.
\end{proposition}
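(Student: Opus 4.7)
The plan is to establish the estimates in two stages, corresponding to the split of the spacetime Ricci tensor into an \emph{evolution part} $Ric_i{}^j$ and \emph{constraint parts} $Ric_{tt}$, $Ric_{ti}$. The first is handled by a direct computation from Theorem~\ref{thm:parametrix}; the second requires propagation of the constraints via the twice-contracted second Bianchi identity.

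For the evolution components, I would substitute the true second fundamental form $\widetilde{k}_i{}^j := -\f 12 (g^{\bf [n]})^{j\ell} \rd_t g^{\bf [n]}_{i\ell}$, which by Theorem~\ref{thm:parametrix}(4) satisfies $\widetilde{k} = \kn + \mc E$ with $\rd_x^\alp \mc E = O(t^{-1+(n+1)\ve})$ and $\rd_x^\alp \rd_t \mc E = O(t^{-2+(n+1)\ve})$, into identity~\eqref{Ricij} to obtain
\[
Ric({^{(4)}g^{\bf [n]}})_i{}^j = \big(-\rd_t (\kn)_i{}^j + Ric(g^{\bf [n]})_i{}^j + (\kn)_\ell{}^\ell (\kn)_i{}^j\big) + \mc R_i{}^j,
\]
where $\mc R$ collects the corrections depending on $\mc E$. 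The parenthesized main term is $O(t^{-2+(n+1)\ve})$ by Theorem~\ref{thm:parametrix}(5). Every term in $\mc R$ pairs a factor of $\mc E$ or $\rd_t \mc E$ with at most one factor of size $O(t^{-1})$ coming from the approximate second fundamental form or its trace (bounded via Theorem~\ref{thm:parametrix}(2)), so $\mc R$ is of the same order $O(t^{-2+(n+1)\ve})$. Higher spatial derivatives and the $\rd_t$ estimate are treated identically using the higher-$r$ parts of Theorem~\ref{thm:parametrix}(4),(5).

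For the constraint components I would use the contracted Bianchi identities $D^\mu G_{\mu t}=0$ and $D^\mu G_{\mu i}=0$. In synchronized coordinates these yield a Fuchsian transport system schematically of the form
\[
\rd_t G_{tt} = \nab^i G_{it} + \widetilde{k} \cdot G,\qquad \rd_t G_{ti} = \nab^j G_{ji} + \widetilde{k} \cdot G_{ti} + \widetilde{k} \cdot G_{tt},
\]
with borderline singular coefficients $\widetilde{k} = O(t^{-1})$. The spatial-index source terms are controlled at the rate $O(t^{-2+(n+1)\ve})$ from Stage~1 (via $G_i{}^j = Ric_i{}^j - \f 12 \delta_i{}^j R$ together with a short bootstrap for $R$). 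The asymptotic data as $t\to 0^+$ is supplied by Lemma~\ref{lem:asymconst} combined with~\eqref{Gti},~\eqref{Gtt} and Theorem~\ref{thm:parametrix}(4), which give $G_{ti}=O(t^{-1+\ve})$ and $G_{tt}=O(t^{-2+\ve})$; in particular the weighted quantities $t G_{ti}$ and $t^2 G_{tt}$ vanish as $t\to 0^+$. Employing integrating factors that exploit the reductive structure of the singular coefficients in terms of the Kasner exponents $p_i$ (as in the proofs of Lemmas~\ref{lem:an.well.defined} and~\ref{lem:D.est}) would then propagate the improved source rate $(n+1)\ve$ into the solution, yielding $G_{ti}=O(t^{-1+(n+1)\ve})$ and $G_{tt}=O(t^{-2+(n+1)\ve})$. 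The remaining bounds on $Ric_{tt}$, $Ric_{ti}$, $R$ and $G_i{}^j$ then follow from the algebraic identities $G_{tt} = Ric_{tt} + \f 12 R$, $G_{ti} = Ric_{ti}$, $R = -Ric_{tt} + Ric_\ell{}^\ell$, and $G_i{}^j = Ric_i{}^j - \f 12 \delta_i{}^j R$.

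The main obstacle is the singular Fuchsian structure of the Bianchi propagation equations: a naive Gr\"onwall bound applied to them yields only rate $\ve$ (matching the asymptotic constraints of Lemma~\ref{lem:asymconst}), so the improvement to $(n+1)\ve$ demands carefully isolating the borderline $O(t^{-1})$ coefficients and integrating them off without loss, in the same reductive spirit as the proof of Lemma~\ref{lem:an.well.defined} for the metric components.
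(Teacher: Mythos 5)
Your Stage~1 is essentially the paper's own first step: write the exact spacetime Ricci components via \eqref{Ricij} with the true second fundamental form, and absorb its discrepancy with $k^{\bf [n]}$ using Theorem~\ref{thm:parametrix}(4) together with the approximate evolution equation in Theorem~\ref{thm:parametrix}(5); the bounds \eqref{est:curv.n} for $Ric_i{}^j$ and \eqref{est:curv.n.dt} follow exactly as you say.

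Stage~2, however, has a genuine gap in how the sources of the Bianchi transport system are treated. The ``spatial-index source'' $\nabla^j G_{ji}$ is \emph{not} controlled by Stage~1 at the rate $t^{-2+(n+1)\varepsilon}$: since $G_{ji}=Ric_{ji}-\frac12 g_{ji}R$ and $R=-Ric_{tt}+Ric_\ell{}^\ell$, it contains the term $\frac12\partial_i Ric_{tt}$, i.e.\ one of the two unknowns of Stage~2; likewise the source of the $G_{tt}$-equation is $\nabla^j (G_t)_j$, which is the other unknown contracted with $(g^{\bf [n]})^{-1}\lesssim t^{-2+\varepsilon}$, a coefficient far more singular than borderline. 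So the system is coupled in both directions, and a single application of integrating factors, starting from the only available a priori bounds $G_{ti}=O(t^{-1+\varepsilon})$, $G_{tt}=O(t^{-2+\varepsilon})$, improves the exponent only from $\varepsilon$ to $2\varepsilon$; your phrase ``a short bootstrap for $R$'' is exactly where the whole difficulty is hidden. The paper closes this by an \emph{iteration in the exponent}: plug the current bound for $Ric_{ti}$ into $t^2\nabla^j(Ric_t)_j$ (the gain comes from $t^2(g^{\bf [n]})^{-1}\lesssim t^{\varepsilon}$, i.e.\ from $p_3\le 1-\varepsilon$), integrate the weighted $Ric_{tt}$-equation, feed the improved $Ric_{tt}$ into the term $\frac t2\partial_i Ric_{tt}$ of the weighted $Ric_{ti}$-equation, and repeat; each pass gains exactly one power of $t^{\varepsilon}$, and after roughly $n$ passes the genuinely inhomogeneous terms (which are at rate $(n+1)\varepsilon$ by Stage~1) saturate the estimate. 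Relatedly, the obstacle you single out at the end is misidentified: the borderline coefficients $\frac1t$ and $\frac2t$ have a favourable sign and are removed at once by the weights $t$ and $t^2$, and there is no triangular ``reductive'' ordering of components to exploit here as in Lemmas~\ref{lem:an.well.defined} and \ref{lem:D.est}; the mechanism that replaces it is the $\varepsilon$-by-$\varepsilon$ iteration just described, which is absent from your proposal.
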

\begin{proof}
In this proof, the implicit constants in $\ls$ depend on $\alp$, $n$, $c_{ij}$ and $p_i$.

\pfstep{Step~0: Estimates for $Ric_i{ }^j({^{(4)}g^{[{\bf n}]}})$} According to \eqref{Ricij} in Proposition~\ref{prop:curv.id} and the estimates in \eqref{eq:main.parametrix.k.bd}, Lemma~\ref{lem:k.II} and Proposition~\ref{prop:approx.ev.1}, it follows that
\begin{equation}\label{eq:dxalp.Ric.g4.ij.precise}
|\rd_x^\alp [ Ric(^{(4)}g^{\bf [n]})_i{ }^j]|(t,x) \ls t^{-2+(n+2)\ve}|\log t|^{2+|\alp| },
\end{equation}
which clearly implies in particular the needed estimate for $Ric(^{(4)}g^{\bf [n]})_i{ }^j$ in \eqref{est:curv.n}.

Also, using \eqref{eq:main.parametrix.k.bd}, Lemma~\ref{lem:k.II} and Proposition~\ref{prop:approx.ev.1}, we also obtain the estimate \eqref{est:curv.n.dt} for $\partial_x^\alpha\partial_t {Ric_i}^j({^{(4)}g^{[{\bf n}]}})$.

It suffices then to show that the estimates for $\partial_x^\alpha Ric_{ti}({^{(4)}g^{[{\bf n}]}}),\partial_x^\alpha Ric_{tt}({^{(4)}g^{[{\bf n}]}})$ hold true, since all the remaining terms in \eqref{est:curv.n} are algebraic combinations of the previous three. 

\pfstep{Step~1: Deriving the ODEs} By virtue of the contracted second Bianchi identity we have:
\begin{equation}\label{eq:Bianchi.ODE.1}
\begin{split}
\partial_tRic_{ti}({^{(4)}g^{[{\bf n}]}})=&\,D_tRic_{ti}({^{(4)}g^{[{\bf n}]}}) + \f 12 (g^{[\bf n]})^{j\ell} \rd_t g^{[\bf n]}_{i\ell} Ric_{tj}({^{(4)}g^{[{\bf n}]}})\\
=&-\frac{1}{2}\partial_iR({^{(4)}g^{[{\bf n}]}})+D_j{Ric_i}^j({^{(4)}g^{[{\bf n}]}}) + \f 12 (g^{[\bf n]})^{j\ell} \rd_t g^{[\bf n]}_{i\ell} Ric_{tj}({^{(4)}g^{[{\bf n}]}})\\
=&\,\frac{1}{2}\partial_iRic_{tt}({^{(4)}g^{[{\bf n}]}}) \underbrace{-  \f 12 (g^{[\bf n]})^{j\ell} \rd_t g^{[\bf n]}_{j\ell} Ric_{ti}({^{(4)}g^{[{\bf n}]}})}_{=:\mathrm{I}}\\
&\underbrace{-\frac{1}{2}\partial_i{Ric_j}^j({^{(4)}g^{[{\bf n}]}})+\nabla_j{Ric_i}^j({^{(4)}g^{[{\bf n}]}})}_{=:\mathrm{II}},\\
\end{split}
\end{equation}
\begin{equation}\label{eq:Bianchi.ODE.2}
\begin{split}
\partial_tRic_{tt}({^{(4)}g^{[{\bf n}]}})=&\,D_tRic_{tt}({^{(4)}g^{[{\bf n}]}})=-\frac{1}{2}\partial_t R({^{(4)}g^{[{\bf n}]}})+D^jRic_{tj}({^{(4)}g^{[{\bf n}]}})\\
=&\,\underbrace{\frac{1}{2}\partial_tRic_{tt}({^{(4)}g^{[{\bf n}]}})}_{=:\mathrm{III}}  \underbrace{ - \f 12 (g^{[\bf n]})^{j\ell} \rd_t g^{[\bf n]}_{j\ell} Ric_{tt}({^{(4)}g^{[{\bf n}]}})}_{=:\mathrm{IV}}+\underbrace{\nabla^j(Ric_t)_j({^{(4)}g^{[{\bf n}]}})}_{=:\mathrm{V}}\\
&\underbrace{-\frac{1}{2}\partial_t{Ric_j}^j({^{(4)}g^{[{\bf n}]}})}_{=:\mathrm{VI}} \underbrace{-\f 12 (g^{[\bf n]})^{j\ell} \rd_t g^{[\bf n]}_{i\ell} {Ric_j}^i({^{(4)}g^{[{\bf n}]}})}_{=:\mathrm{VII}}.
\end{split}
\end{equation}
where $D$ denotes the Levi--Civita connection of ${^{(4)}g^{[{\bf n}]}}$, and $\nabla^j(Ric_t)_j({^{(4)}g^{[{\bf n}]}})$ means that we take $(Ric_t)_j$ as a tensor field tangent to the constant-$t$ hypersurfaces and then differentiate with the connection $\nabla$, i.e.
$$\nab^j (Ric_t)_j({^{(4)}g^{[{\bf n}]}}) = (g^{\bf [n]})^{ij} \rd_i (Ric_t)_j({^{(4)}g^{[{\bf n}]}}) - (g^{\bf [n]})^{ij} (\Gamma^{\bf [n]})_{ij}^\ell (Ric_t)_\ell({^{(4)}g^{[{\bf n}]}}).$$

We now estimate the terms in \eqref{eq:Bianchi.ODE.1} and \eqref{eq:Bianchi.ODE.2}. For term $\mathrm{I}$ in \eqref{eq:Bianchi.ODE.1}, we use \eqref{eq:main.parametrix.k.bd} and \eqref{eq:main.parametrix.k.and.II.1} in Theorem~\ref{thm:parametrix} to obtain
\begin{equation}\label{eq:prop.constraint.ODE.I}
\mathrm{I} = [-\f 1t + O(t^{-1+\ve})] Ric_{ti}({^{(4)}g^{[{\bf n}]}}).
\end{equation}

The first term in $\mathrm{II}$ can be directly estimated by \eqref{eq:dxalp.Ric.g4.ij.precise}. To handle the second term in $\mathrm{II}$, we compute using the form of the metric and \eqref{eq:main.parametrix.a.bd} to obtain
\begin{align*}
\nabla_j{Ric_i}^j({^{(4)}g^{[{\bf n}]}})=&\,\partial_j{Ric_i}^j({^{(4)}g^{[{\bf n}]}})-(\Gamma^{\bf [n]})_{ij}^\ell {Ric_\ell}^j({^{(4)}g^{[{\bf n}]}})+(\Gamma^{[\bf n]})_{j\ell}^j{Ric_i}^\ell({^{(4)}g^{[{\bf n}]}})\\
(\Gamma^{[\bf n]})_{j\ell}^j=&\,\frac{1}{2}(g^{[\bf n]})^{jb}(\partial_j g^{\bf [n]}_{b\ell}+\partial_\ell g_{bj}^{\bf [n]}-\partial_bg^{[\bf n]}_{j\ell})=O(|\log t|)\\
(\Gamma^{\bf [n]})_{ij}^\ell {Ric_\ell}^j({^{(4)}g^{[{\bf n}]}})=&\,\frac{1}{2}(g^{[\bf n]})^{\ell b}(\partial_i g^{\bf [n]}_{bj} + \partial_j g_{bi}^{\bf [n]}-\partial_bg^{[\bf n]}_{ij}){Ric_\ell}^j({^{(4)}g^{[{\bf n}]}})\\
=&\,[O(|\log t|)-(g^{[\bf n]})^{\ell b}\partial_bg^{[\bf n]}_{ij}]{Ric_\ell}^j({^{(4)}g^{[{\bf n}]}})\\
(g^{[\bf n]})^{\ell b}\partial_bg^{[\bf n]}_{ij}{Ric_\ell}^j({^{(4)}g^{[{\bf n}]}})=&\,O_b(|\log t|)(g^{[\bf n]})^{\ell b}g^{[\bf n]}_{ij}{Ric_\ell}^j({^{(4)}g^{[{\bf n}]}})=O_b(|\log t|){Ric_i}^b({^{(4)}g^{[{\bf n}]}})
\end{align*} 
Combining all the above and using \eqref{eq:dxalp.Ric.g4.ij.precise}, we obtain
\begin{equation}\label{eq:prop.constraint.ODE.II}
\mathrm{II} = O( t^{-2+(n+1)\ve}).
\end{equation}

Plugging \eqref{eq:prop.constraint.ODE.I} and \eqref{eq:prop.constraint.ODE.II} into \eqref{eq:Bianchi.ODE.1} yields
\begin{equation}\label{RictiODE.1.1}
\partial_tRic_{ti}({^{(4)}g^{[{\bf n}]}})+[\frac{1}{t}+O(t^{-1+\varepsilon})]Ric_{ti}({^{(4)}g^{[{\bf n}]}})=
\frac{1}{2}\partial_iRic_{tt}({^{(4)}g^{[{\bf n}]}})+O(t^{-2+(n+1)\varepsilon}).
\end{equation}

Equation \eqref{eq:Bianchi.ODE.2} can be treated similarly. We subtract $\mathrm{III}$ to the LHS, use \eqref{eq:main.parametrix.k.bd} and \eqref{eq:main.parametrix.k.and.II.1} to control the coefficients in $\mathrm{IV}$ and $\mathrm{VII}$, keep the term $\mathrm{IV}$, and use \eqref{est:curv.n.dt} for term $\mathrm{VI}$ so that we obtain
\begin{equation}\label{RictiODE.1.2}
\partial_tRic_{tt}({^{(4)}g^{[{\bf n}]}})+[\frac{2}{t}+O(t^{-1+\varepsilon})]Ric_{tt}({^{(4)}g^{[{\bf n}]}})=2 {\nab^j(Ric_t)_j}({^{(4)}g^{[{\bf n}]}})+O(t^{-3+(n+1)\varepsilon}).
\end{equation}

In a similar way, we obtain the equations for higher derivatives analogous to \eqref{RictiODE.1.1} and \eqref{RictiODE.1.2}. After putting in an integrating factor, the equations read
\begin{align}\label{RictiODE.2}
\begin{split} 
\partial_t( t \rd_x^\alp Ric_{ti}({^{(4)}g^{[{\bf n}]}}))= &\: \frac{t}{2}\partial_i \rd_x^\alp Ric_{tt}({^{(4)}g^{[{\bf n}]}})+ O(t^{\varepsilon}) \sum_{|\bt|\leq |\alp|} \rd_x^\bt Ric_{ti}({^{(4)}g^{[{\bf n}]}}) + O(t^{-1+(n+1)\varepsilon}),\\
\partial_t( t^2 \rd_x^\alp Ric_{tt}({^{(4)}g^{[{\bf n}]}})) =&\: 2t^2 \rd_x^\alp \nab^j(Ric_t)_j({^{(4)}g^{[{\bf n}]}}) + O(t^{1+\varepsilon}) \sum_{|\bt|\leq |\alp|} \rd_x^\bt Ric_{tt}({^{(4)}g^{[{\bf n}]}})+O(t^{-1+(n+1)\varepsilon}).
\end{split} 
\end{align}
\pfstep{ {Step~2: Solving the ODEs}}  {We will view the two equations in \eqref{RictiODE.2} as ODEs  {in $t$}. In particular we will not be concerned with the loss of derivatives since we have bounds for all order of derivatives of the approximate solutions.}

Note that Lemmas~\ref{lem:k.II}, \ref{lem:asymconst} and the identities \eqref{Ricti},  {\eqref{Rictt}} imply the estimates:
\begin{align}\label{Ricticond}
|\partial_x^\alpha Ric_{ti}({^{(4)}g^{[{\bf n}]}})|\lesssim t^{-1+\varepsilon},\qquad |\partial_x^\alpha Ric_{tt}({^{(4)}g^{[{\bf n}]}})|\lesssim t^{-2+\varepsilon}.
\end{align}

In particular, this means that the initial data (at $\{t=0\}$) for $t\rd_x^\alp Ric_{ti}({^{(4)}g^{[{\bf n}]}})$ and $t^2 \rd_x^\alp Ric_{tt}({^{(4)}g^{[{\bf n}]}})$ both vanish.
 {Now since $|\partial_x^\alpha Ric_{ti}({^{(4)}g^{[{\bf n}]}})|\lesssim t^{-1+\varepsilon}$ (for all $\alp$), it follows that $|\rd_x^\alp \nab^i (Ric_t)_i({^{(4)}g^{[{\bf n}]}})| \ls t^{-3 + 2\ve}$ (for this we simply use that $|\rd_x^\alp (g^{\bf [n]})^{i\ell }|,\,|\rd_x^\alp [(g^{\bf [n]})^{i \ell}(\Gamma^{\bf [n]})_{i \ell}^j] | \ls t^{-2+\ve}$).
Hence, integrating the second equation in \eqref{RictiODE.2} and using Gr\"onwall's inequality, we obtain
 }
\begin{align}\label{Rictt.imp.1}
t^2| {\rd_x^\alp}Ric_{tt}({^{(4)}g^{[{\bf n}]}})|\lesssim t^{2\varepsilon}+t^{(n+1)\varepsilon}\qquad\Longrightarrow\qquad| {\rd_x^\alp}Ric_{tt}({^{(4)}g^{[{\bf n}]}})|\lesssim t^{-2+2\varepsilon} {.}
\end{align}
 {Plugging this estimate into the first equation in \eqref{RictiODE.2}, we then obtain using Gr\"onwall's inequality}
\begin{align}\label{Ricti.imp.1}
t| {\rd_x^\alp}Ric_{ti}({^{(4)}g^{[{\bf n}]}})|\lesssim t^{2\varepsilon}+t^{(n+1)\varepsilon}\qquad\Longrightarrow\qquad| {\rd_x^\alp}Ric_{ti}({^{(4)}g^{[{\bf n}]}})|\lesssim t^{-1+2\varepsilon}
\end{align}
Notice that  {\eqref{Rictt.imp.1} and \eqref{Ricti.imp.1} improves over} \eqref{Ricticond}.  {We now repeat the above argument, but plugging in these improve estimates to obtain (assuming $n\geq 2$)}
$$  {|\rd_x^\alp Ric_{tt}({^{(4)}g^{[{\bf n}]}})|\ls t^{-2+3\varepsilon}, \qquad |\rd_x^\alp Ric_{ti}({^{(4)}g^{[{\bf n}]}})|\ls t^{-1+3\varepsilon}} .$$
 {Iterating this argument then gives the desired estimates. (The rate for $\rd_x^\alp Ric_{ti}$ is limited by the last term on the RHS of the first equation in \eqref{RictiODE.2}.)} This completes the proof of the proposition. \qedhere
\end{proof}

\section{Construction of an actual solution}\label{sec:actual}

We continue to work under the assumptions of Theorem~\ref{mainthm} and take $g^{\bf [n]}$ and $k^{\bf [n]}$ as constructed in the beginning of Section~\ref{sec:parametrix} (so that for appropriately chosen $t_N$, the estimates in Theorem~\ref{thm:parametrix} and Proposition~\ref{prop:approx.G} hold).

The main result of this section will be to prove existence of a solution to a system of reduced equations (to be introduced below in \eqref{eq:hyp.sys} of Section~\ref{sec:derivation.of.equations}). See Theorem~\ref{thm:main.reduced} for the precise statement of the main result, and see the rest of Section~\ref{subsec:locexist} for a discussion of the proof of Theorem~\ref{thm:main.reduced} and an outline of the later parts of the section.


\subsection{ {Deriving the reduced equations}}\label{sec:derivation.of.equations} 

 {As already described in Section~\ref{sec:EEinsyncoord} in the introduction, we will control $k_i{ }^j$ using a second-order wave-like equation. In this subsection, we derive the equation that we will use.}

By \eqref{Ricij} in Proposition~\ref{prop:curv.id}, if a metric takes the form \eqref{metricansatz}, and $k$ is the second fundamental form, then
\begin{equation}\label{eq:R00}
{Ric({^{(4)}g})_i}^j=-\rd_t k_i{ }^j + Ric(g)_i{ }^j + k_\ell{ }^\ell k_i{ }^j.
\end{equation}
Taking a $\rd_t$ derivative of \eqref{eq:R00}, we obtain
$$\rd_t {Ric({^{(4)}g})_i}^j = -\rd_t^2 k_i{ }^j + \rd_t Ric(g)_i{ }^j + \rd_t[k_\ell{ }^\ell k_i{ }^j].$$
To compute $\rd_t Ric(g)_i{ }^j$, we use the \emph{variation of Ricci formula} (see for example equation (2.31) in \cite{CLN}) and the fact $\rd_t g_{ij} = -2 k_{ij}$:
$$\rd_t Ric {(g)}_{ij} = \Delta_L k_{ij} + \nabla^2_{ij} k_\ell{ }^\ell - \nabla_i (\mathrm{div} k)_j -\nabla_j (\mathrm{div} k)_i,$$
where $\Delta_L$ is the Lichnerowicz Laplacian (on symmetric $2$-tensors) given by 
$$\Delta_L v_{ij} := \Delta_g v_{ij} + 2 Riem {(g)}^m{ }_{ij}{ }^\ell v_{m\ell} - Ric {(g)}_i{ }^\ell v_{j\ell} - Ric {(g)}_j{ }^\ell v_{i\ell}.$$
Using again $\rd_t g_{ij} = -2 k_{ij}$, it follows that
\begin{equation}\label{eq:variation.of.Ricci}
\rd_t Ric {(g)}_i{ }^j = \Delta_g k_i{ }^j + 2 R {iem(g)}^m{ }_{i}{ }^{j}{ }_{\ell} k_m{ }^{\ell} + Ric {(g)}_i{ }^\ell k_{\ell}{ }^j - Ric {(g)}_\ell{ }^j k_{i}{ }^{\ell} + \nabla_i\nabla^j k_\ell{ }^\ell - \nabla_i (\mathrm{div} k)^j -\nabla^j (\mathrm{div} k)_i.
\end{equation}

We will further analyze two groups of terms on the RHS of \eqref{eq:variation.of.Ricci}:
\begin{enumerate}
\item Denoting $\mathcal G_i:= G_{ti} {(g^{(4)})}$ and considering it as a tensor on $\{ t = \mbox{constant}\}$, we have
\begin{equation}\label{eq:RHS.of.Ricci.variation}
\nabla_i\nabla^j k_\ell{ }^\ell - \nabla_i (\mathrm{div} k)^j -\nabla^j (\mathrm{div} k)_i = (g^{-1})^{j\ell}\nabla_i \mathcal G_\ell + \nabla^j \mathcal G_i - \nabla_i\nabla^j k_\ell{ }^\ell.
\end{equation}

\item In three dimensions, the Riemann curvature tensor can be expressed in terms of the Ricci curvature tensor (see (1.62) in \cite{CLN}):
\begin{equation}\label{Riem=Ric}
\begin{split}
 {Riem(g)}^m{ }_i{ }^j{ }_\ell =&\:  -Ric {(g)}^{mj}g_{i\ell} + Ric {(g)}_{\ell}{}^m \de_i^j - Ric {(g)}_{i\ell} (g^{-1})^{mj} \\
&\: + Ric {(g)}_i{}^j\de_{\ell}^m - \f 12 R {(g)} (\de^m_\ell \de_i^j - (g^{-1})^{mj} g_{i\ell}),
\end{split}
\end{equation}
 {where $R(g)$ denotes the scalar curvature of $g$.}
Therefore, the terms
$$2 R {iem(g)}^m{ }_{i}{ }^{j}{ }_{\ell} k_m{ }^{\ell} + Ric {(g)}_i{ }^\ell k_{\ell}{ }^j - Ric {(g)}_\ell{ }^j k_{i}{ }^{\ell}$$
can be written as some linear combinations of contractions of $Ric {(g)}$ and $k$.  {Using again \eqref{eq:R00}}, we can replace $Ric(g)_i{ }^j$ by ${Ric({^{(4)}g})_i}^j +\rd_t k_i{ }^j - k_\ell{ }^\ell k_i{ }^j$.
\end{enumerate}

It therefore follows that the second fundamental form $k$ verifies the following equation:
\begin{equation}\label{eq:k.wave.prelim}
\begin{split}
&\: - \rd_t^2 k_i{ }^j + \Delta_g k_i{ }^j -\nab_i\nab^j k_\ell{ }^\ell + (k\star k\star k)_i{ }^j + (\rd_t k \star k)_i{ }^j \\
=&\:  {-\partial_t Ric_i{}^j({^{(4)}g})+ \nabla_i \mathcal{G}^j+\nabla^j\mathcal{G}_i
-3k_i{}^mRic_m{}^j({^{(4)}g}) + 2\de_i^jk_m{}^\ell Ric_{\ell}{}^m({^{(4)}g}) -k_\ell{}^j Ric_i{}^\ell({^{(4)}g})}\\
&\:  {+ 2k_\ell{}^\ell Ric_i{}^j({^{(4)}g})
-(k_\ell{}^\ell \de_i^j - k_i{}^j) Ric_m{}^m({^{(4)}g}), }
\end{split}
\end{equation}
where
\begin{align}\label{kstark}
\begin{split}
(k\star k\star k)_i{ }^j:=&-2k_a{}^a\big[-g^{ma}k_a{}^jg_{il}+k_\ell{}^m\delta_i{}^j-g_{a\ell}k_i{}^ag^{mj}+k_i{}^j\delta_l{}^m-\frac{1}{2}k_a{}^a(\delta_\ell{}^m\delta_i{}^j-g^{mj}g_{il})\big]k_m{}^\ell\\
=&\,4k_\ell{}^\ell k_a{}^jk_i{}^a-2k_a{}^a(k_\ell{}^mk_m{}^\ell)\delta_i{}^j+(k_a{}^a)^3\delta_i{}^j-3(k_\ell{}^\ell)^2k_i{}^j\\
(\rd_t k \star k)_i{ }^j:=&\,\partial_t(k_\ell{}^\ell k_i{}^j)
-2\partial_tk_a{}^jk_i{}^a+2\partial_tk_\ell{}^m\delta_i{}^jk_m{}^\ell 
-2\partial_tk_i{}^ak_a{}^j+2\partial_tk_i{}^jk_\ell{}^\ell \\
&-\partial_tk_a{}^a\delta_i{}^jk_\ell{}^\ell+\partial_tk_a{}^ak_i{}^j
+\partial_tk_i{}^\ell k_\ell{}^j-\partial_tk_\ell{}^jk_i{}^\ell\\
=&-3\partial_tk_a{}^jk_i{}^a+\partial_t(k_\ell{}^m k_m{}^\ell)\delta_i{}^j 
-\partial_tk_i{}^ak_a{}^j+2\partial_tk_\ell{}^\ell k_i{}^j+3\partial_tk_i{}^jk_\ell{}^\ell 
-\frac{1}{2}\partial_t(k_a{}^a)^2\delta_i{}^j
\end{split}
\end{align}
We note that the terms $k\star k\star k$ and $\rd_t k \star k$ satisfy
\begin{equation}\label{eq:trace.of.nonlinear}
(k\star k\star k)_i{ }^i + (\rd_t k \star k)_i{ }^i = \rd_t |k|^2 - 2 k_\ell{ }^\ell |k|^2 + 2k_i{ }^i\rd_t k_\ell{ }^\ell.
\end{equation}

In particular, if $^{(4)}g$ solves the Einstein vacuum equations, then
\begin{equation}\label{eq:k.wave}
\rd_t^2 k_i{ }^j = \Delta_g k_i{ }^j -\nab_i\nab^j k_\ell{ }^\ell + (k\star k\star k)_i{ }^j + (\rd_t k \star k)_i{ }^j.
\end{equation}
The equation \eqref{eq:k.wave} is almost a wave equation for $k$, except that there is a top order $\nab_i\nab^j k_\ell{ }^\ell$ term on the RHS. To proceed we think of $h = k_\ell{ }^\ell$ as an \emph{independent} variable. If the Einstein vacuum equations were satisfied, then \eqref{Rictt} in Proposition~\ref{prop:curv.id} imposes that $\rd_t h = |k|^2$. It is therefore reasonable to look for a solution to the Einstein vacuum equations by solving the following coupled system of equations:
\begin{equation}\label{eq:hyp.sys}
\begin{split}
\rd_t h =&\:  |k|^2,\\
\rd_t^2 k_i{ }^j =&\: \Delta_g k_i{ }^j -\nab_i\nab^j h + (k\star k\star k)_i{ }^j + (\rd_t k \star k)_i{ }^j,\\
\rd_t g_{ij} = &\:  {- k_i{ }^{\ell} g_{j\ell} - k_j{ }^{\ell} g_{i\ell}}.\\
\end{split}
\end{equation}
Remark that given a solution to \eqref{eq:hyp.sys}, it follows that $g^{-1}$ satisfies
\begin{equation}\label{eq:g-1.transport}
\rd_t (g^{-1})^{ij} =  {k_\ell{ }^j (g^{-1})^{i\ell} + k_\ell{ }^i (g^{-1})^{j\ell}}.
\end{equation}

Our strategy will be to solve the system \eqref{eq:hyp.sys} and then a posteriori justify that it is indeed a solution to the Einstein vacuum equations (in Section \ref{subsec:vanEVE}).

\subsection{Notations}
Before we proceed, we introduce some notations.

In the following we will consider (at least) two spacetime metrics $^{(4)} g = -\ud t^2 + g_{ij} \ud x^i \ud x^j$ and $^{(4)} g^{\bf [n]} = -\ud t^2 + g^{\bf [n]}_{ij} \ud x^i \ud x^j$
on the domain $I_t \times \mathbb T^3$ (where $I_t\subset \mathbb R$ is an interval, possibly open, closed or half-open). 

We make the following definitions assuming we are given such $I_t$, $^{(4)} g $ and $^{(4)} g^{\bf [n]}$.

\begin{definition}[Constant-$t$ hypersurfaces]
Given $t\in I_t$ define
$$\Sigma_t:= \{(\tau, x): \tau = t,\, x\in \mathbb T^3\}.$$
\end{definition}

\begin{definition}[Connections]\label{def:connections}
\begin{enumerate}
\item Denote by $\nab$ the Levi--Civita connection of $g$, and by $\nab^{\bf [n]}$ the Levi--Civita connection of $g^{\bf[n]}$.
\item Denote $\nab^{(d)} := \nab - \nab^{\bf [n]}$. Remark that $\nab^{(d)}$ is a $(1,2)$-tensor.
\item Let $r \in \mathbb N$ and $\mathcal T$ be an $(m,l)$-tensor. Define $\nab^{(r)} T$ to be the $(m,l+r)$-tensor given by
$$(\nab^{(r)} T)^{j_i\ldots j_m}_{a_1\ldots a_r i_1\ldots i_l} = \nab_{a_1}\cdots \nab_{a_r} T^{j_i\ldots j_m}_{i_1\ldots i_l}.$$
\end{enumerate}
\end{definition}

\begin{definition}[Norms]
\begin{enumerate}
\item Given two rank $(m,l)$ tensors $\mathcal T^{(1)}$ and $\mathcal T^{(2)}$, define the inner product
$$\langle \mathcal T^{(1)},\,\mathcal T^{(2)}\rangle_g := (g^{-1})^{i_1b_1}\ldots (g^{-1})^{i_lb_l}g_{j_1c_1}\ldots g_{j_mc_m}(\mathcal{T}^{(1)})^{j_i\ldots j_m}_{i_1\ldots i_l}(\mathcal{T}^{(2)})^{c_i\ldots c_m}_{b_1\ldots b_l}.$$
\item Given a rank $(m,l)$ tensor $\mathcal T$, define 
$$|\mathcal T|_g^2 := \langle \mathcal T,\,\mathcal T\rangle_g = (g^{-1})^{i_1b_1}\ldots (g^{-1})^{i_lb_l}g_{j_1c_1}\ldots g_{j_mc_m}\mathcal{T}^{j_i\ldots j_m}_{i_1\ldots i_l}\mathcal{T}^{c_i\ldots c_m}_{b_1\ldots b_l}.$$
\item Given a tensor $\mathcal T$ and $p\in [1,+\infty)$, define
$$\|\mathcal T\|_{L^p(\Sigma_t,g)} := (\int_{\Sigma_t} |\mathcal T|_g^p \, \mathrm{vol}_{\Sigma_t})^{\f 1p},$$
where $\mathrm{vol}_{\Sigma_t} = \sqrt{\det g}\,\ud x$ is the volume form induced by the metric $g$.

For $p=+\infty$, define
$$\|\mathcal T\|_{L^\i(\Sigma_t,g)} := \mathrm{ess\, sup}_{x\in \mathbb T^3}  |\mathcal T|_g(t,x).$$
\item For $r \in \mathbb N \cup \{0\}$ and $p \in [1,+\infty]$, define the geometric Sobolev space 
$$\|\mathcal T\|_{W^{r,p}(\Sigma_t, g)}:= \sum_{r' = 0}^{r} \|\nab^{(r')} \mathcal T\|_{L^p(\Sigma_t,g)}.$$
\item For $r \in \mathbb N \cup \{0\}$ and $p \in [1,+\infty]$, define the \emph{homogeneous} geometric Sobolev space 
$$\|\mathcal T\|_{\dot W^{r,p}(\Sigma_t, g)}:= \|\nab^{(r)} \mathcal T\|_{L^p(\Sigma_t,g)}.$$
\item For $r \in \mathbb N \cup \{0\}$, define
$$H^r(\Sigma_t,g) := W^{r,2}(\Sigma_t, g),\quad \dot H^r(\Sigma_t,g) := \dot W^{r,2}(\Sigma_t, g).$$
\item Define the norm $t^\alp L^2(\Sigma_t,g)$ (for $\alp \in \mathbb R\setminus \{0\}$) by 
$$\| \mathcal T'\|_{t^\alp L^2(\Sigma_t,g)} := t^{-\alp} \|\mathcal T \|_{L^2(\Sigma_t,g)}.$$
\item Given any two Banach spaces $X$ and $Y$, the vector spaces $X+Y=\{x+y: x\in X,\,y\in Y\}$ and $X\cap Y$ are endowed with Banach space structures with norms
$$\|v\|_{X+Y} := \inf_{v=x+y,\,(x,y)\in X\times Y} (\|x\|_X + \|y\|_Y),\quad \|v\|_{X\cap Y} := \|v\|_X + \|v\|_Y.$$
\item Finally, define $L^p(\Sigma_t,g^{\bf [n]})$, $W^{r,p}(\Sigma_t,g^{\bf [n]})$ and $\dot W^{r,p}(\Sigma_t,g^{\bf [n]})$ etc.~as above but with $g$ replaced by $g^{\bf [n]}$ (and $\nab$ replaced by $\nab^{\bf [n]}$).
\end{enumerate}
\end{definition}

\subsection{Existence of solutions to \eqref{eq:hyp.sys} and the main steps of the proof}\label{subsec:locexist}

 Our first step of the proof of Theorem \ref{mainthm} is to build a solution to \eqref{eq:hyp.sys}.  The following is the main existence result for \eqref{eq:hyp.sys}, whose proof will occupy the remainder of the section.

\begin{theorem}\label{thm:main.reduced}
For every $s ,\, N_0 \in \mathbb N$ obeying $s\geq 5$, there exists $n_{N_0,s} \in \mathbb N$ sufficiently large such that for any $n\geq n_{N_0,s}$, there exist $T_{N_0,s,n} > 0$ sufficiently small and a solution $(g,h,k)$ to \eqref{eq:hyp.sys} in $(0,T_{N_0,s,n}]\times \mathbb T^3$ which satisfy the following estimates:
\begin{equation}\label{eq:main.reduced.est.in.thm}
\begin{split}
\sum_{r=0}^s t^{2r} \| k^{(d)} \|_{H^r(\Sigma_t,g)}^2 &\: + \sum_{r=0}^{s-1} t^{2(r+1)}\|\rd_t k^{(d)}\|^2_{H^r(\Sigma_t,g)} + \sum_{r=0}^{s+1} t^{2r} \| h^{(d)} \|_{H^r(\Sigma_t,g)}^2 \\
&\: + \sum_{r=0}^{s+1} t^{2(r-1)} (\| g^{(d)} \|_{H^r(\Sigma_t,g)}  + \| (g^{-1})^{(d)} \|_{H^r(\Sigma_t,g)}) \leq t^{2N_0+ 2s},
\end{split}
\end{equation}
where $k^{(d)}=k-k^{[\bf n]},h^{(d)}=h-h^{[\bf n]},g^{(d)}=g-g^{\bf[n]},(g^{(d)})^{-1}=g^{-1}-(g^{\bf[n]})^{-1}$.
Moreover, $k_{ij} = g_{\ell j}k_i{ }^\ell$ is symmetric in $i$ and $j$.
\end{theorem}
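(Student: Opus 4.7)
The strategy is to construct the solution by writing $(g,h,k) = (g^{\bf [n]} + g^{(d)}, h^{\bf [n]} + h^{(d)}, k^{\bf [n]} + k^{(d)})$, where $h^{\bf [n]} := (k^{\bf [n]})_\ell{}^\ell$, and establishing a priori estimates of the form \eqref{eq:main.reduced.est.in.thm} for the difference variables via a Fuchsian-type energy argument \`a la Section~\ref{sec:ideas.Fuchsian}, combined with the wave-transport/renormalization structure outlined in Section~\ref{sec:EEinsyncoord}. Substituting into \eqref{eq:hyp.sys}, the differences satisfy a coupled system of the schematic form
\begin{align*}
\rd_t h^{(d)} &= 2(k^{\bf [n]}) \cdot_g k^{(d)} + (k^{(d)})\cdot_g (k^{(d)}) + \mathrm{F}_h, \\
\rd_t^2 k^{(d)} &= \Delta_g k^{(d)} - \nab\nab h^{(d)} + (\text{l.o.t. in }k^{(d)},\,\rd_t k^{(d)},\,g^{(d)}) + \mathrm{F}_k, \\
\rd_t g^{(d)} &= -2(k^{\bf [n]}\star g^{(d)}) - 2(k^{(d)}\star g^{\bf [n]}) - 2(k^{(d)}\star g^{(d)}) + \mathrm{F}_g,
\end{align*}
where the forcings $\mathrm{F}_h,\mathrm{F}_k,\mathrm{F}_g$ encode the failure of the approximate solution $(g^{\bf [n]}, h^{\bf [n]}, k^{\bf [n]})$ to solve \eqref{eq:hyp.sys} exactly; by Theorem~\ref{thm:parametrix} and Proposition~\ref{prop:approx.G}, and specifically by \eqref{eq:main.parametrix.Ric}, \eqref{eq:main.parametrix.k.and.II.1}, these forcings carry an arbitrarily large polynomial rate $t^{(n+1)\ve - C}$ provided $n$ is large.

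The central piece is an energy estimate adapted to the anisotropic Kasner-like geometry. Define
\[
\mathcal E^{(d)}_s(t) := \sum_{r=0}^s t^{2r}\|k^{(d)}\|_{H^r(\Sigma_t,g)}^2 + \sum_{r=0}^{s-1} t^{2(r+1)} \|\rd_t k^{(d)}\|_{H^r(\Sigma_t,g)}^2 + \sum_{r=0}^{s+1} t^{2r}\|h^{(d)}\|_{H^r(\Sigma_t,g)}^2 + \cdots,
\]
plus analogous contributions for $g^{(d)}$ and $(g^{-1})^{(d)}$ following \eqref{eq:main.reduced.est.in.thm}. Under a bootstrap assumption that $\mathcal E^{(d)}_s(t) \leq t^{2N_0+2s}$, I would multiply the $k$-equation by $t^{2r}\rd_t\nab^{(r)} k^{(d)}$, integrate by parts on $\Sigma_t$, and control the resulting terms to obtain an inequality of the form
\[
\tfrac{d}{dt}\mathcal E^{(d)}_s(t) \leq \bigl(\tfrac{C_0}{t} + \tfrac{C_n}{t^{1-\ve}}\bigr) \mathcal E^{(d)}_s(t) + C_n\, t^{2N_0 + 2s - 1},
\]
where crucially $C_0$ is \emph{independent of $n$}. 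The $t^{-1}$ coefficients in the borderline terms arise from commuting $\nab$ past $\Delta_g$ (producing curvature terms controlled by $t^{-2+\ve}$ via \eqref{eq:main.parametrix.Ric.bd}), from the $\rd_t g$ factors when differentiating the volume form, and from the nonlinear $\rd_t k \star k$ terms paired with the approximate solution.

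The key structural obstacle is the $\nab_i \nab^j h^{(d)}$ term on the right-hand side of the wave-like equation for $k^{(d)}$: naively this forces two derivatives of $h^{(d)}$ against only one derivative of $k^{(d)}$, producing a loss of derivatives in the standard Picard iteration. Following the strategy sketched in Section~\ref{sec:EEinsyncoord}, I would resolve this by treating $\Delta_g h^{(d)} - 2(k^{(d)})_i{}^j\rd_t (k^{(d)})_j{}^i$ as the fundamental controlled quantity; differentiating the transport equation for $h^{(d)}$ by $\Delta_g$, expressing $\rd_t^2 k^{(d)}$ via the second equation, and canceling top-order terms yields a transport equation for this renormalized quantity that does not lose derivatives. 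Elliptic estimates on $\Sigma_t$ then recover $\|\nab^{(2)} h^{(d)}\|_{L^2}$ from $\|\Delta_g h^{(d)}\|_{L^2}$, and commuting with higher spatial derivatives gives the full $H^{s+1}$ bound on $h^{(d)}$. Some care is required because the isoperimetric constants degenerate as $t\to 0^+$, so Sobolev embeddings must be applied with the correct anisotropic $t$-weights (cf.\ Section~\ref{sec:EEinsyncoord}).

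To close, the Fuchsian trick applies: rewriting the differential inequality as
\[
\tfrac{d}{dt}\bigl(t^{-2N_0-2s}\mathcal E^{(d)}_s\bigr) + \tfrac{2N_0+2s}{t}\bigl(t^{-2N_0-2s}\mathcal E^{(d)}_s\bigr) \leq \bigl(\tfrac{C_0}{t}+\tfrac{C_n}{t^{1-\ve}}\bigr)\bigl(t^{-2N_0-2s}\mathcal E^{(d)}_s\bigr) + \tfrac{C_n}{t},
\]
the linear absorption on the left dominates the borderline $\tfrac{C_0}{t}$ term once $N_0$ is chosen large enough, and then a suitably small $T_{N_0,s,n}$ absorbs the nonlinear and $C_n t^{-1+\ve}$ contributions. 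Since for $n$ large the approximate solution guarantees $\lim_{t\to 0^+} t^{-2N_0-2s}\mathcal E^{(d)}_s(t) = 0$, Gr\"onwall-type integration from $t=0$ recovers the bootstrap. The resulting a priori estimate is then converted into an actual existence statement via a Picard iteration scheme in which, at each step, the linearized equations (with the renormalized $h$ variable) are solved on $[\delta,T]$ with data at $t=\delta$ prescribed by the approximate solution, and the limit $\delta\to 0^+$ is taken using the uniform bounds above. Propagation of the symmetry $k_{ij} = g_{\ell j} k_i{}^\ell$ is checked by deriving a homogeneous first-order equation for the antisymmetric part and noting that it vanishes initially (at $t=0$) by construction, exactly as in the derivation preceding Lemma~\ref{lem:D.est}. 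The main obstacle is the simultaneous management of the renormalized elliptic estimates for $h^{(d)}$ and the anisotropic $t$-weights, and ensuring that the constant $C_0$ in the borderline term is genuinely $n$-independent.
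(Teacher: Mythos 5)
Your main line of argument is essentially the paper's: decompose about the approximate solution, run $t$-weighted (Fuchsian) energy estimates under a bootstrap with an $n$-independent borderline constant $C_0$, cure the loss of derivatives from $\nab_i\nab^j h$ by renormalizing with $\rd_t k^{(d)}$ and recovering top-order control through elliptic estimates, and produce the actual solution by solving from $t=\delta$ with data given by $(g^{\bf [n]},k^{\bf [n]},h^{\bf [n]})$ and letting $\delta\to 0^+$ with uniform bounds (in the paper: Lemma~\ref{lem:locexist}, Theorem~\ref{thm:bootstrap}, Proposition~\ref{prop:AA.main}). Two points, however, are genuine defects. First, your propagation of the symmetry of $k_{ij}$ is wrong as stated: within the reduced system \eqref{eq:hyp.sys} there is no closed \emph{first-order} homogeneous equation for the antisymmetric part $g_{\ell j}k_i{}^\ell-g_{\ell i}k_j{}^\ell$, because $\rd_t k$ is an independent unknown (the first-order derivation you cite, preceding Lemma~\ref{lem:D.est}, relies on $k^{\bf [n]}$ being defined by the transport equation \eqref{eq:k.transport}, which the actual $k$ does not satisfy). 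The paper instead derives a homogeneous second-order \emph{wave} equation for the antisymmetric part (Lemma~\ref{lem:eqn.for.anti.k}) and runs energy estimates for it; moreover the data are posed at $t=T_{\mathrm{aux},I}$, where the antisymmetric part is not zero but only small of size $t^{-1+(n+1)\ve}$ (Lemma~\ref{lem:D.est}), so one needs $n$ large to beat the borderline terms and then the limit $I\to+\infty$ to conclude symmetry of the limiting $k$. Your claim of vanishing ``initially at $t=0$'' has no meaning in the approximation scheme you set up.

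Second, the renormalization you describe only for $h^{(d)}$ is not enough: the top-order ($s+1$ derivative) transport estimates for $g^{(d)}$ and $(g^{-1})^{(d)}$ also lose a derivative, since $\nab^{(s+1)}$ applied to $\rd_t g^{(d)}=-2k^{\bf[n]}\star g^{(d)}-2k^{(d)}\star g$ (and the commutators $[\rd_t,\nab^{(s+1)}]$) produce $\nab^{(s+1)}k^{(d)}$ and $\nab^{(s+1)}k$, which your energy does not control. The paper introduces separate renormalized top-order quantities for $g^{(d)}$ and $(g^{-1})^{(d)}$ (see \eqref{def:gd.ren}--\eqref{def:g-1d.ren} and Proposition~\ref{prop:transport.est.main.top}) precisely for this reason, and then recovers the full $\dot H^{s+1}$ control by the same elliptic estimate (Lemmas~\ref{lem:elliptic}--\ref{lem:remove.modified.2}). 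Finally, a cosmetic but conceptually relevant correction: the Gr\"onwall integration should start at $T_{\mathrm{aux}}$, where the difference energy vanishes identically by your choice of data, rather than ``from $t=0$'' for a solution whose existence near $t=0$ is exactly what is being established; your $\delta\to 0^+$ scheme repairs this, but the a priori estimate should be phrased on $[T_{\mathrm{aux}},T_{\mathrm{Boot}})$ as in the paper.
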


We will prove Theorem~\ref{thm:main.reduced} with the following steps (see the conclusion of the proof in Section~\ref{sec:conclusion.of.actual.wave}):
\begin{enumerate}
\item For $T_{\mathrm{aux}}>0$ (with $T_{\mathrm{aux}} \ll T_{N_0,s,n}$), we construct \emph{local} solutions to \eqref{eq:hyp.sys} in $[T_{\mathrm{aux}}, T_{\mathrm{aux}} + \de) \times \mathbb T^3$ (with $\de$ potentially depending on $T_{\mathrm{aux}}$) (\textbf{Lemma~\ref{lem:locexist}}).
\item For $s$, $N$, $n$ and $T_{N_0,s,n}$ as in  Theorem~\ref{thm:main.reduced}, we prove \emph{uniform} estimates to show that the solution can be extended to $[T_{\mathrm{aux}}, T_{N_0,s,n}]$ . This is carried out in a bootstrap argument and is the main step (\textbf{Theorem~\ref{thm:bootstrap}, Corollary~\ref{cor:bootstrap}}).
\item Using a \emph{compactness} argument, we take a sequence of auxiliary times $(T_{\mathrm{aux}})_i \to 0^+$ and extract a subsequence of solutions converging to a limiting solution to \eqref{eq:hyp.sys} on $(0,T_{N_0,s,n}]\times \mathbb T^3$ (\textbf{Proposition~\ref{prop:AA.main}}).
\end{enumerate}

We will further elucidate these steps in the subsubsections below. Most of the proofs will then be given in later subsections.

\subsubsection{Step~1: A local solution}
We begin with the following local existence result for \eqref{eq:hyp.sys}:
\begin{lemma}[Local existence]\label{lem:locexist}
For every  {$T_{\mathrm{aux}}>0$ sufficiently small} and $n\in \mathbb N$, there exist a $\delta>0$ (depending a priori both on $T_{\mathrm{aux}}$ and $n$) and a unique smooth solution $(g^{\mathrm{aux}}, k^{\mathrm{aux}}, h^{\mathrm{aux}})$ to \eqref{eq:hyp.sys} in $[T_{\mathrm{aux}},T_{\mathrm{aux}}+\delta]\times \mathbb T^3$, such that at $t= T_{\mathrm{aux}}$, $(g^{\mathrm{aux}}, k^{\mathrm{aux}}, h^{\mathrm{aux}})$ attains the following prescribed values:
\begin{align*}
\begin{split}
g_{ij}^{\mathrm{aux}} \restriction_{t = T_{\mathrm{aux}}} = g_{ij}^{\bf [n]}\restriction_{t = T_{\mathrm{aux}}},\quad  h^{\mathrm{aux}} \restriction_{t = T_{\mathrm{aux}}} = (k^{\bf [n]})_{i}{ }^i\restriction_{t = T_{\mathrm{aux}}},\\
(k^{\mathrm{aux}})_{i}{ }^j\restriction_{t = T_{\mathrm{aux}}} = (k^{\bf [n]})_{i}{ }^j\restriction_{t = T_{\mathrm{aux}}},\quad (\partial_t k^{\mathrm{aux}})_{i}{ }^j \restriction_{t = T_{\mathrm{aux}}} = (\partial_t k^{\bf [n]})_{i}{ }^j \restriction_{t = T_{\mathrm{aux}}}.
\end{split}
\end{align*}
Moreover, $g_{ij}^{\mathrm{aux}}=g_{ji}^{\mathrm{aux}}$.
\end{lemma}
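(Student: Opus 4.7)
Since $T_{\mathrm{aux}}>0$, we are bounded away from the singularity at $t=0$, and the data prescribed at $t=T_{\mathrm{aux}}$ are smooth with $g^{\bf [n]}\restriction_{t=T_{\mathrm{aux}}}$ a non-degenerate Riemannian metric on $\mathbb{T}^3$ (provided $T_{\mathrm{aux}}\leq t_n$, which we may assume). The plan is to construct a smooth solution of \eqref{eq:hyp.sys} on $[T_{\mathrm{aux}},T_{\mathrm{aux}}+\de]\times\mathbb{T}^3$ by a Picard-type iteration, combined with one structural observation to handle an apparent loss of derivatives.

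The main obstacle is that \eqref{eq:hyp.sys} is not manifestly hyperbolic in the variables $(g,h,k)$: the $k$-equation is wave-like but contains $\nab_i\nab^j h$ on its right-hand side, so an $H^r$-energy estimate for $k$ requires $h\in H^{r+1}$, whereas the transport equation $\rd_t h=|k|^2$ naively only yields $h\in H^r$ from $k\in H^r$. The remedy, sketched in Section~\ref{sec:EEinsyncoord}, is a renormalization: a direct computation that uses both the first and the second equations of \eqref{eq:hyp.sys} produces
\[
\rd_t\bigl(\Delta_g h - 2k_i{}^j\,\rd_t k_j{}^i\bigr) = 2k_i{}^j\nab_j\nab^i h - 2|\rd_t k|_g^2 + 2|\nab k|_g^2 + \mathcal R,
\]
where $\mathcal R$ involves at most two derivatives of $(g,g^{-1},k,h)$. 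Granted a bootstrap assumption $h\in H^{r+1}$, the right-hand side is controlled in $H^{r-1}$; integrating in $t$ and invoking elliptic regularity for $\Delta_g$ on the compact manifold $\mathbb{T}^3$ recovers $\|h\|_{H^{r+1}}$ from $\|k\|_{H^r}+\|\rd_t k\|_{H^{r-1}}$ plus the prescribed data, closing the bootstrap.

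With this renormalization available, I would run a Picard iteration for $(g^{(m)},(g^{(m)})^{-1})\in C^0_t H^r$, $h^{(m)}\in C^0_t H^{r+1}$, $k^{(m)}\in C^0_t H^r\cap C^1_t H^{r-1}$, for some $r\geq 4$. At each step: solve the linear transport equation for $g^{(m+1)}$ with coefficients determined by $k^{(m)}$; solve the linear wave equation for $k^{(m+1)}$ whose principal part is governed by $g^{(m+1)}$ and whose source is built from $h^{(m)}$ and $k^{(m)}$; then solve the transport equation for $h^{(m+1)}$ sourced by $|k^{(m+1)}|_{g^{(m+1)}}^2$. Standard linear theory on the compact spatial manifold $\mathbb{T}^3$, together with the renormalization/elliptic-regularity argument applied at each iterate, yields uniform $H^r/H^{r+1}$ bounds on a common interval $[T_{\mathrm{aux}},T_{\mathrm{aux}}+\de]$, with $\de>0$ depending on $n$ and $T_{\mathrm{aux}}$ through Theorem~\ref{thm:parametrix}. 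A contraction estimate in a slightly weaker norm then produces a unique limit, and smoothness is obtained by taking $r$ arbitrarily large (the data and coefficients are smooth, and the argument closes at every level of regularity).

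Finally, the symmetry $g_{ij}^{\mathrm{aux}}=g_{ji}^{\mathrm{aux}}$ is automatic: the right-hand side of $\rd_t g_{ij}=-k_i{}^\ell g_{j\ell}-k_j{}^\ell g_{i\ell}$ is manifestly symmetric in $(i,j)$, so $g_{ij}-g_{ji}$ solves a homogeneous linear ODE with vanishing initial value and hence vanishes throughout $[T_{\mathrm{aux}},T_{\mathrm{aux}}+\de]$. The real content of the argument is the commutator/elliptic-regularity trick just described; the rest is standard quasilinear hyperbolic theory applied at positive times, where all geometric quantities remain non-degenerate and the functional setting is the usual one.
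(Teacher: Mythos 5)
Your proposal is correct and follows essentially the same route the paper takes: it treats local existence at $t=T_{\mathrm{aux}}>0$ as standard quasilinear theory, with the only genuine issue being the apparent derivative loss from $\nab_i\nab^j h$, which you resolve exactly as the paper does by commuting $\rd_t h=|k|^2$ with $\Delta_g$, substituting the wave equation for $k$, and using elliptic estimates (cf.\ Section~\ref{sec:EEinsyncoord} and Lemma~\ref{lem:remove.modified.2}). Your symmetry argument for $g_{ij}^{\mathrm{aux}}$ is also the paper's: the right-hand side of the $g$-equation is symmetric in $(i,j)$, so the antisymmetric part is constant in $t$ and vanishes initially.
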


Such a local existence result is almost standard. The only issue is that the second equation of the system \eqref{eq:hyp.sys} contains the term $\nabla_i\nabla^j h$ on the RHS, which seems to ``have one derivative too many''. This issue can be treated by deriving elliptic estimates for $h$, by commuting $\partial_th=|k|^2$ with $\Delta_g$ and using the wave equation for $k$, see discussions in Section~\ref{sec:EEinsyncoord} and Lemma~\ref{lem:remove.modified.2}.  {We will use this result but will omit its straightforward proof.}

Once existence is obtained, since $g^{\mathrm{aux}}_{ij}$ is symmetric at $t = T_{\mathrm{aux}}$ and $\rd_t (g_{ij}^{\mathrm{aux}} - g_{ji}^{\mathrm{aux}}) = 0$, it immediately follows that $g_{ij}^{\mathrm{aux}}=g_{ji}^{\mathrm{aux}}$.



\subsubsection{Step~2: The main bootstrap argument}\label{sec:step.2.bootstrap}

Our next step is to prove a uniform time of existence independent of $T_{\mathrm{aux}}$. To state the result, let us define, for $(g^{\mathrm{aux}}, k^{\mathrm{aux}}, h^{\mathrm{aux}})$ as in Lemma~\ref{lem:locexist},
\begin{align}
g^{(d)}_{ij}:= g^{\mathrm{aux}}_{ij} - g^{\bf [n]}_{ij},\quad ((g^{-1})^{(d)})^{ij}:= ((g^{\mathrm{aux}})^{-1})^{ij} - ((g^{\bf [n]})^{-1})^{ij},\label{adkdhd.1}\\
 (k^{(d)})_i{ }^j:= (k^{\mathrm{aux}})_i{ }^j - (k^{\bf [n]})_i{ }^j, \quad h^{(d)}:= h^{\mathrm{aux}}- h^{\bf [n]}.\label{adkdhd.2}
\end{align}
We stipulate that the metric $g^{\mathrm{aux}}_{ij}$ takes the form \eqref{metricansatz} and define $a^{\mathrm{aux}}_{ij}$ according to \eqref{metricansatz}.

Introduce the following \textbf{bootstrap assumptions}:
\begin{align}
\max_{i,j} |a_{ij}^{\mathrm{aux}} - c_{ij}|(t,x) \leq t^{\f {\ve}{2}}, \label{eq:BA1}\\
\| g^{(d)}\|_{W^{s-1,\infty}(\Sigma_t,g^{\mathrm{aux}})} + \| (g^{-1})^{(d)}\|_{W^{s-1,\infty}(\Sigma_t,g^{\mathrm{aux}})} \leq 1, \label{eq:BA2} \\ 
\| g^{(d)}\|_{H^{s+1}(\Sigma_t,g^{\mathrm{aux}})} + \| (g^{-1})^{(d)}\|_{H^{s+1}(\Sigma_t,g^{\mathrm{aux}})} \leq t^{\f 52}, \label{eq:BA3} \\
\| h^{(d)}\|_{H^{s+1}(\Sigma_t,g^{\mathrm{aux}})} + \| k^{(d)}\|_{H^{s}(\Sigma_t,g^{\mathrm{aux}})} + \| \rd_t k^{(d)}\|_{H^{s-1}(\Sigma_t,g^{\mathrm{aux}})} \leq t^{\f 52}. \label{eq:BA4}
\end{align}

The following is the main bootstrap theorem, whose proof constitutes most of this section (in Sections~\ref{sec:bootstrap.outline}--\ref{sec:conclusion.of.bootstrap.thm}):
%
%
\begin{theorem}[Bootstrap theorem]\label{thm:bootstrap}
For every $s ,\, N_0\in \mathbb N$ such that 
$s\geq 5$, there exists $n_{N_0,s} \in \mathbb N$ sufficiently large such that for every $n\geq n_{N_0,s}$, the following holds for some $T_{N_0,s,n} > 0$ sufficiently small.

Suppose $(g^{\mathrm{aux}}, k^{\mathrm{aux}}, h^{\mathrm{aux}})$ is the solution to \eqref{eq:hyp.sys} on a time interval $[T_{\mathrm{aux}}, T_{\mathrm{Boot}})$ (for some $T_{\mathrm{Boot}} \in (T_{\mathrm{aux}}, T_{N_0,s,n}]$), with initial data at $t = T_{\mathrm{aux}}$ given as in Lemma~\ref{lem:locexist}. Assume moreover that the bootstrap assumptions \eqref{eq:BA1}--\eqref{eq:BA4} all hold on $[T_{\mathrm{aux}}, T_{\mathrm{Boot}}) \times \mathbb T^3$.

Then in fact the following estimates hold:
%
%
\begin{align}\label{gdkdhdest}
\begin{split}
 \sum_{r=0}^{s} t^{2r} \|k^{(d)}\|^2_{H^r(\Sigma_t,g^{\mathrm{aux}})} + \sum_{r=0}^{s-1} t^{2r+2} \|\partial_tk^{(d)}\|^2_{H^r(\Sigma_t,g^{\mathrm{aux}})} + \sum_{r=0}^{s+1} t^{2r} \|h^{(d)}\|^2_{H^r(\Sigma_t,g^{\mathrm{aux}})}\\
  + \sum_{r=0}^{s+1} t^{2r-2} (\| g^{(d)} \|^2_{H^r(\Sigma_t,g^{\mathrm{aux}})} + \| (g^{-1})^{(d)} \|^2_{H^r(\Sigma_t,g^{\mathrm{aux}})}) \leq C t^{2N_0+2s}
\end{split}
\end{align}
on $[T_{\mathrm{aux}}, T_{\mathrm{Boot}}) \times \mathbb T^3$, where $C>0$ may depend on $s$, $N_0$ and the data, but is independent of $T_{\mathrm{aux}}$. 

Moreover, taking $T_{N_0,s,n}$ smaller if necessary, \eqref{gdkdhdest} improves over the bootstrap assumptions \eqref{eq:BA1}--\eqref{eq:BA4}. 
\end{theorem}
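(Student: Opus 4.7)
The strategy is a weighted energy estimate for the difference quantities $(g^{(d)}, (g^{-1})^{(d)}, k^{(d)}, h^{(d)})$, combined with the Fuchsian Gr\"onwall argument of Section~\ref{sec:ideas.Fuchsian}. Because the initial data at $t = T_{\mathrm{aux}}$ prescribed in Lemma~\ref{lem:locexist} agree with the approximate solution, the differences (and $\rd_t k^{(d)}$) \emph{all vanish identically} at $t = T_{\mathrm{aux}}$. On the other hand, by Theorem~\ref{thm:parametrix} together with Proposition~\ref{prop:approx.G}, the approximation $(g^{\bf [n]}, (k^{\bf [n]})_\ell{}^\ell, k^{\bf [n]})$ satisfies the system \eqref{eq:hyp.sys} with inhomogeneities of size $O(t^{-2+(n+1)\varepsilon})$, so by taking $n$ very large these forcing terms are polynomially small to any prescribed order.

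Step 1 is to subtract the $g^{\bf [n]}$-equations from the $g^{\mathrm{aux}}$-equations, producing a coupled wave-transport-transport system for $(k^{(d)}, h^{(d)}, g^{(d)})$: a wave equation for $k^{(d)}_i{}^j$ of schematic form $\rd_t^2 k^{(d)} = \Delta_{g^{\mathrm{aux}}} k^{(d)} - \nab^2 h^{(d)} + \mathcal N(k^{(d)}, \rd_t k^{(d)}, g^{(d)}) + F^{(n)}$, a transport equation $\rd_t h^{(d)} = 2 k^{\mathrm{aux}} \cdot k^{(d)} + F^{(n)}_h$, and the transport equation for $g^{(d)}_{ij}$ with $O(t^{-1})$ coefficients, all sourced by approximation-error forcing $F^{(n)}$ bounded by Theorem~\ref{thm:parametrix} and Proposition~\ref{prop:approx.G}. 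Step 2 is the design of the energy. For $r = 0, \ldots, s$ define
\begin{align*}
\mathcal E_r(t) := &\: t^{2r+2} \bigl(\|\rd_t \nab^{(r)} k^{(d)}\|_{L^2(\Sigma_t,g^{\mathrm{aux}})}^2 + \|\nab^{(r+1)} k^{(d)}\|_{L^2(\Sigma_t,g^{\mathrm{aux}})}^2 + \|\nab^{(r+1)} h^{(d)}\|_{L^2(\Sigma_t,g^{\mathrm{aux}})}^2\bigr) \\
&\: + t^{2r}\bigl(\|\nab^{(r)} k^{(d)}\|_{L^2(\Sigma_t,g^{\mathrm{aux}})}^2 + \|\nab^{(r)} h^{(d)}\|_{L^2(\Sigma_t,g^{\mathrm{aux}})}^2\bigr) + t^{2r-2}\|\nab^{(r)} g^{(d)}\|_{L^2(\Sigma_t,g^{\mathrm{aux}})}^2,
\end{align*}
augmented by the corresponding $r=s+1$ contributions for $h^{(d)}$ and $g^{(d)}$. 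The $t$-weights match the anisotropic Kasner scaling so that the unavoidable $O(t^{-1})$ coefficients in $\Delta_{g^{\mathrm{aux}}} k^{(d)}$ and the Christoffel symbols are absorbable. Critically, $h$ is treated as an \emph{independent} variable whose top-order estimate avoids derivative loss through the renormalization trick outlined in Section~\ref{sec:EEinsyncoord}: commuting $\rd_t h^{(d)} = 2 k^{\mathrm{aux}} \cdot k^{(d)} + F^{(n)}_h$ with $\Delta_{g^{\mathrm{aux}}}$, then using the wave equation for $k^{(d)}$ to rewrite $\Delta_{g^{\mathrm{aux}}} k^{(d)}$ in terms of $\rd_t^2 k^{(d)} + \nab^2 h^{(d)} + \mathrm{l.o.t.}$, yields a transport equation for the renormalized quantity $\Delta_{g^{\mathrm{aux}}} h^{(d)} - 2 k^{\mathrm{aux}} \cdot \rd_t k^{(d)}$; elliptic regularity then recovers $\|\nab^{(2)} h^{(d)}\|_{L^2}$ from $\|\Delta_{g^{\mathrm{aux}}} h^{(d)}\|_{L^2}$ without losing a derivative.

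Step 3 is the differential inequality. Using the bootstrap assumptions \eqref{eq:BA1}--\eqref{eq:BA4} to bound background-difference products in $L^\infty$ (treating the degenerating isoperimetric ratios carefully by assigning the most singular factor to $L^\infty$ and the better-behaved factor to $L^2$), and using the error bounds from Theorem~\ref{thm:parametrix} and Proposition~\ref{prop:approx.G} for the inhomogeneities, a direct computation of $\f{d}{dt} \mathcal E_r$ yields
\begin{equation*}
\f{d}{dt}\Bigl(\sum_{r=0}^{s+1} \mathcal E_r\Bigr)(t) \leq \Bigl(\f{C_0}{t} + \f{C_n}{t^{1-\varepsilon}}\Bigr) \sum_{r=0}^{s+1} \mathcal E_r(t) + C_n \, t^{2N_0 + 2s - 1},
\end{equation*}
where $C_0$ is a universal constant independent of $n$ coming from the borderline $O(t^{-1})$ terms, and $C_n$ depends on $n$. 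Applying the Fuchsian weight $\mathcal F(t) := t^{-2(N_0+s)} \sum_r \mathcal E_r(t)$ and choosing $n$ large so that $2(N_0+s) > C_0$ and the forcing rate matches, then choosing $T_{N_0,s,n}$ small enough that $C_n T_{N_0,s,n}^\varepsilon \leq 1$, one obtains $\f{d}{dt} \mathcal F + \f{c}{t}\mathcal F \leq \f{C_n}{t}$ with $c > 0$; integrating from $T_{\mathrm{aux}}$ with the vanishing initial condition $\mathcal F(T_{\mathrm{aux}}) = 0$ produces $\mathcal F \lesssim 1$, i.e.\ \eqref{gdkdhdest}. Finally, to close the bootstrap, \eqref{gdkdhdest} with $r = 0, 1$ combined with anisotropic Sobolev embedding improves \eqref{eq:BA3}, \eqref{eq:BA4}, while the transport equation for $a^{(d)}_{ij}$ (together with Lemma~\ref{lem:an.well.defined}) improves \eqref{eq:BA1}, \eqref{eq:BA2}, since $N_0 + s$ can be made arbitrarily larger than $5/2$.

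The main obstacle will be Step 2: organizing the energy so that the $\nab^2 h^{(d)}$ source in the wave equation for $k^{(d)}$ does not cost a derivative, which forces the renormalization/elliptic device above, and simultaneously controlling the anisotropic commutator terms (where each spatial derivative of $g^{\mathrm{aux}}$ produces not only the expected $O(t^{2p_{\max}})$ but also logarithmic factors, cf.\ Lemma~\ref{lem:Ricci}) so that all error contributions fit into the $O(t^{-1}) \cdot \mathcal E$ template admissible to the Fuchsian Gr\"onwall argument. The secondary obstacle is that Sobolev embedding on $(\mathbb T^3, g^{\mathrm{aux}})$ is not uniform as $t \to 0^+$, so the product-rule estimates must be carried out component by component, using that the anisotropic $L^\infty$ bounds from \eqref{eq:BA2} tolerate the placement of the more singular geometric factor in $L^\infty$.
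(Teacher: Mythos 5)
Your proposal is correct in outline and follows essentially the same route as the paper's proof: subtract the approximate solution to obtain the difference system, run $t$-weighted energy estimates in which the top-order $h^{(d)}$ bound is obtained through the renormalization $\Delta_g \nab^{(s-1)}h^{(d)} - 2k\cdot\rd_t\nab^{(s-1)}k^{(d)}$ plus elliptic estimates, close via the Fuchsian Gr\"onwall argument using the vanishing of the differences at $t=T_{\mathrm{aux}}$ and the arbitrarily fast decaying forcing from Theorem~\ref{thm:parametrix} and Proposition~\ref{prop:approx.G}, and finally improve \eqref{eq:BA1}--\eqref{eq:BA4} by Sobolev embedding. Two details need repair: the same renormalization device must also be applied to the top-order transport estimates for $g^{(d)}$ and $(g^{-1})^{(d)}$ (cf.\ \eqref{def:gd.ren}--\eqref{def:g-1d.ren}), since $\nab^{(s+1)}$ of their transport equations and the commutator $[\rd_t,\nab^{(s+1)}]$ generate $\nab^{(s+1)}k^{(d)}$, which the energy does not control; and the Fuchsian weight cannot be taken as $t^{-2(N_0+s)}$ outright (taking $n$ large does not ensure $2(N_0+s)>C_0$) --- instead one runs the argument with an auxiliary exponent $N\geq\max\{N_0,\ \mbox{a constant determined by }C_0\}$, made admissible by choosing $n$ large so that the inhomogeneity decays like $t^{2N+2s-1}$, and then uses $t\leq 1$ to deduce \eqref{gdkdhdest} with the exponent $2N_0+2s$.
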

As is standard, the bootstrap theorem implies immediately, using a continuity argument, that the solution can be extended up to time $T_{N_0,s,n}$:
\begin{corollary}\label{cor:bootstrap}
Let $s$, $N_0$, $n$ and $T_{N_0,s,n}$ be as in Theorem~\ref{thm:bootstrap}. Then the local solution given in Lemma~\ref{lem:locexist} can in fact be extended to all of $[T_{\mathrm{aux}}, T_{N_0,s,n}) \times \mathbb T^3$. Moreover, the estimates \eqref{gdkdhdest} hold.
\end{corollary}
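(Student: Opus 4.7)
The plan is to run a standard continuity argument, feeding on Theorem~\ref{thm:bootstrap} as a black box. Define
$$\mathcal{T} := \bigl\{T' \in (T_{\mathrm{aux}}, T_{N_0,s,n}] : (g^{\mathrm{aux}}, h^{\mathrm{aux}}, k^{\mathrm{aux}}) \text{ extends smoothly to } [T_{\mathrm{aux}}, T') \times \mathbb{T}^3 \text{ and } \eqref{eq:BA1}\text{--}\eqref{eq:BA4} \text{ hold on it}\bigr\},$$
and set $T^* := \sup\mathcal{T}$ (with the convention $\sup\emptyset = T_{\mathrm{aux}}$). It suffices to show $T^* = T_{N_0,s,n}$ and $T_{N_0,s,n} \in \mathcal{T}$, since then \eqref{gdkdhdest} follows by one last application of Theorem~\ref{thm:bootstrap}.

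First I would check nonemptiness of $\mathcal{T}$. By the choice of initial data in Lemma~\ref{lem:locexist}, the differences $g^{(d)}$, $(g^{-1})^{(d)}$, $h^{(d)}$, $k^{(d)}$ and $\partial_t k^{(d)}$ all vanish identically at $t = T_{\mathrm{aux}}$, so the estimates \eqref{eq:BA2}--\eqref{eq:BA4} hold trivially. For \eqref{eq:BA1}, at $t = T_{\mathrm{aux}}$ we have $a^{\mathrm{aux}}_{ij} = a^{[\bf n]}_{ij}$, so Theorem~\ref{thm:parametrix}(2) gives $|a^{\mathrm{aux}}_{ij} - c_{ij}|(T_{\mathrm{aux}},\cdot) \leq C_{0,n} T_{\mathrm{aux}}^{\varepsilon}$, which is $\leq T_{\mathrm{aux}}^{\varepsilon/2}$ for $T_{N_0,s,n}$ (and hence $T_{\mathrm{aux}}$) small enough depending on $n$. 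Lemma~\ref{lem:locexist} then produces a smooth solution on $[T_{\mathrm{aux}}, T_{\mathrm{aux}} + \delta)$ for some $\delta > 0$; by continuity in $t$ of all quantities appearing in \eqref{eq:BA1}--\eqref{eq:BA4}, a slightly smaller time lies in $\mathcal{T}$.

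Next I would show $T^* \in \mathcal{T}$ (closedness) and that $T^* = T_{N_0,s,n}$ (openness/maximality together). On every $[T_{\mathrm{aux}}, T') \subset [T_{\mathrm{aux}}, T^*)$ the hypotheses of Theorem~\ref{thm:bootstrap} are met, so the improved bounds \eqref{gdkdhdest} hold uniformly in $T'$. The key observation is that for $T_{N_0,s,n}$ sufficiently small, these improved bounds are \emph{strictly} stronger than the bootstrap thresholds: for instance, \eqref{gdkdhdest} gives $\|g^{(d)}\|_{H^{s+1}(\Sigma_t, g^{\mathrm{aux}})} \lesssim t^{N_0 + s + 1}$, which is $\ll t^{5/2}$ once $N_0 + s - 3/2 > 0$ and $t$ is small; similarly for the $W^{s-1,\infty}$ bounds one uses Sobolev embedding (with $s \geq 3$) on $\Sigma_t$, accepting the losses in $t$-powers coming from the degenerating isoperimetric constants, which are still beaten by the large gain $t^{N_0 + s}$ provided $n$ (and hence $N_0$) is chosen large. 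Likewise \eqref{eq:BA1} is recovered from the $L^\infty$ control of $g^{(d)}$. These strict inequalities plus smooth dependence of the solution on $t$ (through Lemma~\ref{lem:locexist} applied at $T^*$, whose existence time $\delta'$ depends on the data at $T^*$ in a way bounded by the a priori estimates) imply that the solution in fact extends past $T^*$ with the bootstrap assumptions still satisfied strictly. Hence $T^* < T_{N_0,s,n}$ is impossible, so $T^* = T_{N_0,s,n}$ and $T_{N_0,s,n} \in \mathcal{T}$.

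The entirety of the real work has been done in Theorem~\ref{thm:bootstrap}; the only place where care is needed here is the verification in the previous paragraph that the improved estimates \eqref{gdkdhdest} beat the bootstrap thresholds \eqref{eq:BA1}--\eqref{eq:BA4} by a definite power of $t$ after accounting for the anisotropic losses in Sobolev embedding on $(\Sigma_t, g^{\mathrm{aux}})$. This is the only obstacle, and it is handled by choosing $n \geq n_{N_0,s}$ large so that $N_0$ (which enters \eqref{gdkdhdest} through the $t^{2N_0+2s}$ factor) is large, and then $T_{N_0,s,n}$ small.
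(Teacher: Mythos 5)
Your continuity argument is correct and is exactly the route the paper takes: the paper's own proof of this corollary is nothing more than the remark that Theorem~\ref{thm:bootstrap} plus a standard bootstrap/continuity argument (non-emptiness from the trivial data at $T_{\mathrm{aux}}$, strict improvement of \eqref{eq:BA1}--\eqref{eq:BA4} on the bootstrap interval, and local continuation from interior times) yields existence on all of $[T_{\mathrm{aux}}, T_{N_0,s,n})$ together with \eqref{gdkdhdest}. One small remark: your re-verification that \eqref{gdkdhdest} beats the thresholds is unnecessary, since the last sentence of Theorem~\ref{thm:bootstrap} already asserts this improvement, and in that re-verification the top-order bound should read $\|g^{(d)}\|_{H^{s+1}(\Sigma_t,g^{\mathrm{aux}})} \lesssim t^{N_0}$ (the factor $t^{N_0+s+1}$ is the $L^2$-level bound), which is why one should simply cite the theorem's improvement statement rather than rederive it.
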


\subsubsection{Step~3: Conclusion of the argument}\label{sec:conclusion.of.actual.wave}

\begin{proposition}\label{prop:AA.main}
Let $s$, $N_0$, $n$ and $T_{N_0,s,n}$ be as in Theorem~\ref{thm:bootstrap}.

Them there exists a decreasing sequence of auxiliary times $\{ T_{\mathrm{aux},I} \}_{I=1}^{+\infty} \subset (0,T_{N_0,s,n})$, $\lim_{I\to +\infty} T_{\mathrm{aux},I} = 0$ such that the following holds:
\begin{enumerate}
\item The corresponding solutions $\{ (g^{\mathrm{aux}}_I, k^{\mathrm{aux}}_I, h^{\mathrm{aux}}_I \}_{I=1}^{+\infty}$ given by Lemma~\ref{lem:locexist} converge \emph{locally} in $C^3\times C^2\times C^2$ (as $I\to +\infty$) to a limit $(g, k ,h)$.
\item The limit, which we denote by $(g, k ,h)$, solves \eqref{eq:hyp.sys} in $(0,T_{N_0,s,n}]\times \mathbb T^3$.
\item Denoting $g^{(d)} = g - g^{\bf [n]}$, $(g^{-1})^{(d)} = g^{-1} - (g^{\bf [n]})^{-1}$, $k^{(d)} = k - k^{\bf [n]}$ and $h^{(d)} = h - h^{\bf [n]}$, the estimate \eqref{gdkdhdest} holds.
\item The limit $(g,k)$ satisfies $k_{ij} = -\f 12 \rd_t g_{ij}$.
\end{enumerate}

\end{proposition}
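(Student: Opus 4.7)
The proof proceeds by applying Corollary~\ref{cor:bootstrap} to each $T_{\mathrm{aux},I}$ to produce a family of solutions with uniform estimates, extracting a locally convergent subsequence by Arzel\`a--Ascoli and diagonal extraction, and then verifying properties (1)--(4); the principal new difficulty is (4). I would take any decreasing sequence $\{T_{\mathrm{aux},I}\}_{I=1}^{+\infty}\subset (0,T_{N_0,s,n})$ with $T_{\mathrm{aux},I}\to 0^+$. By Corollary~\ref{cor:bootstrap}, for each $I$ we obtain a smooth solution $(g^{\mathrm{aux}}_I, k^{\mathrm{aux}}_I, h^{\mathrm{aux}}_I)$ to \eqref{eq:hyp.sys} on $[T_{\mathrm{aux},I}, T_{N_0,s,n}]\times\mathbb T^3$ satisfying \eqref{gdkdhdest} with a constant independent of $I$.

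\textbf{Compactness and extraction.} Fix any compact $[\tau, T_{N_0,s,n}]\subset (0,T_{N_0,s,n}]$. For all $I$ with $T_{\mathrm{aux},I}<\tau$, the estimate \eqref{gdkdhdest} combined with the bounds on $(g^{\bf [n]}, k^{\bf [n]}, h^{\bf [n]})$ from Theorem~\ref{thm:parametrix} gives uniform control of $g^{\mathrm{aux}}_I$ and $h^{\mathrm{aux}}_I$ in $H^{s+1}$, of $k^{\mathrm{aux}}_I$ in $H^{s}$ and of $\partial_t k^{\mathrm{aux}}_I$ in $H^{s-1}$ on $[\tau, T_{N_0,s,n}]$, and renders $g^{\mathrm{aux}}_I$ uniformly equivalent to the Euclidean metric there. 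Since $s\geq 5$, Sobolev embedding yields uniform $C^3\times C^2\times C^2$ control, and \eqref{eq:hyp.sys} converts these into uniform bounds on one-time derivatives at the same regularity levels, hence equicontinuity in $t$. By Arzel\`a--Ascoli together with a standard diagonal extraction along a sequence $\tau_j\to 0^+$, I extract a subsequence (not relabelled) converging locally in $C^3\times C^2\times C^2$ on $(0,T_{N_0,s,n}]$ to a limit $(g, k, h)$; this is (1). Passing to the limit termwise in \eqref{eq:hyp.sys} yields (2), and \eqref{gdkdhdest} transfers to the limit by Fatou/lower semicontinuity of Sobolev norms under locally uniform convergence, giving (3).

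\textbf{Main obstacle: symmetry (point (4)).} Since \eqref{eq:hyp.sys} is formulated for the $(1,1)$-tensor $k_i{ }^j$, the condition $g_{j\ell}k_i{ }^\ell = g_{i\ell}k_j{ }^\ell$---which, via the third equation of \eqref{eq:hyp.sys}, is equivalent to $k_{ij}=-\tfrac{1}{2}\partial_t g_{ij}$---is not manifestly preserved; this is the main new point. The plan is to argue by constraint propagation. Define the antisymmetric quantity $\phi_{ij} := g_{j\ell}k_i{ }^\ell - g_{i\ell}k_j{ }^\ell$, the analogue of $\mathfrak D^{\bf [n]}_{ij}$ from Lemma~\ref{lem:D.est}. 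By the choice of initial data in Lemma~\ref{lem:locexist}, $\phi_{ij}|_{t=T_{\mathrm{aux},I}}$ and $\partial_t\phi_{ij}|_{t=T_{\mathrm{aux},I}}$ coincide with their $\bf [n]$-counterparts, and by Lemma~\ref{lem:D.est} (taking $n$ large so that $(n+1)\varepsilon > 2$) they tend to $0$ as $T_{\mathrm{aux},I}\to 0^+$. Differentiating $\phi$ in $t$ using \eqref{eq:hyp.sys} (paralleling the derivation of \eqref{eq:D.original} for $\mathfrak D^{\bf [n]}$) should yield a closed second-order evolution for $\phi_{ij}$ whose right-hand side is linear in $\phi$ and $\partial_t\phi$ with $O(t^{-1})$ coefficients uniformly controlled by the compactness step. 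A Gr\"onwall-type estimate on $[T_{\mathrm{aux},I}, T_{N_0,s,n}]$ then bounds $\phi_{ij}(t,\cdot)$ by its initial size (up to a power of $t/T_{\mathrm{aux},I}$); letting $I\to+\infty$ and using pointwise convergence $\phi_I\to\phi$ forces $\phi\equiv 0$ on $(0,T_{N_0,s,n}]$, which is (4).
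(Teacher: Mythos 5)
Your treatment of points (1)--(3) is essentially identical to the paper's: uniform bounds away from $t=0$ from Corollary~\ref{cor:bootstrap} plus the equations (the paper's Lemma~\ref{lem:AA.0}), Arzel\`a--Ascoli with a diagonal extraction, passage to the limit in \eqref{eq:hyp.sys}, and transfer of \eqref{gdkdhdest} to the limit (the paper's Lemma~\ref{lem:AA}). Your strategy for (4) is also the paper's: propagate the antisymmetric quantity $\phi_{ij}=g_{j\ell}k_i{}^{\ell}-g_{i\ell}k_j{}^{\ell}$, whose data at $t=T_{\mathrm{aux},I}$ coincide with $\mathfrak D^{\bf [n]}$ and hence are small by Lemma~\ref{lem:D.est}, and let $T_{\mathrm{aux},I}\to 0^+$ (cf.\ \eqref{eq:k.limit.anti.sym.initial} and Lemma~\ref{lem:k.is.symmetric}).

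However, the crucial step of (4) is asserted rather than proved, and as phrased it is slightly off. The derivation cannot ``parallel \eqref{eq:D.original}'': there $k^{\bf [n]}$ obeys a first-order transport equation, whereas here $k$ obeys the second equation of \eqref{eq:hyp.sys}, so what one obtains (using $\nabla g=0$ to pull $g_{jj'}$ through $\Delta_g$, and the symmetry of the Hessian to kill the $\nabla_i\nabla^j h$ contribution) is a genuine \emph{wave} equation $(\partial_t^2-\Delta_g)\phi=(\text{terms linear in }\phi,\ \partial_t\phi)$; since $\Delta_g\phi$ survives, a pointwise Gr\"onwall bound on $\phi_{ij}(t,x)$ is not available, and one must instead run a $t$-weighted $H^1\times L^2$ energy estimate as in the proof of Theorem~\ref{thm:bootstrap} (this is what the paper does). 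More seriously, the statement that the right-hand side is genuinely closed, i.e.\ that every contribution of $k\star k\star k$, $\partial_t k\star k$ and of $\partial_t^2(g_{jj'}k_i{}^{j'})$ reorganizes into multiples of the antisymmetric combination and its time derivative, is exactly the content of the paper's Lemma~\ref{lem:eqn.for.anti.k}, which takes a nontrivial algebraic rearrangement to verify; ``should yield'' leaves the pivotal structural fact unestablished, and without it the constraint-propagation argument has no equation to run on. Finally, a quantitative point you gloss over: the initial smallness $O(T_{\mathrm{aux},I}^{-1+(n+1)\varepsilon})$ must beat the energy growth factor of order $(t/T_{\mathrm{aux},I})^{C_0}$ with $C_0$ independent of $n$, so $n$ must be taken large relative to $C_0$, not merely large enough that $(n+1)\varepsilon>2$.
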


The proof of Proposition~\ref{prop:AA.main} will be given in Section~\ref{sec:pf.AA.main}.

\begin{proof}[Proof of Theorem~\ref{thm:main.reduced}]
The limiting solution given by Proposition~\ref{prop:AA.main} satisfies all the conclusions of Theorem~\ref{thm:main.reduced}. This thus concludes the proof of Theorem~\ref{thm:main.reduced}. \qedhere
\end{proof}

\subsection{Definition of the energies and an outline of the proof of Theorem~\ref{thm:bootstrap}}\label{sec:bootstrap.outline}
 {From now on until the end of Section~\ref{sec:conclusion.of.bootstrap.thm}, we focus on the proof of Theorem~\ref{thm:bootstrap}. \textbf{To lighten our notations, in these sections we write $g = g^{\mathrm{aux}}$, $a = a^{\mathrm{aux}}$, $h=h^{\mathrm{aux}}$ and $k = k^{\mathrm{aux}}$.}}

The crux of our proof of Theorem~\ref{thm:bootstrap} is to bound an appropriate energy, which we define now.

Define the energy
\begin{equation}\label{eq:energy.def}
\begin{split}
\mathcal E_s(t):= &\: \sum_{r=0}^{s-1} t^{2r+2} \|\rd_t k^{(d)}\|^2_{\dot H^r(\Sigma_t,g)} +  \sum_{r=0}^{s} t^{2r} \|k^{(d)}\|^2_{\dot H^r(\Sigma_t,g)} \\
&\: + \sum_{r=0}^{s+1} t^{2r} \|h^{(d)}\|^2_{\dot H^r(\Sigma_t,g)} + \sum_{r=0}^{s+1} t^{2r-2} (\| g^{(d)} \|^2_{\dot H^r(\Sigma_t,g)} + \| (g^{-1})^{(d)} \|^2_{\dot H^r(\Sigma_t,g)}).
\end{split}
\end{equation}

Define also the modified energy
\begin{equation}\label{eq:menergy.def}
\begin{split}
\widetilde{\mathcal E_s}(t):=  &\: \sum_{r=0}^{s-1} t^{2r+2} \|\rd_t \nab^{(r)} k^{(d)}\|^2_{L^2(\Sigma_t,g)} +  \sum_{r=0}^{s} t^{2r} \|k^{(d)}\|^2_{\dot H^r(\Sigma_t,g)} \\
&\: + \sum_{r=0}^{s} t^{2r} \|h^{(d)}\|^2_{\dot H^r(\Sigma_t,g)} + \sum_{r=0}^{s} t^{2r-2} (\| g^{(d)} \|^2_{\dot H^r(\Sigma_t,g)} + \| (g^{-1})^{(d)} \|^2_{\dot H^r(\Sigma_t,g)}) \\
&\: + t^{2(s+1)} \| \widetilde{ \nab^{(s+1)}_{ren} h^{(d)}} \|_{L^2(\Sigma_t,g)}^2 +t^{2s} \| \widetilde{ \nab^{(s+1)}_{ren} g^{(d)}} \|_{L^2(\Sigma_t,g)}^2 + t^{2s}\| \widetilde{ \nab^{(s+1)}_{ren} (g^{-1})^{(d)}} \|_{L^2(\Sigma_t,g)}^2,
\end{split}
\end{equation}
where $\widetilde{ \nab_{ren}^{(s+1)} h^{(d)}}$, $\widetilde{ \nab^{(s+1)}_{ren} g^{(d)}}$ and $\widetilde{ \nab^{(s+1)}_{ren} (g^{-1})^{(d)}}$ are the \emph{renormalized top-order quantities} defined by
\begin{equation}\label{def:hd.ren}
(\widetilde{ \nab^{(s+1)}_{ren} h^{(d)}})_{i_1\cdots i_{s-1}} := \Delta_g \nab^{(s-1)}_{i_1\cdots i_{s-1}} h^{(d)} -2(\kn + \kd)_i{ }^j  \rd_t \nab^{(s-1)}_{i_1\cdots i_{s-1}} (k^{(d)})_j{ }^i,
\end{equation}
\begin{equation}\label{def:gd.ren}
\begin{split}
&\: (\widetilde{ \nab^{(s+1)}_{ren} g^{(d)}})_{i_1\cdots i_{s-2} a i j} \\
:= &\:\Delta_g\nabla^{(s-1)}_{i_1\cdots i_{s-2} a} g_{ij}^{(d)}+2g_{\ell(j}\partial_t \nabla^{(s-1)}_{i_1\cdots i_{s-2} a} (k^{(d)})_{i)}{}^\ell \\
&\: + g^{(d)}_{bj}\partial_t\nabla^{(s-2)}_{i_1\cdots i_{s-2}}((g^{-1})^{be} g_{m(i|}\nabla_e(k^{(d)})_{a)}{}^m - \nabla_{(a} k_{i)}{ }^b - (g^{-1})^{be} g_{ {d}(a} \nab_{i)} k_e{}^d)\\
&\: + g^{(d)}_{ib}\partial_t\nabla^{(s-2)}_{i_1\cdots i_{s-2}}((g^{-1})^{be} g_{m(j|}\nabla_e(k^{(d)})_{a)}{}^m - \nabla_{(a} k_{j)}{ }^b - (g^{-1})^{be} g_{ {d}(a} \nab_{j)} k_e{}^d),
\end{split}
\end{equation}
and
\begin{equation}\label{def:g-1d.ren}
\begin{split}
&\: (\widetilde{ \nab^{(s+1)}_{ren} (g^{-1})^{(d)}})^{ij}_{i_1\cdots i_{s-2} a} \\
:= &\:\Delta_g\nabla^{(s-1)}_{i_1\cdots i_{s-2} a} ((g^{-1})^{(d)})^{ij} - 2 (g^{-1})^{\ell(j} \partial_t \nabla^{(s-1)}_{i_1\cdots i_{s-2} a} (k^{(d)})_{\ell}{}^{i)} \\
&\:+ ((g^{-1})^{(d)})^{bj} \partial_t\nabla^{(s-2)}_{i_1\cdots i_{s-2}} ((g^{-1})^{ie} g_{m(b|}\nab_e k_{a)}{ }^m - \nabla_{(a} k_{b)}{ }^i - (g^{-1})^{ie} g_{ {d}(a} \nab_{b)} k_e{}^d) \\
&\: + ((g^{-1})^{(d)})^{ib} \partial_t\nabla^{(s-2)}_{i_1\cdots i_{s-2}} ((g^{-1})^{je} g_{m(b|}\nab_e k_{a)}{ }^m - \nabla_{(a} k_{b)}{ }^j - (g^{-1})^{je} g_{ {d}(a} \nab_{b)} k_e{}^d).
\end{split}
\end{equation}

We remark explicitly that the modified energy and the energy differ by the following:
\begin{itemize}
\item The energy controls the $\nab^{(r)}$ derivative of $\rd_t k^{(d)}$ while the modified energy controls the $\rd_t$ derivative of $\nab^{(r)} k^{(d)}$.
\item The modified energy only controls $h^{(d)}$, $g^{(d)}$ and $(g^{-1})^{(d)}$ up to $s$ derivatives; at the top order it only controls the renormalized top-order quantities.
\end{itemize}

Since the proof will take several subsections, we give an outline of the strategy for proving Theorem~\ref{thm:bootstrap}.
\begin{itemize}
\item In \textbf{Section~\ref{sec:bootstrap.prelim}}, we begin with some preliminary estimates. 
\item In \textbf{Section~\ref{subsec:mainest.k}}, we carry out the energy estimate for $\kd$ using the wave equation it satisfies.
\item In \textbf{Section~\ref{subsec:mainest.other}}, we carry out the energy estimates for $h^{(d)}$, $g^{(d)}$ and $(g^{-1})^{(d)}$ using the transport equations they satisfy. Combining the results in Sections~\ref{subsec:mainest.k} and \ref{subsec:mainest.other}, we will obtain an estimate of the modified energy $\widetilde{\mathcal E_s}$ by the energy $\mathcal E_s$.
\item In \textbf{Section~\ref{sec:conclusion.of.bootstrap.thm}}, we complete the proof of Theorem~\ref{thm:bootstrap}. The main ingredient is to control $\mathcal E_s$ and $\widetilde{\mathcal E_s}$ using energy estimates, and the close everything using the Fuchsian ideas as illustrated in Section~\ref{sec:ideas.Fuchsian}.
\end{itemize}

\subsubsection{Remarks on the dependence of constants (and related conventions)}\label{sec:rmks.constants}

Before we proceed, we make some important remarks regarding the dependence of constants throughout the proof of Theorem~\ref{thm:bootstrap}.

\textbf{From now on fix $s\in \mathbb N$ with $s \geq 5$} as in Theorem~\ref{thm:bootstrap}.

We will use $C_0$ and $C_n$ as \underline{general positive constants}. They may change from line to line. \textbf{Both $C_0$ and $C_n$ may depend on the data $c_{ij}$, $p_i$ and also $s$, but importantly \emph{$C_n$ may depend on $n$ while $C_0$ is \underline{not} allowed to depend on $n$}.}

\textbf{We always assume without loss of generality that $T_{N_0,s,n} \leq 1$.}

\subsubsection{Remarks regarding $k$}

Another important remark regarding the proof of the bootstrap argument is that (despite the notation) we do \underline{not} know that $k$ is the second fundamental form of the constant-$t$ hypersurfaces. (In particular, we do not know that $g_{\ell [i} k_{j]}{ }^\ell = 0$.) In fact, it is only after extracting a limit in Proposition~\ref{prop:AA.main} that we know that the \underline{limiting} $k$ is an honest second fundamental form. 

\subsection{Preliminary estimates for the bootstrap argument}\label{sec:bootstrap.prelim}

In this subsection we work under the assumptions of Theorem~\ref{thm:bootstrap}. In particular, we assume the validity of the bootstrap assumptions \eqref{eq:BA1}--\eqref{eq:BA4}.

\subsubsection{Sobolev embedding and basic comparisons of norms}

\begin{lemma}\label{lem:stupid.comparison}
The following pointwise estimate holds for all scalar functions $f$ on $(0,T_{\mathrm{Boot}})$:
$$C_0^{-1} t|\nab f|_g \leq \sum_{i = 1}^3 |\rd_i f| \leq C_0 t^{-1} |\nab f|_g.$$
\end{lemma}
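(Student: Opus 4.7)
The plan is to note that for a scalar $f$ we have $|\nab f|_g^2 = (g^{-1})^{ij}\rd_i f\,\rd_j f$, so the claim reduces to two-sided eigenvalue bounds for the positive-definite symmetric matrix $(g^{-1})^{ij}(t,x)$. Concretely, if I can show that the smallest and largest eigenvalues $\mu_{\min}(t,x) \leq \mu_{\max}(t,x)$ of $(g^{-1})^{ij}$ satisfy $C_0^{-1} t^{-2p_1(x)} \leq \mu_{\min}$ and $\mu_{\max} \leq C_0\, t^{-2p_3(x)}$ pointwise on $(0,T_{\mathrm{Boot}})\times \mathbb T^3$, then combining with the elementary inequalities $\tfrac13(\sum_i |\rd_i f|)^2 \leq \sum_i |\rd_i f|^2 \leq (\sum_i |\rd_i f|)^2$ will yield the lemma after comparing the exponents $-2p_1, -2p_3$ to $\pm 2$.

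The eigenvalue bounds come directly from the gauge structure \eqref{metricansatz} together with the bootstrap assumption \eqref{eq:BA1}. Indeed, $g_{ij} = a_{ij}\, t^{2p_{\max\{i,j\}}}$ with $a_{ij}(t,x) = c_{ij}(x) + O(t^{\varepsilon/2})$, so $g$ is a small smooth perturbation of the leading order metric $g^{\bf [0]}$ analyzed in Lemma~\ref{lem:inverse.0}. The explicit formula \eqref{invg0} for $(g^{\bf [0]})^{-1}$ shows its diagonal entries are $\sim t^{-2p_i}/c_{ii}$ and that off-diagonal entries are controlled by the larger adjacent diagonal entry; a simple computation of the Rayleigh quotient (or continuity of eigenvalues under the $t^{\varepsilon/2}$ perturbation in \eqref{eq:BA1}, choosing $T_{N_0,s,n}$ small) then yields the quoted two-sided bounds $\mu_{\min}\gtrsim t^{-2p_1}$, $\mu_{\max}\lesssim t^{-2p_3}$ with constants depending only on $c_{ij}$ and $p_i$.

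To conclude, I use the range of exponents recorded in Remark~\ref{rem:varepsilon}: $p_1(x)\geq -\tfrac13$ (so $-2p_1\leq \tfrac23\leq 2$) and $p_3(x)\leq 1-\varepsilon<1$ (so $-2p_3\geq -2+2\varepsilon>-2$). Together with $t\in (0,1]$, these give the crude comparisons $t^{-2p_1(x)}\geq t^{2}$ and $t^{-2p_3(x)}\leq t^{-2}$. The upper eigenvalue bound then yields
\[
|\nab f|_g^2 \;\leq\; \mu_{\max}\sum_{i=1}^3 |\rd_i f|^2 \;\leq\; C_0\, t^{-2}\Bigl(\sum_{i=1}^3|\rd_i f|\Bigr)^2,
\]
i.e.\ $t|\nab f|_g\leq C_0\sum_i |\rd_i f|$. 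The lower eigenvalue bound yields
\[
|\nab f|_g^2 \;\geq\; \mu_{\min}\sum_{i=1}^3 |\rd_i f|^2 \;\geq\; \tfrac{1}{3}C_0^{-1}\,t^{2}\Bigl(\sum_{i=1}^3|\rd_i f|\Bigr)^2,
\]
i.e.\ $\sum_i |\rd_i f|\leq C_0\, t^{-1}|\nab f|_g$. Taking square roots and relabelling $C_0$ completes the proof.

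There is essentially no hard step: this lemma is a bookkeeping statement about anisotropic norm comparison. The only point that requires a moment of care is verifying the quantitative eigenvalue bounds for the perturbed inverse metric, but this is routine once one knows the structure of $(g^{\bf [0]})^{-1}$ from Lemma~\ref{lem:inverse.0} and uses \eqref{eq:BA1} with $T_{N_0,s,n}$ chosen small enough that the perturbation is harmless.
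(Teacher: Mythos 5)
Your proof is correct and follows essentially the same route as the paper: reduce to $|\nab f|_g^2=(g^{-1})^{ij}\rd_i f\,\rd_j f$ and use the crude two-sided bound $C_0^{-1}t^2\lesssim g^{-1}\lesssim C_0 t^{-2}$ (as a quadratic form), obtained from the bootstrap assumption \eqref{eq:BA1} and the computations behind \eqref{invg0}. Your version just spells out the eigenvalue bounds $t^{-2p_1}\lesssim \mu_{\min}\leq\mu_{\max}\lesssim t^{-2p_3}$ and the exponent range from Remark~\ref{rem:varepsilon} more explicitly than the paper does, which is fine.
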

\begin{proof}
By definition, $|\nab f|_g^2 = (g^{-1})^{ij} \rd_i f \rd_j f $. To get the desired estimates, we just use a very wasteful estimate that $C_0^{-1} t^2\leq \min_{i,j} | (g^{-1})^{ij}|\leq \max_{i,j} | (g^{-1})^{ij}|\leq C_0 t^{-2}$ (which follows directly from \eqref{eq:BA1} and computations as in \eqref{invg0}). \qedhere
\end{proof}

\begin{lemma}[Sobolev embedding]\label{lem:Sobolev}
The following holds for every  $(m,l)$ $\Sigma$-tangent tensor $\mathcal T$:
\begin{equation}\label{eq:Sobolev.1}
\|\mathcal T\|_{L^\i(\Sigma_t,g)} \leq C_0 t^{-\f 54} \|\mathcal{T}\|_{W^{1,4}(\Sigma_t,g)},\quad \|\mathcal T\|_{L^4(\Sigma_t,g)} \leq C_0 t^{-\f 54} \|\mathcal{T}\|_{W^{1,2}(\Sigma_t,g)}.
\end{equation}

In particular, these inequalities imply
\begin{equation}\label{eq:Sobolev.2}
\|\mathcal T\|_{L^\i(\Sigma_t,g)} \leq C_0 t^{-\f 52} \|\mathcal{T}\|_{H^2(\Sigma_t,g)},
\end{equation}
and \begin{equation}\label{eq:Sobolev.used}
\begin{split}
\|\nab^{(r)} \mathcal T\|_{(L^2 \cap t^{-s-\f 52+\ve} L^\i)(\Sigma_t,g)} 
\leq &\: C_0 \sum_{r' = r}^{r+ 2} t^{r'-r} \|\mathcal T\|_{\dot{H}^{r'}(\Sigma_t,g)},
\end{split}
\end{equation}
\end{lemma}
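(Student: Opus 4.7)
The strategy is to reduce each geometric Sobolev inequality to the standard flat Sobolev embedding on $(\mathbb{T}^3,\ud x)$, and to carefully track the $t$-powers that appear when converting between geometric and flat norms. Two conversion ingredients underlie everything:
(i) a volume comparison $C_0^{-1} t \leq \sqrt{\det g}\leq C_0 t$ on $(0, T_{\mathrm{Boot}})\times\mathbb{T}^3$, which follows by expanding $\det g$ directly from the metric ansatz \eqref{metricansatz} together with \eqref{eq:BA1}: the identity-permutation contribution equals $a_{11}a_{22}a_{33}t^{2(p_1+p_2+p_3)}=a_{11}a_{22}a_{33}t^2$ by assumption $(2)$ of Theorem~\ref{mainthm}, while every non-identity permutation contributes a strictly higher power of $t$ (this is the reason for the $p_{\max\{i,j\}}$ choice and may be verified case by case, cf.\ \eqref{detg0}); and
(ii) Lemma~\ref{lem:stupid.comparison}, which yields the pointwise bound $|\nab^{\mathrm{flat}} f|\leq \sum_i |\partial_i f|\leq C_0 t^{-1}|\nab f|_g$ for any scalar $f$.

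With these in hand, I would first prove \eqref{eq:Sobolev.1} for a scalar function $f$. Standard Sobolev embedding on the flat torus gives $\|f\|_{L^\infty}\leq C_0\|f\|_{W^{1,4}(\mathbb{T}^3,\ud x)}$ and $\|f\|_{L^4(\mathbb{T}^3,\ud x)}\leq C_0 \|f\|_{W^{1,2}(\mathbb{T}^3,\ud x)}$. The two conversion estimates above give $\|f\|_{L^p(\mathbb{T}^3,\ud x)}\leq C_0 t^{-1/p}\|f\|_{L^p(\Sigma_t,g)}$ and $\|\nab^{\mathrm{flat}} f\|_{L^p(\mathbb{T}^3,\ud x)}\leq C_0 t^{-1-1/p}\|\nab f\|_{L^p(\Sigma_t,g)}$; computing the exponents for $p=4$ and $p=2$ respectively produces the common prefactor $t^{-5/4}$ claimed in \eqref{eq:Sobolev.1}. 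The extension from scalars to tensors is then achieved by applying the scalar inequalities to the scalar function $|\mathcal{T}|_g$, using the pointwise Kato inequality $|\nab |\mathcal{T}|_g|_g \leq |\nab \mathcal{T}|_g$, which in turn follows a.e.\ from the metric compatibility identity $\nab_i \langle\mathcal{T},\mathcal{T}\rangle_g = 2\langle\nab_i\mathcal{T},\mathcal{T}\rangle_g$ and Cauchy–Schwarz (with a standard $|\mathcal{T}|_g^2+\epsilon$ regularisation to handle the zero set of $|\mathcal{T}|_g$).

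Inequality \eqref{eq:Sobolev.2} is then an immediate iteration of \eqref{eq:Sobolev.1}: the $W^{1,4}\hookrightarrow L^\infty$ step gives one factor of $t^{-5/4}$, and bounding the $L^4$ norms of $\mathcal{T}$ and $\nab\mathcal{T}$ via the $W^{1,2}\hookrightarrow L^4$ step gives another factor of $t^{-5/4}$, which composes into the prefactor $t^{-5/2}$ in front of $\|\mathcal{T}\|_{H^2}$. For the composite estimate \eqref{eq:Sobolev.used}, decompose the intersection norm as
\[
\|\nab^{(r)}\mathcal{T}\|_{L^2\cap t^{-s-5/2+\ve}L^\infty} = \|\nab^{(r)}\mathcal{T}\|_{L^2} + t^{s+5/2-\ve}\|\nab^{(r)}\mathcal{T}\|_{L^\infty}.
\]
The first summand equals $\|\mathcal{T}\|_{\dot H^r}$ by definition, which is the $r'=r$ term on the right-hand side of \eqref{eq:Sobolev.used}. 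For the second summand, apply \eqref{eq:Sobolev.2} to $\nab^{(r)}\mathcal{T}$ to obtain $\|\nab^{(r)}\mathcal{T}\|_{L^\infty}\leq C_0 t^{-5/2}\sum_{r'=r}^{r+2}\|\mathcal{T}\|_{\dot H^{r'}}$; multiplying by $t^{s+5/2-\ve}$ yields a prefactor $t^{s-\ve}$, and since $s\geq 5$ and $t\leq 1$ one has $t^{s-\ve}\leq t^{r'-r}$ for each $r'\in\{r,r+1,r+2\}$, absorbing the prefactor into the desired weights.

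Conceptually there is no real obstacle here: the only subtle point is careful bookkeeping of the anisotropic $t$-powers in the flat-to-geometric conversion. The need for this detour through $(\mathbb{T}^3,\ud x)$—rather than invoking Sobolev embedding intrinsically on $(\Sigma_t, g)$—comes precisely from the fact that the isoperimetric constant of $(\Sigma_t, g)$ degenerates as $t\to 0^+$, so the conversion factors must be computed by hand.
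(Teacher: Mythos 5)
Your proposal is correct and follows essentially the same route as the paper: reduce to the flat Sobolev embedding on $(\mathbb T^3,\ud x)$ via Lemma~\ref{lem:stupid.comparison} and the volume comparison coming from \eqref{eq:BA1}, pass from scalars to tensors through the regularised scalar $\sqrt{|\mathcal T|_g^2+\epsilon^2}$ (your Kato-inequality step is exactly the paper's $f_\alpha$ trick), and then obtain \eqref{eq:Sobolev.2} by composition and \eqref{eq:Sobolev.used} from the definition of the intersection norm together with $s-\ve>2$. The $t$-power bookkeeping matches the paper's, so no changes are needed.
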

\begin{proof}
We first prove the inequalities \eqref{eq:Sobolev.1} for scalar functions $f$. Using the Sobolev embedding for $\mathbb T^3$ in \emph{coordinates}, it follows that
\begin{equation}\label{eq:Sobolev.scalar.1}
\|f\|_{L^\i(\Sigma_t,g)} \leq C_0 \sum_{|\alp|\leq 1} (\int_{\Sigma_t} |\rd_x^\alp f|^4 \, \ud x)^{\f 14} \leq C_0  (\int_{\Sigma_t} (|f|^4  + t^{-4}|\nab f|_g^4) \, \ud x)^{\f 14} \leq C_0 t^{-\f 54} \| f \|_{W^{1,4}(\Sigma_t,g)},
\end{equation}
where in the penultimate inequality we used Lemma~\ref{lem:stupid.comparison} and in the last inequality we have used $C_0^{-1} t^{-1} \mathrm{vol}_{\Sigma} \leq  \ud x \leq C_0 t^{-1} \mathrm{vol}_{\Sigma} $ (which follows from the bootstrap assumption \eqref{eq:BA1}).

For the second inequality in \eqref{eq:Sobolev.1} for a scalar function $f$, we proceed similarly to obtain
\begin{equation}\label{eq:Sobolev.scalar.2}
\|f\|_{L^4(\Sigma_t,g)} \leq C_0 t^{\f 14}\sum_{|\alp|\leq 1} (\int_{\Sigma_t} |\rd_x^\alp f|^2 \, \ud x)^{\f 12} \leq C_0  t^{\f 14} (\int_{\Sigma_t} (|f|^2+ t^{-2}|\nab f|^2) \, \ud x)^{\f 12} \leq C_0 t^{-\f 54} \| f \|_{H^1(\Sigma_t)}.
\end{equation}

Now given a general $(m,l)$ tensor $\mathcal T$, using \eqref{eq:Sobolev.scalar.1} and \eqref{eq:Sobolev.scalar.2} with $f_\alp = \sqrt{|\mathcal T|^2_g + \alp^2}$ ($\alp >0$) and taking $\alp\to 0$, we obtain the desired inequalities in \eqref{eq:Sobolev.1}. 

Next, it is easy to see that \eqref{eq:Sobolev.1} implies \eqref{eq:Sobolev.2}. 

Finally, by \eqref{eq:Sobolev.2}, and the fact $s-\ve >2$,
\begin{equation*}
\begin{split}
\|\nab^{(r)} \mathcal T\|_{(L^2 \cap t^{-s-\f 52+\ve} L^\i)(\Sigma_t,g)} \leq &\: C_0 (\| \nab^{(r)}\mathcal T\|_{L^2(\Sigma_t,g)} + t^{s+\f 52-\ve} t^{-\f 52} \|\nab^{(r)} \mathcal T\|_{H^2(\Sigma_t,g)} )\\
\leq &\: C_0 \sum_{r' = r}^{r+ 2} t^{r'-r} \|\mathcal T\|_{\dot{H}^{r'}(\Sigma_t,g)},
\end{split}
\end{equation*}
which is \eqref{eq:Sobolev.used}.
\qedhere
\end{proof}

We will also need to compare norms with respect to $g$ and with respect to the trivial metric $\sum_{i=1}^3 (dx^i)^2$.

\begin{lemma}\label{lem:coord.v.geo.norm}
Given any rank $(l, m)$ $\Sigma$-tangent tensor $\mathcal T$,
\begin{equation}\label{eq:coord.v.geo.norm.1}
\begin{split}
&\: |  \underbrace{\nab^{\bf [n]} \cdots \nab^{\bf [n]}}_{\mbox{$k$ times}} \mathcal T  |_{g^{\bf [n]}}^2 \\
\leq &\: C_n t^{-k(2-\varepsilon)} \sum_{r=0}^k \sum_{i_1,\ldots,i_r} \sum_{b_1,\ldots,b_l}\sum_{j_1,\ldots,j_m}  t^{- 2 p_{b_1}} \cdots t^{-2 p_{b_l}}\cdot  t^{2 p_{j_1}} \cdots t^{2 p_{j_m}} |\rd_{i_1}\cdots \rd_{i_r}\mathcal{T}_{b_1,\ldots b_l}^{j_1\ldots j_m}|^2.
\end{split}
\end{equation}
\end{lemma}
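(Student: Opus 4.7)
The plan is to proceed by induction on $k$, combined with an explicit expansion of each covariant derivative as $\nab^{\bf [n]}_a = \rd_a + \Gamma^{\bf [n]}$. For the base case $k=0$, I would unpack the definition of $|\cdot|_{g^{\bf [n]}}^2$, which contracts $\mathcal T$ against itself using $m$ copies of $g^{\bf [n]}$ and $l$ copies of $(g^{\bf [n]})^{-1}$. Using the metric ansatz $g^{\bf [n]}_{ij} = a^{\bf [n]}_{ij} t^{2p_{\max\{i,j\}}}$ together with Lemma~\ref{lem:inverse.0} to get $((g^{\bf [n]})^{-1})^{ij} = O(t^{-2p_{\min\{i,j\}}})$, and applying Cauchy--Schwarz to the positive definite bilinear forms $g^{\bf [n]}$ and $(g^{\bf [n]})^{-1}$ (which yields $\sum_{j,c} g^{\bf [n]}_{jc} u^j \bar u^c \leq C_n \sum_j t^{2p_j} |u^j|^2$ and the analogous inequality for $(g^{\bf [n]})^{-1}$), one obtains the base case after iterating the contraction one index pair at a time.

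For the inductive step, the key auxiliary ingredient is the pointwise bound $|\rd_x^\alpha (\Gamma^{\bf [n]})^c_{ab}| \leq C_{n,\alpha}|\log t|^{1+|\alpha|}$ on the Christoffel symbols of $g^{\bf [n]}$. This follows from the formula $(\Gamma^{\bf [n]})^c_{ab} = \tfrac 12 \sum_\ell ((g^{\bf [n]})^{-1})^{c\ell}(\rd_a g^{\bf [n]}_{b\ell} + \rd_b g^{\bf [n]}_{a\ell} - \rd_\ell g^{\bf [n]}_{ab})$, the estimate $|\rd_a g^{\bf [n]}_{b\ell}| \leq C_n t^{2p_{\max\{b,\ell\}}}|\log t|$ coming from differentiating the ansatz (where a $\log t$ is produced each time a derivative hits $p_i(x)$), and the crucial observation $p_{\min\{c,\ell\}} \leq p_\ell \leq p_{\max\{b,\ell\}}$, which is immediate from $p_1 < p_2 < p_3$ together with $\min\{c,\ell\} \leq \ell \leq \max\{b,\ell\}$; this makes the exponent of $t$ non-negative and hence $O(1)$.

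With this in hand, I would expand $(\nab^{\bf [n]})^k \mathcal T$ combinatorially as a sum of terms of the form $(\rd^{\alpha_1} \Gamma^{\bf [n]}) \cdots (\rd^{\alpha_j} \Gamma^{\bf [n]})(\rd^r \mathcal T)$ with $j + r + \alpha_1 + \cdots + \alpha_j = k$, where $r$ ranges over $\{0, 1, \ldots, k\}$, and apply the $k=0$ base case to each such term viewed as an $(m, l+k)$-tensor on $\Sigma$. Each of the $k$ additional lower indices contributes a $(g^{\bf [n]})^{-1}$ factor in the norm, bounded at worst by $O(t^{-2p_3}) \leq O(t^{-(2-2\varepsilon)})$ via condition~(3) of Theorem~\ref{mainthm} and the definition of $\varepsilon$; taking $k$ such contractions produces $t^{-k(2-2\varepsilon)}$. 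The products of $\Gamma^{\bf [n]}$ contribute only logarithms, which are absorbed into the slack between $t^{-k(2-2\varepsilon)}$ and $t^{-k(2-\varepsilon)}$ once $t$ is small enough.

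The main bookkeeping obstacle will be verifying that, after the Christoffel factors shuffle indices and are contracted against $g^{\bf [n]}$ and $(g^{\bf [n]})^{-1}$ in the norm, the remaining weights attached to $\rd_{i_1}\cdots\rd_{i_r}\mathcal T$ match exactly those displayed on the RHS — namely, $t^{\mp 2p}$ factors only on the original indices $b_1, \ldots, b_l$ and $j_1, \ldots, j_m$, with no weights on the new indices $i_1, \ldots, i_r$. This reduces to systematic case analysis using the same $p_{\min} \leq p_{\max}$ mechanism that controls the Christoffel symbols, applied to each contracted pair; the verification is combinatorial rather than analytic, and the $\varepsilon$-slack introduced by the $t^{-k(2-\varepsilon)}$ prefactor (as opposed to the tighter $t^{-k(2-2\varepsilon)}$) provides room to absorb any residual logarithmic losses.
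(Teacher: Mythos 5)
Your overall architecture --- the $k=0$ case via weighted Cauchy--Schwarz, the expansion of the iterated covariant derivative into coordinate derivatives plus Christoffel corrections as in \eqref{eq:coord.v.geo.norm.2}, and absorbing logarithms into the slack between $t^{-k(2-2\ve)}$ and $t^{-k(2-\ve)}$ --- is the same as the paper's. But your key auxiliary claim is false: the coordinate components of $\Gamma^{\bf [n]}$ are \emph{not} $O(|\log t|)$. The monotonicity $p_{\min\{c,\ell\}}\leq p_\ell\leq p_{\max\{b,\ell\}}$ only controls the two terms $\rd_a g^{\bf [n]}_{b\ell}$ and $\rd_b g^{\bf [n]}_{a\ell}$ in \eqref{Chformula}, where the contracted index $\ell$ appears as an index of $g^{\bf [n]}$; it does nothing for the third term $-((g^{\bf [n]})^{-1})^{c\ell}\rd_\ell g^{\bf [n]}_{ab}$, where $\ell$ is the differentiation index. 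For instance $(\Gamma^{\bf [n]})^3_{11}$ contains $((g^{\bf [n]})^{-1})^{33}\rd_3 g^{\bf [n]}_{11}=O(t^{2p_1-2p_3}|\log t|)$, and by \eqref{pi.range} the exponent satisfies $2p_1-2p_3\leq -\tfrac43-2\ve<0$, so this component blows up polynomially. Consequently your final accounting --- ``the weights on the new indices give $t^{-k(2-2\ve)}$ and the $\Gamma$ products contribute only logarithms'' --- does not close as written, and the ``systematic case analysis'' you defer to cannot be run with the $p_{\min}\leq p_{\max}$ mechanism alone.

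The repair is exactly what the paper does: estimate the Christoffel symbols and their derivatives in the \emph{geometric} norm, $|\partial^{(r)}\Gamma^{\bf [n]}|_{g^{\bf [n]}}\leq C_n t^{-(1-\ve/2)(r+1)}$, i.e.\ \eqref{eq:coord.v.geo.norm.goal}. There the anisotropic weights ($t^{-2p}$ on lower indices, $t^{+2p}$ on the upper index) compensate the bad components: for the example above, $t^{-4p_1+2p_3}\cdot\bigl(t^{2p_1-2p_3}\bigr)^2=t^{-2p_3}\leq t^{-(2-2\ve)}$. The mechanism is that every factor of $g^{\bf [n]}$ in the norm can be paired with a factor of $(g^{\bf [n]})^{-1}$ sharing an index (each such pair is $O(|\log t|)$ after derivatives), and each leftover unpaired inverse-metric factor costs at worst $t^{-(2-2\ve)}$, one per slot. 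With this weighted bound, repeated Cauchy--Schwarz with respect to $g^{\bf [n]}$ applied to the terms of \eqref{eq:coord.v.geo.norm.2} yields \eqref{eq:coord.v.geo.norm.1}: the Cauchy--Schwarz in the contracted slot both supplies the $t^{-2p}$ weight on the shuffled lower index of $\rd^{(r)}\mathcal T$ appearing on the right-hand side and hands the complementary $t^{+2p}$ weight to the Christoffel factor, completing its geometric norm --- precisely the step your coordinate-wise bound cannot reproduce.
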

\begin{proof}
Using the form of the metric \eqref{metricansatz}, the bound \eqref{eq:main.parametrix.a.bd} on $a^{[\bf n]}$, and the fact $p_1 < p_2 < p_3$,
\begin{align}\label{eq:T.in.geo.norm}
|\mathcal T  |_{g^{\bf [n]}}^2 \notag
= &\: ((g^{[\bf n]})^{-1})^{i_1b_1}\ldots ((g^{[\bf n]})^{-1})^{i_lb_l}g^{[\bf n]}_{j_1c_1}\ldots g^{[\bf n]}_{j_mc_m}\mathcal{T}^{j_i\ldots j_m}_{b_1\ldots b_l}\mathcal{T}^{c_i\ldots c_m}_{i_1\ldots i_l} \\
\notag\leq &\: C_0  \sum_{b_1,\ldots,b_l}\sum_{j_1,\ldots,j_m} \sum_{i_1,\ldots,i_l}  \sum_{c_1,\ldots,c_m} t^{-2p_{\min\{i_1, b_1 \}}}\cdots  t^{-2p_{\min\{i_l, b_l \}}} \cdot \\
\notag&\cdot t^{2p_{\max\{j_1, c_1\}}} \cdots t^{2p_{\max\{j_m, c_m\}}} \mathcal{T}^{j_i\ldots j_m}_{b_1\ldots b_l}\mathcal{T}^{c_i\ldots c_m}_{i_1\ldots i_l} \\
\leq &\:  C_0 (\sum_{b_1,\ldots,b_l}\sum_{j_1,\ldots,j_m}  t^{-p_{\min\{i_1, b_1 \}}}\cdots  t^{-p_{\min\{i_l, b_l \}}} \cdot t^{p_{\max\{j_1, c_1\}}} \cdots t^{p_{\max\{j_m, c_m\}}}  |\mathcal{T}^{j_i\ldots j_m}_{b_1\ldots b_l}|) \\
\notag&\: \times (\sum_{i_1,\ldots,i_l}  \sum_{c_1,\ldots,c_m} t^{-p_{\min\{i_1, b_1 \}}}\cdots  t^{-p_{\min\{i_l, b_l \}}} \cdot t^{p_{\max\{j_1, c_1\}}} \cdots t^{p_{\max\{j_m, c_m\}}}  |\mathcal{T}^{c_i\ldots c_m}_{i_1\ldots i_l}|) \\
\notag\leq &\: C_0 \sum_{b_1,\ldots,b_l}\sum_{j_1,\ldots,j_m}  t^{-2p_{b_1}} \cdots  t^{- 2p_{b_l} } \cdot t^{2p_{j_1} } \cdots t^{2p_{j_m}}  |\mathcal{T}^{j_i\ldots j_m}_{b_1\ldots b_l}|^2.
\end{align}
This proves \eqref{eq:coord.v.geo.norm.1} when there are no derivatives (i.e.~$k=0$).

Define the flat connection $\nab^{(flat)}$ to be Levi--Civita connection associated to $\sum_{i=1}^3 (dx^i)^2$, i.e.
$$\nab^{(flat)}_{i_1} \cdots \nab^{(flat)}_{i_r} \mathcal{T}^{j_i\ldots j_m}_{b_1\ldots b_l} = \rd_{i_1} \cdots \rd_{i_r} \mathcal{T}^{j_i\ldots j_m}_{b_1\ldots b_l}.$$ 
Then, since $p_j <1 -\ve < 1-\f \ve 2$, \eqref{eq:T.in.geo.norm} gives
\begin{equation}\label{eq:T.in.geo.norm.w.derivatives}
\begin{split}
&\: |\nab^{(flat)}_{i_1} \cdots \nab^{(flat)}_{i_r} \mathcal T |_{g^{\bf [n]}}^2 \\
\leq &\: C_0 \sum_{i_1,\dots,i_r} \sum_{b_1,\ldots,b_l}\sum_{j_1,\ldots,j_m} t^{-2p_{i_1}} \cdots t^{-2p_{i_r}} \cdot t^{-2p_{b_1}} \cdots  t^{- 2p_{b_l} } \cdot t^{2p_{j_1} } \cdots t^{2p_{j_m}}  |\nab^{(flat)}_{i_1} \cdots \nab^{(flat)}_{i_r} \mathcal{T}^{j_i\ldots j_m}_{b_1\ldots b_l}|^2 \\
\leq &\: C_0 t^{-(2-\ve)r} \sum_{i_1,\dots,i_r} \sum_{b_1,\ldots,b_l}\sum_{j_1,\ldots,j_m} t^{-2p_{b_1}} \cdots  t^{- 2p_{b_l} } \cdot t^{2p_{j_1} } \cdots t^{2p_{j_m}}  |\rd_{i_1} \cdots \rd_{i_r} \mathcal{T}^{j_i\ldots j_m}_{b_1\ldots b_l}|^2.
\end{split}
\end{equation}

Now compute
\begin{equation}\label{eq:coord.v.geo.norm.2}
\begin{split}
&\: \nab^{\bf [n]}_{i_1} \cdots \nab^{\bf [n]}_{i_k} \mathcal T^{j_i\ldots j_m}_{b_1\ldots b_l} \\
= &\: \rd_{i_1}\cdots \rd_{i_k} \mathcal T^{j_i\ldots j_m}_{b_1\ldots b_l}  + \sum_{s=0}^{k-1} \sum_{e=1}^l \rd_{i_1} \cdots \rd_{i_s} [(\Gamma^{\bf [n]})_{i_{s+1} b_e}^{f} \rd_{i_{s+2}} \cdots \rd_{i_k}   \mathcal T^{j_i\ldots j_m}_{b_1\ldots b_{e-1} f b_{e+1}\ldots b_l} ] \\
&\: + \sum_{s=0}^{k-1} \sum_{e=s+2}^r \rd_{i_1} \cdots \rd_{i_s} [(\Gamma^{\bf [n]})_{i_{s+1} i_e}^{f} \rd_{i_{s+2}}\cdots \rd_{i_{e-1}} \rd_{i_f} \rd_{i_{e+1}} \cdots \rd_{i_r}   \mathcal T^{j_i\ldots j_m}_{b_1\ldots b_l} ] \\
&\: + \sum_{s=0}^{k-1} \sum_{e=1}^m \rd_{i_1} \cdots \rd_{i_s} [(\Gamma^{\bf [n]})_{i_{s+1} f}^{j_e} \rd_{i_{s+2}} \cdots \rd_{i_r}   \mathcal T^{j_i\ldots j_{e-1} f j_{e+1} j_m}_{b_1\ldots b_l} ]  + \cdots +  \underbrace{(\Gamma^{\bf [n]})\cdots (\Gamma^{\bf [n]})}_{\mbox{$k-1$ factors}} \rd \mathcal T\\
&\: +  \underbrace{(\Gamma^{\bf [n]})\cdots (\Gamma^{\bf [n]})}_{\mbox{$k-2$ factors}} (\rd \Gamma^{\bf [n]}) \mathcal T + \underbrace{(\Gamma^{\bf [n]})\cdots (\Gamma^{\bf [n]})}_{\mbox{$k$ factors}} \mathcal T,
\end{split}
\end{equation}
where we have suppressed the indices in terms where the exact contractions do not matter.

Our goal is to show that in the $|\cdot |_{g^{\bf [n]}}$ norm, each term in \eqref{eq:coord.v.geo.norm.2} can be bounded above by the RHS of \eqref{eq:coord.v.geo.norm.1}. By repeated application of the Cauchy--Schwarz inequality (with respect to $g^{\bf [n]}$), and using \eqref{eq:T.in.geo.norm.w.derivatives}, it suffices to prove 
\begin{equation}\label{eq:coord.v.geo.norm.goal}
|\partial_{i_1}\cdots \partial_{i_r} (\Gamma^{\bf [n]})_{jb}^\ell |_{g^{\bf [n]}} \leq C_n t^{-(1-\f{\ve}{2})(r+1)},
\end{equation}
which is the goal for the remainder of the proof.

We first make the easy observation that $|\rd_{i_1} \cdots \rd_{i_d} g^{\bf [n]}_{ab}| \leq C_n |\log t|^d t^{2\max\{p_a,\,p_b\}}$ and $|\rd_{i_1} \cdots \rd_{i_d} ((g^{\bf [n]})^{-1})^{bc}| \leq C_n |\log t|^d t^{-2\min\{p_b,\,p_c\}}$. In particular, 
\begin{equation*}
\begin{split}
|\rd_{i_1} \cdots \rd_{i_{d_1}} g^{\bf [n]}_{ab} \rd_{j_1} \cdots \rd_{j_{d_2}} ((g^{\bf [n]})^{-1})^{bc}| \leq  &\: C_n |\log t|^{d_1} t^{2\max\{p_a,\,p_b\}}|\log t|^{d_2} t^{-2\min\{p_b,\,p_c\}} \\
\leq &\: C_n |\log t|^{d_1+d_2} t^{2p_b} t^{-2p_b} = C_n |\log t|^{d_1+d_2}.
\end{split}
\end{equation*}

We now compute
\begin{equation*}
\begin{split}
&\: |\partial_{i_1}\cdots \partial_{i_r} (\Gamma^{\bf [n]})_{jb}^\ell |_{g^{\bf [n]}}^2 \\
= &\: \f 14 ((g^{\bf [n]})^{-1})^{i_1 i_1'}\cdots ((g^{\bf [n]})^{-1})^{i_r i_r'} ((g^{\bf [n]})^{-1})^{j j'} ((g^{\bf [n]})^{-1})^{b b'} g^{\bf [n]}_{\ell\ell'}\\
&\: \qquad \times \partial_{i_1'}\cdots \partial_{i_r'} [(g^{\bf [n]})^{-1})^{\ell' a'}(\rd_{j'} g^{\bf [n]}_{a'b'}+ \rd_{b'} g^{\bf [n]}_{a'j'} - \rd_{a'} g^{\bf [n]}_{b'j'})] \partial_{i_1}\cdots \partial_{i_r} [(g^{\bf [n]})^{-1})^{\ell a}(\rd_j g^{\bf [n]}_{ab}+ \rd_b g^{\bf [n]}_{aj} - \rd_a g^{\bf [n]}_{bj})].
\end{split}
\end{equation*}
Consider the example expression
$$ |((g^{\bf [n]})^{-1})^{b b'} g^{\bf [n]}_{\ell\ell'} [\rd \cdots \rd ((g^{\bf [n]})^{-1})^{\ell' a'}] [\rd \cdots \rd g^{\bf [n]}_{a'b'}] [\rd \cdots \rd (g^{\bf [n]})^{-1})^{\ell a}] [\rd \cdots \rd g^{\bf [n]}_{ab}]| .$$
We can pair up $g^{\bf [n]}$ and $(g^{\bf [n]})^{-1}$ with a common index and conclude that this expression is $\leq C_n |\log t|^{k+1}$.
All other terms are similar. Hence, we obtain
\begin{equation*}
\begin{split}
&\: |\partial_{i_1}\cdots \partial_{i_r} (\Gamma^{\bf [n]})_{jb}^\ell |_{g^{\bf [n]}}^2 \\
\leq  &\: C_n  |\log t|^{r+1} \max_{i_1,\,i_1',\dots, j,\,j'} |((g^{\bf [n]})^{-1})^{i_1 i_1'}\cdots ((g^{\bf [n]})^{-1})^{i_r i_r'} ((g^{\bf [n]})^{-1})^{j j'} | \\
\leq &\: C_n  |\log t|^{r+1} t^{-(r+1)(2-2\ve)} \leq C_n t^{-(2-\ve)(r+1)},
\end{split}
\end{equation*}
which is exactly \eqref{eq:coord.v.geo.norm.goal}.
\end{proof}

\begin{lemma}\label{lem:est.nabd}
For $r\leq s-2$,
$$\|\nab^{(d)}\|_{W^{r,\i}(\Sigma_t,g)} \leq C_0 ( \|g^{(d)}\|_{W^{r+1,\i}(\Sigma_t,g)} + \|(g^{-1})^{(d)}\|_{W^{r,\i}(\Sigma_t,g)}).$$
For $r\leq s$,
$$\|\nab^{(d)}\|_{H^r(\Sigma_t,g)} \leq C_0 ( \|g^{(d)}\|_{H^{r+1}(\Sigma_t,g)} + \|(g^{-1})^{(d)}\|_{H^{r}(\Sigma_t,g)}).$$
\end{lemma}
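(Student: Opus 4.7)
The plan is to first derive a clean explicit formula for $\nab^{(d)}$ as a $(1,2)$-tensor built linearly from one $\nab$-derivative of $g^{(d)}$ contracted against $(g^{\bf [n]})^{-1}$, and then to differentiate this identity $r$ times. Since $\nab g^{-1} = 0$, each $\nab$-derivative falling on the background inverse metric produces a derivative of $(g^{-1})^{(d)}$ instead, which accounts for the appearance of $(g^{-1})^{(d)}$ on the right-hand side of the lemma.

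\emph{Step 1 (base identity).} Using $\nab g=0$ and $\nab^{\bf[n]} g^{\bf [n]}=0$, a one-line computation gives
\begin{equation*}
\nab_a g^{(d)}_{bc} = -\nab_a g^{\bf [n]}_{bc} = (\nab^{\bf [n]}-\nab)_a g^{\bf [n]}_{bc} = (\nab^{(d)})_{ab}{}^e\,g^{\bf [n]}_{ec} + (\nab^{(d)})_{ac}{}^e\,g^{\bf [n]}_{be}.
\end{equation*}
Cyclically permuting $a,b,c$ and using that $\nab^{(d)}$ is symmetric in its lower indices (since both $\Gamma$ and $\Gamma^{\bf [n]}$ are Levi--Civita) then yields
\begin{equation*}
(\nab^{(d)})_{ab}{}^c = \tfrac{1}{2}(g^{\bf [n]})^{cf}\bigl(\nab_a g^{(d)}_{bf} + \nab_b g^{(d)}_{af} - \nab_f g^{(d)}_{ab}\bigr).
\end{equation*}

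\emph{Step 2 (differentiating).} Since $\nab g^{-1}=0$, we have $\nab^{(j)}(g^{\bf [n]})^{-1} = -\nab^{(j)}(g^{-1})^{(d)}$ for every $j\geq 1$. Applying the Leibniz rule to $\nab^{(r)}$ of the identity in Step~1 therefore gives, schematically,
\begin{equation*}
\nab^{(r)} \nab^{(d)} = (g^{\bf [n]})^{-1}\cdot \nab^{(r+1)} g^{(d)} + \sum_{j=1}^r c_j\,\nab^{(j)}(g^{-1})^{(d)}\cdot \nab^{(r-j+1)} g^{(d)}
\end{equation*}
for some combinatorial constants $c_j$; each term involves at most $r+1$ derivatives of $g^{(d)}$ and at most $r$ derivatives of $(g^{-1})^{(d)}$.

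\emph{Step 3 (closing the estimates).} The factor $|(g^{\bf [n]})^{-1}|_g \leq |g^{-1}|_g + |(g^{-1})^{(d)}|_g \leq C_0$ by \eqref{eq:BA2}. For the $W^{r,\infty}$ bound with $r\leq s-2$, I take $L^\infty$ norms of both sides: since $j \in [1,r]$ satisfies $j \leq r \leq s-2 \leq s-1$, the factor $\|\nab^{(j)}(g^{-1})^{(d)}\|_{L^\infty}\leq \|(g^{-1})^{(d)}\|_{W^{s-1,\infty}}\leq 1$ via \eqref{eq:BA2}, and I place the remaining factor into $\|g^{(d)}\|_{W^{r+1,\infty}}$. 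For the $H^r$ bound with $r\leq s$, I use H\"older with $L^\infty \times L^2$: in each product $\nab^{(j)}(g^{-1})^{(d)}\cdot \nab^{(r-j+1)} g^{(d)}$ (with $j + (r-j+1) = r+1$), at least one of $j$ and $r-j+1$ is $\leq s-1$ (this fails only if both are $\geq s$, forcing $r\geq 2s-1 > s$), so I always put the lower-order factor into $L^\infty$ (bounded by \eqref{eq:BA2}) and the higher-order factor into $L^2$, giving the desired linear estimate. The main subtlety is the clean identity in Step~1, which expresses $\nab^{(d)}$ as a single linear contraction of $\nab g^{(d)}$ with the background inverse metric and hence avoids any circular dependence of $\nab^{(d)}$ on itself; beyond that, the argument is a routine product-rule estimate.
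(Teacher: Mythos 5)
Your proposal is correct and follows essentially the same route as the paper: your Step~1 identity is exactly the paper's key formula \eqref{eq:diff.Gamma.id} (since $(g^{-1})^{\ell b}-((g^{-1})^{(d)})^{\ell b}=((g^{\bf [n]})^{-1})^{\ell b}$, and writing it this way is precisely how the paper arranges that differentiation only produces $\nab^{(j)}(g^{-1})^{(d)}$, as in your Step~2). Your Steps~2--3 are the Leibniz/H\"older/bootstrap bookkeeping that the paper summarizes as ``an immediate consequence of H\"older's inequality and the bootstrap assumptions \eqref{eq:BA2} and \eqref{eq:BA3}.''
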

\begin{proof}
Note that
\begin{equation}\label{eq:diff.Gamma.id}
(\nab^{(d)})^\ell_{ij} = \f 12 [(g^{-1})^{\ell b} - ((g^{-1})^{(d)})^{\ell b}] (\nab_i g^{(d)}_{bj} + \nab_j g^{(d)}_{bi} - \nab_b g^{(d)}_{ij} ).
\end{equation}
The conclusion is then an immediate consequence of H\"older's inequality and the bootstrap assumptions \eqref{eq:BA2} and \eqref{eq:BA3}.
\end{proof}

\begin{lemma}\label{lem:Wki.comp}
For $r\leq s-1$,
$$C_0^{-1} \|\mathcal T\|_{W^{r,\i}(\Sigma_t,g^{\bf [n]})} \leq \|\mathcal T\|_{W^{r,\i}(\Sigma_t,g)} \leq C_0 \|\mathcal T\|_{W^{r,\i}(\Sigma_t,g^{\bf [n]})}.$$
\end{lemma}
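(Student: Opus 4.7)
The plan is to bound the two norms against each other using the smallness of $g - g^{\bf [n]} = g^{(d)}$, $g^{-1} - (g^{\bf [n]})^{-1} = (g^{-1})^{(d)}$, and the difference of connections $\nab^{(d)} = \nab - \nab^{\bf [n]}$, all of which are controlled in low-regularity norms by the bootstrap assumptions \eqref{eq:BA2} and Lemma~\ref{lem:est.nabd}. Since the estimate is in $L^\infty$ there is no volume-form comparison to worry about, only pointwise considerations.

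First, I would establish the \emph{pointwise} comparison $C_0^{-1}|\mathcal T|_{g^{\bf [n]}} \le |\mathcal T|_g \le C_0 |\mathcal T|_{g^{\bf [n]}}$. Expanding $|\mathcal T|_g^2$ in coordinates, each factor of $g_{ij}$ (resp.\ $(g^{-1})^{ij}$) differs from $g^{\bf [n]}_{ij}$ (resp.\ $((g^{\bf [n]})^{-1})^{ij}$) by $g^{(d)}_{ij}$ (resp.\ $((g^{-1})^{(d)})^{ij}$), both of which are bounded by $1$ in $L^\infty(\Sigma_t,g)$ via \eqref{eq:BA2}. After converting this $L^\infty(\Sigma_t,g)$ bound on $g^{(d)}$ into a bound for $(g^{\bf [n]})^{-1}\cdot g^{(d)}$ and using that $((g^{\bf [n]})^{-1})^{ij}g^{\bf [n]}_{jk} = \de^i_k$, one obtains $\sum_{i,j}|((g^{\bf [n]})^{-1})^{ij} g^{(d)}_{jk}| \le C_0$, which is enough to compare the quadratic forms.

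Next, for $1 \le r' \le r \le s-1$, I would iteratively unfold
\begin{equation*}
\nab^{(r')} \mathcal T = (\nab^{\bf [n]})^{(r')} \mathcal T + \sum \underbrace{(\nab^{\bf [n]})^{(r_1)}\nab^{(d)} \cdots (\nab^{\bf [n]})^{(r_j)}\nab^{(d)}}_{\text{at most } r' \text{ factors}} (\nab^{\bf [n]})^{(r_{j+1})} \mathcal T,
\end{equation*}
where the sum runs over decompositions with $r_1 + \cdots + r_{j+1} + j = r'$ and $r_{j+1} < r'$. Each $(\nab^{\bf [n]})^{(r_i)}\nab^{(d)}$ can be rewritten (modulo lower-order pieces of the same kind) as $\nab^{(r_i)}\nab^{(d)}$ plus terms involving further factors of $\nab^{(d)}$, so by induction on $r'$ it suffices to control $\nab^{(d)}$ in $W^{r'-1,\infty}(\Sigma_t,g)$. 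This is provided by Lemma~\ref{lem:est.nabd} combined with the bootstrap assumption \eqref{eq:BA2}, yielding $\|\nab^{(d)}\|_{W^{r-1,\infty}(\Sigma_t,g)} \le C_0$ precisely when $r-1 \le s-2$, i.e.\ $r \le s-1$. This is exactly where the hypothesis $r \le s-1$ is used.

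Combining the pointwise norm comparison with the derivative expansion gives
\begin{equation*}
|\nab^{(r')}\mathcal T|_g \le C_0 \sum_{r'' \le r'} |(\nab^{\bf [n]})^{(r'')} \mathcal T|_{g^{\bf [n]}},
\end{equation*}
and summing over $r' \le r$ produces $\|\mathcal T\|_{W^{r,\infty}(\Sigma_t,g)} \le C_0 \|\mathcal T\|_{W^{r,\infty}(\Sigma_t,g^{\bf [n]})}$. The reverse inequality is obtained identically after swapping the roles of the two metrics and observing that $\nab^{\bf [n]} - \nab = -\nab^{(d)}$ satisfies the same bound. The main obstacle is purely combinatorial bookkeeping of the Christoffel-difference terms, but no new idea is needed beyond Lemma~\ref{lem:est.nabd}.
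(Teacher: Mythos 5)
Your proposal is correct and takes essentially the same route as the paper: the paper's proof is a one-line appeal to the bootstrap assumption \eqref{eq:BA2} and Lemma~\ref{lem:est.nabd}, which is precisely the argument you spell out (pointwise comparison of the two tensor norms from the $L^\infty$ control of $g^{(d)}$ and $(g^{-1})^{(d)}$, then conversion of $\nab$-derivatives into $\nab^{\bf [n]}$-derivatives using $\|\nab^{(d)}\|_{W^{r-1,\infty}(\Sigma_t,g)}\leq C_0$, with the hypothesis $r\leq s-1$ entering exactly through $r-1\leq s-2$ in Lemma~\ref{lem:est.nabd}). The extra detail you provide is just the bookkeeping the paper leaves implicit.
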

\begin{proof}
This follows from the bootstrap assumption \eqref{eq:BA2} and Lemma~\ref{lem:est.nabd}.
\end{proof}

\begin{lemma}\label{lem:Hk.comp}
For $r\leq s+1$, 
$$C_0^{-1} \|\mathcal T\|_{H^r(\Sigma_t,g^{\bf [n]})} \leq  \|\mathcal T\|_{H^r(\Sigma_t,g)} \leq C_0 \|\mathcal T\|_{H^r(\Sigma_t,g^{\bf [n]})}.$$
\end{lemma}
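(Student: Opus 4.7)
The plan is to combine pointwise metric comparisons with a careful induction on the derivative order. At the $L^2$ level (the base case $r = 0$), pointwise comparison of $g, g^{\bf [n]}$ and their inverses (via \eqref{eq:BA2}) together with comparison of the two volume forms (via \eqref{eq:BA1}) gives $\|\cdot\|_{L^p(\Sigma_t, g)} \approx \|\cdot\|_{L^p(\Sigma_t, g^{\bf [n]})}$ for all $p \in [1, +\infty]$. The work at higher derivative orders is to compare $\nabla$-covariant derivatives with $\nabla^{\bf [n]}$-covariant derivatives of the same tensor; the difference $\nabla^{(d)}$ and its $\nabla$- or $\nabla^{\bf [n]}$-derivatives are, by the (algebraic) analog of Lemma~\ref{lem:est.nabd}, controlled in terms of derivatives of $g^{(d)}$ and $(g^{-1})^{(d)}$.

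First I would establish the direction $\|\mathcal T\|_{H^r(\Sigma_t, g^{\bf [n]})} \leq C_0 \|\mathcal T\|_{H^r(\Sigma_t, g)}$, since the bootstrap assumptions \eqref{eq:BA2} and \eqref{eq:BA3} (both expressed in $g$-based norms) can be used directly there. For each $r' \leq r$, apply the $L^2$-comparison to rewrite $\|(\nabla^{\bf [n]})^{(r')} \mathcal T\|_{L^2(\Sigma_t, g^{\bf [n]})}$ as a $g$-integrated quantity, expand $(\nabla^{\bf [n]})^{(r')} \mathcal T = \nabla^{(r')} \mathcal T + \mathcal R_{r'}$ (where $\mathcal R_{r'}$ is a sum of products of $\nabla$-derivatives of $\nabla^{(d)}$ contracted with $\nabla$-derivatives of $\mathcal T$), and estimate each product in $L^2(\Sigma_t, g)$ by splitting into two cases. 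When each metric-difference factor involves at most $s-1$ derivatives, place these factors in $L^\infty(\Sigma_t, g)$ via \eqref{eq:BA2} and the $\nabla^{(b)} \mathcal T$ factor (with $b \leq r - 1$) in $L^2$. When some factor has $s$ or $s+1$ derivatives of $g^{(d)}$ or $(g^{-1})^{(d)}$, the remaining derivative budget forces $b \leq r - s \leq 1$, so place this top-order factor in $L^2(\Sigma_t, g)$ (bounded by $C_0 t^{5/2}$ via \eqref{eq:BA3}), the other metric-difference factors in $L^\infty$ via \eqref{eq:BA2}, and $\nabla^{(b)} \mathcal T$ in $L^\infty$ via Sobolev embedding (Lemma~\ref{lem:Sobolev}), which costs a factor of $C_0 t^{-5/2} \|\mathcal T\|_{H^{b+2}(\Sigma_t, g)}$. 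The $t^{\pm 5/2}$ factors cancel, and since $b + 2 \leq 3 \leq s \leq r$, this is absorbed into $\|\mathcal T\|_{H^r(\Sigma_t, g)}$.

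Applying this direction with $\mathcal T = g^{(d)}$ and $\mathcal T = (g^{-1})^{(d)}$ and invoking \eqref{eq:BA3} yields
\[
\|g^{(d)}\|_{H^{s+1}(\Sigma_t, g^{\bf [n]})} + \|(g^{-1})^{(d)}\|_{H^{s+1}(\Sigma_t, g^{\bf [n]})} \leq C_0 t^{5/2},
\]
the analog of \eqref{eq:BA3} in the $g^{\bf [n]}$-norm. The reverse inequality $\|\mathcal T\|_{H^r(\Sigma_t, g)} \leq C_0 \|\mathcal T\|_{H^r(\Sigma_t, g^{\bf [n]})}$ then follows by the symmetric argument (expanding $\nabla^{(r')} \mathcal T = (\nabla^{\bf [n]})^{(r')} \mathcal T + \mathcal R'_{r'}$ with $\nabla^{\bf [n]}$-derivatives of $\nabla^{(d)}$), now using Lemma~\ref{lem:Wki.comp} to transfer the $W^{s-1,\infty}$ bound on $g^{(d)}, (g^{-1})^{(d)}$ to the $g^{\bf [n]}$-connection, the just-established $H^{s+1}(\Sigma_t, g^{\bf [n]})$ bound for the top-order case, and Sobolev embedding (combined with the already-proven first direction at $r = 2$) to furnish the $L^\infty$ estimate of $(\nabla^{\bf [n]})^{(b)} \mathcal T$.

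The main obstacle is the top-order case $r = s + 1$: one cannot rely solely on the $L^\infty$ bootstrap \eqref{eq:BA2} (which only controls $s - 1$ derivatives of the metric difference) and must instead combine the $L^2$-based bootstrap \eqref{eq:BA3} with a Sobolev estimate on $\mathcal T$. This asymmetry is precisely what dictates the two-step structure above, first proving the direction where \eqref{eq:BA3} applies directly and then bootstrapping the analogous top-order bound in the $g^{\bf [n]}$-norm before tackling the reverse inequality.
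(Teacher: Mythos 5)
Your argument is correct and is essentially the paper's proof in expanded form: the paper disposes of this lemma by citing exactly the ingredients you use, namely the bootstrap assumptions \eqref{eq:BA2}, \eqref{eq:BA3} and Lemma~\ref{lem:est.nabd}, with the pointwise/volume-form comparability of $g$ and $g^{\bf [n]}$ implicit. Your handling of the top-order case ($s$ or $s+1$ derivatives falling on $g^{(d)}$, $(g^{-1})^{(d)}$), pairing the $t^{5/2}$ from \eqref{eq:BA3} against the $t^{-5/2}$ loss in the Sobolev embedding of Lemma~\ref{lem:Sobolev}, is precisely the mechanism the authors use in the neighboring Lemma~\ref{lem:Wsi.save}, so it matches their intended argument.
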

\begin{proof}
This follows from the bootstrap assumptions \eqref{eq:BA2}, \eqref{eq:BA3} and Lemma~\ref{lem:est.nabd}.
\end{proof}

Note that Lemma~\ref{lem:Wki.comp} fails when $r = s,\,s+1$ as we do not control $\|g^{(d)}\|_{W^{r,\i}(\Sigma_t,g)}$. On the other hand, Lemma~\ref{lem:Hk.comp} by itself will not be sufficient for our purpose. Instead we need the following
\begin{lemma}\label{lem:Wsi.save}
The following holds for any $\alp >0$:
$$\|\nab^{(s)} \mathcal T\|_{L^{\infty}(\Sigma_t,g) + t^{\alp} L^2(\Sigma_t,g)} \leq C_0( \|\mathcal T\|_{W^{s,\infty}(\Sigma_t,g^{\bf [n]})} + t^{-\alp + \f 52} \|\mathcal T\|_{L^\i(\Sigma_t,g^{\bf [n]})}),$$
and 
$$\|\nab^{(s+1)} \mathcal T\|_{L^{\infty}(\Sigma_t,g) + t^{\alp} L^2(\Sigma_t,g)} \leq C_0 (\|\mathcal T\|_{W^{s+1,\infty}(\Sigma_t,g^{\bf [n]})} + t^{-\alp + \f 52} \|\mathcal T\|_{W^{1,\i}(\Sigma_t,g^{\bf [n]})}).$$

\end{lemma}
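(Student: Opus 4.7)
The plan is to expand $\nab^{(s)} \mathcal T$, and respectively $\nab^{(s+1)} \mathcal T$, by substituting $\nab = \nab^{\bf [n]} + \nab^{(d)}$ and commuting. Inductively, $\nab^{(k)} \mathcal T$ becomes a finite sum of schematic expressions of the form
\[
\bigl(\nab^{(j_1)} \nab^{(d)}\bigr) \otimes \cdots \otimes \bigl(\nab^{(j_m)} \nab^{(d)}\bigr) \otimes (\nab^{\bf [n]})^{(r)} \mathcal T, \qquad j_1 + \cdots + j_m + m + r = k, \quad m \geq 0,
\]
with some fixed contraction pattern. Taking the pointwise $|\cdot|_g$ norm and invoking the pointwise comparability $|\,\cdot\,|_g \simeq |\,\cdot\,|_{g^{\bf [n]}}$ for $\Sigma_t$-tangent tensors (which follows from \eqref{eq:BA1}--\eqref{eq:BA2} since $g$ and $g^{\bf [n]}$ have the same anisotropic structure $t^{2p_{\max\{i,j\}}}$ and are uniformly comparable as quadratic forms), we obtain a product of the individual norms of the factors.

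The terms are then classified according to $J := \max_a j_a$. In the good regime $J \leq s-2$, every factor $\nab^{(j_a)} \nab^{(d)}$ is bounded in $L^\infty(\Sigma_t, g)$ by Lemma~\ref{lem:est.nabd} together with \eqref{eq:BA2}, and $(\nab^{\bf [n]})^{(r)} \mathcal T$ with $r \leq k$ is bounded pointwise by $\|\mathcal T\|_{W^{k,\infty}(\Sigma_t, g^{\bf [n]})}$. Collectively these contribute the $\|\mathcal T\|_{W^{s,\infty}(\Sigma_t,g^{\bf [n]})}$ part, respectively the $\|\mathcal T\|_{W^{s+1,\infty}(\Sigma_t,g^{\bf [n]})}$ part, of the right-hand side.

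The remaining bad regime has $J = s-1$ when $k=s$, and $J \in \{s-1, s\}$ when $k=s+1$. By pigeonhole at most one index $j_a$ can exceed $s-2$, since $\sum_a j_a \leq k - m \leq k - 1$. That single high-order factor is placed in $L^2(\Sigma_t, g)$ and bounded by $C_0 t^{5/2}$ using Lemma~\ref{lem:est.nabd} with $r = j_a \leq s$, the bootstrap assumption \eqref{eq:BA3}, and Lemma~\ref{lem:Hk.comp}. All remaining factors of $\nab^{(d)}$ then have differential order $\leq s-2$ and sit in $L^\infty$ as before. The constraint $j_1 + \cdots + j_m + m + r = k$ with $J \geq s-1$ and $m \geq 1$ forces $r \leq k-s$, so the factor $(\nab^{\bf [n]})^{(r)} \mathcal T$ is accounted for by $\|\mathcal T\|_{L^\infty(\Sigma_t, g^{\bf [n]})}$ when $k=s$ and by $\|\mathcal T\|_{W^{1,\infty}(\Sigma_t, g^{\bf [n]})}$ when $k=s+1$. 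Converting the resulting $L^2$ bound into the $t^{\alp} L^2$ norm produces the $t^{-\alp + 5/2}$ coefficient claimed on the right-hand side.

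The main obstacle will simply be the combinatorial bookkeeping just described, in particular the verification of the inequality $r \leq k-s$ in the bad regime, which is precisely what ensures that only one derivative of $\mathcal T$ in $L^\infty$ suffices for the $\nab^{(s+1)}$ statement (and none at all for $\nab^{(s)}$). No genuinely new analytic ingredient is needed beyond Lemmas~\ref{lem:est.nabd}, \ref{lem:Wki.comp}, \ref{lem:Hk.comp} and the pointwise comparability of $g$ and $g^{\bf [n]}$.
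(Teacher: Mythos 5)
Your proposal is correct and follows essentially the same route as the paper: expand $\nab^{(k)}\mathcal T$ in terms of $\nab^{\bf[n]}$-derivatives of $\mathcal T$ and factors of $\nab^{(d)}$, put the single factor with $s-1$ or $s$ derivatives of $\nab^{(d)}$ in $L^2$ via Lemma~\ref{lem:est.nabd} and \eqref{eq:BA3} (yielding the $t^{5/2}$, hence $t^{-\alp+5/2}$, coefficient), and bound all remaining factors in $L^\infty$ via \eqref{eq:BA2}. The only difference is that you make the combinatorial bookkeeping (the pigeonhole count forcing $r\le k-s$ in the bad regime) explicit, where the paper simply isolates the top-order terms $\mathcal T[\nab^{(s-1)}\nab^{(d)}]$, $[\nab\mathcal T][\nab^{(s-1)}\nab^{(d)}]$, $\mathcal T[\nab^{(s)}\nab^{(d)}]$ and calls the rest lower order.
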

\begin{proof}
The main difference with Lemma~\ref{lem:Hk.comp} is that we may have terms which involve $s$ or $s+1$ derivatives of $g^{(d)}$. 

We first consider the term $\nab^{(s)} \mathcal T$. When writing $\nab^{(s)} \mathcal T$ in terms of $(\nab^{\bf [n]})^{(s)} \mathcal T$, there is the term
$$ \mathcal T [\nab^{(s-1)}\nab^{(d)}] $$
(meaning $s-1$ $\nab$ derivatives acting on the tensor $\nab^{(d)}$), together with other terms which are lower order and can be handled directly using the bootstrap assumptions \eqref{eq:BA2} and \eqref{eq:BA3}.
This term cannot be bounded in $L^\i$, and will instead be controlled in $L^2$. For this we note that by H\"older's inequality, Lemma~\ref{lem:est.nabd}, and the bootstrap assumptions \eqref{eq:BA2}, \eqref{eq:BA3},
\begin{equation*}
\begin{split}
\|\mathcal T [\nab^{(s-1)}\nab^{(d)}] \|_{L^2(\Sigma_t,g)} \leq &\: C_0 \|\mathcal T\|_{L^\i(\Sigma_t,g)} (\|g^{(d)}\|_{H^s(\Sigma_t,g)} + \| (g^{-1})^{(d)} \|_{H^{s-1}(\Sigma_t,g)}) \\
\leq &\: C_0 t^{\f 52}\|\mathcal T\|_{L^\i(\Sigma,g)} \leq C_0 t^{\f 52}\|\mathcal T\|_{L^\i(\Sigma,g^{\bf [n]})}. 
\end{split}
\end{equation*}
This gives the first inequality in the statement of the lemma.

The term $\nab^{(s+1)} \mathcal T$ is similar except for an additional derivative. Indeed, we need to control the terms
$$ [\nab \mathcal T][\nab^{(s-1)}\nab^{(d)}],\quad \mathcal T [\nab^{(s)}\nab^{(d)}].$$
Both of these can be controlled in $L^2(\Sigma_t,g)$ using H\"older's inequality, Lemma~\ref{lem:est.nabd}, and the bootstrap assumptions \eqref{eq:BA2}, \eqref{eq:BA3} as above.
\end{proof}

\subsubsection{An easy consequence of the bootstrap assumption}

%

\begin{lemma}\label{lem:kd.Li}
$$\| h^{(d)}\|_{W^{s-1,\infty}(\Sigma_t, g)}  + \| k^{(d)}\|_{W^{s-2,\infty}(\Sigma_t, g)} + \| \rd_t k^{(d)}\|_{W^{s-3,\infty}(\Sigma_t, g)}  \leq C_0. $$
\end{lemma}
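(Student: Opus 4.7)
The plan is to obtain this lemma directly from the bootstrap assumption \eqref{eq:BA4} by applying Sobolev embedding at each derivative order, using the fact that $s\geq 5$ so that there is room to spare two derivatives for the embedding.

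First I would recall the Sobolev inequality \eqref{eq:Sobolev.2} from Lemma~\ref{lem:Sobolev}, namely
\[
\|\mathcal T\|_{L^\infty(\Sigma_t,g)} \leq C_0 t^{-5/2} \|\mathcal T\|_{H^2(\Sigma_t,g)},
\]
which holds for any $\Sigma$-tangent tensor $\mathcal T$ and incorporates the degeneration factor $t^{-5/2}$ arising from the Kasner-type anisotropy. Since the bootstrap assumption \eqref{eq:BA4} provides the factor $t^{5/2}$, these will combine to yield an $O(1)$ bound, which is exactly what the lemma claims.

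Next I would apply this embedding to $\nab^{(r)}h^{(d)}$, $\nab^{(r)}k^{(d)}$, and $\nab^{(r)}\rd_tk^{(d)}$ for appropriate ranges of $r$. For any $r\leq s-1$,
\[
\|\nab^{(r)}h^{(d)}\|_{L^\infty(\Sigma_t,g)} \leq C_0 t^{-5/2} \|\nab^{(r)}h^{(d)}\|_{H^2(\Sigma_t,g)} \leq C_0 t^{-5/2} \|h^{(d)}\|_{H^{s+1}(\Sigma_t,g)} \leq C_0,
\]
where in the last step I invoke \eqref{eq:BA4}. Summing over $r = 0,\dots,s-1$ gives the bound on $\|h^{(d)}\|_{W^{s-1,\infty}}$. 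The analogous computation for $r \leq s-2$ applied to $k^{(d)}$ (requiring $r+2 \leq s$) and for $r\leq s-3$ applied to $\rd_tk^{(d)}$ (requiring $r+2 \leq s-1$) yields the remaining two pieces; the condition $s\geq 5$ in Theorem~\ref{thm:bootstrap} ensures all of these index ranges are non-empty.

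There is no real obstacle here — the lemma is essentially a restatement of \eqref{eq:BA4} after Sobolev embedding, with the losses in powers of $t$ from \eqref{eq:Sobolev.2} precisely absorbed by the gains in \eqref{eq:BA4}. The only point to verify is that the index budget is consistent (two derivatives lost to Sobolev, one derivative lost in passing from $k^{(d)}$ to $\rd_tk^{(d)}$ via the bootstrap hierarchy), which is what motivates the specific regularity counts $s-1$, $s-2$, $s-3$ appearing in the statement.
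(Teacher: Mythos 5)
Your proof is correct and is exactly the paper's argument: the paper's proof of this lemma is a one-liner citing Lemma~\ref{lem:Sobolev} and the bootstrap assumption \eqref{eq:BA4}, and your computation (applying \eqref{eq:Sobolev.2} to $\nab^{(r)}$ of each quantity, with the $t^{-5/2}$ loss cancelled by the $t^{5/2}$ in \eqref{eq:BA4} and the index ranges $r\leq s-1$, $s-2$, $s-3$ matching the $H^{s+1}$, $H^s$, $H^{s-1}$ control) is precisely the verification the authors left implicit.
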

\begin{proof}
This follows from Lemma~\ref{lem:Sobolev} (Sobolev embedding) and the bootstrap assumption \eqref{eq:BA4}.
\end{proof}

\subsubsection{Estimates for background quantities}

\begin{proposition}\label{prop:inhomo}
For each $n \in \mathbb N$, define 
$$I_{h^{[{\bf n}]}} := -\rd_t h^{\bf [n]} + |k^{\bf [n]}|^2,$$
$$(I_{k^{[{\bf n}]}})_i{ }^j:= -\rd_t^2 (k^{\bf [n]})_i{ }^j + \Delta_{g^{\bf[n]}} (k^{\bf [n]})_i{ }^j - (\nab_i \nab^j h)^{\bf [n]} + (k^{\bf [n]}\star k^{\bf [n]}\star k^{\bf [n]})_i{ }^j + (\rd_t k^{\bf [n]} \star k^{\bf [n]})_i{ }^j. $$

Given any $N \in \mathbb N$, there exists $n_{N,s} \in \mathbb N$ sufficiently large such that whenever $n \geq n_{N,s}$,
$$\sum_{r=0}^{s+1} t^r\|I_{h^{[{\bf n}]}}\|_{H^{r}(\Sigma_t,g)} + \sum_{r=0}^{s-1} t^r\|I_{k^{[{\bf n}]}}\|_{H^{r}(\Sigma_t,g)} \leq C_n t^{N + s}.$$

\end{proposition}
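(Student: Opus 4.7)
The plan is to verify that the approximate solution $(g^{[\bf n]}, k^{[\bf n]}, h^{[\bf n]} := \mathrm{tr}\,k^{[\bf n]})$ nearly satisfies the reduced system \eqref{eq:hyp.sys}, with errors that can be made arbitrarily small in any fixed Sobolev norm by choosing $n$ large, and then to translate pointwise coordinate-derivative bounds into $H^r(\Sigma_t,g)$ estimates. The only input needed is what has been proved in Sections~\ref{sec:parametrix}--\ref{sec:approx.constraints} together with the bootstrap assumptions.

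For $I_{h^{[\bf n]}}$, I would exploit the identity \eqref{Rictt} for the spacetime metric ${}^{(4)}g^{[\bf n]} = -dt^2 + g^{[\bf n]}$: if $\tilde{k}^{[\bf n]}_i{ }^j := -\tfrac12 (g^{[\bf n]})^{j\ell}\partial_t g^{[\bf n]}_{i\ell}$ is the \emph{genuine} second fundamental form of the $\Sigma_t$ slices, then
$Ric_{tt}({}^{(4)}g^{[\bf n]}) = \partial_t \mathrm{tr}(\tilde k^{[\bf n]}) - |\tilde k^{[\bf n]}|^2_{g^{[\bf n]}}$,
and this is of size $O(t^{-2+(n+1)\varepsilon})$ by Proposition~\ref{prop:approx.G}. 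Lemma~\ref{lem:k.II} (together with Lemma~\ref{lem:D.est.2} for the time derivative) bounds $k^{[\bf n]} - \tilde k^{[\bf n]}$ by $O(t^{-1+(n+2)\varepsilon})$ with analogous bounds for $\partial_t$, so replacing $\tilde k^{[\bf n]}$ by $k^{[\bf n]}$ produces only same-order errors. Since $h^{[\bf n]} = \mathrm{tr}(k^{[\bf n]})$, one obtains $|\partial_x^\alpha I_{h^{[\bf n]}}| \leq C_{\alpha,n}\, t^{-2+(n+1)\varepsilon}|\log t|^{C_\alpha}$ for every $\alpha$.

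For $I_{k^{[\bf n]}}$ I would parallel the derivation of \eqref{eq:k.wave.prelim} as applied to the approximate solution. Starting from Proposition~\ref{prop:approx.ev.1}, one has $\partial_t k^{[\bf n]} = Ric(g^{[\bf n]}) + \mathrm{tr}(k^{[\bf n]})\, k^{[\bf n]} + E_1$ with $E_1, \partial_t E_1$ pointwise bounded by $O(t^{-2+(n+1)\varepsilon})$ and $O(t^{-3+(n+1)\varepsilon})$, respectively. Applying $\partial_t$ again and using the variation-of-Ricci formula \eqref{eq:variation.of.Ricci} --- which is exact and uses only $\partial_t g^{[\bf n]} = -2\tilde k^{[\bf n]}$ --- then inserting the three-dimensional identity \eqref{Riem=Ric} to replace Riemann by Ricci (on which Proposition~\ref{prop:approx.G} provides bounds) and the contracted-Bianchi identity \eqref{eq:RHS.of.Ricci.variation} to rewrite the top-order divergence terms in terms of $\mathcal G_i = G_{ti}({}^{(4)}g^{[\bf n]})$ (again controlled by Proposition~\ref{prop:approx.G}), one finds that $k^{[\bf n]}$ satisfies the wave equation \eqref{eq:k.wave} up to errors built from $\partial_t Ric({}^{(4)}g^{[\bf n]})$, $\nabla\mathcal G$, $Ric({}^{(4)}g^{[\bf n]})\star k^{[\bf n]}$, $\partial_t E_1$, and differences $k^{[\bf n]}-\tilde k^{[\bf n]}$ (and their derivatives). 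Every one of these contributions is pointwise of size $t^{-3+(n+1)\varepsilon}$ times logarithmic factors; in particular, since $h^{[\bf n]} = \mathrm{tr}(k^{[\bf n]})$ exactly, no separate error is incurred from the $\nabla_i\nabla^j h$ term.

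To pass from pointwise coordinate-derivative bounds $|\partial_x^\alpha(\cdot)| \leq C_{\alpha,n}\,t^{-a+(n+1)\varepsilon}|\log t|^{C_\alpha}$ to bounds on $\|\cdot\|_{H^r(\Sigma_t,g)}$, I use that $H^r(\Sigma_t,g)$ and $H^r(\Sigma_t,g^{[\bf n]})$ are equivalent for $r \leq s+1$ (Lemma~\ref{lem:Hk.comp}), and that converting coordinate derivatives to geometric derivatives on $(\Sigma_t,g^{[\bf n]})$ costs at most a factor $t^{-(1-\varepsilon/2) r}$ by Lemma~\ref{lem:coord.v.geo.norm}; integrating against the volume element $\sqrt{\det g^{[\bf n]}}\,dx \sim t\,dx$ gives an extra $t^{1/2}$. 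Thus $t^r\|I_{h^{[\bf n]}}\|_{H^r(\Sigma_t,g)} \leq C_{r,n} t^{-3/2 + (n+1)\varepsilon + r\varepsilon/2}$ and similarly $t^r\|I_{k^{[\bf n]}}\|_{H^r(\Sigma_t,g)} \leq C_{r,n} t^{-5/2 + (n+1)\varepsilon + r\varepsilon/2}$; given $N,s$, choosing $n_{N,s}$ so that $(n_{N,s}+1)\varepsilon > N + s + 3$ makes both sums bounded by $C_n t^{N+s}$, completing the proof. The chief algebraic subtlety --- and the place where most care is required --- is the systematic identification, in the derivation of the wave-type equation for $k^{[\bf n]}$, of which terms cancel exactly and which produce error contributions, given that $k^{[\bf n]}$ is only \emph{approximately} the second fundamental form of $g^{[\bf n]}$; all such discrepancies are, however, controlled by Lemmas~\ref{lem:k.II} and \ref{lem:D.est.2}.
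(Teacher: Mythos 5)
Your argument is correct and follows essentially the same route as the paper: reduce to pointwise coordinate-derivative bounds via Lemma~\ref{lem:coord.v.geo.norm} (and Lemma~\ref{lem:Hk.comp}), control $I_{h^{[\bf n]}}$ through \eqref{Rictt}, Proposition~\ref{prop:approx.G} and Lemma~\ref{lem:k.II}, and control $I_{k^{[\bf n]}}$ by expressing it through the spacetime Ricci/Einstein components via the identity \eqref{eq:k.wave.prelim} (which you re-derive rather than cite, together with \eqref{eq:main.parametrix.k.and.II.1}), then take $n$ large. The only nitpick is that your explicit threshold $(n_{N,s}+1)\ve > N+s+3$ ignores the tensorial weight factors $t^{-2p_j+2p_i}$ in Lemma~\ref{lem:coord.v.geo.norm}, but since $n$ may be taken arbitrarily large this is immaterial.
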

\begin{proof}
 {By Lemma~\ref{lem:coord.v.geo.norm}, it suffices to show that for any given polynomial rate, $n$ can be chosen sufficiently large so that $I_{h^{[{\bf n}]}}$, $(I_{k^{[{\bf n}]}})_i{ }^j$ and their coordinate derivatives tend to $0$ faster than the given polynomial rate.}

\pfstep{Step~1: Proving the estimates for $I_{h^{[{\bf n}]}}$} Recall that by definition $h^{\bf [n]} = (\kn)_\ell{ }^\ell$. By Proposition~\ref{prop:approx.G}, \eqref{eq:main.parametrix.k.and.II.1} and the expression for $Ric_{tt}(^{(4)}g)$ in \eqref{Rictt}, it follows that given any polynomial rate in $t$, we can choose $n\in \mathbb N$ sufficiently large so that $I_{h^{[{\bf n}]}} := -\rd_t h^{\bf [n]} + |k^{\bf [n]}|^2$ and its coordinate derivatives go to $0$ faster than the given polynomial rate in $t$.

\pfstep{Step~2: Proving the estimates for $I_{k^{[{\bf n}]}}$} By \eqref{eq:k.wave.prelim} and \eqref{eq:main.parametrix.k.and.II.1},
\begin{equation}\label{eq:Ik.in.terms.of.Ricci}
\begin{split}
(I_{k^{[{\bf n}]}})_i{ }^j = & -\partial_t Ric_i{}^j({^{(4)} g^{\bf [n]}}) + \nabla_i (\mathcal{G}^{\bf [n]})^j+\nabla^j (\mathcal{G}^{\bf [n]})_i
-3 (k^{[\bf n]})_i{}^mRic_m{}^j({^{(4)}g^{\bf [n]}}) \\
 &+ 2\de_i^j (k^{[\bf n]})_m{}^\ell Ric_{\ell}{}^m({^{(4)}g^{\bf [n]}})
 - (k^{[\bf n]})_\ell{}^j Ric_i{}^\ell ({^{(4)}g^{\bf [n]}})  + 2 (k^{[\bf n]})_\ell{}^\ell Ric_i{}^j({^{(4)}g^{\bf [n]}})\\
 &- ( (k^{[\bf n]})_\ell{}^\ell \de_i^j - (k^{[\bf n]})_i{}^j) Ric_m{}^m({^{(4)}g^{\bf [n]}}) + O(t^{L_n}),
\end{split}
\end{equation}
where $(\mathcal{G}^{\bf [n]})_i = Ric({^{(4)} g^{\bf [n]}})_{ti}$ and $L_n$ is linearly increasing in $n$.

By Proposition~\ref{prop:approx.G}, given any polynomial rate in $t$, we can choose $n\in \mathbb N$ sufficiently large so that the terms 
\begin{equation*}
\begin{split}
&-\partial_t Ric_i{}^j({^{(4)} g^{\bf [n]}}) + \nabla_i (\mathcal{G}^{\bf [n]})^j+\nabla^j (\mathcal{G}^{\bf [n]})_i
-3 (k^{[\bf n]})_i{}^mRic_m{}^j({^{(4)}g^{\bf [n]}}) \\
 &+ 2\de_i^j (k^{[\bf n]})_m{}^\ell Ric_{\ell}{}^m({^{(4)}g^{\bf [n]}})
 - (k^{[\bf n]})_\ell{}^j Ric_i{}^\ell ({^{(4)}g^{\bf [n]}})  + 2 (k^{[\bf n]})_\ell{}^\ell Ric_i{}^j({^{(4)}g^{\bf [n]}})\\
 &- ( (k^{[\bf n]})_\ell{}^\ell \de_i^j - (k^{[\bf n]})_i{}^j) Ric_m{}^m({^{(4)}g^{\bf [n]}})
\end{split}
\end{equation*}
and their coordinate derivatives go to $0$ faster than the given polynomial rate in $t$. 
\end{proof}

\begin{proposition}\label{prop:kn.higher}
For any $n\in \mathbb N$,
$$\sum_{r=1}^{s-1} t^{r} \| k^{\bf [n]}\|_{\dot{W}^{r,\infty}(\Sigma_t,g)} + \sum_{r=0}^{s-1} (t^{r+1} \|\nab^{\bf [n]} \kn \|_{\dot{W}^{r,\infty}(\Sigma_t,g)} + t^{r+2} \|\nab^{\bf [n]} \nab^{\bf [n]} \kn \|_{\dot{W}^{r,\infty}(\Sigma_t,g)}) \leq C_n t^{-1+\ve}, $$
and
$$\| \nab^{(s)} k^{\bf [n]}\|_{(L^\i + t^{s+ \f 52-\ve} L^2)(\Sigma_t,g)} \leq C_n t^{-s-1+\ve},\quad  \| \nab^{(s+1)} k^{\bf [n]}\|_{(L^\i + t^{s + \f 52 -\ve} L^2)(\Sigma_t,g)} \leq C_n t^{-s-2+\ve}.$$ 
\end{proposition}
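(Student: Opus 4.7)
The strategy is to reduce all bounds to the coordinate-derivative estimates \eqref{eq:main.parametrix.k.bd} provided by Theorem~\ref{thm:parametrix}, using the geometric-to-coordinate comparison of Lemma~\ref{lem:coord.v.geo.norm} and the norm comparisons of Lemmas~\ref{lem:Wki.comp} and \ref{lem:Wsi.save}. Throughout, the key algebraic observation is that for the $(1,1)$-tensor $(\kn)_b{}^j$, the weight $t^{-2p_b+2p_j}$ arising in $|\cdot|_{g^{\bf [n]}}$ conspires exactly with the improved anisotropic rate $\min\{t^{-1+\ve},\,t^{-1+\ve-2p_j+2p_b}\}$ of \eqref{eq:main.parametrix.k.bd}: in all three regimes $b=j$, $b<j$, $b>j$ the product $t^{-2p_b+2p_j}|(\kn)_b{}^j|^2 \leq C_n t^{-2}$ (the diagonal contribution being the sharpest), and similarly for any number of coordinate derivatives of $\kn$.

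For the ``middle-order'' estimates ($r\leq s-1$), I would first apply Lemma~\ref{lem:Wki.comp} to replace $\|\cdot\|_{\dot W^{r,\infty}(\Sigma_t,g)}$ by $\|\cdot\|_{W^{r,\infty}(\Sigma_t,g^{\bf [n]})}$ with a harmless $C_0$ loss, then use Lemma~\ref{lem:coord.v.geo.norm} with $k=r'$, $r'+1$, $r'+2$ ($r'\le r$) to bound $|(\nab^{\bf [n]})^{(k)}\kn|_{g^{\bf [n]}}$ by weighted sums $t^{-k(2-\ve)}\sum_{b,j}t^{-2p_b+2p_j}|\partial_x^\alpha(\kn)_b{}^j|^2$. (Here one uses the slightly sharper form of Lemma~\ref{lem:coord.v.geo.norm} obtained by exploiting $p_i\leq 1-\ve$ directly instead of $p_i<1-\ve/2$, replacing $t^{-k(2-\ve)}$ by $t^{-k(2-2\ve)}$.) Plugging in the anisotropic bound from \eqref{eq:main.parametrix.k.bd} as described above yields $|(\nab^{\bf [n]})^{(k)}\kn|_{g^{\bf [n]}}\leq C_n t^{-1-k+\ve}$, which after multiplication by $t^r$, $t^{r+1}$, or $t^{r+2}$ gives the three claimed middle-order bounds.

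For the two top-order estimates (the $\nab^{(s)}$ and $\nab^{(s+1)}$ bounds on $\kn$), Lemma~\ref{lem:Wki.comp} is unavailable because $\|g^{(d)}\|_{W^{s,\infty}(\Sigma_t,g)}$ is not controlled by the bootstrap assumptions. Instead I would apply Lemma~\ref{lem:Wsi.save} to $\mathcal T=\kn$ with $\alpha=s+\tfrac52-\ve$. The first contribution $\|\kn\|_{W^{s,\infty}(\Sigma_t,g^{\bf [n]})}$ (resp.\ $W^{s+1,\infty}$) is handled exactly as in the middle-order step and is bounded by $C_n t^{-1-s+\ve}$ (resp.\ $C_n t^{-2-s+\ve}$). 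The second contribution $t^{-\alpha+5/2}\|\kn\|_{L^\infty(\Sigma_t,g^{\bf [n]})}=C_n t^{-s+\ve}\cdot t^{-1}$ (resp.\ $t^{-s+\ve}\|\kn\|_{W^{1,\infty}(\Sigma_t,g^{\bf [n]})}\leq C_n t^{-s+\ve}\cdot t^{-2+\ve}$) matches the claimed decay of $t^{-s-1+\ve}$ (resp.\ $t^{-s-2+\ve}$).

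\textbf{Main obstacle.} The only delicate point is the cancellation between the weights coming from raising/lowering indices with $g^{\bf [n]}$ and the improved anisotropic rate in \eqref{eq:main.parametrix.k.bd}; in particular the off-diagonal $b>j$ case ($(\kn)_b{}^j=0$ to leading order) relies crucially on the second factor of the minimum in \eqref{eq:main.parametrix.k.bd}. Aside from this, the proof is a careful but routine bookkeeping exercise.
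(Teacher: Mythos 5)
Your proposal is correct and is essentially the paper's own proof: the paper disposes of this proposition in one line by citing exactly the ingredients you use, namely the coordinate bounds \eqref{eq:main.parametrix.k.bd} combined with Lemmas~\ref{lem:coord.v.geo.norm}, \ref{lem:Wki.comp} and \ref{lem:Wsi.save}, with the anisotropic-weight cancellation you highlight being the real content. The only caveat is your proposed sharpening of Lemma~\ref{lem:coord.v.geo.norm} from $t^{-k(2-\ve)}$ to $t^{-k(2-2\ve)}$: the $\ve/2$ there is used precisely to absorb the $|\log t|$ powers coming from the Christoffel symbols (see \eqref{eq:coord.v.geo.norm.goal}), so exploiting $p_i\leq 1-\ve$ alone still leaves logarithmic factors at the one-derivative level — but this only affects whether one lands on $\ve$, $\ve/2$, or $\ve$ up to logs in the final rate, a distinction that is immaterial for all subsequent uses and is glossed over equally by the paper's own one-line argument (cf.\ Remark~\ref{rem:tvarepsilon}).
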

\begin{proof}
This follows from Lemmas~\ref{lem:coord.v.geo.norm} {, \ref{lem:Wki.comp}} and \ref{lem:Wsi.save}, and the estimates for $\kn$ in coordinates given by \eqref{eq:main.parametrix.k.bd}.
\end{proof}

\begin{proposition}\label{prop:kn.lowest}
For any $n\in \mathbb N$,
$$\|k^{\bf [n]}\|_{L^\i(\Sigma_t,g)} \leq C_0 t^{-1} + C_n t^{-1+\ve}.$$
\end{proposition}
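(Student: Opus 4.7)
The plan is to split $\kn = t^{-1}\kappa + r$, where $\kappa$ is the explicit $n$-independent $(1,1)$-tensor from Theorem~\ref{thm:parametrix}, and $r := \kn - t^{-1}\kappa$ is controlled in coordinates by $C_n t^{-1+\ve}$ via \eqref{eq:main.parametrix.k.bd}. The two terms in the stated bound will correspond to these two pieces, the crucial structural point being that the $n$-dependence is absorbed entirely by $r$ while the leading Kasner-type contribution $t^{-1}\kappa$ produces only $C_0$-constants.

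For the leading piece I would compute directly
$$|t^{-1}\kappa|_g^2 = t^{-2}\,(g^{-1})^{bi}\,g_{j\ell}\,\kappa_i{}^j\,\kappa_b{}^\ell.$$
By the form \eqref{metricansatz} together with the bootstrap assumption \eqref{eq:BA1}, $g$ has the same Kasner structure as $g^{\bf[0]}$ with $C_0$-bounded $a_{ij}$, so the computation of Lemma~\ref{lem:inverse.0} applies verbatim to give $g_{j\ell} = O(t^{2p_{\max\{j,\ell\}}})$ and $(g^{-1})^{bi} = O(t^{-2p_{\min\{b,i\}}})$ with $C_0$-constants. The $(b,i,j,\ell)$-summand is thus $O(t^{-2-2p_{\min\{b,i\}} + 2p_{\max\{j,\ell\}}})\,|\kappa_i{}^j\kappa_b{}^\ell|$. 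Since $\kappa_i{}^j = 0$ for $i > j$, every nonzero contribution satisfies $i \le j$ and $b \le \ell$, whence $\min\{b,i\} \le \max\{j,\ell\}$, making the $t$-exponent nonnegative. Because the nonzero entries of $\kappa$ are $C_0$-bounded expressions in $p_i$ and $c_{ij}$, this yields $|t^{-1}\kappa|_g \le C_0\, t^{-1}$.

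For the correction $r$, I would reuse the argument of Lemma~\ref{lem:coord.v.geo.norm} at tensor rank $(1,1)$ and derivative order zero, with $g$ in place of $g^{\bf[n]}$: that proof relied only on the common form $g_{ij} = t^{2p_{\max\{i,j\}}} a_{ij}$ with $a_{ij}$ $C_0$-bounded, which by \eqref{eq:BA1} is available for $g$. The outcome is
$$|r|_g^2 \le C_0 \sum_{b,j=1}^3 t^{-2p_b + 2p_j}\,|r_b{}^j|^2.$$
Inserting the coordinate bound $|r_b{}^j| \le C_n \min\{t^{-1+\ve},\, t^{-1+\ve-2p_j+2p_b}\}$ from \eqref{eq:main.parametrix.k.bd} and performing the case split $b\le j$ (use the first argument of the min, where the exponent $-2p_b+2p_j$ is $\ge 0$) vs.\ $b>j$ (use the second argument, where the exponent $-2p_j+2p_b$ is $>0$), every summand is bounded by $C_n^2\, t^{-2+2\ve}$. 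Hence $|r|_g \le C_n\, t^{-1+\ve}$, and the triangle inequality completes the argument.

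The only substantive issue is bookkeeping of the $n$-dependence of constants: all the comparisons between $g$ and the explicit Kasner ansatz made possible by \eqref{eq:BA1}, as well as the entries of $\kappa$ itself, produce only $C_0$-constants, so $C_n$ enters solely through the coordinate estimate on the correction $r$. No genuine analytical obstacle arises, since both pieces are bounded by purely algebraic manipulations on top of the already-established estimates.
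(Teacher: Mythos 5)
Your proposal is correct and follows essentially the same route as the paper: split $k^{\bf [n]} = t^{-1}\kappa + (k^{\bf [n]} - t^{-1}\kappa)$, observe that the borderline $O(t^{-1})$ contribution comes only from the $n$-independent tensor $\kappa$, and bound the remainder by $C_n t^{-1+\ve}$ by combining the coordinate estimate \eqref{eq:main.parametrix.k.bd} with the coordinate-to-geometric norm comparison of Lemma~\ref{lem:coord.v.geo.norm}. The only (immaterial) difference is that you run the norm comparison directly for $g$ using \eqref{eq:BA1}, whereas the paper phrases it via $g^{\bf [n]}$ together with Lemma~\ref{lem:Wki.comp}, which itself rests on the same bootstrap assumptions.
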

\begin{proof}
This is similar to the proof Proposition~\ref{prop:kn.higher}, except that we need to be more careful to check that the borderline $O(t^{-1})$ terms are \emph{independent} of $n$ (since Lemma~\ref{lem:coord.v.geo.norm} does not give an extra $t^\ve$ for the zeroth derivative). Nevertheless, by \eqref{eq:main.parametrix.k.bd}, it follows that the borderline contributions exactly come from $t^{-1}\kappa_i{ }^j$, which are manifestly independent of $n$.
\end{proof}

\begin{proposition}\label{prop:rdtkn}
For any $n\in \mathbb N$,
$$\sum_{r=1}^{s-1} t^{r} \|\rd_t k^{\bf [n]}\|_{\dot{W}^{r,\infty}(\Sigma_t,g)}  \leq C_n t^{-2+\ve},\quad \|\rd_t k^{\bf [n]}\|_{L^\i(\Sigma_t,g)} \leq C_0 t^{-2} + C_n t^{-2+\ve}.$$
\end{proposition}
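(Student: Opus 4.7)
The plan is to reduce the bound on $\rd_t k^{\bf[n]}$ to the coordinate-type estimates already established in Theorem~\ref{thm:parametrix}, in direct analogy with the proofs of Propositions~\ref{prop:kn.higher} and~\ref{prop:kn.lowest}. The starting point is the defining evolution equation \eqref{eq:k.transport},
\[
\rd_t (k^{\bf[n]})_i{}^j = Ric(g^{\bf[n-1]})_i{}^j + (k^{\bf[n]})_\ell{}^\ell \,(k^{\bf[n]})_i{}^j,
\]
which expresses $\rd_t k^{\bf[n]}$ purely in terms of quantities controlled in points (2)--(3) of Theorem~\ref{thm:parametrix}. Combining \eqref{eq:main.parametrix.Ric.bd} with $n$ replaced by $n-1$ (which is applicable because all the conclusions of the theorem hold for $g^{\bf[n-1]}$) and \eqref{eq:main.parametrix.k.bd}, I get pointwise coordinate bounds of the schematic form
\[
|\rd_x^\alp \rd_t (k^{\bf[n]})_i{}^j|(t,x) \leq C\, t^{-2}|\kappa_\ell{}^\ell \kappa_i{}^j| + C_{\alp,n}\, t^{-2+\ve}\min\{1,\, t^{-2p_j+2p_i}\},
\]
where only the first (explicit $\kappa$-bilinear) term is $n$-independent. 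This is the coordinate-level input.

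Next, I translate these coordinate bounds into the geometric Sobolev norms $\dot W^{r,\infty}(\Sigma_t,g)$. By Lemma~\ref{lem:Wki.comp} (valid for $r\leq s-1$ under the bootstrap assumptions), it suffices to bound $\|\rd_t k^{\bf[n]}\|_{\dot W^{r,\infty}(\Sigma_t,g^{\bf[n]})}$. Applying Lemma~\ref{lem:coord.v.geo.norm} to the $(1,1)$-tensor $\rd_t k^{\bf[n]}$, with $r\geq 1$ covariant derivatives, produces a prefactor $t^{-r(2-\ve)}$ weighting sums
\[
\sum_{r'=0}^{r}\sum_{i_1,\ldots,i_{r'},b,j} t^{-2p_b}t^{2p_j}|\rd_{i_1}\cdots\rd_{i_{r'}}(\rd_t k^{\bf[n]})_b{}^j|^2,
\]
and the anisotropic weights $t^{-2p_b+2p_j}$ are precisely absorbed by the $\min\{t^{-2+\ve},\,t^{-2+\ve-2p_j+2p_b}\}$ improvement in the coordinate bound (exactly as in Lemma~\ref{lem:Ricci} and the proof of Proposition~\ref{prop:kn.higher}). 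The arithmetic then gives the claimed $C_n t^{-r-2+\ve}$ for each $1\leq r\leq s-1$.

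For the lowest-order estimate $\|\rd_t k^{\bf[n]}\|_{L^\infty(\Sigma_t,g)}$, Lemma~\ref{lem:coord.v.geo.norm} with $k=0$ does not produce the extra $t^\ve$, so I must isolate the $n$-independent borderline contribution: this comes from the explicit leading term $t^{-2}\kappa_\ell{}^\ell \kappa_i{}^j$ in $\rd_t k^{\bf[n]}$, whose $g^{\bf[n]}$-norm is $C_0 t^{-2}$ with $C_0$ depending only on $c_{ij}$ and $p_i$ (exactly as in the proof of Proposition~\ref{prop:kn.lowest}, since $\kappa$ is determined purely by the data). All remaining contributions are $O(t^{-2+\ve})$ in coordinates with $n$-dependent constants, which pass through Lemma~\ref{lem:coord.v.geo.norm} (with $k=0$) and Lemma~\ref{lem:Wki.comp} to yield the $C_n t^{-2+\ve}$ error. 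The principal obstacle, which is really only a bookkeeping issue, is tracking the anisotropic $t^{-2p_j+2p_b}$ improvements in tandem with the $t^{-r(2-\ve)}$ cost of each covariant derivative; this has already been handled in the parallel argument for $\kn$ itself, and no new ideas are required here.
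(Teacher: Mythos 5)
Your proposal is correct and follows essentially the same route as the paper: the paper likewise reduces $\rd_t k^{\bf [n]}$ via the transport equation \eqref{eq:k.transport} to the two terms $Ric(g^{\bf [n-1]})_i{}^j$ and $(k^{\bf [n]})_\ell{}^\ell(k^{\bf [n]})_i{}^j$, bounds them with \eqref{eq:main.parametrix.Ric.bd} and \eqref{eq:main.parametrix.k.bd} transferred to geometric norms via Lemmas~\ref{lem:coord.v.geo.norm} and \ref{lem:Wki.comp}, and notes that the only $n$-independent borderline $O(t^{-2})$ contribution comes from the $t^{-1}\kappa$ part of $k^{\bf [n]}$ in the quadratic term.
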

\begin{proof}
This is a small variation to Propositions~\ref{prop:kn.higher} and \ref{prop:kn.lowest}. First, note that it suffices to control terms on the RHS of \eqref{eq:k.transport}. 
\begin{itemize}
\item For the term $Ric(g^{\bf [n-1]})_i{ }^j$, we use Lemmas~\ref{lem:coord.v.geo.norm} and \ref{lem:Wki.comp} and the estimate \eqref{eq:main.parametrix.Ric.bd}. (Note that there are no borderline terms in this estimate.)
\item For the term $(k^{\bf [n]})_\ell{ }^\ell(k^{\bf [n]})_i{ }^j$, we use Lemmas~\ref{lem:coord.v.geo.norm} and \ref{lem:Wki.comp} and the estimate \eqref{eq:main.parametrix.k.bd}. For the lowest order term, note that the borderline terms depend only on $t^{-1}\kappa_i{ }^j$ and are thus independent of $n$.
\end{itemize} \qedhere
\end{proof}

Once we obtain the estimates for $k^{\bf [n]}$, the estimates for $k$ can be controlled after using also the bootstrap assumptions \eqref{eq:BA4}.
\begin{proposition}\label{prop:k.general}
The following estimates hold for $k$:
$$\|k\|_{L^\i(\Sigma_t,g)}\leq C_0 t^{-1} + C_n t^{-1+\ve},\quad \sum_{r=1}^{s-2} t^r\|\nab^{(r)} k\|_{L^\i(\Sigma_t,g)} \leq C_n t^{-1+\ve}, $$
$$\|\nab^{(s-1)} k\|_{(L^\i + t^{s+ \f 52-\ve}L^2)(\Sigma_t,g)} \leq C_n t^{-s+\ve},\quad \|\nab^{(s)} k\|_{(L^\i + t^{s+ \f 52-\ve} L^2)(\Sigma_t,g)} \leq C_n t^{-s-1+\ve},$$
$$\|\rd_t k\|_{L^\i(\Sigma_t,g)}\leq C_0 t^{-2} + C_n t^{-2+\ve},\quad \sum_{r=1}^{s-3} t^r\|\nab^{(r)} \rd_t k\|_{L^\i(\Sigma_t,g)} \leq C_n t^{-2+\ve}, $$
$$\|\nab^{(s-2)} \rd_t k\|_{(L^\i + t^{s+ \f 52-\ve}L^2)(\Sigma_t,g)} \leq C_n t^{-s+\ve},\quad \|\nab^{(s-1)} \rd_t k\|_{(L^\i + t^{s+ \f 52-\ve} L^2)(\Sigma_t,g)} \leq C_n t^{-s-1+\ve}.$$
Moreover, the above estimates hold both when $k$ is replaced by $k^{\bf [n]}$ and $k^{(d)}$.
\end{proposition}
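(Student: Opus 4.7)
The strategy is to decompose $k = k^{\bf [n]} + k^{(d)}$ and $\rd_t k = \rd_t k^{\bf [n]} + \rd_t k^{(d)}$, then bound each piece separately: the background piece from the explicit estimates in Propositions~\ref{prop:kn.higher}, \ref{prop:kn.lowest}, \ref{prop:rdtkn}, and the difference piece from the bootstrap assumption \eqref{eq:BA4} combined with Sobolev embedding on $(\Sigma_t, g)$ (Lemma~\ref{lem:Sobolev}). The triangle inequality then assembles the result, and in each regime one of the two pieces either matches or is absorbed into the other.

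For $k^{(d)}$, the bootstrap assumption \eqref{eq:BA4} yields $\|\nab^{(r')} k^{(d)}\|_{L^2(\Sigma_t,g)}\leq t^{5/2}$ for all $r'\leq s$, and likewise $\|\nab^{(r')} \rd_t k^{(d)}\|_{L^2(\Sigma_t,g)}\leq t^{5/2}$ for all $r'\leq s-1$. At low orders where $r+2\leq s$, the $H^2 \hookrightarrow L^\infty$ embedding from Lemma~\ref{lem:Sobolev} gives
$$\|\nab^{(r)} k^{(d)}\|_{L^\infty(\Sigma_t,g)}\leq C_0 t^{-5/2}\|\nab^{(r)} k^{(d)}\|_{H^2(\Sigma_t,g)}\leq C_0,$$
and analogously $\|\nab^{(r)} \rd_t k^{(d)}\|_{L^\infty(\Sigma_t,g)}\leq C_0$ whenever $r+2\leq s-1$. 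Since $C_0\cdot t^r\leq C_n t^{-1+\ve}$ and $C_0\cdot t^r\leq C_n t^{-2+\ve}$ for small $t$ whenever $r\geq 1$, and since these $O(1)$ bounds are absorbed into the leading singular rate of $k^{\bf [n]}$ when $r=0$, the low-order pure $L^\infty$ estimates follow.

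At top order (that is, $r=s-1,s$ for $k$ and $r=s-2,s-1$ for $\rd_t k$), $H^2$-Sobolev is unavailable, and we instead place $k^{(d)}$ in the $t^{s+\frac{5}{2}-\ve}L^2$ component of the combined norm. Indeed,
$$\|\nab^{(r)} k^{(d)}\|_{t^{s+\frac{5}{2}-\ve}L^2(\Sigma_t,g)} = t^{-s-\frac{5}{2}+\ve}\|\nab^{(r)} k^{(d)}\|_{L^2(\Sigma_t,g)}\leq t^{-s+\ve}$$
whenever $r\leq s$, and similarly for $\rd_t k^{(d)}$ with $r\leq s-1$. For the background piece, Proposition~\ref{prop:kn.higher} furnishes $\|\nab^{(s-1)} k^{\bf [n]}\|_{L^\infty} \leq C_n t^{-s+\ve}$ and $\|\nab^{(s)} k^{\bf [n]}\|_{L^\infty + t^{s+\frac{5}{2}-\ve}L^2} \leq C_n t^{-s-1+\ve}$, and Proposition~\ref{prop:rdtkn} yields the analogous estimates for $\rd_t k^{\bf [n]}$. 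Taking the sum decomposition in the $X+Y$ norm with $X=L^\infty$ for the background and $Y=t^{s+\frac{5}{2}-\ve}L^2$ for the difference, and noting that the $t^{-s+\ve}$ contributions from $k^{(d)}$ and $\rd_t k^{(d)}$ are dominated by the corresponding $k^{\bf [n]}$ rates at order $s$ (since $t^{-s+\ve}\leq t^{-s-1+\ve}$ for $t\leq 1$), we recover the stated combined-norm bounds.

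There is no genuine obstacle: the proposition is a bookkeeping exercise combining previously established estimates. The only delicate point is the asymmetry between $k^{(d)}$ (controlled in $H^s$) and $\rd_t k^{(d)}$ (controlled only in $H^{s-1}$), which explains the shift by one derivative in the range of $r$ for which the pure $L^\infty$ bound is claimed. The concluding assertion that the estimates remain valid when $k$ is replaced by either $k^{\bf [n]}$ or $k^{(d)}$ individually is immediate: the background piece realizes each bound by construction, while the difference piece always satisfies a strictly better bound which is trivially absorbed into the stated RHS.
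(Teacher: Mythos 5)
Your proposal is correct and follows essentially the same route as the paper: decompose $k=k^{\bf [n]}+k^{(d)}$, bound the background piece via Propositions~\ref{prop:kn.higher}, \ref{prop:kn.lowest}, \ref{prop:rdtkn}, control the difference via the bootstrap assumption \eqref{eq:BA4} together with Sobolev embedding (which is exactly the content of Lemma~\ref{lem:kd.Li}, placing the top-order terms in the $t^{s+\f 52-\ve}L^2$ component), and conclude by the triangle inequality. The paper's proof is just a terser version of the same bookkeeping.
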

\begin{proof}
 That the estimates hold for $k^{\bf [n]}$ follows from Propositions~\ref{prop:kn.higher}, \ref{prop:kn.lowest} and \ref{prop:rdtkn}. That the estimates hold for $k^{(d)}$ follows from \eqref{eq:BA4} and Lemma~\ref{lem:kd.Li}.
 
Finally, since $k = k^{\bf [n]} + k^{(d)}$, the estimates also hold for $k$.
\end{proof}

\begin{proposition}\label{prop:hn}
For any $n\in \mathbb N$, 
$$\sum_{r=1}^{s-1} t^{r} \| h^{\bf [n]}\|_{\dot{W}^{r,\infty}(\Sigma_t,g)} + \sum_{r=0}^{s-1} (t^{r+1} \| \rd h^{\bf [n]}\|_{\dot{W}^{r,\infty}(\Sigma_t,g)} + t^{r+2} \|\nab^{\bf [n]} \rd h^{\bf [n]}\|_{\dot{W}^{r,\infty}(\Sigma_t,g)}) \leq C_n t^{-1+\ve},$$
and
$$\|h^{\bf [n]} \|_{L^\i(\Sigma_t,g)} \leq C_0 t^{-1} + C_n t^{-1+\ve}.$$
\end{proposition}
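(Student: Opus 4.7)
The plan is to mimic the proofs of Proposition~\ref{prop:kn.higher} and Proposition~\ref{prop:kn.lowest}, exploiting the simple fact that $h^{\bf [n]}=(k^{\bf [n]})_\ell{}^\ell$ is the trace of $k^{\bf [n]}$. The starting point will be to trace the coordinate bound \eqref{eq:main.parametrix.k.bd}: using $\kappa_\ell{}^\ell = -\sum_\ell p_\ell = -1$ (by condition (2) of Theorem~\ref{mainthm}), one obtains
$$|\rd_x^\alp(h^{\bf [n]}+t^{-1})|(t,x)\leq C_{\alp,n}\,t^{-1+\ve}$$
for every multi-index $\alp$. This is the only input from Theorem~\ref{thm:parametrix} that we will need; everything else is book-keeping.

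For the second (pointwise) estimate, the $|\alp|=0$ case gives $|h^{\bf [n]}|(t,x)\le t^{-1}+C_n t^{-1+\ve}$ directly. Since this is a scalar pointwise bound it is unchanged when we take $\|\cdot\|_{L^\infty(\Sigma_t,g)}$. The borderline $t^{-1}$ contribution is $-t^{-1}\sum_\ell p_\ell=-t^{-1}$, which is manifestly independent of $n$, so it can be absorbed into $C_0 t^{-1}$, while the remainder is absorbed into $C_n t^{-1+\ve}$. This mirrors exactly the argument in Proposition~\ref{prop:kn.lowest}.

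For the first (higher-derivative) estimate, I first observe that since $t^{-1}$ is $x$-independent, $\rd_x^\alp h^{\bf [n]}=\rd_x^\alp(h^{\bf [n]}+t^{-1})=O(t^{-1+\ve})$ for every $|\alp|\ge 1$. I then apply Lemma~\ref{lem:coord.v.geo.norm}, first to the scalar $h^{\bf [n]}$ (to handle the $\|h^{\bf [n]}\|_{\dot W^{r,\infty}(\Sigma_t,g^{\bf [n]})}$ term for $r\ge 1$, which involves only derivatives of order $\ge 1$), then to the spatial one-form $\rd h^{\bf [n]}$ (to handle $\|\rd h^{\bf [n]}\|_{\dot W^{r,\infty}}$ and $\|\nab^{\bf [n]}\rd h^{\bf [n]}\|_{\dot W^{r,\infty}}$). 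Each covariant derivative costs a factor $t^{-(1-\ve/2)}$ and each lowered index costs at most $t^{-p_3}$, both of which are compensated by $p_3\le 1-\ve$ in the target weights $t^r$, $t^{r+1}$, $t^{r+2}$.

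Finally, to pass from norms computed with respect to $g^{\bf [n]}$ to norms computed with respect to $g$ (as required by the statement, since $\dot W^{r,\infty}(\Sigma_t,g)$ uses $\nab$), I invoke Lemma~\ref{lem:Wki.comp}, valid for $r\le s-1$ under the bootstrap assumption \eqref{eq:BA2}. Since every step is a straightforward transcription of Proposition~\ref{prop:kn.higher} with $k^{\bf [n]}$ replaced by $h^{\bf [n]}$, there is no genuine obstacle; the only thing to verify carefully is that the $-t^{-1}$ borderline (which is present only at zeroth derivative) never enters the estimates with $r\ge 1$, which is exactly why the statement restricts the first sum to $r\ge 1$.
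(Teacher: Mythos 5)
Your proposal is correct and is essentially the paper's own argument: the paper proves Proposition~\ref{prop:hn} precisely by noting $h^{\bf [n]}=(k^{\bf [n]})_\ell{}^\ell$ and repeating the proofs of Propositions~\ref{prop:kn.higher} and \ref{prop:kn.lowest}, i.e.\ tracing \eqref{eq:main.parametrix.k.bd} (with $\kappa_\ell{}^\ell=-1$ giving the $n$-independent borderline $-t^{-1}$) and converting to geometric norms via Lemmas~\ref{lem:coord.v.geo.norm} and \ref{lem:Wki.comp}. Your closing observation --- that the $t^{-1}$ borderline enters only at zeroth order, so for $r\geq 1$ one should apply Lemma~\ref{lem:coord.v.geo.norm} to the one-form $\rd h^{\bf [n]}$ rather than to $h^{\bf [n]}$ itself --- is exactly the right bookkeeping point.
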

\begin{proof}
Recalling that we set $h^{\bf [n]} = (\kn)_\ell{ }^\ell$, this can be proven in the same way as Propositions~\ref{prop:kn.higher} and \ref{prop:kn.lowest}. \qedhere
\end{proof}




\begin{proposition}\label{prop:Ric.general}
The following estimates hold for $Ric(g)$:
$$\sum_{r = 0}^{s-3}  {t^r} \|Ric(g)\|_{W^{r,\infty}(\Sigma_t,g)} \leq C_nt^{-2+\ve},$$ 
$$\|\nab^{(s-2)} Ric(g)\|_{(L^\i + t^{s+\f 52 - \ve} L^2)(\Sigma_t,g)} \leq C_n t^{-s+\ve},\quad \|\nab^{(s-1)} Ric(g)\|_{(L^\i + t^{s+\f 52 - \ve} L^2)(\Sigma_t,g)} \leq C_n t^{-s-1+\ve}.$$
\end{proposition}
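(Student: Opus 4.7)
The strategy is to decompose
\[
Ric(g) = Ric(g^{\bf [n]}) + \mathcal R^{(d)}, \qquad \mathcal R^{(d)} := Ric(g) - Ric(g^{\bf [n]}),
\]
and estimate each piece separately. By Lemmas~\ref{lem:Wki.comp} and \ref{lem:Hk.comp}, it suffices (except at the very top orders) to bound the two pieces in the $\nab^{\bf [n]}$-geometric Sobolev norms.

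First I would handle $Ric(g^{\bf [n]})$. From Theorem~\ref{thm:parametrix}(3), we have the coordinate-derivative bound
\[
|\rd_x^\alp Ric(g^{\bf [n]})_i{ }^j|(t,x) \leq C_{\alp,n}\,\min\{t^{-2+\ve},\; t^{-2+\ve - 2p_j + 2p_i}\}.
\]
Plugging this into Lemma~\ref{lem:coord.v.geo.norm} with $(l,m)=(1,1)$, the geometric factor $t^{-p_i + p_j}$ interacts with the $\min$ structure in the coordinate bound: in the case $p_j\geq p_i$ one uses the second branch of the $\min$, and in the case $p_j<p_i$ one uses the first. In both cases the product is dominated by $t^{-2+\ve}$, and the $t^{-k(2-\ve)/2}$ loss from differentiating absorbs safely into the desired rate $t^{-2-r+\ve}$. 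This yields the $\|\nab^{\bf [n]}\|^{(r)}$-bound up to $r=s-1$. For the very top orders $r=s-2, s-1$, where one cannot afford to convert $\nab^{\bf [n]}$ to $\nab$ in $L^\i$ because the connection difference $\nab^{(d)}$ is only controlled at top order in $L^2$ (Lemma~\ref{lem:est.nabd}), I invoke Lemma~\ref{lem:Wsi.save} to split into $L^\i + t^{s+5/2-\ve} L^2$.

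For $\mathcal R^{(d)}$, I use the schematic structure of the Ricci tensor: up to index contractions, $Ric(g)_i{ }^j$ is a sum of terms of the form $g^{\ast\ast}\,\rd^2 g_{\ast\ast}$ and $g^{\ast\ast}\, g^{\ast\ast}\, \rd g_{\ast\ast}\, \rd g_{\ast\ast}$ (see \eqref{Ricciterms} in the proof of Lemma~\ref{lem:Ricci}). Consequently $\mathcal R^{(d)}$ decomposes into a finite sum of terms in each of which exactly one factor is a difference — either $(g^{-1})^{(d)}$ or $g^{(d)}$ (or a derivative thereof) — while the remaining factors involve $g^{\bf [n]}$, $(g^{\bf [n]})^{-1}$, $g$, $g^{-1}$ and their first and second derivatives. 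After commuting a further $r$ (covariant or coordinate) derivatives through, I would H\"older-split each resulting term, placing the small difference factor in $L^2$ and the background/auxiliary factors in $L^\i$, using:
\begin{itemize}
\item the smallness bounds \eqref{eq:BA2}, \eqref{eq:BA3} and Lemma~\ref{lem:est.nabd} for the difference factors $g^{(d)}, (g^{-1})^{(d)}$ and their derivatives;
\item the uniform background bounds analogous to those in Lemmas~\ref{lem:coord.v.geo.norm}, \ref{lem:Wki.comp}, \ref{lem:Wsi.save} applied to $g^{\bf [n]}$ and (via Lemma~\ref{lem:Wki.comp} and the bootstrap assumptions) to $g$.
\end{itemize}
Since the difference factors carry a smallness factor at least $t^{5/2}$ in the worst top-order $L^2$ norms while the worst background loss in second-derivative terms is of order $t^{-2+\ve}$ per two derivatives (with $t^{-1+\ve}$ per one derivative), the products are comfortably below the required rates $t^{-2-r+\ve}$.

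The main obstacle is at the top derivative level $r=s-2,s-1$. Here two issues collide: (i) one encounters $\nab^{(s+1)} g^{(d)}$ or $\nab^{(s)}(g^{-1})^{(d)}$, which are only controlled in $L^2$ (not $L^\i$), and (ii) the paired background factor $\nab^{(s-1)}\kn$-type quantity is itself only in $L^\i + t^{s+5/2-\ve}L^2$ by Lemma~\ref{lem:Wsi.save}. Hence the output must inevitably be split as $L^\i + t^{s+5/2-\ve} L^2$, exactly matching the statement. To close this case I would pair the top-order $L^2$ difference factor with an $L^\i$ background factor (available because the high-order coefficient in front has at most two coordinate derivatives of $g^{\bf [n]}$, controlled uniformly in $L^\i$ away from the very top), and conversely pair the top-order background $L^2$ piece arising from Lemma~\ref{lem:Wsi.save} with an $L^\i$ difference factor (available via \eqref{eq:BA2}). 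A careful bookkeeping of the $t^{s+5/2-\ve}$ weight shows this is consistent with the claimed bounds $C_n t^{-s+\ve}$ and $C_n t^{-s-1+\ve}$, completing the proof.
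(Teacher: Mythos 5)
Your treatment of the background piece $Ric(g^{\bf [n]})$ is essentially the paper's: coordinate bounds from Theorem~\ref{thm:parametrix}(3), conversion to geometric norms via Lemma~\ref{lem:coord.v.geo.norm}, and Lemmas~\ref{lem:Wki.comp}, \ref{lem:Wsi.save} at the top orders. (Incidentally, your labeling of which branch of the $\min$ is active when $p_j\geq p_i$ is swapped, but the conclusion that the weight $t^{-p_i+p_j}$ times the $\min$ is $\leq t^{-2+\ve}$ is correct either way.)

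For the difference $Ric(g)-Ric(g^{\bf [n]})$ you depart from the paper, and this is where your plan has a genuine gap. You propose to difference the \emph{coordinate} formula \eqref{Ricciterms}, isolate one factor of $g^{(d)}$ or $(g^{-1})^{(d)}$ (or its coordinate derivatives up to second order, plus $r$ more), and H\"older-split with the difference factor in $L^2$ and the rest in $L^\infty$. But the individual pieces of this splitting are not tensorial (raw $\rd^2 g^{(d)}$, $\rd\Gamma$, etc.), while the only control you have on $g^{(d)},(g^{-1})^{(d)}$ is in the \emph{geometric} norms of the bootstrap assumptions \eqref{eq:BA2}--\eqref{eq:BA3}; the paper provides a conversion lemma only in one direction (Lemma~\ref{lem:coord.v.geo.norm} bounds geometric norms by coordinate ones, with an anisotropic loss), and no coordinate-derivative bounds on $g^{(d)}$ are available. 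So the "careful bookkeeping" you defer is in fact the entire content of the step: you would have to prove a reverse conversion with index-dependent weights $t^{\pm 2p_i}$ and $t^{-1}$-type losses per derivative and check that the $t^{5/2}$ smallness survives, including verifying at the top orders $r=s-2,s-1$ that the $L^2$ part of your output is of size $O(t^{5/2})$ — this is exactly what is needed to fit into the $t^{s+\f52-\ve}L^2$ slot with prefactor $C_nt^{-s+\ve}$, and you never exhibit it. The paper avoids all of this by observing that the difference of curvatures is tensorial in $\nab^{(d)}=\Gamma-\Gamma^{\bf[n]}$: schematically $Riem-Riem^{\bf[n]}=\nab\,\nab^{(d)}+\nab^{(d)}\star\nab^{(d)}$, and then \eqref{eq:diff.Gamma.id} together with Lemma~\ref{lem:est.nabd} and \eqref{eq:BA2}--\eqref{eq:BA3} gives the bound (uniform in $L^\infty$ up to order $s-3$, and of size $t^{\f52}$ in $L^2$ at orders $s-2,s-1$) in two lines. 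Either prove the missing coordinate-to-geometric conversion and run the weight bookkeeping, or covariantize via the difference of Christoffel symbols, which is the paper's route.
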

\begin{proof}
For simplicity, we write in this proof $Ric = Ric(g)$, $Ric^{\bf [n]} = Ric(g^{\bf [n]})$ and similarly for the Riemann curvature tensor.

First, notice that by Lemma~\ref{lem:coord.v.geo.norm} and \eqref{eq:main.parametrix.Ric.bd}, it follows that
$$\sum_{r=0}^{s-1}  {t^r} \|Ric^{\bf [n]}\|_{W^{r,\infty}(\Sigma,g^{\bf [n]})} \leq C_n t^{-2+\ve}.$$
As a result, Lemmas~\ref{lem:Wki.comp} and \ref{lem:Wsi.save} imply that all the desired estimates when $Ric$ is replaced by $Ric^{\bf [n]}$.

It thus remains to estimate the difference $Ric - Ric^{\bf [n]}$. We will bound the difference of the full Riemann curvature tensor; the bounds for the Ricci curvature tensor of course follow immediately. We compute
\begin{equation}
\begin{split}
&\: (Riem - Riem^{\bf [n]})^\ell_{ijk} \\
=&\:  \rd_i (\Gamma - \Gamma^{\bf [n]})^\ell_{jk} - \rd_j (\Gamma - \Gamma^{\bf [n]})^\ell_{ik} + \Gamma^p_{jk} \Gamma^\ell_{ip} - \Gamma^p_{ik} \Gamma^\ell_{jp} - (\Gamma^{\bf [n]})^p_{jk} (\Gamma^{\bf [n]})^\ell_{ip} + (\Gamma^{\bf [n]})^p_{ik} (\Gamma^{\bf [n]})^\ell_{jp} \\
=&\: \rd_i (\Gamma - \Gamma^{\bf [n]})^\ell_{jk} - \Gamma_{ij}^p (\Gamma - \Gamma^{\bf [n]})^\ell_{pk} - \Gamma^p_{ik} (\Gamma - \Gamma^{\bf [n]})^\ell_{jp} + \Gamma^\ell_{ip} (\Gamma - \Gamma^{\bf [n]})^p_{jk}\\
&\: -  \rd_j (\Gamma - \Gamma^{\bf [n]})^\ell_{ik} + \Gamma_{ij}^p (\Gamma - \Gamma^{\bf [n]})^\ell_{pk} + \Gamma^p_{jk} (\Gamma - \Gamma^{\bf [n]})^\ell_{ip}  - \Gamma^\ell_{jp}(\Gamma - \Gamma^{\bf [n]})^p_{ik} \\
&\: - (\Gamma- \Gamma^{\bf [n]})^p_{jk} (\Gamma - \Gamma^{\bf [n]})^\ell_{ip}  + (\Gamma- \Gamma^{\bf [n]})^\ell_{jp}(\Gamma - \Gamma^{\bf [n]})^p_{ik}\\
=&\: \nab_i (\Gamma - \Gamma^{\bf [n]})^\ell_{jk} - \nab_j  (\Gamma - \Gamma^{\bf [n]})^\ell_{ik} - (\Gamma- \Gamma^{\bf [n]})^p_{jk} (\Gamma - \Gamma^{\bf [n]})^\ell_{ip}  + (\Gamma- \Gamma^{\bf [n]})^\ell_{jp}(\Gamma - \Gamma^{\bf [n]})^p_{ik}.
\end{split}
\end{equation}
Combining this with \eqref{eq:diff.Gamma.id} and the bootstrap assumptions \eqref{eq:BA2} and \eqref{eq:BA3}, it is easy to see that $Riem - Riem^{\bf [n]}$ can be controlled by 
$$\sum_{r = 0}^{s-3}\|Riem - Riem^{\bf [n]}\|_{W^{r,\infty}(\Sigma_t,g)},\quad \sum_{r=s-2}^{s-1} \|Riem - Riem^{\bf [n]}\|_{H^r(\Sigma_t, g)} \leq C_0.$$
This concludes the proof of the proposition. \qedhere
\end{proof}

\subsubsection{Commutator estimates}

We will often use the commutator formula between the Lie derivative in $\partial_t$ and covariant derivatives in the spatial directions:

\begin{proposition}\label{prop:commutation.formula}
The following commutation formula holds for any $(m,l)$ $\Sigma$-tangent tensor $\mathcal T$:
\begin{equation*}
\begin{split}
[\rd_t, \, \nab_a ] \mathcal T^{j_1\ldots j_m}_{i_1\ldots i_l} = &\: - \sum_{r=1}^l ((g^{-1})^{be}g_{m (i_r|}\nabla_e  k_{|a)}{}^m - \nabla_{(a} k_{i_r)}{}^b - (g^{-1})^{be} g_{d(a}\nabla_{i_r)}k_e{}^d) \mathcal{T}^{j_1\ldots j_m}_{i_1\ldots \underset{r-\text{th index}}{b}\ldots i_l} \\
&\: + \sum_{r=1}^m ((g^{-1})^{j_r e}g_{m (b|}\nabla_e  k_{|a)}{}^m-\nabla_{(a} k_{b)}{}^{j_r} - (g^{-1})^{j_r e} g_{d(a}\nabla_{b)}k_e{}^{d})\mathcal{T}^{j_1\ldots\overset{r-\text{th index}}{b}\ldots j_m}_{i_1\ldots i_l}.
\end{split}
\end{equation*}

\end{proposition}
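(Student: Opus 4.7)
The plan is to reduce the proposition to a direct computation of $\partial_t \Gamma^c_{ab}$, the time derivative of the Christoffel symbols of $g$, via the standard identity
\begin{equation*}
[\partial_t, \nabla_a]\mathcal{T}^{j_1\ldots j_m}_{i_1\ldots i_l} = -\sum_{r=1}^l (\partial_t \Gamma)^b_{a i_r}\,\mathcal{T}^{j_1\ldots j_m}_{i_1 \ldots \underset{r\text{-th}}{b} \ldots i_l} + \sum_{r=1}^m (\partial_t \Gamma)^{j_r}_{ab}\,\mathcal{T}^{j_1\ldots \overset{r\text{-th}}{b}\ldots j_m}_{i_1\ldots i_l}.
\end{equation*}
This identity is coordinate-independent: it follows from the fact that $\partial_t$ (as Lie derivative along the coordinate vector field $\partial_t$) commutes with partial derivatives, so that the entire commutator comes from the time-variation of the Christoffel coefficients. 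Thus the entire content of the proposition reduces to computing $\partial_t \Gamma^c_{ab}$ and matching the resulting expression to the symmetrized form stated.

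First I would read off $h_{ab} := \partial_t g_{ab}$ directly from the third equation of \eqref{eq:hyp.sys}, yielding $h_{ab} = -g_{a\ell}k_b{}^\ell - g_{b\ell}k_a{}^\ell$. This expression is \emph{manifestly} symmetric in $a,b$, which is important because in the present bootstrap setting we do not a priori know that $g_{a\ell}k_b{}^\ell$ by itself is symmetric in $a,b$ (cf.\ the remark at the end of Section~\ref{subsec:locexist}). Next I would apply the standard variation-of-Christoffel formula, obtained by differentiating $\Gamma^c_{ab}=\frac{1}{2}(g^{-1})^{ce}(\partial_a g_{be}+\partial_b g_{ae}-\partial_e g_{ab})$ in $t$, using $\nabla_e g_{ab}=0$ to promote partial to covariant derivatives, and checking that the contribution coming from $\partial_t (g^{-1})^{ce}$ is exactly cancelled by the connection terms produced when one rewrites $\partial_\alpha h$ as $\nabla_\alpha h$; the outcome is the textbook identity
\begin{equation*}
\partial_t \Gamma^c_{ab} = \tfrac{1}{2}(g^{-1})^{ce}\bigl(\nabla_a h_{be} + \nabla_b h_{ae} - \nabla_e h_{ab}\bigr).
\end{equation*}

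Substituting the expression for $h$ and distributing the covariant derivatives (again using $\nabla g=0$), each of the three terms on the right splits into two, giving six terms. Two of them, when traced against $(g^{-1})^{ce}g_{e\ell}=\delta^c_\ell$, collapse to the symmetrized term $-\nabla_{(a}k_{b)}{}^c$. The other four recombine, by elementary symmetrization in $(a,b)$, into exactly $(g^{-1})^{ce}g_{m(b|}\nabla_e k_{|a)}{}^m$ and $-(g^{-1})^{ce}g_{d(a}\nabla_{b)}k_e{}^d$. Substituting $c\mapsto b$ in the lower-index sum and $c\mapsto j_r$ in the upper-index sum of the identity displayed above, and using symmetry of the symmetrization bracket to rewrite $g_{m(a|}\nabla_e k_{|i_r)}{}^m = g_{m(i_r|}\nabla_e k_{|a)}{}^m$, yields the displayed formula of the proposition.

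No step here is conceptually difficult; the only real obstacle is bookkeeping. The main source of potential error is the symmetrization convention $(a|\cdots|b)$, under which the indices enclosed in bars are \emph{excluded} from symmetrization — care is needed in the final step to match the order of the symmetrization against the tensor contraction so that the signs and index placements in the stated formula are reproduced exactly.
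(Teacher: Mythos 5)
Your proposal is correct and follows essentially the same route as the paper: the paper also starts from the identity $[\partial_t,\nabla_a]\mathcal T = -\sum \partial_t\Gamma^b_{ai_r}\mathcal T\cdots + \sum \partial_t\Gamma^{j_r}_{ab}\mathcal T\cdots$ and then computes $\partial_t\Gamma^b_{ac}$ from the evolution equations \eqref{eq:hyp.sys} and \eqref{eq:g-1.transport}, performing in-line exactly the cancellation of the $\partial_t(g^{-1})$ contribution against connection terms that you package as the textbook variation formula $\partial_t\Gamma^c_{ab}=\tfrac12 (g^{-1})^{ce}(\nabla_a h_{be}+\nabla_b h_{ae}-\nabla_e h_{ab})$ with $h_{ab}=-2g_{\ell(a}k_{b)}{}^\ell$. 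The resulting symmetrized expression agrees term by term with the one stated in the proposition, so there is no gap.
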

\begin{proof}
A direct computation shows
\begin{align*}
\partial_t\nabla_a\mathcal{T}^{j_1\ldots j_m}_{i_1\ldots i_l} = \nabla_a\partial_t\mathcal{T}^{j_1\ldots j_m}_{i_1\ldots i_l} - \sum_{r=1}^l\partial_t\Gamma^b_{a i_r}\mathcal{T}^{j_1\ldots j_m}_{i_1\ldots \underset{r-\text{th index}}{b}\ldots i_l} + \sum_{r=1}^m\partial_t\Gamma^{j_r}_{a b }\mathcal{T}^{j_1\ldots\overset{r-\text{th index}}{b}\ldots j_m}_{i_1\ldots i_l}.
\end{align*}
On the other hand, we compute  {using \eqref{eq:hyp.sys} and \eqref{eq:g-1.transport} that}
\begin{align}
\notag\partial_t\Gamma_{ac}^b=&\, (g^{-1})^{d(b} k_d{ }^{l)}(\partial_ag_{cl}+\partial_cg_{al}-\partial_lg_{ac}) + (g^{-1})^{bl}(\partial_l (g_{d(a} k_{c)}{ }^d)-\partial_a (g_{d(l} k_{c)}{ }^d)-\partial_c (g_{d(a} k_{l)}{ }^d))\\
\notag=&\, 2 (g^{-1})^{d(b} k_d{ }^{l)}\Gamma_{ac}^m g_{ml} +  (g^{-1})^{bl}(\partial_l (g_{d(a} k_{c)}{ }^d)-\partial_a (g_{d(l} k_{c)}{ }^d)-\partial_c (g_{d(a} k_{l)}{ }^d)) \\
=&\,(g^{-1})^{bl}(\nabla_l  (g_{d(a} k_{c)}{ }^d) - \nabla_a (g_{d(l} k_{c)}{ }^d) - \nabla_c (g_{d(a} k_{l)}{ }^d))\notag\\
=&\, (g^{-1})^{bl} g_{d(a|} \nab_l k_{|c)}{ }^d - \nab_{(a} k_{c)}{ }^b - (g^{-1})^{bl} g_{d(a} \nab_{c)} k_l{ }^d \notag.
\end{align}
%

Combining these computations yields the desired formula. \qedhere
%
%
%

\end{proof}

\begin{proposition}\label{prop:dt.comm.est}
Let $\mathcal T$ be an $(m,l)$ $\Sigma$-tangent tensor.

For $0\leq r \leq s-1$,
\begin{equation}\label{eq:comm.form.est.main.1}
\|[\rd_t, \nab_a] \mathcal T\|_{H^r(\Sigma_t,g)} \leq C_n \sum_{ r_1 + r_2 = r } t^{-2-r_1+\ve} \|\mathcal T \|_{H^{r_2}(\Sigma_t,g)}.
\end{equation}

Consequently, for $0\leq k\leq s$, iterated commutators can be bounded as follows:
\begin{equation}\label{eq:comm.form.est.main.2}
\|[\rd_t, \nab_{i_1}\cdots \nab_{i_k}] \mathcal T\|_{L^2(\Sigma_t,g)} \leq C_n \sum_{r'=0}^{k-1} t^{-2-r'+\ve} \|\mathcal T\|_{H^{k-r'-1}(\Sigma_t,g)}.
\end{equation}

Finally, if $\mathcal T$ is a \underline{scalar} function, then in fact \eqref{eq:comm.form.est.main.2} holds for $0\leq k \leq s+1$.

\end{proposition}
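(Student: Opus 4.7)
The plan is to establish all three estimates by reducing to the first-order commutation formula in Proposition~\ref{prop:commutation.formula} and then invoking Sobolev-type product estimates together with the $k$-bounds from Proposition~\ref{prop:k.general}. Proposition~\ref{prop:commutation.formula} shows that $[\partial_t, \nabla_a]\mathcal{T}$ is schematically of the form $F(g, g^{-1}) \cdot \nabla k \cdot \mathcal{T}$, where $F$ is a bounded contraction in $g$ and $g^{-1}$ (its $|\cdot|_g$-pointwise size is controlled by a universal constant depending only on the tensor rank). Since $\nabla g = \nabla g^{-1} = 0$ under the Levi--Civita connection, distributing any further $\nabla^{(r')}$ via the Leibniz rule does not hit $F$, yielding
\[
\bigl| \nabla^{(r')} [\partial_t, \nabla_a]\mathcal{T} \bigr|_g \;\leq\; C_0 \sum_{r_1 + r_2 = r'} | \nabla^{(r_1+1)} k |_g \cdot | \nabla^{(r_2)} \mathcal{T} |_g.
\]

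For part 1, I would estimate each product in $L^2(\Sigma_t, g)$ by splitting into two regimes. When $r_1 \leq s-3$, Proposition~\ref{prop:k.general} directly gives $\|\nabla^{(r_1+1)} k\|_{L^\infty(\Sigma_t,g)} \leq C_n t^{-r_1 - 2 + \ve}$, so pairing with $\|\nabla^{(r_2)}\mathcal{T}\|_{L^2} \leq \|\mathcal{T}\|_{H^{r_2}}$ yields the target rate immediately. When $r_1 \in \{s-2, s-1\}$, I would use the mixed-norm bound $\|\nabla^{(r_1+1)} k\|_{L^\infty + t^{s+5/2-\ve} L^2} \leq C_n t^{-r_1 - 2 + \ve}$ to decompose $\nabla^{(r_1+1)} k = A + B$: the $A$-piece is handled as in the low-order case, while for the $B$-piece I would move $\nabla^{(r_2)} \mathcal{T}$ into $L^\infty$ using Sobolev embedding (Lemma~\ref{lem:Sobolev}), costing a factor $t^{-5/2}$ and two extra derivatives. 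The resulting rate $C_n t^{s - r_1 - 2} \|\mathcal{T}\|_{H^{r_2+2}}$ can then be reabsorbed into the target sum via the reindexing $r_1' = r_1 - 2, \, r_2' = r_2 + 2$, using $s \geq 5$ to guarantee $r_1' \geq 0$ and the smallness $t \leq T_{N_0, s, n} \leq 1$ to absorb the remaining factor $t^{s - 2 - \ve}$.

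For parts 2 and 3, I would use the telescoping identity
\[
[\partial_t, \nabla_{i_1}\cdots \nabla_{i_k}] \mathcal{T} \;=\; \sum_{j=0}^{k-1} \nabla_{i_1}\cdots \nabla_{i_j}\, [\partial_t, \nabla_{i_{j+1}}] \, \nabla_{i_{j+2}}\cdots \nabla_{i_k} \mathcal{T},
\]
apply the argument of part 1 to each summand (noting that $\nabla_{i_{j+2}}\cdots \nabla_{i_k}\mathcal{T}$ is itself a tensor of rank $(m, l + k - 1 - j)$ to which the first-order commutation formula applies), and collect a sum of terms $\nabla^{(r_1+1)} k \cdot \nabla^{(r_2)} \mathcal{T}$ with $r_1 + r_2 = k - 1$. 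For part 2 ($k \leq s$) this is the exact setup of part 1. For part 3 the decisive observation is that when $\mathcal{T}$ is \emph{scalar}, $[\partial_t, \nabla_a]\mathcal{T} = 0$ (because $\nabla_a \mathcal{T} = \partial_a \mathcal{T}$ and time/coordinate derivatives commute), which forces the $j = k-1$ summand to vanish. Hence $j$ effectively ranges only up to $k - 2$, so $r_1 \leq k - 2 \leq s - 1$, and we never need the unavailable bound on $\nabla^{(s+1)} k$. This is precisely what extends the allowed range to $k \leq s+1$.

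The main obstacle is the careful bookkeeping of $t$-weights in the top-order regime $r_1 \in \{s-2, s-1\}$: one must verify that the Sobolev loss $t^{-5/2}$ is more than offset by the $t^{s+5/2-\ve}$ gain encoded in the mixed norm, and that the reindexed partition $(r_1', r_2') = (r_1 - 2, r_2 + 2)$ remains admissible. The hypothesis $s \geq 5$ is what supplies both the lower bound $r_1' \geq 0$ (needed in the worst case of part 3 with $r_1 = s - 2$, $r_2 = 2$) and a $t^{s-2-\ve}$ factor of slack with which to close.
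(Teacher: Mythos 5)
Your proposal is correct and follows essentially the same route as the paper: the first-order commutation formula, a low/high split at $r_1\leq s-3$ versus $r_1\in\{s-2,s-1\}$ handled with the $L^\infty$ and mixed $L^\infty+t^{s+\frac52-\ve}L^2$ bounds of Proposition~\ref{prop:k.general}, a reindexing of the derivative partition, the telescoping identity for iterated commutators, and the vanishing of $[\rd_t,\nab_a]$ on scalars to reach $k\leq s+1$. The only (harmless) deviation is that you absorb the $L^2$-piece of $\nab^{(r_1+1)}k$ by the crude Sobolev bound \eqref{eq:Sobolev.2} at the cost of a $t^{s-2-\ve}$ slack, whereas the paper uses the weighted intersection norm \eqref{eq:Sobolev.used} and relabels without losing any power of $t$.
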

\begin{proof}
\pfstep{Step~1: Proof of \eqref{eq:comm.form.est.main.1}} Using Proposition~\ref{prop:commutation.formula}, we have the estimate
\begin{equation}\label{eq:comm.form.est.1}
\begin{split}
&\: \|[\rd_t, \nab_a] \mathcal T\|_{H^r(\Sigma_t,g)} \\
\leq &\: C_0 \sum_{ \substack { r_1 + r_2 = r \\ r_1 \leq s-3}} \|\nab^{(r_1)} \nab k\|_{L^\i(\Sigma_t,g)} \|\nab^{(r_2)} \mathcal T\|_{L^2(\Sigma_t,g)} \\
&\: + C_0 \sum_{ \substack { r_1 + r_2 = r \\ r_1 > s-3}} \|\nab^{(r_1)} \nab k\|_{(L^\i + t^{s+\f 52-\ve} L^2)(\Sigma_t,g)} \|\nab^{(r_2)} \mathcal T\|_{(L^2 \cap t^{-s-\f 52+\ve} L^\i)(\Sigma_t,g)}. 
\end{split}
\end{equation}

We estimate each of the terms in \eqref{eq:comm.form.est.1}. Using the estimates in  {Proposition~\ref{prop:k.general}}, the first term in \eqref{eq:comm.form.est.1} can be bounded above as follows:
\begin{equation}\label{eq:comm.form.est.2}
\begin{split}
 \sum_{ \substack { r_1 + r_2 = r \\ r_1 \leq s-3}} \|\nab^{(r_1)} \nab k\|_{L^\i(\Sigma_t,g)} \|\nab^{(r_2)} \mathcal T\|_{L^2(\Sigma_t,g)} 
\leq &\: C_n \sum_{ r_1 + r_2 = r } t^{-2-r_1+\ve} \|\mathcal T \|_{H^{r_2}(\Sigma_t,g)}.
\end{split}
\end{equation}

Before handling the second term in \eqref{eq:comm.form.est.1}, we first note make the following observations on the numerology:
\begin{itemize}
\item When $r_1 >s-3$, since we have $r_1 + r_2 = r \leq s - 1$, we have either $r_2 = 0$ or $r_2 = 1$. In particular, $r_2+2\leq r$. 
\end{itemize}
We can thus bound the second term in \eqref{eq:comm.form.est.1} using Proposition~\ref{prop:k.general}, \eqref{eq:Sobolev.used} and the above observations as follows:
\begin{equation}\label{eq:comm.form.est.3}
\begin{split}
&\: \sum_{ \substack { r_1 + r_2 = r \\ r_1 > s-3}} \|\nab^{(r_1)} \nab k\|_{(L^\i + t^{s+ \f 52-\ve} L^2)(\Sigma_t,g)} \|\nab^{(r_2)} \mathcal T\|_{(L^2 \cap t^{-s-\f 52+\ve} L^\i)(\Sigma_t,g)} \\
\leq &\: C_n \sum_{ \substack { r_1 + r_2 = r \\ r_1 > s-3}} t^{-2-r_1+\ve} (\sum_{r'=r_2}^{r_2+2} t^{r'-r_2}\|\mathcal T \|_{H^{r'}(\Sigma_t,g)}) \leq C_n \sum_{ r_1 + r_2 = r } t^{-2-r_1+\ve} \|\mathcal T \|_{H^{r_2}(\Sigma_t,g)},
\end{split}
\end{equation}
where the very last estimate follows simply after relabelling.

Combining \eqref{eq:comm.form.est.2} and \eqref{eq:comm.form.est.3} yields \eqref{eq:comm.form.est.main.1}.

\pfstep{Step~2: Proof of \eqref{eq:comm.form.est.main.2}}
When $0\leq k \leq s$, we compute using the triangle inequality and \eqref{eq:comm.form.est.main.1} to obtain
\begin{equation}\label{eq:iterated.com.est}
\begin{split}
&\: \| [\rd_t, \nab_{i_1} \cdots \nab_{i_k}] \mathcal T \|_{L^2(\Sigma_t,g)} \\
= &\: \| [\rd_t, \nab_{i_1}] \nab_{i_2} \cdots \nab_{i_k} \mathcal T + \cdots + \nab_{i_1} \cdots [\rd_t, \nab_{i_\ell}] \cdots \nab_{i_k} \mathcal T + \cdots + \nab_{i_1} \cdots \nab_{i_{k-1}} [\rd_t, \nab_{i_k}] \mathcal T \|_{L^2(\Sigma_t,g)} \\
\leq &\: C_0 \sum_{r=1}^{k} \|[\rd_t,\nab_{i_{r}}] \nab_{i_{r+1}} \cdots \nab_{i_k} \mathcal T \|_{H^{r-1}(\Sigma_t,g)}
\leq C_n \sum_{r=1}^{k} \sum_{r_1+r_2 = r -1} t^{-2-r_1+\ve} \|\nab_{i_{r+1}} \cdots \nab_{i_k} \mathcal T \|_{H^{r_2}(\Sigma_t,g)} \\
\leq &\: C_n \sum_{r=1}^{k} \sum_{r_1+r_2 = r -1} t^{-2-r_1+\ve} \|\mathcal T\|_{H^{r_2+k-r}(\Sigma_t,g)}
\leq C_n  \sum_{r'=0}^{k-1} t^{-2-r'+\ve} \|\mathcal T\|_{H^{k-r'-1}(\Sigma_t,g)}.
\end{split}
\end{equation}
This yields \eqref{eq:comm.form.est.main.2}.

Finally, for a scalar function $f$, $[\rd_t, \nab_i] f = 0$. Hence, in \eqref{eq:iterated.com.est}, we sum only up to $r = k-1$. As a result, we can take up to $k = s+1$. This gives the desired improvement for scalar functions.
\end{proof}

\subsubsection{Estimates for general equations}

\begin{proposition}[Transport estimates]\label{prop:transport.gen}
Let $\mathcal T$ be an $(m,l)$ $\Sigma$-tangent tensor. Then
\begin{equation}\label{eq:transport.gen.1}
\f{\ud}{\ud t} [t^{-M} \|\mathcal T\|_{L^2(\Sigma_t)}^2] + \f {M}{t}[t^{-M} \|\mathcal T\|_{L^2(\Sigma_t,g)}^2] - 2 t^{-M} \int_{\Sigma_t} |\langle \mathcal T, \rd_t \mathcal T\rangle_g| \,\mathrm{vol}_{\Sigma_t} \leq \f {C_0}{t}  [t^{-M} \|\mathcal T\|_{L^2(\Sigma_t,g)}^2].
\end{equation}
In particular, 
\begin{equation}\label{eq:transport.gen.2}
\f{\ud}{\ud t} [t^{-M} \|\mathcal T\|_{L^2(\Sigma_t)}^2] + \f {M}{t}[t^{-M} \|\mathcal T\|_{L^2(\Sigma_t,g)}^2]  \leq \f {C_0}{t}[t^{-M} \|\mathcal T\|_{L^2(\Sigma_t,g)}^2] + t^{-M+1} \|\rd_t \mathcal T\|_{L^2(\Sigma_t,g)}^2.
\end{equation}
\end{proposition}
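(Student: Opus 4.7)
The plan is to obtain the weighted identity by direct differentiation under the integral sign, controlling all factors coming from $\partial_t g$, $\partial_t g^{-1}$ and $\partial_t\mathrm{vol}_{\Sigma_t}$ by $C_0/t$ using Proposition~\ref{prop:k.general}. The key point, consistent with the Fuchsian strategy outlined in Section~\ref{sec:ideas.Fuchsian}, is that the \emph{borderline} $t^{-1}$ coefficient that appears is governed by an $n$-independent constant $C_0$; the remaining pieces carry a factor $C_n t^{-1+\varepsilon}$ that is absorbed into $C_0/t$ after taking $T_{N_0,s,n}$ sufficiently small.

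First, I would compute
\[
\frac{\ud}{\ud t}\|\mathcal T\|_{L^2(\Sigma_t,g)}^2 = \int_{\Sigma_t}\bigl[2\langle \mathcal T,\rd_t\mathcal T\rangle_g + (\rd_t|\mathcal T|_g^2 - 2\langle \mathcal T,\rd_t\mathcal T\rangle_g)\bigr]\mathrm{vol}_{\Sigma_t} + \int_{\Sigma_t}|\mathcal T|_g^2\,\rd_t\sqrt{\det g}\,\ud x.
\]
Using \eqref{eq:hyp.sys} and \eqref{eq:g-1.transport} (that is, $\rd_t g_{ab}=-k_a{}^\ell g_{\ell b}-k_b{}^\ell g_{\ell a}$ and $\rd_t(g^{-1})^{ab}=k_\ell{}^a(g^{-1})^{b\ell}+k_\ell{}^b(g^{-1})^{a\ell}$), the pointwise derivative of $|\mathcal T|_g^2$ produces $2(m+l)$ contractions of $k$ with $\mathcal T\otimes\mathcal T$; by Cauchy--Schwarz each such term is bounded by $C(m,l)\|k\|_{L^\infty(\Sigma_t,g)}|\mathcal T|_g^2$. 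Similarly $\rd_t\sqrt{\det g}=-\tfrac12 (g^{-1})^{ab}\rd_t g_{ab}\sqrt{\det g}\cdot(-1)= -k_\ell{}^\ell\sqrt{\det g}$ (with a sign I will keep track of), so the volume-form contribution is also bounded pointwise by $|k_\ell{}^\ell||\mathcal T|_g^2\,\mathrm{vol}_{\Sigma_t}$. Altogether
\[
\Big|\frac{\ud}{\ud t}\|\mathcal T\|_{L^2(\Sigma_t,g)}^2 - 2\int_{\Sigma_t}\langle\mathcal T,\rd_t\mathcal T\rangle_g\,\mathrm{vol}_{\Sigma_t}\Big| \leq C\,\|k\|_{L^\infty(\Sigma_t,g)}\|\mathcal T\|_{L^2(\Sigma_t,g)}^2.
\]

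By Proposition~\ref{prop:k.general}, $\|k\|_{L^\infty(\Sigma_t,g)}\leq C_0 t^{-1}+C_n t^{-1+\varepsilon}$, and for $t\leq T_{N_0,s,n}$ with $T_{N_0,s,n}$ small enough (depending on $n$) the second term is $\leq t^{-1}$, so the entire $\|k\|_{L^\infty}$-factor is bounded by $C_0/t$ (with the key property that $C_0$ is independent of $n$, cf.\ Section~\ref{sec:rmks.constants}). Inserting the integrating factor $t^{-M}$ by the Leibniz rule,
\[
\frac{\ud}{\ud t}\bigl[t^{-M}\|\mathcal T\|_{L^2(\Sigma_t,g)}^2\bigr] = -\frac{M}{t}\bigl[t^{-M}\|\mathcal T\|_{L^2(\Sigma_t,g)}^2\bigr] + t^{-M}\frac{\ud}{\ud t}\|\mathcal T\|_{L^2(\Sigma_t,g)}^2,
\]
and moving the $-M/t$ term to the left yields precisely \eqref{eq:transport.gen.1}.

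For \eqref{eq:transport.gen.2}, I would simply estimate
\[
2t^{-M}\int_{\Sigma_t}|\langle\mathcal T,\rd_t\mathcal T\rangle_g|\,\mathrm{vol}_{\Sigma_t}
\leq 2t^{-M}\|\mathcal T\|_{L^2(\Sigma_t,g)}\|\rd_t\mathcal T\|_{L^2(\Sigma_t,g)}
\leq \frac{t^{-M}}{t}\|\mathcal T\|_{L^2(\Sigma_t,g)}^2 + t^{-M+1}\|\rd_t\mathcal T\|_{L^2(\Sigma_t,g)}^2
\]
by the weighted AM--GM inequality (with weight $t$), and absorb the first resulting term into $C_0 t^{-1}[t^{-M}\|\mathcal T\|_{L^2(\Sigma_t,g)}^2]$ on the right-hand side. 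The only subtle point in the whole argument is book-keeping of the constants so as to keep $C_0$ genuinely $n$-independent; there is no real analytical obstacle since the loss of $t^{-1}$ coming from differentiating in time is matched exactly by the $L^\infty$ bound on $k$ given by Proposition~\ref{prop:k.general}.
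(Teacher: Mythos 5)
Your proposal is correct and follows essentially the same route as the paper: differentiate the weighted $L^2$ quantity, use the evolution equations for $g$, $g^{-1}$ and the volume form so that every extra term is a contraction with $k$ bounded by $\|k\|_{L^\infty(\Sigma_t,g)}\leq C_0t^{-1}+C_nt^{-1+\ve}$ (Proposition~\ref{prop:k.general}), insert the $t^{-M}$ weight, and conclude \eqref{eq:transport.gen.2} by Cauchy--Schwarz with weight $t$. The only cosmetic difference is that the paper works pointwise with $f=t^{-M}|\mathcal T|_g^2$ via the first variation formula before integrating, whereas you differentiate the integral directly; the constant book-keeping (absorbing $C_nt^{\ve}$ for $t$ small, keeping $C_0$ independent of $n$) is handled the same way.
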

\begin{proof}
We first note  {that by \eqref{eq:hyp.sys}}
\begin{equation}\label{eq:first.variation}
\f{\ud}{\ud t} \int_{\Sigma_t} f \, \mathrm{vol}_{\Sigma_t} = \int_{\Sigma_t} (\rd_t f -  { k_\ell{ }^\ell } ) f  \, \mathrm{vol}_{\Sigma_t}.
\end{equation}
We will apply \eqref{eq:first.variation} to $f = t^{-M} |\mathcal T|_g^2$. A direct computation shows that
\begin{equation}
\begin{split}
\rd_t f = &\: -M t^{-M-1} |\mathcal T|_g^2 + 2 t^{-M} \langle \mathcal T, \rd_t\mathcal T\rangle_g \\
&\: + 2 t^{-M} \sum_{r = 1}^\ell (g^{-1})^{i_1 i_1'} \cdots \{ (g^{-1})^{\ell (i_r} k_\ell{}^{i_r')}\} \cdots (g^{-1})^{i_\ell i_\ell'} g_{j_1 j_1'} \cdots g_{j_m j_m'}\mathcal T_{i_1\cdots i_\ell}^{j_1\cdots j_m} \mathcal T_{i_1'\cdots i_\ell'}^{j_1'\cdots j_m'} \\
&\: -2 t^{-M} \sum_{s = 1}^\ell (g^{-1})^{i_1 i_1'} \cdots (g^{-1})^{i_\ell i_\ell'} g_{j_1 j_1'} \cdots \{g_{\ell (j_s} k_{j_s')}{ }^\ell \} \cdots g_{j_m j_m'}\mathcal T_{i_1\cdots i_\ell}^{j_1\cdots j_m} \mathcal T_{i_1'\cdots i_\ell'}^{j_1'\cdots j_m'},
\end{split}
\end{equation} 
which implies, using Proposition~\ref{prop:k.general}, that
\begin{equation}\label{eq:transport.gen.1.0}
\f{\ud}{\ud t}[t^{-M} |\mathcal T|_g^2] + \f M t [ t^{-M} |\mathcal T|_g^2] \leq 2  t^{-M} |\langle \mathcal T, \rd_t \mathcal T\rangle_g | + C_0 t^{-M-1} |\mathcal T|_g^2.
\end{equation}
The pointwise inequality \eqref{eq:transport.gen.1.0} implies \eqref{eq:transport.gen.1} immediately after integrating over $\Sigma_t$, using \eqref{eq:first.variation}, and applying again the estimates in Proposition~\ref{prop:k.general}.

Finally, to derive \eqref{eq:transport.gen.2}, we simply note that by the Cauchy--Schwarz inequality,
$$2 t^{-M} \int_{\Sigma_t} |\langle \mathcal T, \rd_t \mathcal T\rangle_g| \,\mathrm{vol}_{\Sigma_t} \leq t^{-M-1} \|\mathcal T\|_{L^2(\Sigma_t,g)}^2 + t^{-M+1} \|\rd_t \mathcal T\|_{L^2(\Sigma_t,g)}^2.$$
\end{proof}

\begin{proposition}[Energy estimates for wave equations]\label{prop:EE.wave}
Let $\mathcal T$ be an $(m,l)$ $\Sigma_t$-tangent tensor such that $(-\rd_t^2 + \Delta_g)\mathcal T = \mathcal F$ for some $(m,l)$ $\Sigma_t$-tangent tensor $\mathcal F$.
Then
\begin{equation*}
\begin{split}
&\: \f{\ud}{\ud t} [t^{-M} (\|\rd_t \mathcal T\|_{L^2(\Sigma_t,g)}^2 + \|\nab \mathcal T\|_{L^2(\Sigma_t,g)}^2 + t^{-2} \|\mathcal T\|_{L^2(\Sigma_t,g)}^2)] \\
&\: + \f {M}{t}[t^{-M}(\|\rd_t \mathcal T\|_{L^2(\Sigma_t,g)}^2 + \|\nab \mathcal T\|_{L^2(\Sigma_t,g)}^2 + t^{-2} \|\mathcal T\|_{L^2(\Sigma_t,g)}^2)]\\ \leq &\:\f {(C_0 + C_n t^\ve)}{t}  [t^{-M} (\|\rd_t \mathcal T\|_{L^2(\Sigma_t,g)}^2 + \|\nab \mathcal T\|_{L^2(\Sigma_t,g)}^2) +  t^{-M-2} \|\mathcal T\|_{L^2(\Sigma_t,g)}^2] + t^{-M+1} \|\mathcal F\|_{L^2(\Sigma_t,g)}^2.
\end{split}
\end{equation*}
\end{proposition}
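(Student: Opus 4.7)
\medskip

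\noindent\textbf{Proof proposal.} The plan is to run a standard multiplier argument with $\partial_t \mathcal{T}$, being careful that (i) every time derivative that falls on $g$, $g^{-1}$, or the volume form is replaced (via \eqref{eq:hyp.sys} and \eqref{eq:first.variation}) by an expression involving $k$, which is then absorbed using the pointwise bound $\|k\|_{L^\infty(\Sigma_t,g)} \le C_0 t^{-1} + C_n t^{-1+\varepsilon}$ from Proposition~\ref{prop:k.general}, and (ii) the split $C_0/C_n$ is respected, i.e.\ only the leading $t^{-1}$ coefficient contributes to the $n$-independent part, in complete analogy with the scheme of Section~\ref{sec:ideas.Fuchsian}.

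First, I would differentiate the three pieces separately. For $\|\partial_t\mathcal T\|_{L^2(\Sigma_t,g)}^2$ and $\|\mathcal T\|_{L^2(\Sigma_t,g)}^2$, I directly apply Proposition~\ref{prop:transport.gen} (with $\mathcal T$ replaced by $\partial_t\mathcal T$ and by $\mathcal T$ respectively); the hypothesis $(-\partial_t^2+\Delta_g)\mathcal T=\mathcal F$ lets me bound $\|\rd_t(\rd_t\mathcal T)\|_{L^2(\Sigma_t,g)}^2 \leq C_0(\|\Delta_g\mathcal T\|^2_{L^2}+\|\mathcal F\|^2_{L^2})$, and after integration by parts $\int \langle\Delta_g\mathcal T,\partial_t\mathcal T\rangle_g\,\text{vol}$ is re-expressed in terms of $\|\nabla\mathcal T\|^2_{L^2}$, so no derivatives are lost. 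The use of \eqref{eq:transport.gen.2} on the $t^{-2}\|\mathcal T\|^2_{L^2}$ term produces a harmless $t^{-M-1}\|\mathcal T\|_{L^2}^2$ along with a $t^{-M-1}\|\partial_t\mathcal T\|_{L^2}^2$ contribution.

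The main work is in controlling
\[
\frac{d}{dt}\int_{\Sigma_t} |\nabla\mathcal T|_g^2\,\text{vol}_{\Sigma_t}.
\]
Here I expand the derivative using $\partial_t g^{-1} = O(k)g^{-1}$, $\partial_t g = O(k)g$ and \eqref{eq:first.variation}, obtaining
\[
\frac{d}{dt}\int_{\Sigma_t}|\nabla\mathcal T|_g^2\,\text{vol}_{\Sigma_t} = 2\int_{\Sigma_t}\langle \nabla\mathcal T,\partial_t\nabla\mathcal T\rangle_g\,\text{vol}_{\Sigma_t} + \mathrm{Err}_1,
\]
with $|\mathrm{Err}_1|\leq (C_0+C_nt^\varepsilon)t^{-1}\|\nabla\mathcal T\|_{L^2(\Sigma_t,g)}^2$ by Proposition~\ref{prop:k.general}. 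I then commute $[\partial_t,\nabla]$ using Proposition~\ref{prop:commutation.formula}/Proposition~\ref{prop:dt.comm.est} with $r=0$ (the commutator contributes an $O(t^{-2+\varepsilon})\mathcal T$ piece, which after pairing with $\nabla\mathcal T$ via Cauchy--Schwarz is bounded by $C_nt^{-1+\varepsilon}(\|\nabla\mathcal T\|^2+t^{-2}\|\mathcal T\|^2)$), and integrate by parts to replace $\nabla\partial_t\mathcal T$ by $-\Delta_g\mathcal T=\mathcal F-\partial_t^2\mathcal T$. The crucial cancellation is
\[
-2\int\langle \partial_t^2\mathcal T,\partial_t\mathcal T\rangle_g\,\text{vol} = -\frac{d}{dt}\int |\partial_t\mathcal T|_g^2\,\text{vol} + (\text{metric/volume terms bounded as above}),
\]
so the two top-order pieces combine into the total time derivative of $\|\partial_t\mathcal T\|^2 + \|\nabla\mathcal T\|^2$, and the remaining $\int\langle\mathcal F,\partial_t\mathcal T\rangle$ is handled by Cauchy--Schwarz as $\le \tfrac12 t^{-1}\|\partial_t\mathcal T\|^2 + \tfrac12 t\|\mathcal F\|^2$.

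Finally I assemble the three estimates, multiply through by $t^{-M}$, and absorb the factor $\partial_t(t^{-M})=-Mt^{-M-1}$ on the left. The error terms on the right are all of the two types $(C_0+C_nt^\varepsilon)t^{-1}\times[\text{energy}]$ and $t^{-M+1}\|\mathcal F\|_{L^2}^2$, giving exactly the claimed inequality. The subtle point — and what I expect to require the most care — is to verify that the $n$-independent constant $C_0$ in the borderline $t^{-1}$ coefficient genuinely does \emph{not} inherit a dependence on $n$: the worst terms are those where $\partial_t$ hits $g$, $g^{-1}$, or $\text{vol}_{\Sigma_t}$, producing a bare factor of $k$, and here Proposition~\ref{prop:k.general} gives the clean split $\|k\|_{L^\infty}\le C_0 t^{-1}+C_n t^{-1+\varepsilon}$ which preserves the structure of the RHS. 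All contributions involving $\nabla k$ (from the commutator) carry an extra $t^\varepsilon$ and thus land safely in the $C_n t^{\varepsilon-1}$ part.
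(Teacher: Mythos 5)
Your proposal is correct and follows essentially the same route as the paper's proof: keep the $\langle\cdot,\rd_t\cdot\rangle$ terms from the transport estimate, substitute the wave equation and integrate by parts so that the $\rd_t^2\mathcal T$ contributions cancel between the $\|\rd_t\mathcal T\|_{L^2}^2$ and $\|\nab\mathcal T\|_{L^2}^2$ pieces, and bound the metric/volume terms, the $[\rd_t,\nab]$ commutator (Proposition~\ref{prop:dt.comm.est}) and the $\mathcal F$ term by Cauchy--Schwarz, with the $C_0$/$C_n$ split supplied by Proposition~\ref{prop:k.general}, exactly as in the paper. The only thing to discard is the incidental bound $\|\rd_t^2\mathcal T\|_{L^2}^2\le C_0(\|\Delta_g\mathcal T\|_{L^2}^2+\|\mathcal F\|_{L^2}^2)$, which would lose a derivative if actually fed into \eqref{eq:transport.gen.2} since $\|\Delta_g\mathcal T\|_{L^2}$ is not controlled by the energy---but the cancellation you describe makes it unnecessary, so the argument goes through as written.
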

\begin{proof}
Denote by $E$ terms bounded by $\f{(C_0+ C_n t^\ve)}{t} [t^{-M} (\|\rd_t \mathcal T\|_{L^2(\Sigma_t,g)}^2 + \|\nab \mathcal T\|_{L^2(\Sigma_t,g)}^2) +  t^{-M-2} \|\mathcal T\|_{L^2(\Sigma_t,g)}^2]$.

\pfstep{Step~1: Controlling the first order terms} Applying  {\eqref{eq:transport.gen.1} in} Proposition~\ref{prop:transport.gen}, integrating by parts and using the Cauchy--Schwarz inequality,
\begin{equation}\label{eq:EE.wave.1}
\begin{split}
&\: \f{\ud}{\ud t} [t^{-M} (\|\rd_t \mathcal T\|_{L^2(\Sigma_t,g)}^2 + \|\nab \mathcal T\|_{L^2(\Sigma_t,g)}^2)] +  \f {M}{t}[t^{-M}(\|\rd_t \mathcal T\|_{L^2(\Sigma_t,g)}^2 + \|\nab \mathcal T\|_{L^2(\Sigma_t,g)}^2)] \\
= &\: 2 t^{-M} \int_{\Sigma_t} (\langle \rd_t \mathcal T,\, \rd_t^2 \mathcal T\rangle_g + \langle \nab \mathcal T,\, \rd_t \nab \mathcal T \rangle_g)\, \mathrm{vol}_{\Sigma_t} + E\\
=&\: 2 t^{-M} \int_{\Sigma_t} (- \langle \rd_t \mathcal T,\, \mathcal F\rangle_g + \langle \rd_t \mathcal T,\, \Delta_g \mathcal T \rangle_g + \langle \nab \mathcal T,\, \rd_t \nab \mathcal T \rangle_g)\, \mathrm{vol}_{\Sigma_t} + E \\
=&\: -2 t^{-M} \int_{\Sigma_t}  \langle \rd_t \mathcal T,\, \mathcal F\rangle_g \, \mathrm{vol}_{\Sigma_t} - 2 t^{-M} \int_{\Sigma_t} (\langle \nab \rd_t \mathcal T,\, \nab \mathcal T \rangle_g - \langle \nab \mathcal T,\, \rd_t \nab \mathcal T \rangle_g)\, \mathrm{vol}_{\Sigma_t}+ E \\
\leq &\: t^{-M-1} \|\rd_t \mathcal T\|_{L^2(\Sigma_t,g)}^2 + t^{-M+1} \|\mathcal F\|_{L^2(\Sigma_t,g)}^2 + E \leq t^{-M+1} \|\mathcal F\|_{L^2(\Sigma_t,g)}^2 + E,
\end{split}
\end{equation}
where we have used that by H\"older's inequality and the following commutator estimate (which uses  {Proposition~\ref{prop:dt.comm.est}})
\begin{equation*}
\begin{split}
&\: \left| \int_{\Sigma_t} \langle \nab \mathcal T,\, [\rd_t, \nab] \mathcal T \rangle_g \right| \leq C_n t^{-2+\ve} \|\nab \mathcal T\|_{L^2(\Sigma_t,g)}\|\mathcal T\|_{L^2(\Sigma_t,g)} \\
\leq &\: C_n t^{-1+\ve} \|\nab\mathcal T\|_{L^2(\Sigma_t,g)}^2 + C_n t^{-3+\ve} \|\mathcal T\|_{L^2(\Sigma_t,g)}^2.
\end{split}
\end{equation*}

\pfstep{Step~2: Controlling the zeroth order term} It remains to control the zeroth order term $\|\mathcal T\|_{L^2(\Sigma_t,g)}^2$. For this we simply use Proposition~\ref{prop:transport.gen}, and then use \eqref{eq:EE.wave.1} to obtain
\begin{equation}\label{eq:EE.wave.2}
\begin{split}
&\: \f{\ud}{\ud t} [t^{-M-2} \|\mathcal T\|_{L^2(\Sigma_t,g)}^2] + \f{M+2}{t} t^{-M-2}\|\mathcal T\|_{L^2(\Sigma_t,g)}^2 \\
\leq &\: \f{C_0}{t} t^{-M-2} \|\mathcal T\|_{L^2(\Sigma_t,g)}^2 + t^{-M-1} \|\rd_t\mathcal T\|_{L^2(\Sigma_t,g)}^2\leq E
\end{split}
\end{equation}
Summing \eqref{eq:EE.wave.1} and \eqref{eq:EE.wave.2}, we obtain the desired estimate.
\end{proof}

\subsection{Energy estimates for the wave equation for $k$}\label{subsec:mainest.k}

In this subsection we continue to work under the assumptions of Theorem~\ref{thm:bootstrap}. In particular, we assume the validity of the bootstrap assumptions \eqref{eq:BA1}--\eqref{eq:BA4}.

We insert \eqref{adkdhd.2} into \eqref{eq:hyp.sys} to obtain evolution equations for the difference $(k^{(d)})_i{}^j$:
\begin{equation}
\begin{split}
\label{eq:k.diff}\rd_t^2 (k^{(d)})_i{ }^j =&\: \Delta_g (k^{(d)})_i{ }^j + (k\star k \star k - \kn \star \kn \star \kn)_i{ }^j \\
&\: +(\rd_t k \star k - \rd_t \kn \star \kn)_i{ }^j
+(I_{k^{[{\bf n}]}})_i{}^j+\mathcal{B}_i{}^j,
\end{split}
\end{equation}
where the terms $(k\star k\star k)$, $(\partial_t k\star k)$ are as defined in \eqref{kstark}, $(I_{k^{[{\bf n}]}})_i{}^j$ is as defined in Proposition \ref{prop:inhomo}, and $\mathcal{B}_i{}^j$ denotes the following terms:
\begin{equation}\label{eq:Bij}
\mathcal B_i{ }^j = -\nab_i \nab^j h  + \nab_i^{\bf [n]} (\nab^{\bf [n]})^j h^{\bf [n]} + \Delta_g (k^{[\bf n]})_i{ }^j - \Delta_{g^{\bf [n]}} (k^{\bf [n]})_i{ }^j.
\end{equation}

The following is the main energy estimates for $k^{(d)}$:
\begin{proposition}\label{prop:main.wave.est}
Given $N\in \mathbb N$, let $n\in \mathbb N$ be sufficiently large so that the estimates in Proposition~\ref{prop:inhomo} hold. Then
\begin{equation*}
\begin{split}
&\: \f{\ud}{\ud t} [t^{-2N-2s}( \sum_{r=0}^{s-1} t^{2r+2} \|\rd_t \nab^{(r)} k^{(d)}\|^2_{L^2(\Sigma_t,g)} +  \sum_{r=0}^{s} t^{2r} \|k^{(d)}\|^2_{\dot H^r(\Sigma_t,g)} )]\\
&\: + \f{2N+2s}{t} [t^{-2N-2s}( \sum_{r=0}^{s-1} t^{2r+2} \|\rd_t \nab^{(r)} k^{(d)}\|^2_{L^2(\Sigma_t,g)} +  \sum_{r=0}^{s} t^{2r} \|k^{(d)}\|^2_{\dot H^r(\Sigma_t,g)} )] \\
\leq &\: (C_0 t^{-1} + C_n t^{-1+\ve} ) t^{-2N - 2s}\mathcal E_s(t) + C_n t^3,
\end{split}
\end{equation*}
where, as before, we have used the notation $\nab^{(r)} = \nab_{i_1} \cdots \nab_{i_r}$.
\end{proposition}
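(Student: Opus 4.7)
The plan is to derive wave equations for $\nab^{(r)} k^{(d)}$ with $0 \le r \le s-1$ by commuting \eqref{eq:k.diff} with $\nab^{(r)}$, producing
\begin{equation*}
(-\rd_t^2 + \Delta_g) \nab^{(r)} k^{(d)} = -\nab^{(r)} \mathcal F + [\nab^{(r)},\,\rd_t^2 - \Delta_g] k^{(d)},
\end{equation*}
where $\mathcal F$ denotes the RHS of \eqref{eq:k.diff}, then apply Proposition~\ref{prop:EE.wave} with weight $M_r := 2N + 2s - 2r - 2$ to each commuted equation, and sum. Under this weighting, the LHS of Proposition~\ref{prop:EE.wave} generates precisely the two families appearing in the statement: the term $t^{-M_r}\|\rd_t \nab^{(r)} k^{(d)}\|_{L^2}^2$ matches the $t^{2r+2-2N-2s}$-weight in the conclusion for $0\le r\le s-1$, while the pieces $t^{-M_r}\|\nab \nab^{(r)} k^{(d)}\|_{L^2}^2$ and $t^{-M_r-2}\|\nab^{(r)}k^{(d)}\|_{L^2}^2$ collectively yield the $t^{2r-2N-2s}\|k^{(d)}\|_{\dot H^r}^2$ contributions for $0\le r\le s$. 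The $\tfrac{2N+2s}{t}$ coefficient on the LHS is also provided directly by Proposition~\ref{prop:EE.wave}.

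The borderline $\tfrac{C_0+C_n t^\ve}{t}$ terms supplied by Proposition~\ref{prop:EE.wave} are already in the required form. The remaining contributions from $\nab^{(r)}\mathcal F$ and the commutators are handled as follows: $[\nab^{(r)}, \rd_t^2] k^{(d)}$ is controlled via Proposition~\ref{prop:dt.comm.est}, while $[\nab^{(r)}, \Delta_g]$ generates curvature factors dominated by Proposition~\ref{prop:Ric.general} paired with lower-order derivatives of $k^{(d)}$ via Proposition~\ref{prop:k.general}. The nonlinear differences $k\star k\star k - \kn\star\kn\star\kn$ and $\rd_t k\star k - \rd_t \kn\star\kn$ are first rewritten as products in which the ``difference'' factor is $k^{(d)}$ or $\rd_t k^{(d)}$ and the remaining factors are among $\{k,\kn,\rd_t k, \rd_t \kn\}$; after distributing $\nab^{(r)}$, we place the differentiated $k^{(d)}$ (or $\rd_t k^{(d)}$) factor in $L^2$ and the background factor in $L^\i$ using Proposition~\ref{prop:k.general}, employing the $L^\i + t^{s+5/2-\ve}L^2$ splitting of Lemma~\ref{lem:Wsi.save} at top order. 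Each such term carries at least one net power of $t^{-1}$ or $t^{-1+\ve}$ against quantities already controlled by $\mathcal E_s$, producing contributions of the form $(C_0 t^{-1} + C_n t^{-1+\ve}) t^{-2N-2s}\mathcal E_s$. Finally, $\nab^{(r)} I_{k^{[{\bf n}]}}$ is bounded by Proposition~\ref{prop:inhomo} with $\|I_{k^{[{\bf n}]}}\|_{H^r}^2 \le C_n t^{2(N+s-r)}$, which when multiplied by the weight $t^{-M_r+1}=t^{-2N-2s+2r+3}$ furnished by Proposition~\ref{prop:EE.wave} yields exactly the $C_n t^3$ term in the conclusion.

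The main obstacle is the term $\mathcal B$ in \eqref{eq:Bij}, and specifically the piece $-\nab_i \nab^j h$. Writing
\begin{equation*}
\mathcal B_i{}^j = -\nab_i \nab^j h^{(d)} + \bigl((\nab^{\bf[n]})_i (\nab^{\bf[n]})^j - \nab_i \nab^j\bigr) h^{\bf[n]} + (\Delta_g - \Delta_{g^{\bf[n]}}) (\kn)_i{}^j,
\end{equation*}
the last two pieces are expressible through the connection difference $\nab^{(d)} = \nab - \nab^{\bf[n]}$, which by Lemma~\ref{lem:est.nabd} is controlled in terms of $g^{(d)}$ and $(g^{-1})^{(d)}$, paired with background quantities bounded by Propositions~\ref{prop:hn} and \ref{prop:kn.higher}; these are benign. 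The first piece, however, is genuinely top-order: at $r = s-1$, $\nab^{(s-1)} \nab_i \nab^j h^{(d)}$ requires control of $\|h^{(d)}\|_{\dot H^{s+1}}$, which is one derivative \emph{higher} than any $k^{(d)}$-norm. This is the reason $\mathcal E_s$ was defined to incorporate the term $t^{2(s+1)}\|h^{(d)}\|_{\dot H^{s+1}}^2$. Using $\|h^{(d)}\|_{\dot H^{s+1}}^2 \le t^{-2s-2}\mathcal E_s$ and multiplying by the weight $t^{-M_{s-1}+1} = t^{-2N+1}$ from Proposition~\ref{prop:EE.wave} gives a contribution of size $C_0 t^{-2N-2s-1}\mathcal E_s = C_0 t^{-1} \cdot t^{-2N-2s}\mathcal E_s$, which lies within the RHS. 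Intermediate $r$ produce analogous but more favorable estimates. The fact that this estimate does not close in isolation but only against $\mathcal E_s$ (which itself controls $h^{(d)}$ through a transport equation) is precisely the coupled wave-transport structure flagged in Section~\ref{sec:EEinsyncoord}, and the matching transport-type estimates will be established in Section~\ref{subsec:mainest.other}.
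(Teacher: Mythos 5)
Your proposal follows essentially the same route as the paper: commuting \eqref{eq:k.diff} with $\nab^{(r)}$, establishing the inhomogeneity bound (the paper's \eqref{eq:main.k.inhomo}) by treating $I_{k^{[{\bf n}]}}$, $\mathcal B$, the nonlinear differences and the commutators exactly as you describe — including the key use of the $\dot H^{s+1}$ control of $h^{(d)}$ in $\mathcal E_s$ to absorb $\nab^{(s-1)}\nab_i\nab^j h^{(d)}$ — and then applying Proposition~\ref{prop:EE.wave} with $M = 2N+2s-2r-2$ and summing over $r$. The argument and the bookkeeping of weights (in particular the origin of the $C_n t^3$ term) match the paper's proof, so this is correct.
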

\begin{proof}
For $0\leq r \leq s-1$, we differentiate \eqref{eq:k.diff} by $\nab^{(r)}$ to obtain the following wave equation for $\nab^{(r)} k^{(d)}$: 
\begin{equation}\label{eq:higher.order.wave.gor.kd}
\begin{split}
&\: -\rd_t^2 \nab^{(r)}_{i_1\cdots i_r} (k^{(d)})_i{ }^j + \Delta_g \nab^{(r)}_{i_1\cdots i_r} (k^{(d)})_i{ }^j \\
=&\: - \nab^{(r)}_{i_1\cdots i_r} (I_{\kn})_i{ }^j - \nab^{(r)}_{i_1\cdots i_r} \mathcal B_i{ }^j 
 - \nab^{(r)}_{i_1\cdots i_r}(k\star k \star k - \kn \star \kn \star \kn)_i{ }^j \\
 &\: - \nab^{(r)}_{i_1\cdots i_r}(\rd_t k \star k - \rd_t \kn \star \kn)_i{ }^j 
 - [\rd_t^2, \nab^{(r)}_{i_1\cdots i_r}](k^{(d)})_i{ }^j  + [\Delta_g, \nab^{(r)}_{i_1\cdots i_r}](k^{(d)})_i{ }^j.
\end{split}
\end{equation}

For every $0\leq r \leq s-1$, our goal is to show that 
\begin{equation}\label{eq:main.k.inhomo}
\begin{split}
 t^r \|-\rd_t^2 \nab^{(r)}_{i_1\cdots i_r} (k^{(d)})_i{ }^j + \Delta_g \nab^{(r)}_{i_1\cdots i_r} (k^{(d)})_i{ }^j\|_{L^2(\Sigma_t,g)} 
\leq &\: (C_0 t^{-2} + C_n t^{-2+\ve} ) \mathcal E_s^{\f 12}(t) + C_n t^{N+s},
\end{split}
\end{equation}
after which we will apply Proposition~\ref{prop:EE.wave}. 

The proof of \eqref{eq:main.k.inhomo} will be achieved in Steps~1--5 below in which we bound each term on the RHS of \eqref{eq:higher.order.wave.gor.kd}.

\pfstep{Step~1: Bounding the inhomogeneous terms}
For $0\leq r \leq s-1$, by Proposition~\ref{prop:inhomo},
\begin{equation}
\|I_{\kn}\|_{H^r} \leq C_n t^{N+s-r}.
\end{equation}

\pfstep{Step~2: Bounding the terms in $\mathcal B_i{ }^j$} Recall from \eqref{eq:Bij} that $\mathcal B_i{ }^j$ consists of $h$ terms and $k$ terms. We first compute the exact form of the $h$ terms:
\begin{equation}\label{eq:h.terms.in.B.exact}
\begin{split}
&\: -(g^{-1})^{j\ell} \nab_i \partial_\ell h + ((g^{\bf [n]})^{-1})^{j\ell} \nab_i^{\bf [n]} \partial_\ell h^{\bf [n]} \\
= &\: - (g^{-1})^{j\ell} \nab_i \partial_\ell h^{(d)} - (g^{-1})^{j\ell} \nab^{(d)}_i \rd_\ell h^{\bf [n]} -  ((g^{-1})^{(d)})^{j\ell} \nab^{\bf [n]} \rd_\ell h^{\bf [n]}.
\end{split}
\end{equation}
From \eqref{eq:h.terms.in.B.exact}, the triangle inequality and H\"older's inequality, it follows that
\begin{equation}
\begin{split}
&\: \| -(g^{-1})^{j\ell} \nab_i \partial_\ell h + ((g^{\bf [n]})^{-1})^{j\ell} \nab_i^{\bf [n]} \partial_\ell h^{\bf [n]}\|_{\dot H^r(\Sigma_t,g)} \\
\ls &\:  \underbrace{\|\nab \rd_\ell h^{(d)}\|_{\dot H^{r}(\Sigma_t,g)}}_{=:\mathrm{I}}  + \underbrace{\sum_{\substack{ r_1+ r_2 = r }}  \| \nab^{(d)}\|_{\dot H^{r_1}(\Sigma_t,g)}  \| \rd_\ell h^{\bf [n]}\|_{\dot W^{r_2,\infty}(\Sigma_t,g)} }_{=:\mathrm{II}} \\
&\: + \underbrace{\sum_{\substack{ r_1+ r_2 = r }} \|(g^{-1})^{(d)} \|_{\dot H^{r_1}(\Sigma_t,g)} \|\nab^{\bf [n]} \rd_\ell h^{\bf [n]}\|_{\dot W^{r_2,\infty}(\Sigma_t,g)}}_{=:\mathrm{III}} .
\end{split}
\end{equation}

Term $\mathrm{I}$ can be directly estimated by the definition of $\mathcal E_s(t)$:
\begin{equation}
\mathrm{I} \leq \|h^{(d)}\|_{\dot H^{r+2}(\Sigma_t,g)} \leq t^{-r-2} \mathcal E_s^{\f 12}(t) .
\end{equation}

By Proposition~\ref{prop:hn}, Lemma~\ref{lem:est.nabd}, and the definition of $\mathcal E_s(t)$, we have
\begin{equation}
\begin{split}
\mathrm{II} \leq &\: C_n \sum_{\substack{ r_1+ r_2 = r}} t^{-2-r_2+\ve} (\|g^{(d)}\|_{H^{r_1+1}(\Sigma_t,g)} + \| (g^{-1})^{(d)} \|_{H^{r_1+1}(\Sigma_t,g)}) \\
\leq &\:  C_n \sum_{\substack{r_1 + r_2 = r}} t^{-2-r_2-r_1+\ve} \mathcal E_s^{\f 12}(t) \leq C_n t^{-r-2+\ve}\mathcal E_s^{\f 12}(t).
\end{split}
\end{equation}

For term $\mathrm{IV}$, we use Proposition~\ref{prop:hn} and the definition of $\mathcal E_s(t)$ to obtain
\begin{equation}
\begin{split}
\mathrm{III} \leq &\: C_n  \sum_{r_1+r_2 = r} t^{-3-r_2+\ve} \|(g^{-1})^{(d)} \|_{\dot H^{r_1}(\Sigma_t,g)} \leq C_n  \sum_{r_1+r_2 = r} t^{-2-r_1-r_2+\ve} \mathcal E_s^{\f 12}(t) \leq C_n t^{-r-2+\ve} \mathcal E_s^{\f 12}(t). 
\end{split}
\end{equation}

For the $k$ terms in $\mathcal B_i{ }^j$, we compute
\begin{equation}\label{eq:k.terms.in.B.exact}
\begin{split}
&\: \Delta_g (k^{[\bf n]})_i{ }^j - \Delta_{g^{\bf [n]}} (k^{\bf [n]})_i{ }^j \\
= &\: (g^{(d)})^{m \ell} \nab^{[\bf n]}_m \nab^{[\bf n]}_\ell (k^{\bf [n]})_i{ }^j + g^{m\ell} \nab_m^{(d)}  \nab^{[\bf n]}_\ell (k^{[\bf n]})_i{ }^j + g^{m\ell} \nab_m \nab^{(d)}_\ell (k^{[\bf n]})_i{ }^j.
\end{split}
\end{equation}

The terms in \eqref{eq:k.terms.in.B.exact} are similar to those in \eqref{eq:h.terms.in.B.exact} (with $k$ taking the place of $h$) except --- importantly --- that \eqref{eq:k.terms.in.B.exact} does not 
contain second derivative terms of $k^{(d)}$. This is important because while our energy controls up to $s+1$ derivatives of $h^{(d)}$, it only controls up to $s$ derivatives of $k^{(d)}$. Other than this difference, the remaining terms in \eqref{eq:k.terms.in.B.exact} can in fact be controlled very similarly as those in \eqref{eq:h.terms.in.B.exact}. We will therefore omit the details and simply give the final estimate:
\begin{equation}
\| \nab^{(r)}\mathcal B_i{ }^j \|_{L^2(\Sigma_t,g)} \ls (C_0 + C_n t^{\ve}) t^{-r-2} \mathcal E_s^{\f 12}(t)
\end{equation}

\pfstep{Step~3: Bounding the difference of the nonlinear terms} In this step we control the $H^r$ norm (for $0\leq r \leq s-1$) of $k\star k \star k - \kn \star \kn \star \kn$ and $\rd_t k \star k - \rd_t \kn \star \kn$.

We begin with $k\star k \star k - \kn \star \kn \star \kn$. For $0\leq r<s-1$, we use H\"older's inequality, Proposition~\ref{prop:k.general}, and the definition of $\mathcal E_s(t)$ to obtain
\begin{equation}\label{eq:k.wave.kkk-kkk}
\begin{split}
&\: \|k\star k \star k - \kn \star \kn \star \kn\|_{H^r(\Sigma_t,g)} \\
\leq &\: C_0 \sum_{\substack{r_1+r_2+r_3 = r \\ r_1,\,r_2\leq s-2 \\ \max\{ r_1,\,r_2\} \geq 1 }} \|\nab^{(r_1)} (\kn, \kd)\|_{L^\i(\Sigma_t,g)} \|\nab^{(r_2)} (\kn, \kd)\|_{L^\i(\Sigma_t,g)} \|\nab^{(r_3)} \kd\|_{L^2(\Sigma_t,g)} \\
&\: + C_0 \| (\kn, \kd)\|_{L^\i(\Sigma_t,g)} \| (\kn, \kd)\|_{L^\i(\Sigma_t,g)} \|\nab^{(r)} \kd\|_{L^2(\Sigma_t,g)}\\
\leq &\: C_n t^{-1- r_1} t^{-1-r_2} t^{\ve} t^{-r_3} \mathcal E_s^{\f 12}(t) + C_0 t^{-2} t^{-r} \mathcal E_s^{\f 12}(t)\leq (C_0 t^{-r-2} + C_n t^{-r-2+ \ve})\mathcal E_s^{\f 12}(t),
\end{split}
\end{equation}
where we have used the shorthand $\|\nab^{(r_1)} (\kn, \kd)\|_{L^\i(\Sigma_t,g)} = \|\nab^{(r_1)} \kn\|_{L^\i(\Sigma_t,g)} + \|\nab^{(r_1)} \kd\|_{L^\i(\Sigma_t,g)}$, etc.

When $r = s-1$, we have terms as in \eqref{eq:k.wave.kkk-kkk} which can be controlled similarly, but also the following extra term, which we in addition use Sobolev embedding in \eqref{eq:Sobolev.used} to obtain
\begin{equation}
\begin{split}
&\:  \|k\star k \star k - \kn \star \kn \star \kn\|_{H^r} \\
\leq &\: C_0 \|\nab^{(r)} (\kn, \kd)\|_{(L^\i + t^{s+\f 52 - \ve} L^2) (\Sigma_t,g)} \| (\kn, \kd)\|_{L^\i(\Sigma_t,g)} \|\kd\|_{(L^2\cap t^{-s-\f 52 + \ve}L^\i)(\Sigma_t,g)} \\
\leq  &\: C_0 \|\nab^{(r)} (\kn, \kd)\|_{(L^\i + t^{s+\f 52 - \ve} L^2) (\Sigma_t,g)} \| (\kn, \kd)\|_{L^\i(\Sigma_t,g)} \sum_{r'=0}^2 t^{r'} \|\kd\|_{\dot H^{r'}(\Sigma_t,g)} \\
\leq &\: C_n t^{-r-2+\ve} \mathcal E_s^{\f 12}(t). 
\end{split}
\end{equation}

We now turn to $\rd_t k \star k - \rd_t \kn \star \kn$. For $0\leq r <s-2$, we use H\"older's inequality, Proposition~\ref{prop:k.general}, and the definition of $\mathcal E_s(t)$ to obtain
\begin{equation}\label{eq:k.wave.kdtk.1}
\begin{split}
&\: \|\rd_t k\star k  - \rd_t \kn \star \kn\|_{H^r} \\
\leq &\: C_0 \sum_{\substack{r_1+r_2 = r \\ 1\leq r_1 \leq s-3}} \|\nab^{(r_1)} (\kn, \kd)\|_{L^\i(\Sigma_t,g)} \|\nab^{(r_2)} \rd_t \kd\|_{L^2(\Sigma_t,g)} \\
&\: + C_0 \sum_{\substack{r_1+r_2 = r \\ 1\leq r_1 \leq s-3}} \|\nab^{(r_1)} \rd_t (\kn, \kd)\|_{L^\i(\Sigma_t,g)} \|\nab^{(r_2)} \kd\|_{L^2(\Sigma_t,g)} \\
&\: + C_0 \| (\kn, \kd)\|_{L^\i(\Sigma_t,g)}  \|\nab^{(r)} \rd_t \kd\|_{L^2(\Sigma_t,g)} + C_0 \| \rd_t (\kn, \kd)\|_{L^\i(\Sigma_t,g)}  \|\nab^{(r)} \kd\|_{L^2(\Sigma_t,g)}\\
\leq &\: C_n \sum_{r_1+r_2=r} t^{-1- r_1+\ve}  t^{-r_3-1} \mathcal E_s^{\f 12}(t) + C_0 t^{-1} t^{-r-1} \mathcal E_s^{\f 12}(t)\leq (C_0 t^{-r-2} + C_n t^{-r-2+ \ve})\mathcal E_s^{\f 12}(t).
\end{split}
\end{equation}

For $r = s-2$, we have an additional term when all derivatives hit on $\rd_t (\kn, \kd)$ so that we cannot put it in $L^\i$. For this term we use Proposition~\ref{prop:k.general} and \eqref{eq:Sobolev.used} to obtain
\begin{equation}\label{eq:k.wave.kdtk.2}
\begin{split}
&\: \|\nab^{(r)} \rd_t (\kn, \kd)\|_{(L^\i + t^{s+\f 52 - \ve} L^2) (\Sigma_t,g)} \|\kd\|_{(L^2\cap t^{-s-\f 52 + \ve}L^\i)(\Sigma_t,g)} \\
\leq &\: C_n t^{-r-2+\ve} \sum_{r'=0}^2 t^{r'} \| \kd\|_{H^{r'}(\Sigma_t,g)} \leq C_n t^{-r-2+\ve} \mathcal E_s^{\f 12}(t).
\end{split}
\end{equation}

For $r = s-1$, we have additionally (compared to \eqref{eq:k.wave.kdtk.1} and \eqref{eq:k.wave.kdtk.2}) terms where (1) all but one derivatives hit on $\rd_t (\kn, \kd)$, (2) all derivatives hit on $(\kn,\kd)$, both of which cannot be put into $L^\i$. For these terms we use Proposition~\ref{prop:k.general} and \eqref{eq:Sobolev.used} to get
\begin{equation}
\begin{split}
&\:  \|\nab^{(r -1)} \rd_t (\kn, \kd)\|_{(L^\i + t^{s+\f 52 - \ve} L^2) (\Sigma_t,g)} \|\nab \kd\|_{(L^2\cap t^{-s-\f 52 + \ve}L^\i)(\Sigma_t,g)} \\
&\: + \|\nab^{(r)} (\kn, \kd)\|_{(L^\i + t^{s+\f 52 - \ve} L^2) (\Sigma_t,g)} \|\rd_t \kd\|_{(L^2\cap t^{-s-\f 52 + \ve}L^\i)(\Sigma_t,g)} \\
\leq &\: C_n t^{-r-1+\ve} (\sum_{r'=1}^3 t^{r'-1} \| \kd\|_{H^{r'}(\Sigma_t,g)} + \sum_{r'=0}^2 t^{r'} \|\rd_t \kd\|_{H^{r'}(\Sigma_t,g)}) \leq C_n t^{-r-2+\ve} \mathcal E_s^{\f 12}(t).
\end{split}
\end{equation}

\pfstep{Step~4: Bounding the commutator terms $[-\rd_t^2 + \Delta_g, \, \nab^{(r)}]$}

By (repeated applications of) Proposition~\ref{prop:commutation.formula}, $[-\rd_t^2, \, \nab^{(r)}] k^{(d)}$ consists exactly of terms of the form 
$\sum_{r_1+r_2 = r} \nab^{(r_1)} k\nab^{(r_2)} \rd_t \kd$, $\sum_{r_1+r_2 = r} \nab^{(r_1)} \rd_t k\nab^{(r_2)} \kd$ and $\sum_{r_1+r_2+r_3 = r} \nab^{(r_1)} k \nab^{(r_2)} k\nab^{(r_3)} \rd_t \kd$. Thus they can be controlled in exactly the same manner as in Step~3 to obtain
\begin{equation}
\begin{split}
&\: \|[\rd_t^2, \nab_{i_1} \cdots \nab_{i_r}] k^{(d)} \|_{L^2(\Sigma,g)} \leq (C_0 t^{-r-2} + C_n t^{-r-2+ \ve})\mathcal E_s^{\f 12}(t).
\end{split}
\end{equation}

On the other hand, the commutator $[\Delta_g, \,\nab^{(r)}]$ gives rise to curvature terms. In the $3$-dimensional $\Sigma_t$, the Riemann curvature tensor can be expressed in terms of the Ricci curvature and thus can be controlled using Proposition~\ref{prop:Ric.general} to obtain
\begin{equation}
\begin{split}
&\: \|[\Delta_g, \nab_{i_1} \cdots \nab_{i_r}] k^{(d)} \|_{L^2(\Sigma,g)} \leq C_n t^{-r-2+ \ve}\mathcal E_s^{\f 12}(t).
\end{split}
\end{equation}

\pfstep{Step~5: Putting everything together} Combining Steps~1--4, we have achieved \eqref{eq:main.k.inhomo}. 

Therefore, for every $0\leq r \leq s-1$, we apply Proposition~\ref{prop:EE.wave} with $M = 2N + 2s -2r -2$ to get 
\begin{equation*}
\begin{split}
&\: \f{\ud}{\ud t} [t^{-2N-2s+2r+2}( \|\rd_t \nab^{(r)} k^{(d)}\|^2_{L^2(\Sigma_t,g)} + \|k^{(d)}\|^2_{\dot H^{r+1}(\Sigma_t,g)} + t^{-2} \|k^{(d)}\|^2_{\dot H^{r}(\Sigma_t,g)})]\\
&\: + \f{2N+2s-2r -2}{t} [t^{-2N-2s+2r+2}( \|\rd_t \nab^{(r)} k^{(d)}\|^2_{L^2(\Sigma_t,g)} + \|k^{(d)}\|^2_{\dot H^{r+1}(\Sigma_t,g)} + t^{-2} \|k^{(d)}\|^2_{\dot H^{r}(\Sigma_t,g)})] \\
\leq &\: \f{C_0}{t} [t^{-2N-2s+2r+2}( \|\rd_t \nab^{(r)} k^{(d)}\|^2_{L^2(\Sigma_t,g)} + \|k^{(d)}\|^2_{\dot H^{r+1}(\Sigma_t,g)} + (C_0 + C_nt^{\ve}) t^{-2} \|k^{(d)}\|^2_{\dot H^{r}(\Sigma_t,g)})] \\
&\: +t^{-2N-2s+3} (t^{2r} \|-\rd_t^2 \nab^{(r)}_{i_1\cdots i_r} (k^{(d)})_i{ }^j + \Delta_g \nab^{(r)}_{i_1\cdots i_r} (k^{(d)})_i{ }^j\|_{L^2(\Sigma_t,g)}^2) \\ 
\leq &\:  (C_0 t^{-1} + C_n t^{-1+2\ve}) t^{-2N-2s} \mathcal E_s(t) + C_n t^{3},
\end{split}
\end{equation*}
where in the last line we have controlled $[t^{-2N-2s+2r+2}( \|\rd_t \nab^{(r)} k^{(d)}\|^2_{L^2(\Sigma_t,g)} + \|k^{(d)}\|^2_{\dot H^{r+1}(\Sigma_t,g)} + (C_0 + C_nt^{\ve}) t^{-2} \|k^{(d)}\|^2_{\dot H^{r}(\Sigma_t,g)})]$ using the energy and have used \eqref{eq:main.k.inhomo}.

Summing over $\sum_{r=0}^{s-1}$, we obtain the desired estimate. \qedhere
\end{proof}

\subsection{Transport estimates}\label{subsec:mainest.other}
In this subsection we continue to work under the assumptions of Theorem~\ref{thm:bootstrap}. In particular, we assume the validity of the bootstrap assumptions \eqref{eq:BA1}--\eqref{eq:BA4}.

We prove in this subsection estimates for $h^{(d)},g_{ij}^{(d)},((g^{-1})^{(d)})^{ij}$, which are all derived using the transport equations they obey. 

We insert \eqref{adkdhd.1} and \eqref{adkdhd.2} into \eqref{eq:hyp.sys} to obtain evolution equations for the differences $g^{(d)}_{ij},(k^{(d)})_i{}^j,h^{(d)}$, and $((g^{-1})^{(d)})^{ij}$:
\begin{align}
\label{eq:h.diff}\rd_t h^{(d)} =&\:  {2 (k^{[{\bf n}]})_j{}^i(k^{(d)})_i{}^j + (k^{(d)})_i{ }^j (\kd)_j{ }^i} +  I_{h^{[{\bf n}]}}\\
\label{eq:g.diff}\partial_tg_{ij}^{(d)}=&-2(k^{[\bf n]})_{(i}{}^\ell g^{(d)}_{j)\ell}-2(k^{(d)})_{(i}{}^\ell g_{j)\ell},\\
\label{eq:g-1.diff}\partial_t((g^{-1})^{(d)})^{ij}=&\,2(k^{[\bf n]})_\ell{}^{(j} ((g^{-1})^{(d)})^{i)l}+2(k^{(d)})_\ell{}^{(j} (g^{-1})^{i)l}.
\end{align}

We begin with the more straightforward, less than top-order, estimates for $h^{(d)},g_{ij}^{(d)},((g^{-1})^{(d)})^{ij}$. Commuting the equations \eqref{eq:h.diff}, \eqref{eq:g.diff} and \eqref{eq:g-1.diff} with $\nabla^r$, $r\leq s$, we obtain:
\begin{align}
\label{eq:hd.high}
\partial_t\nabla^{(r)}h^{(d)}=&\,2 \nabla^{(r)}[(k^{[{\bf n}]})_j{}^i(k^{(d)})_i{}^j] + \nabla^{(r)} (k^{(d)})_i{ }^j (\kd)_j{ }^i +\nabla^{(r)} I_{h^{[{\bf n}]}}
 + [\rd_t, \nab^{(r)}] h^{(d)}, \\
\label{eq:gd.high}\partial_t\nabla^{(r)} g_{ij}^{(d)}=&- 2\nabla^{(r)}[(k^{[\bf n]})_{(i}{}^\ell g^{(d)}_{j)\ell}] - 2 g_{\ell (j}\nabla^{(r)}(k^{(d)})_{i)}{}^\ell + [\rd_t, \nab^{(r)}] g_{ij}^{(d)}, \\
\label{eq:gdinv.high}\partial_t\nabla^{(r)} ((g^{-1})^{(d)})^{ij}=&\,2\nabla^{(r)} [(k^{[\bf n]})_\ell{}^{(j} ((g^{-1})^{(d)})^{i)\ell}]+2(g^{-1})^{\ell (i|}\nabla^{(r)} (k^{(d)})_\ell{}^{|j)} + [\rd_t, \nab^{(r)}] ((g^{-1})^{(d)})^{ij}.
\end{align}
We use \eqref{eq:hd.high}--\eqref{eq:gdinv.high} to obtain the following estimates.
\begin{proposition}\label{prop:transport.est.lower}
Given $N\in \mathbb N$, let $n\in \mathbb N$ be sufficiently large so that the estimates in Proposition~\ref{prop:inhomo} hold. Then
\begin{equation}\label{eq:transport.est.main.h}
\begin{split}
 &\: \f{\ud}{\ud t} [t^{-2N-2s} (\sum_{r=0}^{s} t^{2r} \|h^{(d)}\|^2_{\dot H^r(\Sigma_t,g)} )]
 + \f{2N}{t} [t^{-2N-2s}(\sum_{r=0}^{s} t^{2r} \|h^{(d)}\|^2_{\dot H^r(\Sigma_t,g)} )] \\
\leq &\: (C_0 t^{-1} + C_n t^{-1+\ve} ) t^{-2N - 2s}\mathcal E_s(t) + C_n  {t},
\end{split}
\end{equation}
and
\begin{equation}\label{eq:transport.est.main.g}
\begin{split}
 &\: \f{\ud}{\ud t} [t^{-2N-2s} (\sum_{r=0}^{s} t^{2r-2} (\|g^{(d)}\|^2_{\dot H^r(\Sigma_t,g)} + \|(g^{-1})^{(d)}\|^2_{\dot H^r(\Sigma_t,g)}) )] \\
&\: + \f{2N}{t} [t^{-2N-2s}(\sum_{r=0}^{s} t^{2r-2} ( \|g^{(d)}\|^2_{\dot H^r(\Sigma_t,g)} + \|(g^{-1})^{(d)}\|^2_{\dot H^r(\Sigma_t,g)}) )] \\
\leq &\: (C_0 t^{-1} + C_n t^{-1+\ve} ) t^{-2N - 2s}\mathcal E_s(t).
\end{split}
\end{equation}
\end{proposition}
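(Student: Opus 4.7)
The plan is to derive both estimates by applying the $L^2$ transport bound \eqref{eq:transport.gen.1} of Proposition~\ref{prop:transport.gen} to each commuted quantity $\nabla^{(r)}(\cdot)$ with a weight tuned to the $t$-scaling built into $\mathcal E_s$, and then to sum over $0\leq r\leq s$. For \eqref{eq:transport.est.main.h}, I would use $M_r = 2N + 2s - 2r$ on $\mathcal T = \nabla^{(r)} h^{(d)}$, so that $t^{-M_r}\|\nabla^{(r)} h^{(d)}\|_{L^2(\Sigma_t,g)}^2 = t^{-2N-2s}\cdot t^{2r}\|\nabla^{(r)} h^{(d)}\|_{L^2(\Sigma_t,g)}^2$ matches the $r$-th summand of the target; for \eqref{eq:transport.est.main.g}, I would use $M_r' = 2N + 2s - 2r + 2$ on $\nabla^{(r)} g^{(d)}$ and $\nabla^{(r)} (g^{-1})^{(d)}$ to account for the $t^{2r-2}$ weight. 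Since $M_r, M_r' \geq 2N$ on the relevant range, the stated transport coefficient $\frac{2N}{t}$ emerges automatically after summing.

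The core task is to bound $\|\partial_t \nabla^{(r)}(\cdot)\|_{L^2(\Sigma_t,g)}$ using the equations \eqref{eq:hd.high}--\eqref{eq:gdinv.high}. Each product of the form $\nabla^{(r)}[(k^{\bf[n]})\cdot X^{(d)}]$ (with $X = k, g, g^{-1}$) would be expanded by Leibniz and estimated by H\"older, placing one factor in $L^\infty$ and the other in $L^2$. The borderline piece, with no derivative on $k^{\bf[n]}$, yields $\|k^{\bf[n]}\|_{L^\infty}\|\nabla^{(r)} X^{(d)}\|_{L^2}\ls (C_0 t^{-1} + C_n t^{-1+\ve})\cdot t^{\sigma_X-r}\mathcal E_s^{1/2}$ via Proposition~\ref{prop:kn.lowest} and the definition of $\mathcal E_s$ (with $\sigma_X = 0$ for $X = k$ and $\sigma_X = 1$ for $X = g, g^{-1}$), the $C_0$ piece arising solely from the $n$-independent leading $t^{-1}\kappa$ behavior of $k^{\bf[n]}$. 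All other Leibniz summands place at least one derivative on $k^{\bf[n]}$ (bounded via Proposition~\ref{prop:kn.higher}) and gain an extra $t^\ve$; endpoint cases where $k^{\bf[n]}$ overshoots its $L^\infty$ range are handled by Lemma~\ref{lem:Wsi.save} together with the Sobolev inequality \eqref{eq:Sobolev.used}. The quadratic $(k^{(d)})\cdot(k^{(d)})$ term in \eqref{eq:hd.high} is strictly subleading: splitting $r = r_1 + r_2$ with $r_1 \leq \lfloor r/2\rfloor \leq s-2$ allows the low-order factor to be placed in $L^\infty$ via Lemma~\ref{lem:kd.Li} (this uses $s\geq 5$). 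The term $g_{\ell(j}\nabla^{(r)} (k^{(d)})_{i)}{}^\ell$ in \eqref{eq:gd.high}, and its analog in \eqref{eq:gdinv.high}, are immediate since $|g|_g, |g^{-1}|_g \ls 1$. The commutators $[\partial_t, \nabla^{(r)}](\cdot)$ contribute only $C_n t^{-r-1+\ve}\mathcal E_s^{1/2}$ by Proposition~\ref{prop:dt.comm.est} (using its scalar improvement for the $h^{(d)}$ equation). Finally, the inhomogeneity $\nabla^{(r)} I_{h^{\bf[n]}}$ that appears only in \eqref{eq:hd.high} is bounded by $C_n t^{N+s-r}$ via Proposition~\ref{prop:inhomo}, provided $n$ is taken large enough in terms of $N$.

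Substituting these bounds into $2 t^{-M}\int_{\Sigma_t}|\langle \mathcal T, \partial_t \mathcal T\rangle_g|\,\mathrm{vol}_{\Sigma_t}$ via Cauchy--Schwarz produces, from the nonlinear pieces, exactly $(C_0 t^{-1} + C_n t^{-1+\ve})\cdot t^{-2N-2s}\mathcal E_s$. The inhomogeneity is handled by the Young split $2\|\mathcal T\|_{L^2}\cdot C_n t^{N+s-r}\leq t^{-1}\|\mathcal T\|_{L^2}^2 + C_n t^{2(N+s-r)+1}$: the first piece is absorbed into the borderline coefficient, while the second, multiplied by $t^{-M_r}$, yields a clean $C_n t^{-M_r + 2(N+s-r)+1} = C_n t$ residual matching the right-hand side of \eqref{eq:transport.est.main.h}. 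Summing over $0\leq r\leq s$ then concludes both estimates.

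The principal obstacle is purely bookkeeping: ensuring that the constant $C_0$ in the borderline $t^{-1}$ coefficient on the right-hand side is genuinely $n$-independent. This forces, in every product estimate, a careful separation of the part of $k^{\bf[n]}$ at its leading $t^{-1}\kappa$ behavior (which is $n$-independent by Proposition~\ref{prop:kn.lowest}) from all derivative, commutator, and Sobolev-endpoint corrections, each of which must be arranged to carry a compensating $t^\ve$ so as to be absorbed into the $C_n t^{-1+\ve}$ coefficient instead.
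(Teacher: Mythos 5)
Your proposal is correct and follows essentially the same route as the paper: apply the transport estimate of Proposition~\ref{prop:transport.gen} to $\nabla^{(r)}h^{(d)}$ (resp.\ $\nabla^{(r)}g^{(d)}$, $\nabla^{(r)}(g^{-1})^{(d)}$) with the weight $M=2N+2s-2r$ (resp.\ shifted by $2$), and reduce to bounding $t^r\|\partial_t\nabla^{(r)}(\cdot)\|_{L^2}$ via \eqref{eq:hd.high}--\eqref{eq:gdinv.high}, H\"older with the $n$-independent leading $t^{-1}$ part of $k^{\bf[n]}$ isolated, the Sobolev/endpoint spaces of \eqref{eq:Sobolev.used}, the commutator estimate of Proposition~\ref{prop:dt.comm.est}, and Proposition~\ref{prop:inhomo} for the inhomogeneity, whose Young split produces the $C_n t$ remainder exactly as in the paper.
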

\begin{proof}
We will only prove \eqref{eq:transport.est.main.h}; the bound \eqref{eq:transport.est.main.g} can be derived similarly (and is slightly simpler).

Applying Proposition~\ref{prop:transport.gen} for $\mathcal T = \nab^{(r)} h^{(d)}$ ($0\leq r \leq s$) and $M = 2N+2s-2r$, it suffices to show that
\begin{equation}\label{eq:transport.est.h.error.goal}
t^r\| \rd_t  \nab^{(r)} h^{(d)}\|_{L^2(\Sigma_t,g)} \leq (C_0t^{-1} + C_n t^{-1+\ve}) \mathcal E_s^{\f 12}(t) + t^{N+s}.
\end{equation}

To prove this we consider each term on the RHS of \eqref{eq:hd.high}. First, by  H\"older's inequality, Proposition~\ref{prop:k.general} and \eqref{eq:Sobolev.used}, we obtain
\begin{equation}\label{eq:transport.est.h.error.1}
\begin{split}
&\: \| (k^{[{\bf n}]})_j{}^i(k^{(d)})_i{}^j \|_{\dot H^r(\Sigma_t,g)} + \| (k^{(d)})_i{ }^j (\kd)_j{ }^i \|_{\dot H^r(\Sigma_t,g)} \\
\leq &\: C_0\sum_{ \substack{ r_1+r_2 = r \\ r_1 \leq s -2}} \| \nab^{(r_1)} (\kn, \kd) \|_{L^\i(\Sigma_t,g)} \|\nab^{(r_2)} \kd\|_{L^2(\Sigma_t,g)} \\
&\: + C_n \sum_{ \substack{ r_1+r_2 = r \\ r_1 > s -2}} \| \nab^{(r_1)} (\kn,\kd) \|_{(L^\i + t^{s+\f 52-\ve} L^2)(\Sigma_t,g)} (\sum_{r' = r_2}^{r_2+2} t^{r'-r_2} \|\nab^{(r')}  \kd \|_{L^2(\Sigma_t)}) \\
\leq &\: \sum_{r_1+r_2 = r} (C_0 t^{-r_1-1} + C_n t^{-r_1-1+\ve}) \|\kd\|_{\dot{H}^{r_2}(\Sigma_t,g)} 
\leq (C_0 t^{-1-r} + C_n t^{-1-r+\ve} ) \mathcal E_s^{\f 12}(t).
\end{split}
\end{equation} 

Next, the inhomogeneous term $I_{h^{\bf [n]}}$ can be bounded using Proposition~\ref{prop:inhomo} by
\begin{equation}\label{eq:transport.est.h.error.2}
t^r\| \nab^{(r)} I_{h^{\bf [n]}}\|_{L^2(\Sigma_t,g)} \leq C_n t^{N+s}.
\end{equation}

Finally, by Proposition~\ref{prop:dt.comm.est},
\begin{equation}\label{eq:transport.est.h.error.3}
\| [\rd_t ,\nab^{(r)}] h^{(d)}\|_{L^2(\Sigma_t,g)} \leq C_n \sum_{r'=0}^{r-1} t^{-2-r'+\ve} \|h^{(d)} \|_{H^{r-r'-1}(\Sigma_t,g)} \leq C_n t^{-1-r+\ve} \mathcal E_s^{\f 12}(t).
\end{equation}

Combining \eqref{eq:transport.est.h.error.1}--\eqref{eq:transport.est.h.error.3} yields \eqref{eq:transport.est.h.error.goal}.
\end{proof}

We next turn to the top order derivative estimates for $h^{(d)}$, $g^{(d)}$ and $(g^{-1})^{(d)}$. For this we first control the renormalized top-order quantities introduced in \eqref{def:hd.ren}--\eqref{def:g-1d.ren}. (Subsequently we will show using elliptic estimates that the renormalized top-order quantities indeed control all top-order derivatives; see already Lemma~\ref{lem:remove.modified.2}.) 
\begin{proposition}\label{prop:transport.est.main.top}
Given $N\in \mathbb N$, let $n\in \mathbb N$ be sufficiently large so that the estimates in Proposition~\ref{prop:inhomo} hold. Then
\begin{equation}\label{eq:transport.est.main.h.top}
\begin{split}
 &\: \f{\ud}{\ud t} [t^{-2N-2s+2(s+1)} \|\widetilde{\nab^{(s+1)}_{ren} h^{(d)}}\|^2_{L^2(\Sigma_t,g)} ]
 + \f{2N}{t} [t^{-2N-2s+2(s+1)} \|\widetilde{\nab^{(s+1)}_{ren} h^{(d)}}\|^2_{L^2(\Sigma_t,g)} ] \\
\leq &\: (C_0 t^{-1} + C_n t^{-1+\ve} ) t^{-2N - 2s}\mathcal E_s(t) + C_n  {t},
\end{split}
\end{equation}
and 
\begin{equation}\label{eq:transport.est.main.g.top}
\begin{split}
 &\: \f{\ud}{\ud t} [t^{-2N-2s+2(s+1)-2} (\|\widetilde{\nab^{(s+1)}_{ren} g^{(d)}} \|^2_{L^2(\Sigma_t,g)} + \|\widetilde{\nab^{(s+1)}_{ren} (g^{-1})^{(d)}} \|^2_{L^2(\Sigma_t,g)} )] \\
&\: + \f{2N}{t} [t^{-2N-2s+2(s+1)-2}(\|\widetilde{\nab^{(s+1)}_{ren} g^{(d)}} \|^2_{L^2(\Sigma_t,g)} + \|\widetilde{\nab^{(s+1)}_{ren} (g^{-1})^{(d)}} \|^2_{L^2(\Sigma_t,g)} )]  \\
\leq &\: (C_0 t^{-1} + C_n t^{-1+\ve} ) t^{-2N - 2s}\mathcal E_s(t).
\end{split}
\end{equation}
\end{proposition}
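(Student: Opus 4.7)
The plan is to derive transport equations for each of the three renormalized quantities and then apply Proposition~\ref{prop:transport.gen}. The design of the renormalizations \eqref{def:hd.ren}--\eqref{def:g-1d.ren} is precisely such that when one differentiates in $t$, the problematic top-order derivatives of $k^{(d)}$ (which would appear naively and cannot be controlled by $\mathcal E_s$) cancel.

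For $\widetilde{\nabla^{(s+1)}_{ren} h^{(d)}}$, I would proceed as follows. Commute $\Delta_g \nabla^{(s-1)}$ with the transport equation \eqref{eq:h.diff} to get $\partial_t \Delta_g \nabla^{(s-1)} h^{(d)} = 2 \Delta_g \nabla^{(s-1)}[(k^{[\bf n]})_j{}^i (k^{(d)})_i{}^j + (k^{(d)})_i{}^j(k^{(d)})_j{}^i] + \Delta_g \nabla^{(s-1)} I_{h^{[{\bf n}]}} + [\partial_t, \Delta_g \nabla^{(s-1)}] h^{(d)}$. At principal order the first term contributes $4 (k^{[\bf n]} + k^{(d)})_j{}^i \Delta_g \nabla^{(s-1)} (k^{(d)})_i{}^j$, which has $s+1$ derivatives on $k^{(d)}$ and is hence not directly controllable. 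Now use the wave equation \eqref{eq:k.diff} to substitute $\Delta_g \nabla^{(s-1)} (k^{(d)})_i{}^j = \partial_t^2 \nabla^{(s-1)} (k^{(d)})_i{}^j + \mathcal R_i{}^j$, where $\mathcal R$ collects the other terms in \eqref{eq:higher.order.wave.gor.kd} and is controlled by $\mathcal E_s^{1/2}$ with the requisite $t$-weight by the analysis of Section~\ref{subsec:mainest.k}. The $\partial_t^2$ contribution is exactly $\partial_t[2(k^{[\bf n]} + k^{(d)})_j{}^i \partial_t \nabla^{(s-1)} (k^{(d)})_i{}^j] - 2\partial_t(k^{[\bf n]} + k^{(d)})_j{}^i \partial_t \nabla^{(s-1)} (k^{(d)})_i{}^j$, and the total $t$-derivative piece cancels against the $\partial_t$ of the subtracted correction in \eqref{def:hd.ren}. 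What remains involves at most $s$ derivatives of $k^{(d)}$ or $s+1$ derivatives of $h^{(d)}, g^{(d)}, (g^{-1})^{(d)}$, all of which are controlled by $\mathcal E_s^{1/2}$. The commutator $[\partial_t, \Delta_g \nabla^{(s-1)}]$ is handled via Proposition~\ref{prop:commutation.formula} and Proposition~\ref{prop:dt.comm.est}, picking up derivatives of $k$ whose top-order parts also have at most $s$ derivatives of $k^{(d)}$. The inhomogeneity $\Delta_g \nabla^{(s-1)} I_{h^{[{\bf n}]}}$ is $O_n(t^{N+s-(s+1)})$ in $L^2$ by Proposition~\ref{prop:inhomo} and contributes the $C_n t$ term on the RHS of \eqref{eq:transport.est.main.h.top} after multiplying by the time weight.

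For $\widetilde{\nabla^{(s+1)}_{ren} g^{(d)}}$ and $\widetilde{\nabla^{(s+1)}_{ren} (g^{-1})^{(d)}}$, the strategy is identical but bookkeeping is more intricate. Starting from \eqref{eq:g.diff} (resp.~\eqref{eq:g-1.diff}), commuting with $\Delta_g \nabla^{(s-1)}$ yields at principal order $-2 g_{\ell(j} \Delta_g \nabla^{(s-1)} (k^{(d)})_{i)}{}^\ell$ (resp.~$+2 (g^{-1})^{\ell(j} \Delta_g \nabla^{(s-1)} (k^{(d)})_\ell{}^{i)}$), and then, as above, invoking the wave equation \eqref{eq:k.diff} turns this into $\partial_t$ of the leading correction term in \eqref{def:gd.ren} (resp.~\eqref{def:g-1d.ren}) plus lower order pieces. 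The extra correction terms on lines two and three of \eqref{def:gd.ren}--\eqref{def:g-1d.ren} are present to absorb the commutator $[\partial_t, \nabla_a]$ that arises when one differentiates the outermost $\nabla_a$ past $\partial_t$: by Proposition~\ref{prop:commutation.formula}, this commutator generates precisely the Christoffel-difference expression $(g^{-1})^{be} g_{m(i|} \nabla_e k_{|a)}{}^m - \nabla_{(a} k_{i)}{}^b - (g^{-1})^{be} g_{d(a} \nabla_{i)} k_e{}^d$ appearing in \eqref{def:gd.ren}, and the subtracted term is designed so that its $\partial_t$ (plus other algebraic identities) matches the corresponding error. After these cancellations, the resulting transport equation for the renormalized quantity has a source controlled, after multiplication by the appropriate $t$-weight, by $(C_0 + C_n t^{\varepsilon})\,t^{-1}\,\mathcal E_s^{1/2}(t)$ via Proposition~\ref{prop:k.general}, Proposition~\ref{prop:Ric.general}, and the Sobolev and commutator estimates in Section~\ref{sec:bootstrap.prelim}.

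With these transport equations in hand, applying Proposition~\ref{prop:transport.gen} with $M = 2N$ to each renormalized quantity (multiplied by the appropriate power of $t$) immediately yields \eqref{eq:transport.est.main.h.top} and \eqref{eq:transport.est.main.g.top}. The main obstacle is the algebraic verification that the renormalizations in \eqref{def:hd.ren}--\eqref{def:g-1d.ren} precisely cancel the terms with $s+1$ derivatives on $k^{(d)}$, including the secondary commutator corrections involving $g^{(d)}$ and $(g^{-1})^{(d)}$; everything beyond that cancellation is routine given the framework already built in Section~\ref{subsec:mainest.k} and Section~\ref{sec:bootstrap.prelim}.
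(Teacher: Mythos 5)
Your proposal follows essentially the same route as the paper's proof: differentiate the renormalized quantities in $t$, use the commuted wave equation for $k^{(d)}$ (i.e.\ the bound on $(\partial_t^2-\Delta_g)\nabla^{(s-1)}k^{(d)}$ established in Section~\ref{subsec:mainest.k}) so that the uncontrollable $\nabla^{(s+1)}k^{(d)}$ terms cancel against the $\partial_t$ of the subtracted corrections, absorb the commutator-generated Christoffel-variation terms for $g^{(d)}$ and $(g^{-1})^{(d)}$ with the extra terms in \eqref{def:gd.ren}--\eqref{def:g-1d.ren}, and conclude via Proposition~\ref{prop:transport.gen} with the stated $t$-weights. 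One small correction: in \eqref{eq:h.diff} the quadratic term $(k^{(d)})_i{}^j(k^{(d)})_j{}^i$ carries no factor $2$, so the principal contribution is $2(k^{\bf[n]}+k^{(d)})_j{}^i\,\Delta_g\nabla^{(s-1)}(k^{(d)})_i{}^j$ rather than $4(\cdot)$, which is precisely what the coefficient $2$ in \eqref{def:hd.ren} is built to cancel.
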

\begin{proof}\pfstep{Step~1: Proof of \eqref{eq:transport.est.main.h.top}}
The main difference with the estimates in Proposition~\ref{prop:transport.est.lower} is that there can potentially be $(s+1)$ derivatives of $k^{(d)}$, which is not controlled by our energy $\mathcal E_s(t)$. The quantity $\widetilde{\nab^{(s+1)}_{ren} h^{(d)}}$ is in fact designed exactly to avoid such terms after using the bounds for the wave equation for $k^{(d)}$. 

We begin our computations. First,
\begin{equation}\label{eq:h.mod.compute.1}
\begin{split}
\rd_t \Delta_g \nab^{(s-1)}_{i_1\cdots i_{s-1}} h^{(d)} = 2 (\kn + \kd)_i{ }^j \Delta_g \nab^{(s-1)}_{i_1\cdots i_{s-1}} (k^{(d)})_j{ }^i + \mathrm{error},
\end{split}
\end{equation}
where the error terms have at most $s$ derivatives hitting on $k^{(d)}$ and thus satisfy the estimates similar to that in the proof of Proposition~\ref{prop:transport.est.lower} (and their proofs are therefore omitted):
\begin{equation}\label{eq:h.mod.compute.1.5}
t^{s+1}\| \mathrm{error} \|_{L^2(\Sigma_t,g)} \leq  (C_0t^{-1} + C_n t^{-1+\ve}) \mathcal E_s^{\f 12}(t) + t^{N+s}.
\end{equation}

The term $2 (\kn + \kd)_i{ }^j \Delta_g \nab^{(s-1)}_{i_1\cdots i_{s-1}} (k^{(d)})_j{ }^i$, however, cannot be controlled. Nevertheless, continuing our computations, we see that
\begin{equation}\label{eq:h.mod.compute.2}
\begin{split}
&\: \rd_t ((\kn + \kd)_i{ }^j  \rd_t \nab^{(s-1)}_{i_1\cdots i_{s-1}} (k^{(d)})_j{ }^i) \\
= &\: (\kn + \kd)_i{ }^j \Delta_g\nab^{(s-1)}_{i_1\cdots i_{s-1}} (k^{(d)})_j{ }^i + (\kn + \kd)_i{ }^j (\rd_t^2 -\Delta_g)(k^{(d)})_j{ }^i \\
&\: +  \{ \rd_t(\kn + \kd)_i{ }^j\} \{\rd_t \nab^{(s-1)}_{i_1\cdots i_{s-1}} (k^{(d)})_j{ }^i\}.
\end{split}
\end{equation}

Note that this generates a term $(\kn + \kd)_i{ }^j \Delta_g\nab^{(s-1)}_{i_1\cdots i_{s-1}} (k^{(d)})_j{ }^i$ which can be used to cancel the uncontrollable term in \eqref{eq:h.mod.compute.1}. Hence, combining \eqref{eq:h.mod.compute.1}, \eqref{eq:h.mod.compute.1.5} and \eqref{eq:h.mod.compute.2}, we obtain
\begin{equation}\label{eq:h.mod.compute.3}
\begin{split}
&\: \|\rd_t (\Delta_g \nab^{(s-1)}_{i_1\cdots i_{s-1}} h^{(d)} - 2(\kn + \kd)_i{ }^j  \rd_t \nab^{(s-1)}_{i_1\cdots i_{s-1}} (k^{(d)})_j{ }^i ) \|_{L^2(\Sigma_t,g)}\\
\leq &\: 2\|(\kn + \kd)_i{ }^j (\rd_t^2 -\Delta_g)\nab^{(s-1)}(k^{(d)})_j{ }^i\|_{L^2(\Sigma_t,g)} \\
&\: + 2 \|\{ \rd_t(\kn + \kd)_i{ }^j\} \{\rd_t \nab^{(s-1)}_{i_1\cdots i_{s-1}} (k^{(d)})_j{ }^i\}\|_{L^2(\Sigma_t,g)}  + (C_0 t^{-2-s} + C_n t^{-2-s+\ve}) \mathcal E_s^{\f 12}(t) + t^{N-1}.
\end{split}
\end{equation}

We now handle to two terms in \eqref{eq:h.mod.compute.3}. By Proposition~\ref{prop:k.general}, \eqref{eq:main.k.inhomo} and H\"older's inequality,
\begin{equation}\label{eq:h.mod.compute.4}
\begin{split}
&\: \| (\kn + \kd)_i{ }^j (\rd_t^2 -\Delta_g)\nab^{(s-1)}(k^{(d)})_j{ }^i\|_{L^2(\Sigma_t,g)} \\
\leq &\: (\|(\kn, \kd)\|_{L^\i(\Sigma_t,g)} \| (\rd_t^2 -\Delta_g)\nab^{(s-1)}(k^{(d)})\|_{L^2(\Sigma_t,g)} \leq (C_0t^{-2-s} + C_n t^{-2-s+\ve})\mathcal E_s^{\f 12}(t) + C_n t^{N}.
\end{split}
\end{equation}
On the other hand, by Proposition~\ref{prop:k.general} and H\"older's inequality, 
\begin{equation}\label{eq:h.mod.compute.5}
\|\{ \rd_t(\kn + \kd)_i{ }^j\} \{\rd_t \nab^{(s-1)}_{i_1\cdots i_{s-1}} (k^{(d)})_j{ }^i\}\|_{L^2(\Sigma_t,g)} \leq (C_0t^{-2-s} + C_n t^{-2-s+\ve}) \mathcal E_s^{\f 12}(t).
\end{equation} 

Combining \eqref{eq:h.mod.compute.3}--\eqref{eq:h.mod.compute.5} and noticing that $\widetilde{\nab^{(s+1)}_{ren} h^{(d)}}$ is defined exactly to be (recall \eqref{def:hd.ren}) $\Delta_g \nab^{(s-1)}_{i_1\cdots i_{s-1}} h^{(d)} - 2(\kn + \kd)_i{ }^j  \rd_t \nab^{(s-1)}_{i_1\cdots i_{s-1}} (k^{(d)})_j{ }^i $, we thus obtain
\begin{equation}
\begin{split}
\| \rd_t \widetilde{\nab^{(s+1)}_{ren} h^{(d)}} \|_{L^2(\Sigma_t,g)} \leq (C_0t^{-2-s} + C_n t^{-2-s+\ve})\mathcal E_s^{\f 12}(t) + C_n t^{N-1}.
\end{split}
\end{equation}

The desired estimate \eqref{eq:transport.est.main.h.top} then follows directly from Proposition~\ref{prop:transport.gen} (for $M=2N+2s-2(s+1)$).

\pfstep{Step~2: Proof of \eqref{eq:transport.est.main.g.top}} The main idea is similar to Step~1, so we will be brief. The main difference is that for $g^{(d)}$, not only the derivatives of the inhomogeneous terms create $\nab^{(s+1)} k$, but the commutator terms also create similar terms, which have to be taken care of by a renormalization. More precisely, by \eqref{eq:g.diff} and Proposition~\ref{prop:commutation.formula},
\begin{equation}\label{eq:gd.top}
\begin{split}
&\: \partial_t\Delta_g\nabla^{(s-1)}_{i_1\cdots i_{s-2} a} g_{ij}^{(d)}\\
=&\: \Delta_g\nabla^{(s-2)}_{i_1\cdots i_{s-2}}\nabla_a\partial_tg_{ij}^{(d)} + [\rd_t, \Delta_g\nabla^{(s-1)}_{i_1\cdots i_{s-2} a}] g_{ij}^{(d)} \\
= &\: - 2 g_{\ell (j} \Delta_g \nabla^{(s-1)}_{i_1\cdots i_{s-2} a}(k^{(d)})_{i)}{}^\ell \\
&\: -\Delta_g\nabla^{(s-2)}_{i_1\cdots i_{s-2}} ((g^{-1})^{be} g_{m(i|}\nab_e k_{a)}{ }^m - \nabla_{(a} k_{i)}{ }^b - (g^{-1})^{be} g_{ {d}(a} \nab_{i)} k_e{}^d) g^{(d)}_{bj}\\
&\:-\Delta_g\nabla^{(s-2)}_{i_1\cdots i_{s-2}} ((g^{-1})^{be} g_{m(j|}\nab_e k_{a)}{ }^m - \nabla_{(a} k_{j)}{ }^b - (g^{-1})^{be} g_{ {d}(a} \nab_{j)} k_e{}^d) g^{(d)}_{ib} + \ldots,
\end{split}
\end{equation}
where the terms denotes by $\dots$ have at most $(s+1)$ derivatives on $\kn$, at most $s$ derivatives on $k$ and at most $(s+1)$ derivatives on $g^{(d)}$, and therefore can be bounded as in Proposition~\ref{prop:transport.est.lower} by 
$$\| \ldots\|_{L^2(\Sigma_t,g)} \leq (C_0 t^{-1-s} + C_n t^{-1-s+\ve}) \mathcal E_s^{\f 12}(t).$$

It thus remains to handle all the main terms appearing on the RHS of \eqref{eq:gd.top}. Now one observes that the quantity  $\widetilde{\nab^{(s+1)}_{ren} g^{(d)}}$ is designed exactly to remove this term (in a similar way as $\widetilde{\nab^{(s+1)}_{ren} h^{(d)}}$ is designed in Step~1) so that the additional error terms are controllable. It thus follows that 
\begin{equation}
\begin{split}
\| \rd_t  \widetilde{\nab^{(s+1)}_{ren} g^{(d)}}\|_{L^2(\Sigma_t,g)} \leq (C_0 t^{-1-s} + C_n t^{-1-s+\ve}) \mathcal E_s^{\f 12}(t),
\end{split}
\end{equation}
which implies the desired estimate for $\widetilde{\nab^{(s+1)}_{ren} g^{(d)}}$ in \eqref{eq:transport.est.main.g.top} after using Proposition~\ref{prop:transport.gen}..

The argument for $\widetilde{\nab^{(s+1)}_{ren} (g^{-1})^{(d)}}$ is similar and omitted. \qedhere
\end{proof}

We conclude this subsection by summarizing what we have achieved so far, namely that we have obtained an estimate for the modified energy by the energy:
\begin{proposition}\label{prop:intermediate.EE}
Given $N\in \mathbb N$, let $n\in \mathbb N$ be sufficiently large so that the estimates in Proposition~\ref{prop:inhomo} hold. Then for any $t\in [T_{\mathrm{aux}}, T_{\mathrm{Boot}})$,
\begin{equation*}
\begin{split}
t^{-2N-2s}\widetilde{\mathcal E_s}(t) + 2N \int_{T_{\mathrm{aux}}}^t \f{\tau^{-2N-2s}\widetilde{\mathcal E_s}(\tau)}{\tau}  \, \ud \tau 
\leq &\: \int_{T_{\mathrm{aux}}}^t (C_0 \tau^{-1} + C_n \tau^{-1 + \ve}) \tau^{-2N-2s}\mathcal E_s(\tau) \, \ud \tau + C_n  {t}. 
\end{split}
\end{equation*}
\end{proposition}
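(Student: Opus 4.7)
The plan is to obtain the inequality by summing the five differential inequalities already derived in Propositions~\ref{prop:main.wave.est}, \ref{prop:transport.est.lower} and \ref{prop:transport.est.main.top}, and then integrating the resulting pointwise inequality in $t$ from $T_{\mathrm{aux}}$.

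First, I would observe that the seven summands appearing in the definition \eqref{eq:menergy.def} of $\widetilde{\mathcal E_s}(t)$ are exactly the quantities sitting on the left-hand sides of the five propositions listed above: Proposition~\ref{prop:main.wave.est} handles the first two summands (the $\rd_t\nabla^{(r)} k^{(d)}$ and $\|k^{(d)}\|^2_{\dot H^r}$ terms); \eqref{eq:transport.est.main.h} and \eqref{eq:transport.est.main.g} handle the third and fourth summands (those involving $\|h^{(d)}\|^2_{\dot H^r}$, $\|g^{(d)}\|^2_{\dot H^r}$ and $\|(g^{-1})^{(d)}\|^2_{\dot H^r}$); and \eqref{eq:transport.est.main.h.top}, \eqref{eq:transport.est.main.g.top} handle the three renormalized top-order quantities. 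Moreover, each of these five inequalities already carries the common outside weight $t^{-2N-2s}$ together with a coefficient $\frac{M_i}{t}$ in front of the zeroth-order term, where $M_i\in\{2N,\,2N+2s\}$. Adding all five inequalities and dropping the manifestly positive excess term coming from Proposition~\ref{prop:main.wave.est} (where $M_i=2N+2s>2N$), I obtain
\begin{equation*}
\f{d}{dt}[t^{-2N-2s}\widetilde{\mathcal E_s}(t)]+\f{2N}{t}[t^{-2N-2s}\widetilde{\mathcal E_s}(t)]\leq (C_0 t^{-1}+C_n t^{-1+\ve})\,t^{-2N-2s}\mathcal E_s(t)+C_n(t^3+t),
\end{equation*}
which is of the desired form (after absorbing numerical factors into $C_0$ and $C_n$).

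Next I would integrate this inequality in $t$ from $T_{\mathrm{aux}}$ to $t$. The key input is that $\widetilde{\mathcal E_s}(T_{\mathrm{aux}})=0$, so that no boundary term appears. To see this, recall from Lemma~\ref{lem:locexist} that at $t=T_{\mathrm{aux}}$ one has $g^{\mathrm{aux}}=g^{\bf[n]}$, $h^{\mathrm{aux}}=h^{\bf[n]}$, $k^{\mathrm{aux}}=k^{\bf[n]}$ and $\rd_t k^{\mathrm{aux}}=\rd_t k^{\bf[n]}$; hence $g^{(d)},(g^{-1})^{(d)},h^{(d)},k^{(d)}$ and all of their spatial covariant derivatives vanish at $T_{\mathrm{aux}}$, and $\rd_t k^{(d)}=0$ there too. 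Since the commutator $[\rd_t,\nabla^{(r)}]k^{(d)}$ (given by Proposition~\ref{prop:commutation.formula}) is a sum of products of spatial derivatives of $k$ with spatial derivatives of $k^{(d)}$, it vanishes at $T_{\mathrm{aux}}$; combined with $\nabla^{(r)}\rd_t k^{(d)}(T_{\mathrm{aux}})=0$ this yields $\rd_t\nabla^{(r)} k^{(d)}(T_{\mathrm{aux}})=0$. Inspecting \eqref{def:hd.ren}--\eqref{def:g-1d.ren}, each of the three renormalized top-order quantities is a sum of terms each carrying as a factor either a spatial derivative of $h^{(d)},g^{(d)}$ or $(g^{-1})^{(d)}$, or a factor $\rd_t\nabla^{(r)} k^{(d)}$, and so it too vanishes at $T_{\mathrm{aux}}$. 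Thus $\widetilde{\mathcal E_s}(T_{\mathrm{aux)}}=0$.

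Finally, using $T_{N_0,s,n}\leq 1$ so that $\int_{T_{\mathrm{aux}}}^t C_n(\tau^3+\tau)\,\ud\tau\leq C_n t$, integration of the pointwise inequality over $[T_{\mathrm{aux}},t]$ gives exactly the stated conclusion. The whole argument is therefore a rather mechanical assembly of the results proved in Sections~\ref{subsec:mainest.k} and \ref{subsec:mainest.other}; the only mild obstacle, and the reason the statement is packaged here, is the verification that $\widetilde{\mathcal E_s}(T_{\mathrm{aux}})=0$, in particular for the three renormalized top-order quantities, which is precisely what the matching-$\rd_t k$ choice of initial data in Lemma~\ref{lem:locexist} is set up to guarantee.
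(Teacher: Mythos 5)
Your proposal is correct and takes essentially the same route as the paper, whose proof simply declares the statement an immediate consequence of Propositions~\ref{prop:main.wave.est}, \ref{prop:transport.est.lower} and \ref{prop:transport.est.main.top}. Your summation of the five differential inequalities, dropping the extra positive $\tfrac{2s}{t}$ term, integrating from $T_{\mathrm{aux}}$, and checking that $\widetilde{\mathcal E_s}(T_{\mathrm{aux}})=0$ via the matching data of Lemma~\ref{lem:locexist} (including for $\rd_t\nab^{(r)}k^{(d)}$ and the renormalized top-order quantities) just spells out the details the paper leaves implicit.
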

\begin{proof}
This is an immediate consequence of Propositions~\ref{prop:main.wave.est}, \ref{prop:transport.est.lower} and \ref{prop:transport.est.main.top}. \qedhere
\end{proof}

\subsection{Conclusion of the proof of Theorem~\ref{thm:bootstrap}}\label{sec:conclusion.of.bootstrap.thm}

In order to conclude the proof of Theorem~\ref{thm:bootstrap}, we finally need to relate $\mathcal E_s$ and $\widetilde{\mathcal E_s}$ (which will be achieved in Lemmas~\ref{lem:remove.modified.1} and \ref{lem:remove.modified.2}), and then use the energy inequality in Proposition~\ref{prop:intermediate.EE} to deduce our desired estimates.

Recalling now the difference between $\mathcal E_s$ and $\widetilde{\mathcal E_s}$ (as described immediately after their definitions in \eqref{eq:energy.def}--\eqref{def:g-1d.ren}), we need to 
\begin{itemize}
\item relate $\rd_t \nab^{(r)} k^{(d)}$ and $\nab^{(r)}\rd_t k^{(d)}$ (achieved using a commutator estimate; see Lemma~\ref{lem:remove.modified.1}), and
\item relate the renormalized top-order quantities and other top-order derivatives (achieved using elliptic estimates; see Lemma~\ref{lem:remove.modified.2}).
\end{itemize}

\begin{lemma}\label{lem:remove.modified.1}
The following estimate holds:
\begin{align*}
\sum_{r=0}^{s-1} t^{2r+2} \|\rd_t k^{(d)} \|_{\dot H^r(\Sigma_t,g)}^2 \leq (C_0+C_n t^\ve) \widetilde{\mathcal E_s}(t).
\end{align*}
\end{lemma}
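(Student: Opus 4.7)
The only discrepancy between $\sum_{r=0}^{s-1} t^{2r+2} \|\rd_t k^{(d)}\|^2_{\dot H^r(\Sigma_t,g)} = \sum_{r=0}^{s-1} t^{2r+2} \|\nab^{(r)}\rd_t k^{(d)}\|^2_{L^2(\Sigma_t,g)}$ and the term $\sum_{r=0}^{s-1} t^{2r+2} \|\rd_t \nab^{(r)} k^{(d)}\|^2_{L^2(\Sigma_t,g)}$ already appearing in $\widetilde{\mathcal E_s}(t)$ is the order of the $\rd_t$ and $\nab^{(r)}$. The plan is therefore to write
\[
\nab^{(r)}\rd_t k^{(d)} = \rd_t \nab^{(r)} k^{(d)} - [\rd_t,\nab^{(r)}] k^{(d)},
\]
apply $(a+b)^2 \leq 2a^2+2b^2$ and the triangle inequality, and bound the commutator term via Proposition~\ref{prop:dt.comm.est}.

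More precisely, for $r=0$ the commutator vanishes and the contribution is directly controlled by $\widetilde{\mathcal E_s}(t)$ with constant $C_0$. For $1 \leq r \leq s-1$, the bound \eqref{eq:comm.form.est.main.2} gives
\[
\|[\rd_t, \nab^{(r)}] k^{(d)}\|_{L^2(\Sigma_t,g)}^2 \leq C_n \sum_{r'=0}^{r-1} t^{-4-2r'+2\ve} \|k^{(d)}\|_{H^{r-r'-1}(\Sigma_t,g)}^2,
\]
after an application of Cauchy--Schwarz on the sum over $r'$. I then multiply by $t^{2r+2}$ and observe that each resulting exponent is $2(r-r'-r''-1)+2\ve \geq 2\ve$ when we further use $\|k^{(d)}\|_{H^{r-r'-1}(\Sigma_t,g)}^2 \leq \sum_{r''=0}^{r-r'-1} t^{-2r''}\cdot (t^{2r''}\|k^{(d)}\|_{\dot H^{r''}(\Sigma_t,g)}^2)$ and the fact that the latter parenthetical quantities are all controlled by $\widetilde{\mathcal E_s}(t)$.

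Summing over $r$ from $0$ to $s-1$ yields
\[
\sum_{r=0}^{s-1} t^{2r+2}\|\nab^{(r)}\rd_t k^{(d)}\|^2_{L^2(\Sigma_t,g)} \leq 2\sum_{r=0}^{s-1} t^{2r+2}\|\rd_t\nab^{(r)} k^{(d)}\|^2_{L^2(\Sigma_t,g)} + C_n t^{2\ve} \widetilde{\mathcal E_s}(t),
\]
and since both terms on the right are bounded by $(C_0 + C_n t^\ve)\widetilde{\mathcal E_s}(t)$ (using $t^{2\ve} \leq t^\ve$ for $t\leq 1$), the claim follows. The argument is routine given the commutator estimate \eqref{eq:comm.form.est.main.2}; no essential obstacle arises, since the powers of $t$ work out to give at least a $t^{2\ve}$ gain in the commutator contribution, thanks to the $t^\ve$ saved by each spatial derivative of $k$ in Proposition~\ref{prop:dt.comm.est} being exactly matched by the weight in $\widetilde{\mathcal E_s}$.
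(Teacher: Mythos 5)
Your proof is correct and follows essentially the same route as the paper: write $\nab^{(r)}\rd_t k^{(d)}$ as $\rd_t\nab^{(r)}k^{(d)}$ plus the commutator, bound the commutator with Proposition~\ref{prop:dt.comm.est}, and check that the $t$-weights in $\widetilde{\mathcal E_s}$ absorb the result with a $t^{\ve}$ gain. The power-counting you carry out explicitly is exactly what the paper leaves implicit.
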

\begin{proof}
We control the commutator $[\rd_t, \nab^{(r)}] k^{(d)}$ using Proposition~\ref{prop:dt.comm.est} to obtain
$$\sum_{r=0}^{s-1} t^{2r+2} \|\nab^{(r)}\rd_t k^{(d)}-\rd_t \nab^{(r)} k^{(d)}\|_{L^2(\Sigma_t,g)}^2 \leq C_n t^\ve \widetilde{\mathcal E_s}(t),$$
from which the desired estimate follows from the definition of $\widetilde{\mathcal E_s}$.
\end{proof}

\begin{lemma}\label{lem:elliptic}
Given any tensor $\xi$ tangential to $\Sigma_t$,
\begin{equation}\label{eq:main.elliptic}
\|\nab^{(2)} \xi \|_{L^2(\Sigma_t,g)}^2 \leq 2 \|\Delta_g \xi\|_{L^2(\Sigma_t,g)}^2 + C_n t^{-2+\ve} \|\nab \xi\|_{L^2(\Sigma_t,g)}^2 + C_n t^{-4+2\ve} \|\xi\|_{L^2(\Sigma_t,g)}^2.
\end{equation}
\end{lemma}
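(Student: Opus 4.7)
The strategy is to derive the estimate from the Bochner--Weitzenb\"ock identity, i.e.\ by comparing $\int_{\Sigma_t} |\nabla^{(2)} \xi|_g^2\,\mathrm{vol}_{\Sigma_t}$ with $\int_{\Sigma_t} |\Delta_g \xi|_g^2\,\mathrm{vol}_{\Sigma_t}$ via integration by parts, and absorbing the commutator errors using the curvature bounds of Proposition~\ref{prop:Ric.general}.

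The plan is to integrate by parts twice. Starting from $\int_{\Sigma_t} \nabla^i\nabla^j \xi \cdot \nabla_i \nabla_j \xi \,\mathrm{vol}_{\Sigma_t}$, integrate by parts in $\nabla^i$, then commute $\nabla_i$ through $\nabla^j$ (picking up a $\mathrm{Riem}\star \xi$ term), and finally integrate by parts again in $\nabla^j$. Schematically this gives
\begin{equation*}
\int_{\Sigma_t} |\nabla^{(2)}\xi|_g^2 \,\mathrm{vol}_{\Sigma_t} = \int_{\Sigma_t}  |\Delta_g \xi|_g^2 \,\mathrm{vol}_{\Sigma_t} + \int_{\Sigma_t} \mathrm{Riem}(g)\star \nabla\xi \star \nabla\xi\,\mathrm{vol}_{\Sigma_t} + \int_{\Sigma_t} \mathrm{Riem}(g)\star \xi \star \nabla^{(2)}\xi\,\mathrm{vol}_{\Sigma_t},
\end{equation*}
where $\star$ denotes a contraction whose precise form depends only on the tensorial type of $\xi$ and is irrelevant for the estimate.

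Next I would invoke the three-dimensional identity \eqref{Riem=Ric} to write $\mathrm{Riem}(g)$ as a linear combination of $\mathrm{Ric}(g)$, $R(g)$, $g$ and $g^{-1}$. Proposition~\ref{prop:Ric.general} then yields $\|\mathrm{Riem}(g)\|_{L^\infty(\Sigma_t,g)} \leq C_n t^{-2+\varepsilon}$. Hence, by H\"older's inequality,
\begin{equation*}
\left|\int_{\Sigma_t} \mathrm{Riem}(g)\star \nabla\xi\star \nabla\xi\,\mathrm{vol}_{\Sigma_t}\right| \leq C_n t^{-2+\varepsilon}\|\nabla\xi\|_{L^2(\Sigma_t,g)}^2,
\end{equation*}
while the Cauchy--Schwarz inequality with weight gives
\begin{equation*}
\left|\int_{\Sigma_t}\mathrm{Riem}(g)\star \xi\star \nabla^{(2)}\xi\,\mathrm{vol}_{\Sigma_t}\right| \leq \tfrac{1}{2}\|\nabla^{(2)}\xi\|_{L^2(\Sigma_t,g)}^2 + C_n t^{-4+2\varepsilon}\|\xi\|_{L^2(\Sigma_t,g)}^2.
\end{equation*}

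Putting these bounds together, the $\tfrac{1}{2}\|\nabla^{(2)}\xi\|_{L^2}^2$ term on the right-hand side can be absorbed into the left-hand side, and multiplying the resulting inequality by $2$ gives \eqref{eq:main.elliptic}. The only point that requires minor care is bookkeeping the precise combinatorics of the $\star$-contractions when $\xi$ carries indices (so that each commutator $[\nabla_a,\nabla_b]\xi$ produces a sum of terms schematically of the form $\mathrm{Riem}\star\xi$); this is routine and can be handled in components, but it is the step where one must be most careful to ensure nothing stronger than an $L^\infty$ bound on $\mathrm{Riem}$ is used.
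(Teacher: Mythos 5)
Your proposal is correct and follows essentially the same route as the paper: integrate by parts twice, commute covariant derivatives to generate $\mathrm{Riem}\star\nabla\xi\star\nabla\xi$ and $\mathrm{Riem}\star\xi\star\nabla^{(2)}\xi$ errors, convert $\mathrm{Riem}(g)$ to $\mathrm{Ric}(g)$ via the three-dimensional identity \eqref{Riem=Ric}, bound it in $L^\infty$ by $C_n t^{-2+\ve}$ using Proposition~\ref{prop:Ric.general}, and absorb $\tfrac12\|\nab^{(2)}\xi\|_{L^2}^2$ by Young's inequality. The only (immaterial) difference is that the paper starts the computation from $\|\Delta_g\xi\|_{L^2}^2$ rather than $\|\nab^{(2)}\xi\|_{L^2}^2$.
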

\begin{proof}
We compute
\begin{equation}
\begin{split}
&\|\Delta_g \xi \|_{L^2(\Sigma_t,g)}^2 \\
= &\int_{\Sigma_t} \underbrace{(g^{-1})^{a_1a_1'} \cdots (g^{-1})^{a_\ell a_\ell'} g_{b_1 b_1'} \cdots g_{b_m b_m'} (g^{-1})^{ii'} (g^{-1})^{jj'}}_{=:\mathfrak G_{b_1 \cdots b_m b_1' \cdots b_m' }^{a_1\cdots a_\ell a_1' \cdots a_\ell' i i' j j'}}\nabla_i\nabla_{i'}  \xi_{a_1 \cdots a_\ell}^{b_1 \cdots b_m} \nabla_{j'}\nabla_{j} \xi_{a_1' \cdots a_\ell'}^{b_1' \cdots b_m'} \, \mathrm{vol}_{\Sigma_t}\\
=& -\int_{\Sigma_t} \mathfrak G_{b_1 \cdots b_m b_1' \cdots b_m' }^{a_1\cdots a_\ell a_1' \cdots a_\ell' i i' j j'} \nabla_{i'}  \xi_{a_1 \cdots a_\ell}^{b_1 \cdots b_m} \nab_i \nabla_{j'}\nabla_{j} \xi_{a_1' \cdots a_\ell'}^{b_1' \cdots b_m'} \, \mathrm{vol}_{\Sigma_t}\\
=&\int_{\Sigma_t} \mathfrak G_{b_1 \cdots b_m b_1' \cdots b_m' }^{a_1\cdots a_\ell a_1' \cdots a_\ell' i i' j j'} \nabla_{j'} \nabla_{i'}  \xi_{a_1 \cdots a_\ell}^{b_1 \cdots b_m} \nabla_i\nabla_{j} \xi_{a_1' \cdots a_\ell'}^{b_1' \cdots b_m'} \, \mathrm{vol}_{\Sigma_t} + \mathrm{error}
= \|\nab^{(2)} \xi\|_{L^2(\Sigma_t,g)}^2 + \mathrm{error},
\end{split}
\end{equation}
where terms labelled $\mathrm{error}$ (different in the two instances) come from commuting covariant derivatives and obey an estimate
$$|\mathrm{error}| \leq C_0 \|Riem(g)\|_{L^\i(\Sigma,g)} \|\nab \xi\|_{L^2(\Sigma_t,g)}^2 +C_0   \|Riem(g)\|_{L^\i(\Sigma,g)} \|\nab^{(2)} \xi\|_{L^2(\Sigma_t,g)} \|\xi\|_{L^2(\Sigma_t,g)}.$$

As a consequence, since on the $3$-dimensional $\Sigma_t$, $Riem(g)$ can be expressed in terms of $Ric(g)$, we can use H\"older's inequality and Proposition~\ref{prop:Ric.general} to obtain
\begin{equation}
\begin{split}
&\: \|\nab^{(2)} \xi \|_{L^2(\Sigma_t,g)}^2 \\
\leq &\: \|\Delta_g \xi \|_{L^2(\Sigma_t,g)}^2 + C_0 \|Riem(g)\|_{L^\i(\Sigma,g)} \|\nab \xi\|_{L^2(\Sigma_t,g)}^2\\
&\: +C_0   \|Riem(g)\|_{L^\i(\Sigma,g)} \|\nab^{(2)} \xi\|_{L^2(\Sigma_t,g)} \|\xi\|_{L^2(\Sigma_t,g)} \\
\leq &\: \|\Delta_g \xi \|_{L^2(\Sigma_t,g)}^2 + C_n t^{-2+\ve} \|\nab \xi\|_{L^2(\Sigma_t,g)}^2 +C_n  t^{-2+\ve} \|\nab^{(2)} \xi\|_{L^2(\Sigma_t,g)} \|\xi\|_{L^2(\Sigma_t,g)},
\end{split}
\end{equation}
which implies \eqref{eq:main.elliptic} after using Young's inequality and absorbing $ \f 12\|\nab^{(2)} \xi \|_{L^2(\Sigma_t,g)}^2$ to the LHS. \qedhere
\end{proof}

\begin{lemma}\label{lem:remove.modified.2}
The top order part of the energy for $h^{(d)},g^{(d)},(g^{-1})^{(d)}$ is bounded by:
$$t^{2(s+1)}\| h^{(d)} \|_{\dot H^{s+1}(\Sigma_t, g)}^2 + t^{2s} (\| g^{(d)} \|_{\dot H^{s+1}(\Sigma_t, g)}^2 + \| (g^{-1})^{(d)} \|_{\dot H^{s+1}(\Sigma_t, g)}^2 )  \leq (C_0 + C_n t^\ve) \widetilde{\mathcal E_s}(t).$$
\end{lemma}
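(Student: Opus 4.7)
The plan is to reduce each top-order $\dot H^{s+1}$ norm to the corresponding $L^2$ norm of $\Delta_g \nabla^{(s-1)}$ of the difference via the elliptic inequality of Lemma~\ref{lem:elliptic}, then use the definitions \eqref{def:hd.ren}--\eqref{def:g-1d.ren} to express that Laplacian-like quantity as the renormalized top-order quantity (which is in $\widetilde{\mathcal E_s}$ by construction) plus explicit correction terms that can be absorbed into $(C_0 + C_n t^\varepsilon)\widetilde{\mathcal E_s}$.

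For $h^{(d)}$, apply Lemma~\ref{lem:elliptic} with $\xi = \nabla^{(s-1)} h^{(d)}$ and multiply by $t^{2(s+1)}$. The two lower-order errors on the right acquire weights $t^{2s+\varepsilon}$ and $t^{2s-2+2\varepsilon}$ and are absorbed into $C_n t^\varepsilon \widetilde{\mathcal E_s}$, because $\widetilde{\mathcal E_s}$ contains $t^{2s}\|h^{(d)}\|_{\dot H^s(\Sigma_t,g)}^2$ and $t^{2(s-1)}\|h^{(d)}\|_{\dot H^{s-1}(\Sigma_t,g)}^2$. For the principal term, rearrange \eqref{def:hd.ren} to write
\[
\Delta_g \nabla^{(s-1)} h^{(d)} = \widetilde{\nabla^{(s+1)}_{ren} h^{(d)}} + 2(k^{\bf [n]}+k^{(d)})_i{}^j \rd_t \nabla^{(s-1)} (k^{(d)})_j{}^i.
\]
The first piece is bounded directly by $\widetilde{\mathcal E_s}$, while the second is controlled by $\|k^{\bf [n]}+k^{(d)}\|_{L^\infty(\Sigma_t,g)} \leq C_0 t^{-1} + C_n t^{-1+\varepsilon}$ (Proposition~\ref{prop:k.general}) and $\|\rd_t \nabla^{(s-1)} k^{(d)}\|_{L^2(\Sigma_t,g)} \leq t^{-s}\widetilde{\mathcal E_s}^{1/2}$ (from the definition of $\widetilde{\mathcal E_s}$), yielding a $(C_0 + C_n t^\varepsilon)\widetilde{\mathcal E_s}$ contribution after weighting by $t^{2(s+1)}$ and squaring.

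The analyses for $g^{(d)}$ and $(g^{-1})^{(d)}$ (now with weight $t^{2s}$) run in parallel. After applying elliptic estimates and absorbing the lower-order terms, \eqref{def:gd.ren} splits $\Delta_g \nabla^{(s-1)} g^{(d)}$ into $\widetilde{\nabla^{(s+1)}_{ren} g^{(d)}}$ (directly in $\widetilde{\mathcal E_s}$), a principal correction of the form $g \cdot \rd_t \nabla^{(s-1)} k^{(d)}$ (handled by $\|g\|_{L^\infty(\Sigma_t,g)} \leq C_0$ and the $\widetilde{\mathcal E_s}$-bound for $\rd_t \nabla^{(s-1)} k^{(d)}$), and a remainder of schematic form $g^{(d)} \cdot \rd_t \nabla^{(s-2)}[(g^{-1}) g \nabla k^{(d)} + \nabla k + (g^{-1}) g \nabla k]$. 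Distributing $\rd_t$ and $\nabla^{(s-2)}$, the remainder produces at most one factor of $\rd_t \nabla^{(s-1)} k^{(d)}$ (controlled in $L^2$ by $\widetilde{\mathcal E_s}$) or $\rd_t \nabla^{(s-1)} k^{\bf [n]}$ (controlled in $L^\infty$ via Proposition~\ref{prop:rdtkn}), times lower-order metric and $k$ factors that are uniformly bounded by Propositions~\ref{prop:k.general} and~\ref{prop:kn.higher}. The case of $(g^{-1})^{(d)}$ is identical, using \eqref{def:g-1d.ren} and $\|g^{-1}\|_{L^\infty(\Sigma_t,g)} \leq C_0$ in place of $\|g\|_{L^\infty(\Sigma_t,g)}$.

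The main technical point, and the only step that is not completely routine, is the background $k^{\bf [n]}$ contribution in the remainder: naively $\|\nabla^{(s-1)} \rd_t k^{\bf [n]}\|_{L^\infty(\Sigma_t,g)} \leq C_n t^{-s-1+\varepsilon}$ is critical after the weight $t^s$. The resolution is to pair this $L^\infty$ bound with $\|g^{(d)}\|_{L^2(\Sigma_t,g)} \leq t\,\widetilde{\mathcal E_s}^{1/2}$ (respectively $\|(g^{-1})^{(d)}\|_{L^2(\Sigma_t,g)} \leq t\,\widetilde{\mathcal E_s}^{1/2}$), which comes from the $r=0$ term $t^{-2}\|g^{(d)}\|_{L^2(\Sigma_t,g)}^2$ (resp.\ $t^{-2}\|(g^{-1})^{(d)}\|_{L^2(\Sigma_t,g)}^2$) in the definition of $\widetilde{\mathcal E_s}$. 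The extra power of $t$ from this $L^2$ bound exactly compensates the critical $t^{-1}$ in the $k^{\bf [n]}$ estimate, producing a total contribution of at most $C_n t^{2\varepsilon}\widetilde{\mathcal E_s}$, and completing the proof.
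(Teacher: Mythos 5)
Your proposal is correct and follows essentially the same route as the paper: apply the elliptic estimate of Lemma~\ref{lem:elliptic} to $\nab^{(s-1)}h^{(d)}$, $\nab^{(s-1)}g^{(d)}$, $\nab^{(s-1)}(g^{-1})^{(d)}$, absorb the lower-order elliptic errors into $C_n t^\ve \widetilde{\mathcal E_s}$, and rewrite each $\Delta_g\nab^{(s-1)}$ quantity via \eqref{def:hd.ren}--\eqref{def:g-1d.ren} as the renormalized quantity plus corrections controlled by the $\rd_t\nab^{(s-1)}k^{(d)}$ part of $\widetilde{\mathcal E_s}$ and the background bounds of Propositions~\ref{prop:k.general} and \ref{prop:rdtkn}. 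The extra detail you supply for the critical $k^{\bf [n]}$ contribution (pairing its $L^\infty$ bound with the $t$-weighted $L^2$ control of $g^{(d)}$, $(g^{-1})^{(d)}$ from the $r=0$ term of $\widetilde{\mathcal E_s}$) is exactly what the paper leaves implicit under ``the estimates for $g^{(d)}$ and $(g^{-1})^{(d)}$ are similar.''
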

\begin{proof}
The key is to use Lemma~\ref{lem:elliptic}. Consider for instance $h^{(d)}$. We first note that $\Delta_g \nab^{(s-1)} h^{(d)}$ can be written as a linear combination of the renormalized top-order quantity $\widetilde{\nab^{(s+1)}_{ren} h^{(d)}}$ and terms which has at most $s$ derivatives of $k^{(d)}$ (and $\kn$) so that it can be checked that
$$t^{2(s+1)} \|\Delta_g \nab^{(s-1)} h^{(d)}\|_{L^2(\Sigma_t,g)}^2 \leq (C_0 + C_n t^\ve) \widetilde{\mathcal E_s}(t).$$
It then follows by the elliptic estimates in Lemma~\ref{lem:elliptic} and the lower order control for $h^{(d)}$ by $\widetilde{\mathcal E_s}(t)$ that
$$t^{2(s+1)}\| h^{(d)} \|_{\dot H^{s+1}(\Sigma_t, g)}^2  \leq (C_0 + C_n t^\ve) \widetilde{\mathcal E_s}(t).$$
The estimates for the top-order derivatives for $g^{(d)}$ and $(g^{-1})^{(d)}$ are similar. \qedhere
\end{proof}

Combining Lemmas~\ref{lem:remove.modified.1} and \ref{lem:remove.modified.2}, we obtain
\begin{proposition}\label{prop:remove.modified}
Given $N\in \mathbb N$, let $n\in \mathbb N$ be sufficiently large so that the estimates in Proposition~\ref{prop:inhomo} hold. Then for any $t\in [T_{\mathrm{aux}}, T_{\mathrm{Boot}})$,
$$\mathcal E_s(t) \leq (C_0 + C_n t^{\ve}) \widetilde{\mathcal E_s}(t).$$
\end{proposition}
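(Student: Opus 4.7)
The plan is essentially to observe that most of the terms in $\mathcal E_s$ and $\widetilde{\mathcal E_s}$ coincide exactly, and the two terms that differ are precisely what Lemmas~\ref{lem:remove.modified.1} and \ref{lem:remove.modified.2} are designed to handle. More concretely, comparing \eqref{eq:energy.def} with \eqref{eq:menergy.def}, only two groups of terms are different: (i) $\mathcal E_s$ contains $\sum_{r=0}^{s-1} t^{2r+2} \|\rd_t k^{(d)}\|^2_{\dot H^r(\Sigma_t,g)}$ whereas $\widetilde{\mathcal E_s}$ contains the ``wrongly ordered'' version $\sum_{r=0}^{s-1} t^{2r+2} \|\rd_t \nab^{(r)} k^{(d)}\|^2_{L^2(\Sigma_t,g)}$; and (ii) $\mathcal E_s$ contains the honest top-order Sobolev norms $t^{2(s+1)}\|h^{(d)}\|^2_{\dot H^{s+1}(\Sigma_t,g)} + t^{2s}(\|g^{(d)}\|^2_{\dot H^{s+1}(\Sigma_t,g)} + \|(g^{-1})^{(d)}\|^2_{\dot H^{s+1}(\Sigma_t,g)})$, whereas $\widetilde{\mathcal E_s}$ only carries the renormalized quantities $t^{2(s+1)}\|\widetilde{\nab^{(s+1)}_{ren} h^{(d)}}\|^2_{L^2}$, $t^{2s}\|\widetilde{\nab^{(s+1)}_{ren} g^{(d)}}\|^2_{L^2}$, $t^{2s}\|\widetilde{\nab^{(s+1)}_{ren} (g^{-1})^{(d)}}\|^2_{L^2}$ at top order (with the ordinary Sobolev norms truncated at $r\leq s$).

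First I would write out, term by term, the identification of the common pieces. The contributions $\sum_{r=0}^s t^{2r}\|k^{(d)}\|^2_{\dot H^r(\Sigma_t,g)}$ appear identically in both energies, and for $h^{(d)}$, $g^{(d)}$, $(g^{-1})^{(d)}$ the ranges $0\leq r\leq s$ are common to both as well. Hence these contributions give the trivial bound $\leq \widetilde{\mathcal E_s}(t)$ with constant $C_0=1$.

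Next I would handle the mismatched $k^{(d)}$ piece by directly invoking Lemma~\ref{lem:remove.modified.1}, which gives
\[
\sum_{r=0}^{s-1} t^{2r+2} \|\rd_t k^{(d)}\|^2_{\dot H^r(\Sigma_t,g)} \;\leq\; (C_0 + C_n t^{\ve})\,\widetilde{\mathcal E_s}(t).
\]
For the mismatched top-order pieces for $h^{(d)}$, $g^{(d)}$, $(g^{-1})^{(d)}$, I would invoke Lemma~\ref{lem:remove.modified.2}, which states precisely
\[
t^{2(s+1)}\|h^{(d)}\|^2_{\dot H^{s+1}(\Sigma_t,g)} + t^{2s}(\|g^{(d)}\|^2_{\dot H^{s+1}(\Sigma_t,g)} + \|(g^{-1})^{(d)}\|^2_{\dot H^{s+1}(\Sigma_t,g)}) \leq (C_0 + C_n t^{\ve})\,\widetilde{\mathcal E_s}(t).
\]
Summing the common pieces together with these two bounds yields the claimed inequality. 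There is no real obstacle left at this stage, since all the substantial analytic work (the commutator estimate behind Lemma~\ref{lem:remove.modified.1} and the elliptic estimate behind Lemma~\ref{lem:remove.modified.2}) has already been carried out. The proof is thus a matter of assembling the ingredients, and the entire argument should fit in a few lines.
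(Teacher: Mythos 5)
Your proposal is correct and follows exactly the paper's route: the paper presents Proposition~\ref{prop:remove.modified} as the direct combination of Lemmas~\ref{lem:remove.modified.1} and \ref{lem:remove.modified.2}, with the remaining terms of $\mathcal E_s$ appearing verbatim in $\widetilde{\mathcal E_s}$. You have merely written out explicitly the term-by-term identification that the paper leaves implicit.
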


We are now ready to conclude the proof of the bootstrap theorem (Theorem~\ref{thm:bootstrap}):
\begin{proof}[Proof of Theorem \ref{thm:bootstrap}]
Given any $N\in \mathbb N$, choose $n\in \mathbb N$ sufficiently large so that the estimates in Proposition~\ref{prop:inhomo} hold.

Combining Propositions~\ref{prop:intermediate.EE} and \ref{prop:remove.modified}, and integrating in $t$ (noting that we have trivial data at $T_{aux}$), we obtain that
\begin{equation}\label{eq:final.energy}
\begin{split}
\f{t^{-2N-2s}}{(C_0 + C_n t^{\ve})} \mathcal E_s(t) + 2N \int_{T_{\mathrm{aux}}}^t \f{\tau^{-2N-2s}\mathcal E_s(\tau)}{(C_0 + C_n \tau^{\ve}) \tau}  \, \ud \tau 
\leq &\: \int_{T_{\mathrm{aux}}}^t (C_0 \tau^{-1} + C_n \tau^{-1 + \ve}) \tau^{-2N-2s}\mathcal E_s(\tau) \, \ud \tau + C_n t. 
\end{split}
\end{equation}

We now choose our constants. First choose $N$ sufficiently large so that 
$$N \geq \max\{ 2 C_0(C_0+1),\,2(C_0+1),\,N_0,\,7\}.$$ We then fix an $n_{N_0,s}\in \mathbb N$ sufficiently large so that whenever $n \geq n_{N_0,s}$, \eqref{eq:final.energy} holds with the given $N$. After fixing $n$, we then choose $T_{N_0,s,n}$ so that $C_n T_{N_0,s,n}^\ve \leq 1$. Plugging $C_0 \leq \f{N}{2(C_0+1)}$ and $C_n T_{N_0,s,n}^\ve\leq 1$ into \eqref{eq:final.energy}, we then obtain
\begin{equation}\label{eq:final.energy.2}
\begin{split}
\f{t^{-2N-2s}}{(C_0 + 1)} \mathcal E_s(t) + 2N \int_{T_{\mathrm{aux}}}^t \f{\tau^{-2N-2s}\mathcal E_s(\tau)}{(C_0 + 1) \tau}  \, \ud \tau 
\leq &\: \int_{T_{\mathrm{aux}}}^t (\f{N}{2(C_0+1)} + 1) \f{ \tau^{-2N-2s}\mathcal E_s(\tau)}{\tau} \, \ud \tau + C_n t. 
\end{split}
\end{equation}
Notice that we have chosen $N$ so that  $(\f{N}{2(C_0+1)} + 1) \leq \f{N}{C_0+1}$. We can thus subtract $N \int_{T_{\mathrm{aux}}}^t \f{\tau^{-2N-2s}\mathcal E_s(\tau)}{(C_0 + 1) \tau}  \, \ud \tau $ from both sides of \eqref{eq:final.energy.2} to obtain
\begin{equation}\label{eq:final.energy.3}
\begin{split}
\f{t^{-2N-2s}}{(C_0 + 1)} \mathcal E_s(t) + N \int_{T_{\mathrm{aux}}}^t \f{\tau^{-2N-2s}\mathcal E_s(\tau)}{(C_0 + 1) \tau}  \, \ud \tau 
\leq C_n t,
\end{split}
\end{equation}
which immediately implies 
\begin{align}\label{gdkdhdest.no.0}
\mathcal E_s(t) \leq t^{2N+2s},
\end{align}
after choosing $T_{N_0,s,n}$ smaller if necessary. In particular, since we have chosen $N\geq N_0$ and $T_{N_0,s,n}\leq 1$, we obtain \eqref{gdkdhdest}. 

Finally, we check that we have improved the bootstrap assumption. For \eqref{eq:BA2}--\eqref{eq:BA4}, this is immediate from \eqref{gdkdhdest}. For \eqref{eq:BA1}, note that \eqref{gdkdhdest.no.0} and \eqref{eq:Sobolev.2} imply 
$$\|g - g^{\bf [n]}\|_{L^\i(\Sigma_t,g)}  \leq C_0 t^{N+s- {\f 32}}.$$
Now note that the smallest eigenvalue of $g^{-1}$ is $\geq C_0^{-1} t^{-2p_1} \geq C_0^{-1} t^{2}.$
Hence
$$t^8 |a_{ij} - a^{\bf [n]}_{ij}|^2 \leq C_0 t^{4 p_{\max\{i,j\}}} |a_{ij} - a^{\bf [n]}_{ij}|^2 \leq \max_{i,j} |g_{ij} - g^{\bf [n]}_{ij}| ^2 \leq C_0 (t^{-2})^2 \|g - g^{\bf [n]}\|_{L^\i(\Sigma_t,g)}^2 \leq C_0 t^{2N+2s-11}. $$
Now since $N\geq 7$ and $s\geq 4$, we have
$|a_{ij} - a^{\bf [n]}_{ij}| \leq C_0 t^{\f 32}$. Combining with \eqref{eq:main.parametrix.a.bd}, we thus obtain
\begin{equation}\label{eq:a-c.improvements}
|a_{ij} - c_{ij} | \leq C_0 t^{\ve},
\end{equation}
which improves over \eqref{eq:BA1} after taking $T_{N_0,s,n}$ to be sufficiently small. \qedhere
\qedhere
\end{proof}

As we discussed in Section~\ref{sec:step.2.bootstrap}, once we have proven Theorem~\ref{thm:bootstrap}, we now also obtain Corollary~\ref{cor:bootstrap}.

\subsection{Extracting a limit: proof of Proposition~\ref{prop:AA.main}}\label{sec:pf.AA.main}

In this final subsection, we prove Proposition~\ref{prop:AA.main}, which, as indicated in Section~\ref{sec:conclusion.of.actual.wave}, is the final step of the proof of Theorem~\ref{thm:main.reduced}.

We begin with some easy estimates, which will allow us to extract a limit. (Notice that these estimates are allowed to degenerate as $t\to 0$, but importantly they do not depend on $T_{aux}$.)
\begin{lemma}\label{lem:AA.0}
Let $s$, $N_0$, $n$ and $T_{N_0,s,n}$ be as in Theorem~\ref{thm:bootstrap}. For every $T'$, $T''$ satisfying $0< T' < T'' \leq T_{N_0,s,n}$, there exists a constant $C >0$ \underline{independent of $T_{\mathrm{aux}}$} such that the following holds (with definitions in \eqref{adkdhd.1} and \eqref{adkdhd.2}).

Let $T_{aux} \in (0, T']$ and suppose $(g^{\mathrm{aux}}, k^{\mathrm{aux}}, h^{\mathrm{aux}})$ is the solution to \eqref{eq:hyp.sys} on $[T_{\mathrm{aux}},T_{N_0,s,n}) \times \mathbb T^3$ given by Corollary~\ref{cor:bootstrap}. Then 
$$\sup_{t\in [T',T'']} \sup_{x\in \mathbb T^3} \sum_{r+|\alp| \leq 4} (|\rd_t^r \rd_x^{\alp} g^{(d)}_{ij}| + |\rd_t^r \rd_x^{\alp} ((g^{-1})^{(d)})^{ij} |)(t,x) + \sum_{r+|\alp|\leq 3} (|\rd_t^r \rd_x^\alp k^{(d)}_{ij}| + |\rd_t^r \rd_x^\alp h^{(d)}|)(t,x) \leq C.$$
\end{lemma}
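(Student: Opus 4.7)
The plan is to restrict the uniform energy bound of Corollary~\ref{cor:bootstrap} to the time-strip $[T',T''] \times \mathbb T^3$, convert the geometric Sobolev norms there into pointwise bounds on coordinate derivatives via Sobolev embedding on $\mathbb T^3$, and then propagate these to higher $\rd_t$-derivatives using the evolution system \eqref{eq:hyp.sys}.

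First, I would restrict \eqref{gdkdhdest} to $t\in[T',T'']$: since $T'>0$, the weights $t^{2r}$, $t^{2r-2}$ and $t^{2N_0+2s}$ are bounded above and below by constants depending only on $T',T''$, so one obtains uniform (in $T_{\mathrm{aux}}$) bounds
\begin{equation*}
\|k^{(d)}\|_{H^{s}(\Sigma_t,g^{\mathrm{aux}})} + \|\rd_t k^{(d)}\|_{H^{s-1}(\Sigma_t,g^{\mathrm{aux}})} + \|h^{(d)}\|_{H^{s+1}(\Sigma_t,g^{\mathrm{aux}})} + \|g^{(d)}\|_{H^{s+1}(\Sigma_t,g^{\mathrm{aux}})} + \|(g^{-1})^{(d)}\|_{H^{s+1}(\Sigma_t,g^{\mathrm{aux}})} \leq C
\end{equation*}
on $[T',T'']$. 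By Lemma~\ref{lem:Hk.comp} these are equivalent to the analogous norms taken with respect to $g^{\bf [n]}$; on $[T',T''] \times \mathbb T^3$ all components of $g^{\bf [n]}$, $(g^{\bf [n]})^{-1}$ and their derivatives are uniformly bounded by constants depending on $T',T'',c_{ij},p_i,n$ but not on $T_{\mathrm{aux}}$, so covariant derivatives translate to coordinate derivatives and the Riemannian volume form to Lebesgue measure with uniform constants. This yields uniform coordinate $L^2$-type bounds on $\rd_x^\alp g^{(d)},\,\rd_x^\alp (g^{-1})^{(d)},\,\rd_x^\alp h^{(d)}$ for $|\alp|\leq s+1$, on $\rd_x^\alp k^{(d)}$ for $|\alp|\leq s$, and on $\rd_x^\alp \rd_t k^{(d)}$ for $|\alp|\leq s-1$.

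Next, the standard Sobolev embedding $H^2(\mathbb T^3)\hookrightarrow L^\infty(\mathbb T^3)$ upgrades these to uniform pointwise bounds on coordinate spatial derivatives. Since $s\geq 5$, one gets $\rd_x^\alp g^{(d)},\,\rd_x^\alp(g^{-1})^{(d)},\,\rd_x^\alp h^{(d)} \in L^\infty$ for $|\alp|\leq s-1\geq 4$, $\rd_x^\alp k^{(d)}\in L^\infty$ for $|\alp|\leq s-2\geq 3$, and $\rd_x^\alp \rd_t k^{(d)}\in L^\infty$ for $|\alp|\leq s-3\geq 2$. These already cover all $(r,|\alp|)$ cases in the statement with $r\leq 1$.

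Finally, for higher time derivatives I would argue inductively by differentiating \eqref{eq:h.diff}, \eqref{eq:g.diff}, \eqref{eq:g-1.diff} and \eqref{eq:k.diff} in $t$. Each extra $\rd_t$ on $g^{(d)}$, $(g^{-1})^{(d)}$ or $h^{(d)}$ is replaced via these transport equations by terms polynomial in $k$, $g$, $g^{-1}$ and their lower-order time derivatives, together with the smooth background quantities $k^{\bf[n]},g^{\bf[n]},I_{h^{\bf[n]}}$ on $[T',T'']\times\mathbb T^3$; this costs no spatial regularity. Each extra $\rd_t$ beyond the first on $k^{(d)}$ is replaced through \eqref{eq:k.diff} by $\Delta_g k^{(d)} - \nab_i\nab^j h + \ldots$, costing at most two spatial derivatives per $\rd_t$. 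A direct derivative count using $s\geq 5$ then confirms that every $(r,|\alp|)$ in the statement can be expressed, through this substitution, in terms of coordinate spatial derivatives already controlled in $L^\infty$.

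The point requiring the most care will be the saturating cases $r=4$ for $g^{(d)},(g^{-1})^{(d)}$ and $r=3$ for $k^{(d)}$: the first reduces, via \eqref{eq:g.diff}/\eqref{eq:g-1.diff} and one $\rd_t$-differentiation of \eqref{eq:k.diff}, to needing $\rd_t\Delta_g k^{(d)}\in L^\infty$ plus lower-order terms, which in turn requires two coordinate spatial derivatives of $\rd_t k^{(d)}$ in $L^\infty$ --- precisely what the bound $\|\rd_t k^{(d)}\|_{H^{s-1}}\leq C$ with $s-1\geq 4$ supplies through Sobolev embedding. All other $(r,|\alp|)$ combinations are strictly easier, and the resulting $C$ depends only on $T',T'',N_0,s,n,c_{ij},p_i$, hence is independent of $T_{\mathrm{aux}}$ as required.
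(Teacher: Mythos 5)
Your proposal is correct and follows essentially the same route as the paper: the case with no $\rd_t$-derivatives comes from restricting \eqref{gdkdhdest} to the strip $[T',T'']$ (where the $t$-weights are harmless) and applying Sobolev embedding, and the higher time derivatives are recovered by substituting the evolution equations \eqref{eq:k.diff}, \eqref{eq:h.diff}, \eqref{eq:g.diff}, \eqref{eq:g-1.diff}. Your explicit derivative count, in particular that the saturating cases only require two spatial derivatives of $\rd_t k^{(d)}$ in $L^\infty$, which $\|\rd_t k^{(d)}\|_{H^{s-1}}$ with $s\geq 5$ provides, correctly fills in the details the paper leaves implicit.
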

\begin{proof}
When there is no $\rd_t$ derivative, this just follows from \eqref{gdkdhdest} and Sobolev embedding. To obtain the estimates with the $\rd_t$ derivatives, we use in addition the equations \eqref{eq:k.diff}, \eqref{eq:h.diff}, \eqref{eq:g.diff} and \eqref{eq:g-1.diff}.
\end{proof}

\begin{lemma}\label{lem:AA}
Let $s$, $N_0$, $n$ and $T_{N_0,s,n}$ be as in Theorem~\ref{thm:bootstrap}. There exists a sequence of auxiliary times $\{ T_{\mathrm{aux},I} \}_{I=1}^{+\infty} \subset (0,T_{N_0,s,n})$, $\lim_{I\to +\infty} T_{\mathrm{aux},I} = 0$ such that the corresponding solutions $\{ (g^{\mathrm{aux}}_I, k^{\mathrm{aux}}_I, h^{\mathrm{aux}}_I \}_{I=1}^{+\infty}$ given by Lemma~\ref{lem:locexist} converge locally in $C^3\times C^2\times C^2$ norm (as $I\to +\infty$) to a limit $(g, k ,h)$ which solves \eqref{eq:hyp.sys} in $(0,T_{N_0,s,n}]\times \mathbb T^3$. Moreover, after denoting $g^{(d)} = g - g^{\bf [n]}$, $(g^{-1})^{(d)} = g^{-1} - (g^{\bf [n]})^{-1}$, $k^{(d)} = k - k^{\bf [n]}$ and $h^{(d)} = h - h^{\bf [n]}$, the estimate \eqref{gdkdhdest} holds. 
\end{lemma}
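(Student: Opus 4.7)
The plan is to combine an Arzelà--Ascoli extraction of a limit for parts (1)--(3) with a dedicated energy-style argument for the symmetry statement (4).

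\emph{Parts (1)--(3).} First, I would use Corollary~\ref{cor:bootstrap} and Lemma~\ref{lem:AA.0} to obtain $C^4\times C^3\times C^3$ bounds on $(g^{\mathrm{aux}}_I,k^{\mathrm{aux}}_I,h^{\mathrm{aux}}_I)$ on each compact subinterval $[T',T'']\subset (0,T_{N_0,s,n}]$, uniform in $T_{\mathrm{aux},I}<T'$. Arzelà--Ascoli and a diagonal extraction along an exhaustion $\{[1/k,T_{N_0,s,n}]\}_{k\in\mathbb N}$ then produce a subsequence $T_{\mathrm{aux},I}\searrow 0$ and a limit $(g,k,h)$ on $(0,T_{N_0,s,n}]\times\mathbb T^3$ converging in $C^3_{\mathrm{loc}}\times C^2_{\mathrm{loc}}\times C^2_{\mathrm{loc}}$. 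Since the system \eqref{eq:hyp.sys} is of order at most two in $k,h$ and first order in $\rd_tg$, the $C^0$-convergence of every derivative appearing in the equations is enough to pass to the limit, giving (2). For (3), the bootstrap bound \eqref{gdkdhdest} holds on each $\Sigma_t$ with constants uniform in $T_{\mathrm{aux},I}$; at fixed $t$ I would extract (after a further subsequence) weak limits in the relevant Hilbert spaces $H^r(\Sigma_t,g)$, identify them with the strong $C^r$-limits, and invoke lower semicontinuity of the $H^r$-norms under weak convergence to transfer \eqref{gdkdhdest} to $(g,k,h)$.

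\emph{Part (4).} By the third equation of \eqref{eq:hyp.sys}, $\rd_tg_{ij}=-k_i{}^\ell g_{j\ell}-k_j{}^\ell g_{i\ell}$ is automatically symmetric in $(i,j)$, so $k_{ij}:=g_{j\ell}k_i{}^\ell=-\tfrac12\rd_tg_{ij}$ is equivalent to the vanishing of the antisymmetric tensor $\mathfrak D^{\mathrm{aux}}_{I,ij}:=(k^{\mathrm{aux}}_I)_i{}^\ell(g^{\mathrm{aux}}_I)_{j\ell}-(k^{\mathrm{aux}}_I)_j{}^\ell(g^{\mathrm{aux}}_I)_{i\ell}$. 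Because the initial data at $t=T_{\mathrm{aux},I}$ match the approximate solution, $\mathfrak D^{\mathrm{aux}}_I(T_{\mathrm{aux},I})=\mathfrak D^{\bf[n]}(T_{\mathrm{aux},I})$ and $\rd_t\mathfrak D^{\mathrm{aux}}_I(T_{\mathrm{aux},I})=\rd_t\mathfrak D^{\bf[n]}(T_{\mathrm{aux},I})$, which by Lemma~\ref{lem:D.est} are bounded by $T_{\mathrm{aux},I}^{M_n}$ for an exponent $M_n>0$ that I can make arbitrarily large by enlarging $n$. Contracting the wave equation for $k$ in \eqref{eq:hyp.sys} with $g^{\mathrm{aux}}_{I,j\ell}$ and antisymmetrizing in $(i,\ell)$, the Hessian-of-scalar term $\nabla_i\nabla_\ell h$ drops out by symmetry of the Hessian, leaving a linear homogeneous wave-type system of the schematic form
\[(-\rd_t^2+\Delta_{g^{\mathrm{aux}}_I})\mathfrak D^{\mathrm{aux}}_I=\mathcal L\bigl(\mathfrak D^{\mathrm{aux}}_I,\rd_t\mathfrak D^{\mathrm{aux}}_I,\nabla\mathfrak D^{\mathrm{aux}}_I\bigr),\]
with $\mathcal L$ linear in its arguments and coefficients controlled by the bootstrap. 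I would then run a Fuchsian weighted $L^2$-energy estimate on $[T_{\mathrm{aux},I},t]$ analogous to Proposition~\ref{prop:main.wave.est}: with no inhomogeneity (thanks to homogeneity) and initial energy $\lesssim T_{\mathrm{aux},I}^{2M_n}$, the weighted Grönwall argument gives $\|\mathfrak D^{\mathrm{aux}}_I(t)\|\lesssim_n T_{\mathrm{aux},I}^{M_n}$ uniformly in $I$. Letting $I\to\infty$ along the extracted subsequence, $\mathfrak D\equiv 0$ in the limit, which is (4).

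\emph{Main obstacle.} The delicate point is the derivation of the homogeneous wave equation for $\mathfrak D^{\mathrm{aux}}_I$: one must carefully track cancellations when antisymmetrizing the wave equation for $k$ and iterating the transport identity $\rd_tg=-(k\cdot g+g\cdot k)$, so that every top-order term on the right-hand side either vanishes (as for $\nabla_i\nabla_\ell h$) or is genuinely linear in $\mathfrak D^{\mathrm{aux}}_I$ rather than involving free derivatives of $k^{\mathrm{aux}}_I$ or $g^{\mathrm{aux}}_I$. Only with this exact linear structure can the Fuchsian estimate propagate the polynomial smallness $T_{\mathrm{aux},I}^{M_n}$ forward in $t$; an extra inhomogeneous term, even controlled by the bootstrap, would have size independent of $T_{\mathrm{aux},I}$ and destroy the vanishing in the limit.
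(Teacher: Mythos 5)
Your treatment of parts (1)--(3) --- uniform $C^4\times C^3\times C^3$ bounds from Lemma~\ref{lem:AA.0} and Corollary~\ref{cor:bootstrap}, Arzel\`a--Ascoli with a diagonal extraction, passing to the limit in \eqref{eq:hyp.sys} using the local $C^3\times C^2\times C^2$ convergence, and transferring \eqref{gdkdhdest} by weak compactness in $H^r(\Sigma_t,g)$ at fixed $t$ together with identification of the weak and strong limits and lower semicontinuity --- is essentially identical to the paper's proof of this lemma. Your ``Part (4)'' on the symmetry of $k_{ij}$ is not part of this statement (the paper establishes it separately in Lemma~\ref{lem:k.is.symmetric}, using the homogeneous wave equation for the antisymmetric part derived in Lemma~\ref{lem:eqn.for.anti.k}), and your sketch of that step likewise mirrors the paper's argument, including the cancellation of the Hessian of $h$ and the smallness of the data at $t=T_{\mathrm{aux},I}$ coming from Lemma~\ref{lem:D.est}.
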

\begin{proof}
The existence of a limit follows from Lemma~\ref{lem:AA.0}, the Arzela--Ascoli theorem, and a standard argument extracting a diagonal sequence. Since the limit is achieved locally in $C^3\times C^2\times C^2$, it follows that the limit satisfies the system \eqref{eq:hyp.sys}. 

Finally, we prove that the limit obeys the estimate \eqref{gdkdhdest}. First, note that the estimate \eqref{gdkdhdest} implies that for every $t$, there is a subsequence $\{ T_{\mathrm{aux},I_\ell} \}_{\ell=1}^{+\infty}$ for which  $\{ (g^{\mathrm{aux}}_{I_\ell}, k^{\mathrm{aux}}_{I_\ell}, h^{\mathrm{aux}}_{I_\ell} \}_{\ell=1}^{+\infty}$ has a weak limit satisfying \eqref{gdkdhdest}. This limit must coincide with the local $C^3\times C^2\times C^2$ limit, thus showing the bound \eqref{gdkdhdest}.
\end{proof}

The very final statement we need in order to complete the proof of Proposition~\ref{prop:AA.main} is that $g_{jj'} k_i{ }^{j'}$ is symmetric in $i$ and $j$. The key to such a statement is the following lemma.
\begin{lemma}\label{lem:eqn.for.anti.k}
Suppose $(g,k,h)$ solves \eqref{eq:hyp.sys}. Then the term $(g_{jj'} k_i{ }^{j'} - g_{ij'} k_j{ }^{j'})$ satisfies an inhomogeneous wave equation of the following form:
\begin{equation*}
\begin{split}
&\: (\rd_t^2 - \Delta_g)(g_{jj'} k_i{ }^{j'} - g_{ij'} k_j{ }^{j'}) \\
= &\: X^{a_1 b_1 c d}_{a_2 b_2 i j} k_{a_1}{ }^{a_2} k_{b_1}{ }^{b_2} (g_{c\ell} k_{d}{ }^\ell - g_{d\ell} k_{c}{ }^\ell) + Y^{a_1 c d}_{a_2 i j} \rd_t k_{a_1}^{a_2} (g_{c\ell} k_{d}{ }^\ell - g_{d\ell} k_{c}{ }^\ell) + Z^{a_1 c d}_{a_2 i j} k_{a_1}^{a_2} \rd_t (g_{c\ell} k_{d}{ }^\ell - g_{d\ell} k_{c}{ }^\ell),
\end{split}
\end{equation*}
where $X$, $Y$ and $Z$ are some tensor products of $g$, $g^{-1}$ and $\de$.
\end{lemma}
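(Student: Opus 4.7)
The plan is to derive the wave-type equation for $A_{ij}:=g_{jj'}k_i{}^{j'}-g_{ij'}k_j{}^{j'} = k^\flat_{ij}-k^\flat_{ji}$ (with $k^\flat_{ab}:=g_{b\ell}k_a{}^\ell$) by directly computing $(\rd_t^2-\Delta_g)A_{ij}$ from the system \eqref{eq:hyp.sys}. Since $\nabla$ is $g$-compatible, one immediately has $\Delta_g A_{ij}=g_{jj'}\Delta_g k_i{}^{j'}-g_{ij'}\Delta_g k_j{}^{j'}$. Expanding $\rd_t^2(g_{jj'}k_i{}^{j'})$ by the Leibniz rule, then substituting the second equation of \eqref{eq:hyp.sys} for $\rd_t^2 k_i{}^{j'}$ and the transport equation $\rd_t g_{ab}=-k_a{}^\ell g_{b\ell}-k_b{}^\ell g_{a\ell}$ (and its time derivative) everywhere else, produces terms of three broad types: (a) principal $g\cdot\Delta_g k$ pieces that cancel $\Delta_g A_{ij}$ exactly; (b) the Hessian contribution, which gives $-g_{jj'}\nabla_i\nabla^{j'}h+g_{ij'}\nabla_j\nabla^{j'}h=-(\nabla_i\nabla_j h-\nabla_j\nabla_i h)=0$ because covariant derivatives commute on scalars; and (c) a residual polynomial in $k$, $\rd_t k$, $g$, $g^{-1}$, antisymmetrized in $(i,j)$.

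The task reduces to showing that (c) can be repackaged in the form $X\cdot k\cdot k\cdot A + Y\cdot\rd_t k\cdot A + Z\cdot k\cdot\rd_t A$. The key algebraic tools are the two identities
\begin{align*}
(\rd_t g_{jb})k_i{}^b - (\rd_t g_{ib})k_j{}^b &= A_{j\ell}k_i{}^\ell - A_{i\ell}k_j{}^\ell,\\
(\rd_t k)^\flat_{ab} - (\rd_t k)^\flat_{ba} &= \rd_t A_{ab} - A_{b\ell}k_a{}^\ell + A_{a\ell}k_b{}^\ell,
\end{align*}
(where $(\rd_t k)^\flat_{ab}:=g_{b\ell}\rd_t k_a{}^\ell$), together with the tautology $k^\flat_{ab}=k^\flat_{ba}+A_{ab}$. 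Both identities follow from the transport equation for $g$ and the definition of $A$. One writes each antisymmetric residue in terms of the symmetric part of $k^\flat$ (or $(\rd_t k)^\flat$) plus an $A$ (or $\rd_t A$) correction; the symmetric part then cancels by the manifest symmetry of $g$, leaving an expression linear in $A$ and $\rd_t A$ with tensor coefficients built from $g$, $g^{-1}$, $\delta$, and $k$.

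The main obstacle is the combinatorial bookkeeping required at this last step. Each of the six monomials of $(k\star k\star k)_i{}^j$, the six of $(\rd_t k\star k)_i{}^j$ in \eqref{kstark}, and the five distinct contractions produced by $(\rd_t^2 g)\cdot k$ and $(\rd_t g)\cdot\rd_t k$ generates one or more \emph{a priori} inequivalent antisymmetric types, such as $k_i{}^m(\rd_t k)^\flat_{jm}-k_j{}^m(\rd_t k)^\flat_{im}$ versus $k_i{}^m(\rd_t k)^\flat_{mj}-k_j{}^m(\rd_t k)^\flat_{mi}$, that individually do \emph{not} vanish when $A=0$. One must track the coefficients of every such type and verify that the sums over the various sources combine into differences of the form $(\rd_t k)^\flat_{ab}-(\rd_t k)^\flat_{ba}$ (and analogously $k^\flat_{ab}-k^\flat_{ba}$), to which the identities above then apply. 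A representative check of the $\rd_t k$-linear contributions confirms the pattern: the pieces from $(\rd_t^2 g)\cdot k$, $(\rd_t g)\cdot\rd_t k$, and the two antisymmetric monomials $-\rd_t k_i{}^a k_a{}^j$ and $-3\rd_t k_a{}^j k_i{}^a$ of $(\rd_t k\star k)_i{}^j$ arrive with coefficients $-1$, $+4$, $+1$, $-3$ on the two bad types in such a way that the net yields $4\bigl(B^1_{ij}-B^2_{ij}\bigr)$, which factorizes via the second identity above. The longer but structurally identical verification of the cubic $k$ pieces, carried out by systematically applying $k^\flat_{ab}=k^\flat_{ba}+A_{ab}$, completes the proof.
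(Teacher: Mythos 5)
Your proposal is correct and takes essentially the same route as the paper's proof: the paper likewise cancels the principal and Hessian contributions and then, using only the transport equation $\rd_t g_{ij}=-g_{j\ell}k_i{}^\ell-g_{i\ell}k_j{}^\ell$, rewrites each antisymmetrized residue in terms of $A_{ij}:=g_{jj'}k_i{}^{j'}-g_{ij'}k_j{}^{j'}$ and $\rd_t A_{ij}$, its case-by-case treatment of $Q\in\{k,\ \delta,\ k\cdot k,\ \rd_t k,\ \rd_t k\cdot k-k\cdot\rd_t k\}$ being exactly the content of your two identities together with the tautology $k^\flat_{ab}=k^\flat_{ba}+A_{ab}$. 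The only caveat is that the numbers quoted in your representative check are not quite the ones that come out in the natural bookkeeping (for instance the $(\rd_t^2 g)\cdot k$ source contributes to the two irreducible bad classes with weight $\mp 2$ rather than $\mp 1$; the classes then cancel exactly, the surviving good remainder being schematically $-4\,k\cdot\bigl[(\rd_t k)^\flat-((\rd_t k)^\flat)^{\mathrm{T}}\bigr]$ plus terms already linear in $A$), but this is bookkeeping and does not affect the validity of the argument.
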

\begin{proof} \pfstep{Step~1: Easy reductions}
First, a direct computation shows that
\begin{equation*}
\begin{split}
&\: \rd_t^2 (g_{jj'} k_i{ }^{j'} - g_{ij'} k_j{ }^{j'}) \\
=&\: -\rd_t \{ (g_{jb} k_{j'}{ }^b - g_{j'b} k_j{ }^b) k_i{ }^{j'} - (g_{ib} k_{j'}{ }^b - g_{j'b} k_i{ }^b) k_j{ }^{j'}\} - (g_{j'b}k_j{ }^b - g_{jb} k_{j'}{ }^b) \rd_t k_i{ }^{j'} \\
&\: + (g_{jj'} \rd^2_t k_i{ }^{j'} - g_{ij'} \rd_t^2 k_j{ }^{j'}) - 2 g_{jj'} k_b{ }^{j'} \rd_t k_i{ }^b .
\end{split}
\end{equation*}
Notice that all the terms on the first line are of the form as required by the lemma.

It thus follows from \eqref{eq:hyp.sys} that
\begin{equation}
\begin{split}
&\: (\rd_t^2 - \Delta_g)(g_{jj'} k_i{ }^{j'} - g_{ij'} k_j{ }^{j'}) \\
=&\: g_{jj'} \{ (k\star k \star k)_i{ }^{j'} + (\rd_t k \star k)_i{ }^{j'}\}  - g_{ij'} \{ (k\star k \star k)_j{ }^{j'} + (\rd_t k \star k)_j{ }^{j'}\} - 2 g_{jj'} k_b{ }^{j'} \rd_t k_i{ }^b + \dots,
\end{split}
\end{equation}
where $\dots$ denotes terms which are of the form as required by the lemma. (Notice in particular that the Hessian of $h$ term drops of because it is symmetric.)

Investigating now the terms in $k\star k \star k$ and $\rd_t k \star k$, we only need to check that
$$Q_i{ }^{j'}g_{jj'} - Q_j{ }^{j'} g_{ij'},\quad \mbox{where }Q_i{ }^{j'} \in \{k_i{ }^{j'},\, \de_i{ }^{j'},\, k_i{ }^a k_a{ }^{j'},\, \rd_t k_i{ }^{j'},\, [(\rd_t k_i{ }^a) k_a{ }^{j'} - (\rd_t k_a{ }^{j'}) k_i{ }^a]\}$$ 
is of the form required by the lemma. (Note that the term $[(\rd_t k_i{ }^a) k_a{ }^{j'} - (\rd_t k_a{ }^{j'}) k_i{ }^a]$ comes from combining terms in $\rd_t k \star k$ and $- 2 g_{jj'} k_b{ }^{j'} \rd_t k_i{ }^b$.)

Now clearly if $Q_i{ }^{j'} \in \{k_i{ }^{j'},\, \de_i{ }^{j'}\}$,  {then} $Q_i{ }^{j'}g_{jj'} - Q_j{ }^{j'} g_{ij'}$ is of the desired form. 

For $Q_i^{j'} = k_i{ }^a k_a{ }^{j'}$, we compute
\begin{equation*}
\begin{split}
 k_i{ }^a k_a{ }^{j'} g_{jj'} - k_j{ }^a k_a{ }^{j'} g_{ij'} 
= &\: k_i{ }^a (k_a{ }^{j'} g_{jj'} - k_j{ }^{j'}g_{aj'}) - k_j{ }^a (k_a{ }^{j'} g_{ij'} - k_i{ }^{j'} g_{aj'}),
\end{split}
\end{equation*}
which is of the desired form.

For $Q_i^{j'} = \rd_t k_i{ }^{j'}$, we compute
\begin{equation*}
\begin{split}
 &\:  g_{jj'} \rd_t k_i{ }^{j'} -  g_{ij'} \rd_t k_j{ }^{j'}\\
= &\: \rd_t (g_{jj'}k_i{ }^{j'} - g_{ij'}k_j{ }^{j'}) + g_{jb} k_{j'}{ }^b k_i{ }^{j'} + g_{j'b} k_j{ }^b k_i{ }^{j'} - g_{ib} k_{j'}{ }^b k_j{ }^{j'} -g_{j'b} k_i{ }^b k_j{ }^{j'} \\
= &\: \rd_t (g_{jj'}k_i{ }^{j'} - g_{ij'}k_j{ }^{j'}) + (g_{jb} k_{j'}{ }^b - g_{j'b} k_j{ }^b) k_i{ }^{j'}  - (g_{ib} k_{j'}{ }^b -g_{j'b} k_i{ }^b) k_j{ }^{j'},
\end{split}
\end{equation*}
which is of the desired form.

For $Q_i^{j'} = [(\rd_t k_i{ }^a) k_a{ }^{j'} - (\rd_t k_a{ }^{j'}) k_i{ }^a]$, we compute
\begin{equation*}
\begin{split}
&\: [(\rd_t k_i{ }^a) k_a{ }^{j'} - (\rd_t k_a{ }^{j'}) k_i{ }^a] g_{jj'} - [(\rd_t k_j{ }^a) k_a{ }^{j'} - (\rd_t k_a{ }^{j'}) k_j{ }^a] g_{ij'} \\
= &\: \rd_t (k_i{ }^a g_{j'a} - k_{j'}{ }^a g_{ia}) k_j{ }^{j'} + \rd_t(k_{j'}{ }^a g_{ja} - k_j{ }^a g_{j' a}) + \rd_t k_i{ }^a (k_a{ }^{j'} g_{jj'} - k_j{ }^{j'} g_{aj'}) - \rd_t k_j{ }^a (k_a{ }^{j'} g_{ij'} - k_i{ }^{j'} g_{aj'}) \\
&\: + (g_{ba} k_{j'}{ }^a - g_{j'a} k_{b}{ }^a) k_j{ }^b k_i{ }^{j'} + (g_{j'a} k_i{ }^{j'} - g_{j'i} k_a{ }^{j'}) k_j{ }^b k_b{ }^a \\
&\: - (g_{ba} k_{j'}{ }^a - g_{j'a} k_{b}{ }^a) k_i{ }^b k_j{ }^{j'} - (g_{j'a} k_j{ }^{j'} - g_{j'j} k_a{ }^{j'}) k_i{ }^b k_b{ }^a, 
\end{split}
\end{equation*}
which is of the desired form. This concludes the proof. \qedhere

\end{proof}

We are now ready to show that $g_{jj'} k_i{ }^{j'}$ is symmetric in $i$ and $j$.
\begin{lemma}\label{lem:k.is.symmetric}
Given a limit $(g, k ,h)$ as in Lemma~\ref{lem:AA}, the limiting $k$ is in fact the second fundamental form, i.e.~$k_{ij}:= g_{jj'} k_i{ }^{j'} = -\f 12 \rd_t g_{ij}$.
\end{lemma}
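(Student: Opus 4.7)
Using $\rd_t g_{ij} = -(k_i{}^\ell g_{j\ell} + k_j{}^\ell g_{i\ell})$ from \eqref{eq:hyp.sys}, the desired identity $k_{ij} = -\tfrac{1}{2}\rd_t g_{ij}$ is equivalent to the symmetry $k_{ij} = k_{ji}$, i.e.\ to the vanishing of
\[
A_{ij} := g_{jj'}\,k_i{}^{j'} - g_{ij'}\,k_j{}^{j'}
\]
throughout $(0,T_{N_0,s,n}]\times\mathbb{T}^3$. The plan is to exploit Lemma~\ref{lem:eqn.for.anti.k}, which provides a homogeneous linear wave equation for $A$ whose inhomogeneity is controlled by $k\cdot k\cdot A$, $\rd_t k\cdot A$, and $k\cdot \rd_t A$, together with fast vanishing of $A$ as $t\to 0^+$, to run a Fuchsian-type uniqueness argument à la Section~\ref{sec:ideas.Fuchsian}.

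First, I would establish that $A$, $\rd_t A$, and $\nab A$ all vanish in $L^2(\Sigma_t,g)$ faster than any prescribed polynomial power of $t$, provided $n$ and $N_0$ are chosen sufficiently large. Decompose
\[
A_{ij} = \mathfrak D^{\bf [n]}_{ij} + g^{(d)}_{jj'}(k^{\bf [n]})_i{}^{j'} - g^{(d)}_{ij'}(k^{\bf [n]})_j{}^{j'} + g^{\bf [n]}_{jj'}(k^{(d)})_i{}^{j'} - g^{\bf [n]}_{ij'}(k^{(d)})_j{}^{j'} + (\text{quadratic in }(d)\text{ terms}).
\]
By Lemma~\ref{lem:D.est} the term $\mathfrak D^{\bf [n]}$ (and its $\rd_t$/spatial derivatives) decays at a rate that improves linearly in $n$; by \eqref{gdkdhdest} combined with Sobolev embedding (Lemma~\ref{lem:Sobolev}), the remaining terms decay at rates that improve linearly in $N_0$, once the background bounds of Proposition~\ref{prop:k.general} are used to control $k^{\bf[n]}$. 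Time derivatives of $A$ are rewritten using \eqref{eq:hyp.sys} and \eqref{eq:k.diff} in terms of spatial derivatives and products of $k$, which are then treated analogously.

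The second step is the energy estimate. By Proposition~\ref{prop:k.general}, $\|k\|_{L^\infty(\Sigma_t,g)}\lesssim t^{-1}$ and $\|\rd_t k\|_{L^\infty(\Sigma_t,g)}\lesssim t^{-2}$, so the RHS $\mathcal F$ of $(-\rd_t^2+\Delta_g)A=\mathcal F$ (from Lemma~\ref{lem:eqn.for.anti.k}) satisfies $\|\mathcal F\|_{L^2(\Sigma_t,g)}\leq C_0\bigl(t^{-2}\|A\|_{L^2}+t^{-1}\|\rd_t A\|_{L^2}\bigr)$. Applying Proposition~\ref{prop:EE.wave} with weight $M$ and setting $E_A := \|\rd_t A\|_{L^2}^2+\|\nab A\|_{L^2}^2+t^{-2}\|A\|_{L^2}^2$, the forcing term $t^{-M+1}\|\mathcal F\|_{L^2}^2$ is itself absorbed into $\frac{C_0}{t}[t^{-M}E_A]$, yielding
\[
\frac{d}{dt}\bigl[t^{-M} E_A\bigr] + \frac{M-C_0}{t}\bigl[t^{-M}E_A\bigr] \leq 0
\]
for some fixed $C_0$ independent of $M$. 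Choosing $M>C_0$ and using the fast vanishing from the previous step to ensure $t^{-M}E_A(t)\to 0$ as $t\to 0^+$, integration from any small $\tau$ up to $t$ forces $t^{-M}E_A\equiv 0$, hence $A\equiv 0$, as claimed. The main delicacy is coordinating the three large parameters: $n$ must be taken large enough (via Lemma~\ref{lem:D.est}) and $N_0$ large enough (via \eqref{gdkdhdest}) so that $A$, $\rd_t A$, $\nab A$ decay in $L^2$ at a rate exceeding the $M$ dictated by the Fuchsian weight. Since both decay rates can be increased independently by enlarging $n$ and $N_0$ at the outset of Theorem~\ref{thm:main.reduced}, this is not a genuine obstruction.
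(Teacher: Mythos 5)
Your argument is mechanically sound, but it is a genuinely different route from the paper's, and as written it proves a slightly weaker statement. The paper does not argue on the limiting solution at all: it considers the antisymmetric part $(k^{\mathrm{aux}}_I)_{ij}-(k^{\mathrm{aux}}_I)_{ji}$ of each \emph{approximating} solution, observes that by the choice of data in Lemma~\ref{lem:locexist} this quantity at $t=T_{\mathrm{aux},I}$ equals exactly the defect $\mathfrak D^{\bf [n]}$ of the approximate solution, which Lemma~\ref{lem:D.est} makes small (rate improving linearly in $n$), and then propagates \emph{forward} from $T_{\mathrm{aux},I}$ using the homogeneous wave equation of Lemma~\ref{lem:eqn.for.anti.k}. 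Forward Gr\"onwall with the borderline $C_0/t$ coefficient only costs a fixed power $(t/T_{\mathrm{aux},I})^{C_0}$, so once $(n+1)\ve$ exceeds a universal threshold the bound tends to $0$ as $T_{\mathrm{aux},I}\to 0$, giving \eqref{eq:lim.I.k.anti} and hence symmetry of the limit. The only largeness needed is in $n$, which can be absorbed into the threshold $n_{N_0,s}$ of Theorem~\ref{thm:bootstrap}; no condition on $N_0$ is required. Your route instead runs a Fuchsian backward-uniqueness argument on the limit itself, in the spirit of the paper's later Propositions~\ref{prop:h.is.trk} and \ref{prop:hyp.est.for.prop.const}: the absorption of the forcing into $\frac{C_0}{t}[t^{-M}E_A]$, the choice $M>C_0$, and the monotonicity conclusion are all fine (and applying Propositions~\ref{prop:k.general} and \ref{prop:EE.wave} to the limit is legitimate since \eqref{gdkdhdest} is stronger than the bootstrap assumptions; you should just note that the relevant $C_0$ is independent of $n$ because the $C_n$ contributions carry a factor $t^{\ve}$).

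The caveat is your final paragraph: the contributions $g^{\bf [n]}k^{(d)}$ and $g^{(d)}k^{\bf [n]}$ to $A$ decay only like $t^{N_0+s}$ in $L^2$, so your vanishing condition $E_A=o(t^M)$ forces $N_0$ (not just $n$) to be large compared with the universal constant $C_0$ in the energy inequality. Unlike $n$, the parameter $N_0$ is universally quantified in Theorem~\ref{thm:main.reduced}, Lemma~\ref{lem:AA} and Proposition~\ref{prop:AA.main} (``for every $s,N_0$''), and it cannot be ``increased at the outset'' without changing which limit solution is under discussion; there is also no obvious way to upgrade the decay of the antisymmetric part of $k^{(d)}$ for small $N_0$ within your scheme, since its equation is homogeneous and all the smallness must come from the $t\to 0^+$ asymptotics. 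So your proposal establishes Lemma~\ref{lem:k.is.symmetric} only for $N_0$ sufficiently large. This is harmless for the paper's end results, since every subsequent application takes $N_0\geq N_h,\,N_G,\,N_r$ anyway, but it does not recover the lemma in the generality stated; the paper's device of placing the smallness in the \emph{initial data at $T_{\mathrm{aux},I}$} (where it is governed by $\mathfrak D^{\bf [n]}$ alone, cf.~\eqref{eq:k.limit.anti.sym.initial}) is precisely what avoids any restriction on $N_0$.
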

\begin{proof}
Denoting $k_{ij}:= g_{jj'} k_i{ }^{j'} $, the equation for $g$ implies that $\rd_t g_{ij} = - k_{ij} - k_{ji}$. Hence, in order to prove the lemma, it suffices to show that $k_{ij}$ is symmetric in $i$ and $j$. 

To this end, we define $(k^{\mathrm{aux}}_I)_{ij}:= (g^{\mathrm{aux}}_I)_{jj'} (k^{\mathrm{aux}}_I)_i{ }^{j'}$, and first obtain an estimate for its anti-symmetric part.
 By Lemma~\ref{lem:eqn.for.anti.k}, $(k^{\mathrm{aux}}_I)_{ij} - (k^{\mathrm{aux}}_I)_{ji}$ satisfies a \emph{homogeneous} wave equation. By the choice of initial data for $k^{\mathrm{aux}}_I,g^{\mathrm{aux}}_I$ (recall Lemma~\ref{lem:locexist}) and Lemma~\ref{lem:D.est}, it follows that 
\begin{equation}\label{eq:k.limit.anti.sym.initial}
\|((k^{\mathrm{aux}}_I)_{ij} - (k^{\mathrm{aux}}_I)_{ji},\, t\rd_t ((k^{\mathrm{aux}}_I)_{ij} - (k^{\mathrm{aux}}_I)_{ji})) \restriction_{t = T_{\mathrm{aux},I}}\|_{H^1(\Sigma_{T_{\mathrm{aux},I}}, g)\times L^2(\Sigma_{T_{\mathrm{aux},I}},g)} \leq C t^{-1+(n+1)\ve}.
\end{equation}

We now perform energy estimates for $(k^{\mathrm{aux}}_I)_{ij} - (k^{\mathrm{aux}}_I)_{ji}$ using the wave equation in Lemma~\ref{lem:eqn.for.anti.k} (in a manner similar to the $k$ energy estimates in the proof of Theorem~\ref{thm:bootstrap}, only simpler). By choosing $n$ sufficiently large, the estimate \eqref{eq:k.limit.anti.sym.initial} allows one to take care the borderline terms and moreover show that for any $T_0 \in (0, T_{N_0,s,n})$, 
\begin{equation}\label{eq:lim.I.k.anti}
\lim_{I\to +\infty} \sup_{t\in [T_0,T_{N_0,s,n})} \|(k^{\mathrm{aux}}_I)_{ij} - (k^{\mathrm{aux}}_I)_{ji} \|_{H^1(\Sigma_t, g)}  = 0.
\end{equation}
 
Finally, since $k_{ij}$ is the pointwise limit of $(k^{\mathrm{aux}}_I)_{ij}$ as $I \to +\infty$ (by Lemma~\ref{lem:AA}), the estimate \eqref{eq:lim.I.k.anti} implies that $k_{ij}$ is symmetric in $i$ and $j$, which is what we wanted to prove. 
\end{proof}

\begin{proof}[Proof of Proposition~\ref{prop:AA.main}]
Proposition~\ref{prop:AA.main} follows directly from Lemmas~\ref{lem:AA} and \ref{lem:k.is.symmetric}. \qedhere
\end{proof}

\section{Vanishing of the Einstein tensor}\label{subsec:vanEVE}

The goal of this section is to show that the solution of \eqref{eq:hyp.sys}, constructed in Theorem~\ref{thm:main.reduced} in subsection \ref{subsec:locexist}, is in fact a solution to the Einstein vacuum equations.  {This then concludes the proof of Theorem~\ref{mainthm}; see the conclusion of the proof at the end of the section.}

We begin with the following:
\begin{proposition}\label{prop:h.is.trk}
There exists $N_h\in \mathbb N$ sufficiently large such that the following holds. 

Let $s \geq 5$ and $N_0 \geq N_h$. Then, for $n \geq n_{N_0,s}$, the solution $(g,h,k)$ to \eqref{eq:hyp.sys} given by Theorem~\ref{thm:main.reduced} satisfies
$$h = k_\ell{ }^\ell.$$
In particular, $^{(4)}Ric(^{(4)}g)_{tt} = 0$.
\end{proposition}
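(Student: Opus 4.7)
The plan is to show that the scalar $\phi := h - k_{\ell}{}^{\ell}$ vanishes identically on $(0,T_{N_0,s,n}]\times\mathbb T^3$. Since by construction $h^{[\mathbf{n}]} = (k^{[\mathbf{n}]})_{\ell}{}^{\ell}$, the difference decomposes as $\phi = h^{(d)} - (k^{(d)})_{\ell}{}^{\ell}$, which by \eqref{eq:main.reduced.est.in.thm} decays as $t\to 0^+$ at the polynomial rate $\|\phi\|_{L^2(\Sigma_t,g)}\lesssim t^{N_0+s}$. A similar decay, down by one power of $t$, will be obtained for $\rd_t\phi$ by rewriting $\rd_t\phi = \rd_t h^{(d)} - \rd_t(k^{(d)})_{\ell}{}^{\ell}$, using $\rd_t h = |k|^2$ together with the definition of $I_{h^{[\mathbf n]}}$ in Proposition~\ref{prop:inhomo} to express $\rd_t h^{(d)} = |k^{(d)}|^2 + 2\,k^{[\mathbf n]}\!\cdot k^{(d)} + I_{h^{[\mathbf n]}}$, and then invoking the bounds on $\rd_t k^{(d)}$ already provided by \eqref{eq:main.reduced.est.in.thm}. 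Higher-order norms are handled in the same way.

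The key calculation is that $\phi$ obeys a \emph{linear} wave equation. Taking the trace of the second equation in \eqref{eq:hyp.sys} via $\de_j^i$ and applying the trace identity \eqref{eq:trace.of.nonlinear}, one obtains
\begin{equation*}
\rd_t^2 k_{\ell}{}^{\ell} = \Delta_g k_{\ell}{}^{\ell} - \Delta_g h + \rd_t|k|^2 - 2 k_{\ell}{}^{\ell}|k|^2 + 2 k_{\ell}{}^{\ell}\rd_t k_{\ell}{}^{\ell}.
\end{equation*}
Combining with $\rd_t^2 h = \rd_t|k|^2$ (differentiating $\rd_t h = |k|^2$) and using $\rd_t\phi = |k|^2 - \rd_t k_{\ell}{}^{\ell}$ yields the homogeneous equation
\begin{equation*}
\rd_t^2 \phi - \Delta_g\phi - 2 k_{\ell}{}^{\ell}\,\rd_t\phi = 0.
\end{equation*}
This is linear in $\phi$, and the coefficient $-2 k_{\ell}{}^{\ell}$ is a borderline singular term of size $\tfrac{2}{t} + O(t^{-1+\ve})$ by Proposition~\ref{prop:kn.lowest} (since $t\,k_{\ell}{}^{\ell}\to -(p_1+p_2+p_3) = -1$), with a leading constant that is \emph{independent of $n$}.

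I will then run a Fuchsian-type energy argument in the style of Section~\ref{sec:ideas.Fuchsian}. Define
\begin{equation*}
\mathcal F_s(t) := \sum_{r=0}^{s-1} t^{2r+2}\bigl(\|\rd_t\nab^{(r)}\phi\|_{L^2(\Sigma_t,g)}^2 + \|\nab^{(r+1)}\phi\|_{L^2(\Sigma_t,g)}^2\bigr) + \sum_{r=0}^{s}t^{2r}\|\phi\|_{\dot H^r(\Sigma_t,g)}^2.
\end{equation*}
Multiplying the commuted wave equation for $\nab^{(r)}\phi$ by $\rd_t\nab^{(r)}\phi$, integrating against $\mathrm{vol}_g$, and tracking the $\tfrac 1t$--coefficient carefully --- exactly as in Proposition~\ref{prop:EE.wave} and the proof of Proposition~\ref{prop:main.wave.est} --- one obtains
\begin{equation*}
\frac{d}{dt}\mathcal F_s(t) \leq \frac{C_0}{t}\mathcal F_s(t) + \frac{C_n t^{\ve}}{t}\mathcal F_s(t),
\end{equation*}
where the borderline constant $C_0$ depends only on the data $c_{ij}$, $p_i$ and $s$ and is \emph{independent of $n$}, because all such borderline contributions come from the universal $-1/t$ behaviour of $k_{\ell}{}^{\ell}$, $\rd_t \mathrm{vol}_g$, and the commutators $[\rd_t,\nab]$ acting on a scalar (which, by the scalar improvement in Proposition~\ref{prop:dt.comm.est}, never produce a borderline term). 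The commutator $[\Delta_g,\nab^{(r)}]$ generates Ricci terms, which by Proposition~\ref{prop:Ric.general} are $O(t^{-2+\ve})$ and therefore enter only through the $C_n t^{\ve - 1}$ correction.

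Given this differential inequality, let $M_0 = C_0 + 2$ and choose $N_h$ so that $2(N_h + s - 1) > M_0$. For $N_0 \geq N_h$ we have $\mathcal F_s(\tau) \leq C\tau^{2(N_0+s-1)}$ as $\tau\to 0^+$ by Step~1. Rearranging the inequality as
\begin{equation*}
\frac{d}{dt}\bigl[t^{-M_0}\mathcal F_s(t)\bigr] + \frac{M_0 - C_0 - C_n t^{\ve}}{t}\bigl[t^{-M_0}\mathcal F_s(t)\bigr] \leq 0
\end{equation*}
and shrinking $T_{N_0,s,n}$ so that $C_n t^{\ve}\leq 1$, the bracketed quantity is non-increasing on $(0,T_{N_0,s,n}]$. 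Evaluating at any fixed $t_1\in(0,T_{N_0,s,n}]$ and letting $\tau\to 0^+$ gives $t_1^{-M_0}\mathcal F_s(t_1) \leq \lim_{\tau\to 0^+}\tau^{-M_0}\mathcal F_s(\tau) = 0$, hence $\mathcal F_s(t_1) = 0$ for every $t_1$. Thus $\phi\equiv 0$, i.e.~$h = k_{\ell}{}^{\ell}$. Finally, by Proposition~\ref{prop:AA.main}(4) the limiting $k$ is the second fundamental form of $\{t=\mathrm{const}\}$, so identity \eqref{Rictt} of Proposition~\ref{prop:curv.id} combined with $\rd_t k_{\ell}{}^{\ell} = \rd_t h = |k|^2$ yields $Ric({^{(4)}g})_{tt} = 0$, as required.

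The main obstacle is making the Fuchsian step rigorous: one must verify that the top-order borderline constant $C_0$ truly does not depend on $n$, which requires isolating the $-2k_{\ell}{}^{\ell}\rd_t\phi$ term together with the $k_{\ell}{}^{\ell}$ coming from $\rd_t\mathrm{vol}_g$ and showing that all remaining contributions --- including those from commuting $\Delta_g$ and $\rd_t$ past spatial derivatives --- carry an extra factor of $t^{\ve}$.
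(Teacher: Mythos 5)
Your proposal is correct and follows essentially the same route as the paper: you trace the $k$-equation in \eqref{eq:hyp.sys}, use $\rd_t h=|k|^2$ to obtain the homogeneous wave equation $\rd_t^2\phi=\Delta_g\phi+2k_\ell{}^\ell\,\rd_t\phi$ for $\phi=h-k_\ell{}^\ell$, combine the fast vanishing of $\phi$ as $t\to 0^+$ (from Theorem~\ref{thm:main.reduced} together with $h^{\bf [n]}=(k^{\bf [n]})_\ell{}^\ell$) with a Fuchsian/Gr\"onwall energy argument to force $\phi\equiv 0$, and then deduce $Ric({}^{(4)}g)_{tt}=0$ from \eqref{Rictt}. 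The only (inessential) difference is that the paper runs the energy estimate at the $H^1\times L^2$ level, which already yields $\phi\equiv 0$ and keeps the borderline constant independent of everything but the data, whereas your full $s$-order energy $\mathcal F_s$ is more than is needed and makes the borderline constant (hence in principle $N_h$) depend on $s$.
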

\begin{proof}
Once we establish that $h = k_\ell{ }^\ell$, it follows from the first equation in \eqref{eq:hyp.sys} that $\rd_t k_\ell{ }^\ell = |k|^2$. According to \eqref{Rictt}, this in turn implies that $^{(4)}Ric(^{(4)}g)_{tt} = 0$.

Taking the trace of the second equation in \eqref{eq:hyp.sys} and using the identity \eqref{eq:trace.of.nonlinear}, we obtain
$$\rd_t[\rd_t k_\ell{ }^\ell - |k|^2] = \Delta_g(k_\ell{ }^\ell - h) + 2 k_i{ }^i [\rd_t k_\ell{ }^\ell - |k|^2].$$
Since by \eqref{eq:hyp.sys} $\rd_t h = |k|^2$, it follows that
\begin{align}\label{boxtrk-h}
\rd_t^2(k_\ell{ }^\ell - h)= \Delta_g(k_\ell{ }^\ell -  {h}) + 2 k_i{ }^i \rd_t (k_\ell{ }^\ell -  {h}).
\end{align}
Note that this is a wave equation for $(k_\ell{ }^\ell - h)$.  We can then carry out a similar energy estimates as in the proof of Theorem~\ref{thm:main.reduced} to obtain
\begin{equation}\label{eq:h.is.trk.pf.1}
\begin{split}
&\: t^2 \|\rd_t (k_\ell{ }^\ell - h)\|_{L^2(\Sigma_t,g)}^2 + \sum_{r=0}^1 t^{2r} \| k_\ell{ }^\ell - h\|_{H^r(\Sigma_t,g)}^2 \\
\leq &\: \f{C_0 + C_n t^\ve}{t} (t^2 \|\rd_t (k_\ell{ }^\ell - h)\|_{L^2(\Sigma_t,g)}^2 + \sum_{r=0}^1 t^{2r} \| k_\ell{ }^\ell - h\|_{H^r(\Sigma_t,g)}^2),
\end{split}
\end{equation}
where we have used the estimates for $k$ given in Proposition \ref{prop:k.general}. Here, as in the previous section, we use $C_0$ to denote constants depending only on $s$, $c_{ij}$ and $p_i$, while $C_n$ can depend in addition on $n$ and $N_0$.

At the same time, by Theorem~\ref{thm:main.reduced} and the fact that $h^{\bf [n]} = (\kn)_\ell{}^\ell$,
\begin{equation}\label{eq:h.is.trk.pf.2}
\|k_\ell{ }^\ell - h\|_{H^1(\Sigma_t,g)}^2 + \|\rd_t(k_\ell{ }^\ell - h)\|_{L^2(\Sigma_t,g)}^2 \leq 2 t^{2N_0+2s-2}.
\end{equation}

In particular, choosing $N_h$ sufficiently large, the estimates \eqref{eq:h.is.trk.pf.1}, \eqref{eq:h.is.trk.pf.2} and Gr\"onwall's inequality implies that 
$$\|k_\ell{ }^\ell - h\|_{H^1(\Sigma_t,g)}^2 + \|\rd_t(k_\ell{ }^\ell - h)\|_{L^2(\Sigma_t,g)}^2  = 0,$$
which in turn implies the desired conclusion.
\end{proof}

\begin{proposition}\label{prop:hyp.est.for.prop.const}
 {There exists $N_G \geq N_h$ and $n_G$ sufficiently large such that the following holds.}

 {Let $s \geq 5$ and $N_0 \geq N_G$. For $n \geq \max\{ n_{N_0,s},\, n_G\}$, take the solution $(g,h,k)$ to \eqref{eq:hyp.sys} given by Theorem~\ref{thm:main.reduced}. Then $^{(4)} g = -\ud t^2 + g$ is} in fact a solution to the Einstein vacuum equations, i.e.~$ {Ric({ }^{(4)}g) = 0}$,  {and $k$ is the corresponding second fundamental form of the constant-$t$ hypersurfaces.}
\end{proposition}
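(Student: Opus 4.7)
The plan is to show $Ric_i{}^j(^{(4)}g) = 0$ and $Ric_{ti}(^{(4)}g) = 0$; since Proposition~\ref{prop:h.is.trk} already gives $h = k_\ell{}^\ell$ (hence $Ric_{tt}(^{(4)}g) = 0$) and Proposition~\ref{prop:AA.main}(4) with the first equation of \eqref{eq:hyp.sys} gives $k_{ij} = -\tfrac12 \rd_t g_{ij}$, the conclusion will follow. Once $h = k_\ell{}^\ell$, the second equation of \eqref{eq:hyp.sys} coincides with \eqref{eq:k.wave}, so reversing the derivation leading to \eqref{eq:k.wave.prelim} shows that the tensor $\mathcal G_i := Ric_{ti}(^{(4)}g)$ and the trace-adjusted Ricci $Ric_i{}^j(^{(4)}g)$ satisfy a first-order relation of the form
\begin{equation*}
-\rd_t Ric_i{}^j(^{(4)}g) + \nab_i \mathcal G^j + \nab^j \mathcal G_i + (\text{linear in $Ric_\cdot{}^\cdot$ with $O(t^{-1})$ coefficients}) = 0.
\end{equation*}
Pairing this with the contracted second Bianchi identities, exactly as in Step~1 of the proof of Proposition~\ref{prop:approx.G} (this time \emph{without} any inhomogeneous error since $Ric_{tt} = 0$ holds \emph{exactly}), yields a closed wave--transport system for the nine unknowns $Ric_i{}^j(^{(4)}g)$ and $\mathcal G_i$, with all borderline coefficients of size $O(t^{-1})$.

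Next I would verify that this system admits \emph{trivial asymptotic data} as $t\to 0^+$. By Theorem~\ref{thm:main.reduced} the solution obeys $\sum_{r\leq 2} t^r \|\nab^{(r)}(k-\kn)\|_{L^\infty} + \sum_{r\leq 3} t^{r-1}\|\nab^{(r)}(g - g^{\bf [n]})\|_{L^\infty} = O(t^{N_0 + s - C})$ after Sobolev embedding; meanwhile, by Lemma~\ref{lem:asymconst} and Proposition~\ref{prop:approx.G}, the corresponding quantities built from $(g^{\bf [n]}, \kn)$ satisfy the Hamiltonian constraint, the momentum constraint, and $Ric_i{}^j(^{(4)}g^{\bf [n]})$ to order $t^{-2 + (n+1)\ve}$. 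Subtracting the two and using $h = k_\ell{}^\ell$ (so that $Ric_{ti}(^{(4)}g) = -(\mathrm{div}_g k)_i + \nab_i k_\ell{}^\ell$ etc., via \eqref{Ricti}--\eqref{Rictt}) gives
\begin{equation*}
\sup_{x} |Ric_i{}^j(^{(4)}g)|(t,x) + \sup_x t^{-1}|\mathcal G_i|(t,x) \leq C_n t^{-2 + M_0(N_0,n)},
\end{equation*}
with $M_0(N_0,n) \to \infty$ as $N_0, n \to \infty$.

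I would then run a Fuchsian-type energy estimate for the wave--transport system in the spirit of Theorem~\ref{thm:bootstrap} and Section~\ref{sec:ideas.Fuchsian}: choose the weight $t^{-N}$ with $N$ large, apply Proposition~\ref{prop:EE.wave} to the wave part for $Ric_i{}^j$ and Proposition~\ref{prop:transport.gen} to the transport part for $\mathcal G_i$, and handle the top-order term $\nab\mathcal G$ appearing inside the wave equation by a renormalization analogous to \eqref{def:hd.ren}--\eqref{def:g-1d.ren}. Because $N_0$, and hence $M_0(N_0,n)$, can be taken as large as we wish by choosing $n \geq n_G$ and $N_0 \geq N_G$, the Fuchsian mechanism (choose $N$ large, then $t$ small) forces the energy $\mathcal F(t) := t^{-N}\left(\|Ric_i{}^j(^{(4)}g)\|_{H^2}^2 + t^2\|\rd_t Ric_i{}^j(^{(4)}g)\|_{H^1}^2 + \|\mathcal G_i\|_{H^2}^2\right)$ to vanish identically.

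The main obstacle will be organizing the borderline $O(t^{-1})$ linear terms in the wave--transport system so that the Fuchsian argument closes, in perfect parallel with the renormalization/elliptic-estimate scheme developed in Sections~\ref{subsec:mainest.k}--\ref{sec:conclusion.of.bootstrap.thm}; the rest of the bookkeeping (commutators, elliptic estimates for the renormalized top-order quantities, Sobolev embedding adapted to the anisotropic metric via Lemmas~\ref{lem:Sobolev}--\ref{lem:Wsi.save}) is essentially inherited from the existence argument.
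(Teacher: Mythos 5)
Your overall strategy coincides with the paper's: deduce from \eqref{eq:k.wave.prelim} (now that $h=k_\ell{}^\ell$, Proposition~\ref{prop:h.is.trk}) and the contracted second Bianchi identity a \emph{homogeneous} wave--transport system for the Einstein tensor components, check that these quantities vanish to arbitrarily high polynomial order as $t\to 0^+$ by combining Proposition~\ref{prop:approx.G} (for $n\geq n_G$) with the difference bounds of Theorem~\ref{thm:main.reduced} (for $N_0\geq N_G$), and conclude by a weighted Gr\"onwall/energy argument with a renormalization to avoid derivative loss. However, there is a concrete structural misstep: you assign the wave equation to $Ric_i{}^j$ and the transport equation to $\mathcal G_i$, and this assignment would fail. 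The relation you write (the paper's \eqref{eq:Ric.high}, equivalently \eqref{eq:G.high.2}) is a \emph{first-order transport} equation for the spatial components with source $\nabla\mathcal G$; if you try to upgrade it to a wave equation by taking $\rd_t$ and substituting the Bianchi transport equation \eqref{eq:mathcalGiODE}, the principal spatial part you obtain is $\nabla_i\nabla_\ell \mathfrak G^{j\ell}+\nabla^j\nabla_\ell\mathfrak G_i{}^\ell$ (a gradient-of-divergence operator), \emph{not} $\Delta_g Ric_i{}^j$, so Proposition~\ref{prop:EE.wave} cannot be applied to ``the wave part for $Ric_i{}^j$'' and the leftover second-order terms cannot be absorbed. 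The correct arrangement is the opposite one: keep $\mathfrak G_i{}^j$ as the transport variable via \eqref{eq:G.high.2}, and derive the genuine wave equation for $\mathcal G_i$ by differentiating \eqref{eq:mathcalGiODE} in $t$ and substituting \eqref{eq:G.high.2}, which produces $\Delta_g\mathcal G_i$ plus lower-order terms (\eqref{eq:wave.eqn.for.G}); the renormalized top-order quantity is then built for $\mathfrak G$ (combining $\Delta_g\mathfrak G_i{}^j$ with $\rd_t\nabla\mathcal G$ terms), in exact analogy with \eqref{def:hd.ren}, and the loss of derivative is closed by the elliptic estimate of Lemma~\ref{lem:elliptic}.

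Two smaller points. First, your appeal to ``Step~1 of Proposition~\ref{prop:approx.G}'' is only loosely apt: there the Bianchi identities are used as ODEs in $t$ with $Ric_i{}^j$ treated as a known source, whereas here one needs the coupled PDE system just described. Second, since the resulting system is linear and homogeneous, no Fuchsian choice of a large weight $N$ is needed: the energy inequality has the form $\frac{d}{dt}E\leq \frac{C_0+C_n t^{\ve}}{t}E$ with $C_0$ fixed by the equations, and it suffices to verify $\lim_{t\to 0^+}t^{-C_0}E(t)=0$, which is exactly what taking $n\geq n_G$ and $N_0\geq N_G$ provides; Gr\"onwall then gives $E\equiv 0$. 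With the wave/transport roles corrected, the rest of your plan (asymptotic-data verification and the concluding estimate) matches the paper's proof.
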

\begin{proof}
 {For this proof, we denote $\mathcal G_i = G_{ti}(^{(4)}g)$ and $\mathfrak G_{ij} = G_{ij}(^{(4)}g)$, both thought of as $\Sigma_t$-tangent tensors. We also use the notation that $\nab$ is the Levi--Civita connection for the spatial metric $g$.}

\pfstep{Step~1: Derivation of a system of equations}  {By \eqref{eq:k.wave.prelim} and the wave equation \eqref{eq:k.wave}, we have}
\begin{align}\label{eq:Ric.high}
\begin{split}
\partial_t Ric_i{}^j({^{(4)}g})=&\,\nabla_i \mathcal{G}^j+\nabla^j\mathcal{G}_i
-3k_i{}^mRic_m{}^j({^{(4)}g}) + 2\de_i^jk_m{}^\ell Ric_{\ell}{}^m({^{(4)}g}) -k_\ell{}^j Ric_i{}^\ell({^{(4)}g})\\
&+ 2k_\ell{}^\ell Ric_i{}^j({^{(4)}g})
-(k_\ell{}^\ell \de_i^j - k_i{}^j) Ric_m{}^m({^{(4)}g}) {.}
\end{split}
\end{align}
Taking the trace of \eqref{eq:Ric.high} and using the fact that $Ric_{tt}({^{(4)}g})=0$, we also have:
\begin{align}\label{eq:R.high}
\partial_t R({^{(4)}g})=&\,2\nabla_j \mathcal{G}^j
+2k_m{}^\ell Ric_\ell{}^m({^{(4)}g}).
\end{align}
The combination of \eqref{eq:Ric.high}  {and} \eqref{eq:R.high} implies the following equation for the Einstein tensor $G_i{}^j({^{(4)}g})$:
\begin{align}\label{eq:G.high}
\notag\partial_t  G_i{}^j({^{(4)}g}):=&\,\partial_tRic_i{}^j({^{(4)}g})-\frac{1}{2}\delta_i{}^j\partial_tR({^{(4)}g})\\
=&\,\nabla_i \mathcal{G}^j+\nabla^j\mathcal{G}_i-\delta_i{}^j\nabla_\ell \mathcal{G}^\ell
-3k_i{}^mRic_m{}^j({^{(4)}g}) + \de_i^jk_m{}^\ell Ric_{\ell}{}^m({^{(4)}g}) -k_\ell{}^j Ric_i{}^\ell({^{(4)}g})\\
\notag& +2k_\ell{}^\ell Ric_i{}^j({^{(4)}g})
-(k_\ell{}^\ell \de_i^j - k_i{}^j) Ric_m{}^m({^{(4)}g}).
\end{align}
Note that $Ric_i{}^j({^{(4)}g})$ can be written in terms of $G_i{}^j({^{(4)}g})$: $Ric_i{}^j({^{(4)}g})=G_i{}^j({^{(4)}g})+\frac{1}{2}\delta_i{}^j R({^{(4)}g})$, where $R({^{(4)}g}):=-R_{tt}({^{(4)}g})+R_\ell{}^\ell({^{(4)}g}) = R_\ell{}^\ell({^{(4)}g})$ by Proposition~\ref{prop:h.is.trk}. Taking the trace we get $Ric_i{ }^i = G_i{ }^i + \f 32R_\ell{}^\ell({^{(4)}g})$ so that $R_\ell{}^\ell({^{(4)}g}) = 2G_i{}^i({^{(4)}g})$. It follows that 
\begin{equation}\label{eq:Ric.in.terms.of.G}
Ric_i{}^j({^{(4)}g})= \mathfrak G_i{}^j({^{(4)}g})+\de_i{ }^j \mathfrak G_\ell{}^\ell.
\end{equation}

We can thus rewrite \eqref{eq:G.high} as
\begin{align}\label{eq:G.high.2}
\partial_t \mathfrak G_i{}^j
=&\,\nabla_i \mathcal{G}^j+\nabla^j\mathcal{G}_i-\delta_i{}^j\nabla_\ell \mathcal{G}^\ell + (k \star \mathfrak G)_i{ }^j,
\end{align}
where $(k \star \mathfrak G)_i{ }^j$ is some quadratic contraction of $k$ and $\mathfrak G$ whose exact form is unimportant.

On the other hand, by the contracted second Bianchi equations and the fact that $D_t G_{ti} (^{(4)} g) = \rd_t \mathcal G_i + k_i{}^j \mathcal G_j$, and $D_j G_i{}^j({^{(4)}g})=\nabla_j \mathfrak G_i{}^j +k_j{}^j\mathcal{G}_i+k_i{}^j\mathcal{G}_j$, we obtain
\begin{align}\label{eq:mathcalGiODE}
\partial_t\mathcal{G}_i
=&\,k_j{}^j\mathcal{G}_i+\nabla_j\mathfrak G_i{}^j({^{(4)}g}) {.}
\end{align}
Taking $\rd_t$ of \eqref{eq:mathcalGiODE}, applying \eqref{eq:G.high.2}, and using the commutation formula in Proposition~\ref{prop:commutation.formula}, we obtain the wave equation
\begin{equation}\label{eq:wave.eqn.for.G}
\begin{split}
\rd_t^2 \mathcal G_i = &\: \rd_t (k_j{ }^j \mathcal G_i) + \nab_j (\nabla_i \mathcal{G}^j+\nabla^j\mathcal{G}_i-\delta_i{}^j\nabla_\ell \mathcal{G}^\ell + (k\star \mathfrak G)_i{ }^j)  + [\rd_t, \nab_j] \mathfrak G_i{ }^j \\
= &\:  \Delta_g \mathcal G_i + k \star k \star \mathcal G + \rd_t k \star \mathcal G + k \star \rd_t \mathcal G + \nab k\star \mathfrak G + k \star \nab \mathfrak G,
\end{split}
\end{equation}
where in the last equality we have also used that the curvature tensor $Riem(g)$ can be expressed in terms of $\mathfrak G$, $k$ and $\rd_t k$ using \eqref{Riem=Ric}, \eqref{Ricij},  and \eqref{eq:Ric.in.terms.of.G}, so that
$$\nab_j (\nabla_i \mathcal{G}^j+\nabla^j\mathcal{G}_i-\delta_i{}^j\nabla_\ell \mathcal{G}^\ell) = \nab_j \nab_i \mathcal G^j + \Delta_g \mathcal G_i - \nab_i \nab_j \mathcal G^j = \Delta_g \mathcal G_i + k \star k \star \mathcal G + \rd_t k \star \mathcal G.$$
Here, $k \star k \star \mathcal G$, etc.~are in principle explicit, but we do not carry out the computations as the exact form is unimportant.

\pfstep{Step~2: Energy estimates and vanishing of the Einstein tensor} Our goal now is to perform energy estimates using \eqref{eq:G.high.2} and \eqref{eq:wave.eqn.for.G} so as to show that $\mathfrak G$ and $\mathcal G$ are both $\equiv 0$. Investigating the terms in \eqref{eq:G.high.2} and \eqref{eq:wave.eqn.for.G}, we note that the RHS of \eqref{eq:wave.eqn.for.G} has terms with one derivative of $\mathcal G$, which apparently leads to a loss of derivatives. Nevertheless, this can be treated in \emph{exactly} the same manner as \eqref{eq:hyp.sys}.

Define the energy
\begin{equation}
\begin{split}
E(t) = &\: \sum_{r=0}^1 t^{2r}\|\rd_t \mathcal G \|_{\dot H^r(\Sigma_t,g)}^2 + \sum_{r=0}^2 t^{-2+2r} \|\mathcal G \|_{\dot H^r(\Sigma_t,g)}^2 + \sum_{r=0}^2 t^{-2+2r} \|\mathfrak G\|_{\dot H^r(\Sigma_t,g)}^2,
\end{split}
\end{equation}
and modified energy
\begin{equation}
\begin{split}
\widetilde{E}(t) = \sum_{r=0}^1 t^{2r}\|\rd_t \nab^{(r)} \mathcal G \|_{L^2(\Sigma_t,g)}^2 + \sum_{r=0}^2 t^{-2+2r} \|\mathcal G \|_{\dot H^r(\Sigma_t,g)}^2 + \sum_{r=0}^1 t^{-2+2r} \|\mathfrak G\|_{\dot H^r(\Sigma_t,g)}^2 + t^2\|\widetilde{\nab^{(2)} \mathfrak G} \|_{L^2(\Sigma_t,g)}^2,
\end{split}
\end{equation}
where
$$(\widetilde{\nab^{(2)} \mathfrak G})_i{ }^j:= \Delta_g \mathfrak G_i{ }^j - \rd_t \nab_i \mathcal G^j - \rd_t \nab^j \mathcal G_i + \de_i{ }^j \rd_t \nab_\ell \mathcal G^\ell.$$

We now carry out energy estimates for the wave-transport system \eqref{eq:G.high.2} and \eqref{eq:wave.eqn.for.G} in a manner similar to that for \eqref{eq:hyp.sys} in Theorem~\ref{thm:main.reduced}. Note that we in particular need to use the elliptic estimates in Lemma~\ref{lem:elliptic}. Nevertheless, the present case is much easier because of the linearity of the system. We omit the proof and give the estimates
\begin{equation}\label{eq:E.for.G}
\f{\ud}{\ud t} E(t) \leq \f{C_0 + C_n t^\ve }{t} E(t),
\end{equation}
where we again used the convention that $C_0$ depends only on $s$, $c_{ij}$ and $p_i$, while $C_n$ can depend in addition on $n$ and $N_0$. We now fix $C_0$ and $C_n$ so that \eqref{eq:E.for.G} holds. 

We now need to show, using \eqref{eq:E.for.G}, that $E(t) \equiv 0$. For this purpose it suffices to check that 
\begin{equation}\label{eq:E.for.G.goal}
\lim_{t\to 0^+} t^{-C_0} E(t) = 0,
\end{equation} 
so that 
we can apply Gr\"onwall's inequality to $\f{\ud}{\ud t} (t^{-C_0} E(t)) \leq \f{C_n }{t^{1-\ve}} (t^{-C_0} E(t))$.

Define $\mathcal G^{\bf [n]} = G_{ti}(^{(4)} g^{\bf [n]})$ and $\mathfrak G_{ij}^{\bf [n]} = G_{ij}(^{(4)} g^{\bf [n]})$. Then by Proposition~\ref{prop:approx.G}, there exists $n_G \in \mathbb N$ such that if $n\geq n_G$, then
\begin{equation}\label{eq:E.for.G.goal.1}
\lim_{t\to 0^+} t^{-C_0} \left( \sum_{r=0}^1 t^{2r}\|\rd_t \mathcal G^{\bf [n]} \|_{\dot H^r(\Sigma_t,g)}^2 + \sum_{r=0}^2 t^{-2+2r} \|\mathcal G^{\bf [n]} \|_{\dot H^r(\Sigma_t,g)}^2 + \sum_{r=0}^2 t^{-2+2r} \|\mathfrak G^{\bf [n]}\|_{\dot H^r(\Sigma_t,g)}^2\right) = 0.
\end{equation}

On the other hand, by \eqref{eq:main.reduced.est.in.thm} in Theorem~\ref{thm:main.reduced}, if $N_G$ is sufficiently large and $N_0\geq N_G$, then
\begin{equation}\label{eq:E.for.G.goal.2}
\begin{split}
\lim_{t\to 0^+} t^{-C_0} &\: \left( \sum_{r=0}^1 t^{2r}\|\rd_t (\mathcal G - \mathcal G^{\bf [n]}) \|_{\dot H^r(\Sigma_t,g)}^2  \right. \\
&\: \left. + \sum_{r=0}^2 t^{-2+2r} \|(\mathcal G -\mathcal G^{\bf [n]}) \|_{\dot H^r(\Sigma_t,g)}^2 + \sum_{r=0}^2 t^{-2+2r} \|\mathfrak G^{\bf [n]}\|_{\dot H^r(\Sigma_t,g)}^2\right) = 0.
\end{split}
\end{equation}

Therefore, choosing $N_0\geq N_G$ and $n\geq \max\{n_{N_0,s},n_G\}$, we obtain \eqref{eq:E.for.G.goal} by using \eqref{eq:E.for.G.goal.1} and \eqref{eq:E.for.G.goal.2}. This gives $E(t) \equiv 0$. Together with Proposition~\ref{prop:h.is.trk}, this gives that the Einstein tensor vanishes identically. \qedhere

\end{proof}

We end the section with the conclusion of the proof of Theorem~\ref{mainthm}:

\begin{proof}[Proof of Theorem~\ref{mainthm}]
This follows immediately from Theorem~\ref{thm:main.reduced} and Proposition~\ref{prop:hyp.est.for.prop.const}. \qedhere
\end{proof}

\section{Uniqueness and smoothness of solutions: proofs of Theorems~\ref{thm:uniq}  {and \ref{thm:smooth}}}\label{sec:uni}

We prove Theorems~\ref{thm:uniq} and \ref{thm:smooth} in Sections~\ref{sec:thm.uniq} and \ref{sec:thm.smooth} respectively.

\subsection{Uniqueness of solutions}\label{sec:thm.uniq}

\begin{proof}[Proof of Theorem~\ref{thm:uniq}]
Let ${^{(4)}g},{^{(4)}\tilde g}$ be two solutions to the Einstein vacuum equations \eqref{eq:EVE} satisfying the assumptions of Theorem~\ref{thm:uniq}. 

In this proof, \textbf{we use $C$ to denote positive constants depending only on $c_{ij}$ and $p_j$, and use $C'$ to denote positive constants which depend \underline{in addition} on the implicit constants in \eqref{uniqueness.condition.0}, \eqref{uniqueness.condition.1} and  \eqref{uniqueness.condition.2}.}

Notice that it suffices to prove uniqueness on a sub-domain $(0,T']\times \mathbb T^3$ (for some $0<T'<T$) since in the region $[T',T]\times \mathbb T^3$, we are away from the singularity, and uniqueness will follow from standard uniqueness results. For this reason, \textbf{we will take $T'$ sufficiently small so as to assume $C' (T')^\ve \leq 1$}.

\pfstep{Step~1: Estimating $k$ and $\tilde{k}$} Using the estimates \eqref{uniqueness.condition.0} and \eqref{uniqueness.condition.1}, and arguing as in Propositions~\ref{prop:kn.higher} and \ref{prop:kn.lowest}, we obtain
\begin{equation}\label{eq:k.est.in.uniq}
\sum_{r=0}^2 t^{r} ( \|\nab^{(r)} k\|_{L^\i(\Sigma_t,g)} + \|\tilde{\nab}^{(r)} \tilde{k}\|_{L^\i(\Sigma_t,g)}) \leq Ct^{-1},
\end{equation}
and 
\begin{equation}\label{eq:dtk.est.in.uniq}
\sum_{r=0}^1 t^{r} ( \| \nab^{(r)}\rd_t k \|_{L^\i(\Sigma_t,g)} + \| \tilde{\nab}^{(r)} \rd_t\tilde{k} \|_{L^\i(\Sigma_t,g)}) \leq Ct^{-2},
\end{equation}
where $\tilde{\nab}$ denotes the Levi--Civita connection of $\tilde{g}$.

\pfstep{Step~2: Estimating the convergence rate as $t\to 0^+$} Let 
\begin{equation}\label{eq:def.h.th}
 h = k_\ell{ }^\ell,\quad \tilde{h} = \tilde{k}_\ell{ }^\ell.
 \end{equation}
Define the variables 
\begin{align*}
g^{(d)}:=g-\tilde{g}, \quad (g^{-1})^{(d)} := g^{-1} - \tilde{g}, \quad h^{(d)} := h - \tilde{h},\quad k^{(d)} := k - \tilde{k}.
\end{align*}

Given any $M_u' \in \mathbb N$ we can choose $M_u$ sufficiently large so that by \eqref{uniqueness.condition.0} and \eqref{uniqueness.condition.2},
\begin{equation}\label{eq:diff.decay.in.uniq}
\|g^{(d)}\|_{H^2(\Sigma_t,g)} + \|(g^{-1})^{(d)}\|_{H^2(\Sigma_t,g)} + \|h^{(d)}\|_{H^2(\Sigma_t,g)} + \| k^{(d)} \|_{H^1(\Sigma_t,g)} \leq C' t^{M_u'}.
\end{equation}

Moreover, given any $M_u'' \in \mathbb N$ we can choose $M_u$ even larger so that by \eqref{uniqueness.condition.2},
\begin{equation}\label{eq:diff.Ric.decay.in.uniq}
\| Ric(g) - Ric(\tilde{g}) \|_{L^2(\Sigma_t,g)} \leq C' t^{M_u''}.
\end{equation}
Now since both ${^{(4)}g}$ and ${^{(4)}\tilde g}$ solve \eqref{eq:EVE}, the RHS of \eqref{Ricij} vanishes for both metrics. Hence, using  \eqref{eq:k.est.in.uniq}, \eqref{eq:diff.decay.in.uniq} and \eqref{eq:diff.Ric.decay.in.uniq}, we obtain
\begin{equation}\label{eq:diff.decay.in.uniq.2}
\|\rd_t k^{(d)} \|_{L^2(\Sigma_t,g)}\leq C' \max\{ t^{M_u'-1},\,t^{M_u''}\}.
\end{equation}

\pfstep{Step~3: Energy estimates} We now carry out energy estimates for $(g^{(d)},h^{(d)},k^{(d)})$. First, we note that they satisfy a system of equations analogous to \eqref{eq:k.diff}, \eqref{eq:h.diff}, \eqref{eq:g.diff}, \eqref{eq:g-1.diff} as follows.
\begin{itemize}
\item By definition of $k$ and $\tilde{k}$, we immediate obtain the transport equation $\rd_t g^{(d)} = -2\tilde k_{(i}{}^\ell g^{(d)}_{j)l}-2(k^{(d)})_{(i}{}^\ell g_{j)l}$.
\item By \eqref{eq:def.h.th} and \eqref{Rictt}, $h^{(d)}$ satisfies a transport equation $\rd_t h^{(d)} = |k|^2 - |\tilde{k}|^2$
\item Arguing as in Section~\ref{sec:actual}, it follows that both $k$ and $\tilde{k}$ satisfy the wave equation \eqref{eq:k.wave} (with the corresponding metric $g$ and $\tilde{g}$). We take the difference to obtain a wave equation for $k^{(d)}$.
\end{itemize}
Note that these equations are similar to but simpler than \eqref{eq:k.diff}, \eqref{eq:h.diff} and \eqref{eq:g.diff} in the sense that the system is  \emph{homogeneous}.

We can thus carry out energy estimates in exactly the same way as in the proof of Theorem~\ref{thm:main.reduced}, including using a modified energy together with elliptic estimates. In particular, defining 
\begin{equation*}
\begin{split}
\mathcal E_u(t) = &\: \sum_{r=0}^2 [ t^{-2+2r}(\|g^{(d)}\|_{H^r(\Sigma_t,g)}^2 + \|(g^{-1})^{(d)}\|_{H^r(\Sigma_t,g)}) + t^{2+2r} \|h^{(d)}\|_{H^r(\Sigma_t,g)}]\\
&\: + \sum_{r=0}^1 t^{2r} \|k^{(d)}\|_{L^2(\Sigma_t,g)} + t^2 \|\rd_t k^{(d)}\|_{L^2(\Sigma_t,g)}, 
\end{split}
\end{equation*}
we can run the energy estimates in Theorem~\ref{thm:main.reduced} using the bounds established in Steps~1 and 2 above.
\begin{itemize}
\item Estimates \eqref{eq:k.est.in.uniq} and \eqref{eq:dtk.est.in.uniq} in Step~1 guarantee that
\begin{equation}\label{eq:uniq.energy.1}
\f{\ud}{\ud t} \mathcal E_u(t) \leq \f Ct  \mathcal{E}_{u}(t)
\end{equation}
for some fixed constant $C>0$ depending only on the constants in \eqref{uniqueness.condition.1}.
\item Taking $C$ as in \eqref{eq:uniq.energy.1}, estimates \eqref{eq:diff.decay.in.uniq} and \eqref{eq:diff.decay.in.uniq.2} in Step~2 guarantee that if $M_u$ sufficiently large, then
\begin{equation}\label{eq:uniq.energy.2}
\limsup_{t\to 0^+} t^{-C} \mathcal E_u(t) =0.
\end{equation}
\end{itemize}

The bounds \eqref{eq:uniq.energy.1} and \eqref{eq:uniq.energy.2} immediately imply that $\mathcal E_u\equiv 0$, which in particular implies $g \equiv \tilde g$, which is what we wanted to prove.
\end{proof}

\subsection{Regularity of solutions}\label{sec:thm.smooth}
Our goal in this subsection is to prove Theorem~\ref{thm:smooth}. As already mentioned in the introduction, for the proof we rely on our uniqueness result. 

We first introduce a piece of notation for the rest of this subsection. Let $s \geq 5$ and $N_0 \in \mathbb N$. For $n\geq n_{N_0,s}$, Theorem~\ref{thm:main.reduced} and Proposition~\ref{prop:hyp.est.for.prop.const} give a solution to the Einstein vacuum equations of the form \eqref{metricansatz} which satisfies the estimates  \eqref{eq:main.reduced.est.in.thm}. \textbf{We denote such a solution by $g_{N_0,s,n}$ and denote the corresponding second fundamental form by $k_{N_0,s,n}$.}

We need the following lemma, which checks the conditions \eqref{uniqueness.condition.0} and \eqref{uniqueness.condition.1} in Theorem~\ref{thm:uniq}.
\begin{lemma}\label{lem:uniq.check.cond}
Let $n_{N_0,s}$ be as in Theorem~\ref{thm:bootstrap} and $N_G$, $n_G$ be as in Proposition~\ref{prop:hyp.est.for.prop.const}. There exists $N_c \geq N_G$ sufficiently large such that if $N_0\geq N_c$, $s\geq 5$ and $n \geq \max\{ n_{N_0,s}, n_G\}$, then for $g = g_{N_0,s,n}$ and $k = k_{N_0,s,n}$, there exists $C>0$ depending on $N_0$, $s$, $n$, $c_{ij}$ and $p_i$ such that
\begin{equation}\label{uniqueness.condition.1.0}
\sum_{|\alp|\leq 2} |\rd_x^\alp(a_{ij} - c_{ij})| \leq C t^\ve,
\end{equation}
\begin{equation}\label{uniqueness.condition.1.1}
\sum_{r=0}^1 \sum_{|\alp|\leq 2-r} t^r |\rd_t^r \rd_x^\alp (k_i{ }^j - t^{-1} \kappa_i{ }^j) | \leq C t^{-1+\ve}.
\end{equation}
\end{lemma}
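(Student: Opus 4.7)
The plan is to decompose the quantities of interest using the approximate solution as a reference point. Write
\[
a_{ij} - c_{ij} = (a_{ij} - a^{\bf [n]}_{ij}) + (a^{\bf [n]}_{ij} - c_{ij}),\qquad k_i{}^j - t^{-1}\kappa_i{}^j = (k_i{}^j - (k^{\bf [n]})_i{}^j) + ((k^{\bf [n]})_i{}^j - t^{-1}\kappa_i{}^j).
\]
The second summands in both decompositions are already controlled by \eqref{eq:main.parametrix.a.bd} and \eqref{eq:main.parametrix.k.bd} from Theorem~\ref{thm:parametrix}, by $Ct^{\ve}$ and $C\min\{t^{-1+\ve},t^{-1+\ve-2p_j+2p_i}\}\le Ct^{-1+\ve}$ respectively, for arbitrary multi-indices $\alp$. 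So the task reduces to proving, with losses that can be absorbed by a suitably large $N_0$, that the difference quantities $a^{(d)}_{ij}=a_{ij}-a^{\bf [n]}_{ij}$ and $(k^{(d)})_i{}^j=k_i{}^j-(k^{\bf [n]})_i{}^j$ (together with $\partial_t (k^{(d)})_i{}^j$) decay like a large power $t^{M}$ pointwise in coordinates, for $M$ arbitrarily large.

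For this I would invoke the energy estimate \eqref{eq:main.reduced.est.in.thm} from Theorem~\ref{thm:main.reduced}, which gives, for the solution $g_{N_0,s,n}$, $k_{N_0,s,n}$, geometric Sobolev bounds
\[
\|g^{(d)}\|_{H^{s+1}(\Sigma_t,g)}+\|(g^{-1})^{(d)}\|_{H^{s+1}(\Sigma_t,g)}+\|k^{(d)}\|_{H^{s}(\Sigma_t,g)}+\|\partial_t k^{(d)}\|_{H^{s-1}(\Sigma_t,g)}\le C t^{N_0}
\]
(up to losses of fixed powers of $t$). Since $s\ge 5$, the Sobolev embedding \eqref{eq:Sobolev.2} of Lemma~\ref{lem:Sobolev} yields pointwise $L^\infty$ bounds on $\nab^{(r)}$ derivatives for $r\le s-2\ge 3$, with a loss of at most $t^{-5/2}$ coming from the Sobolev constant. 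Thus, for any fixed number of geometric derivatives, the difference quantities are bounded in $L^\infty$ by $C t^{N_0-C_0}$ for some fixed $C_0$ depending only on $s$.

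The remaining step is to convert $L^\infty$ bounds on geometric covariant derivatives of tensors into pointwise bounds on coordinate partial derivatives of individual components. This is done by unfolding the Christoffel relation $\partial_x = \nab + \Gamma\cdot(\,\cdot\,)$ iteratively and using that $\Gamma$, raised/lowered index conversions, and the factors $\det g$, $g^{ij}$, $g_{ij}$ are all controlled by explicit (negative) powers of $t$ uniformly — exactly as in Lemmas~\ref{lem:inverse.0}, \ref{lem:coord.v.geo.norm}, \ref{lem:Wki.comp}, combined with the bootstrap-style bound \eqref{eq:a-c.improvements} already established for $g_{N_0,s,n}$. Each such conversion costs a bounded power of $t^{-1}$ depending only on $p_i,\,c_{ij},\,s,\,n$. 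Choosing $N_c$ large enough that $N_0-C_0-(\textit{conversion losses})\ge \ve$ absorbs all these losses and produces the desired $C t^{\ve}$ and $C t^{-1+\ve}$ rates.

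For the time-derivative piece of \eqref{uniqueness.condition.1.1}, one route is to use the $\partial_t k^{(d)}$ bound in \eqref{eq:main.reduced.est.in.thm} directly, handling its spatial derivatives by the same Sobolev embedding. Alternatively, since Proposition~\ref{prop:hyp.est.for.prop.const} certifies that $Ric({}^{(4)}g)=0$, the identity \eqref{Ricij} gives $\partial_t k_i{}^j=Ric(g)_i{}^j+k_\ell{}^\ell k_i{}^j$, and Proposition~\ref{prop:approx.ev.1} shows the same identity holds for $k^{\bf [n]}$, $g^{\bf [n]}$ up to negligible errors, so control of $\partial_t k^{(d)}$ reduces to control of $k^{(d)}$, $g^{(d)}$ and $Ric(g)-Ric(g^{\bf [n]})$, all already handled. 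The only real obstacle is the bookkeeping of $t$-losses through the Sobolev embedding and the geometric-to-coordinate conversion; there is no conceptual difficulty, which is precisely why the lemma is quantitative and requires the threshold $N_c$.
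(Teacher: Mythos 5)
Your proposal is correct and follows essentially the same route as the paper: triangle inequality against the approximate solution (Theorem~\ref{thm:parametrix}), the energy bound \eqref{eq:main.reduced.est.in.thm} with the Sobolev embedding \eqref{eq:Sobolev.2}, and a geometric-to-coordinate conversion whose losses are bounded powers of $t^{-1}$ (via the lower bound $\lambda_{\min}(g^{-1})\gtrsim t^{2}$ and the form $g_{ij}-g^{\bf [n]}_{ij}=t^{2p_{\max\{i,j\}}}(a_{ij}-a^{\bf [n]}_{ij})$), absorbed by taking $N_c$ large. The paper likewise handles $\partial_t k^{(d)}$ directly from the $\partial_t k^{(d)}$ term in \eqref{eq:main.reduced.est.in.thm}, matching your first route.
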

\begin{proof}
In this proof, we allow the implicit constants $C>0$ to depend on $N_0$, $s$, $n$, $c_{ij}$ and $p_i$.

We first prove \eqref{uniqueness.condition.1.0}. Since $s\geq 5$, by \eqref{eq:main.reduced.est.in.thm} and \eqref{eq:Sobolev.2}, we have
$$\sum_{r=0}^2 t^r \|g - g^{\bf [n]}\|_{W^{r,\i}(\Sigma_t,g)}  \leq C t^{N_0 +s-\f 32}.$$
Now note that the smallest eigenvalue of $g^{-1}$ is $\geq C^{-1} t^{-2p_1} \geq C^{-1} t^{2}.$
Hence,
$$|(g - g^{\bf [n]})_{ij}| + t^2|\nab_\ell (g - g^{\bf [n]})_{ij}| + t^4|\nab_b \nab_\ell (g - g^{\bf [n]})_{ij}| \leq C t^{N_0 +s-\f 72}. $$
Writing the covariant derivatives in terms of coordinate derivatives, using $g_{ij} - g^{\bf [n]}_{ij} = t^{2 p_{\max\{i,j\}}} (a_{ij} - a^{\bf [n]}_{ij})$, and choosing $N_c$ sufficiently large, we thus obtain 
\begin{equation}\label{eq:check.uniq.a.diff}
\sum_{|\alp|\leq 2}  |\rd_x^\alp(a_{ij} - a^{\bf [n]}_{ij})| \leq C t^\ve.
\end{equation}
The estimate \eqref{uniqueness.condition.1.0} then follows from \eqref{eq:check.uniq.a.diff}, \eqref{eq:main.parametrix.a.bd} and the triangle inequality.

The proof of \eqref{uniqueness.condition.1.1} is similar, where we first use \eqref{eq:main.reduced.est.in.thm} and \eqref{eq:Sobolev.2} to obtain
$$\sum_{r=0}^2 t^r \|k - k^{\bf [n]}\|_{W^{r,\i}(\Sigma_t,g)} + \sum_{r=0}^1 t^{r+1}\|\rd_t (k - \kn) \|_{W^{r,\i}(\Sigma_t,g)} \leq C t^{N_0 +s-\f 52}.$$
Then, after choosing $N_c$ sufficiently large, we can obtain the desired \eqref{uniqueness.condition.1.1} using \eqref{eq:main.parametrix.k.bd} and the triangle inequality. \qedhere

\end{proof}

We are now ready to prove Theorem~\ref{thm:smooth}:
\begin{proof}[Proof of Theorem~\ref{thm:smooth}]
Given $M_u$ as in Theorem~\ref{thm:uniq}, the following holds:
\begin{itemize}
\item There exists $n_r \in \mathbb N$ sufficiently large such that if $n,\,n'\geq n_r$, then
\begin{equation}\label{eq:proof.smooth.1}
\sum_{r=0}^{1} \sum_{|\alp|\leq 3-r} |\rd_t^r \rd_x^\alp (g^{\bf [n]} - g^{\bf [n']})| = O(t^{M_u}).
\end{equation}
This is because of the estimates \eqref{eq:k.diff.est} and \eqref{eq:a.diff.est} derived in the proof of Theorem~\ref{thm:parametrix}.
\item There exists $N_r \geq N_c$ (where $N_c$ is as in Lemma~\ref{lem:uniq.check.cond}) sufficiently large such that the following holds. Suppose $s \geq 5$, $N_0 \geq N_r$ and $n\geq n_{N_0,s}$, then
\begin{equation}\label{eq:proof.smooth.2}
\sum_{r=0}^{1} \sum_{|\alp|\leq 3-r} |\rd_t^r \rd_x^\alp (g_{N_0,s,n} - g^{\bf [n]})| = O(t^{M_u}).
\end{equation}
This is a direct consequence of \eqref{eq:main.reduced.est.in.thm} and Sobolev embedding.
\end{itemize}

Fix $(g_{N_0 = N_r,s=5,n_0}, k_{N_0 = N_r,s=5,n_0})$ on $(0,T_{N_0 = N_r,s=5,n_0}] \times \mathbb T^3$, where $n_0 \geq \max\{ n_{N_0 = N_r,s=5},\, n_r,\,n_G\}$. We want to show that this particular solution is in fact smooth. Let $s_0 \geq 5$ be arbitrary. By Theorem~\ref{thm:main.reduced} we obtain a solution $(g_{N_0=N_r,s=s_0,n},k_{N_0,s=s_0,n})$ on $(0,T_{N_0 = N_r,s=s_0,n}] \times \mathbb T^3$ for some $n \geq \max\{n_{N_0 = N_r,s=s_0},\, n_r,\,n_G\}$. We now claim that in fact on the common domain of existence $(0, \min \{ T_{N_0 = N_r,s=5,n_0},\, T_{N_0 = N_r,s=s_0,n} \} ]\times \mathbb T^3$, we have
\begin{equation}\label{eq:uniq.claim.in.reg}
g_{N_0 = N_r,s=5,n_0} \equiv g_{N_0=N_r,s=s_0,n}.
\end{equation}
To prove the claim, it suffices to verify the conditions of Theorem~\ref{thm:uniq}:
\begin{itemize}
\item Since $s\geq 5$ and $N_0 = N_r \geq N_c$, the conditions \eqref{uniqueness.condition.0} and \eqref{uniqueness.condition.1} hold because of Lemma~\ref{lem:uniq.check.cond}.
\item By \eqref{eq:proof.smooth.1}, \eqref{eq:proof.smooth.2} and the triangle inequality, our choice of $n_0$, $n$, $N_0$, $s$ implies that
$$\sum_{r=0}^{1} \sum_{|\alp|\leq 3-r} |\rd_t^r \rd_x^\alp (g_{N_0 = N_r,s=5,n_0} - g_{N_0=N_r,s=s_0,n} )| = O(t^{M_u}),$$
i.e.~\eqref{uniqueness.condition.2} holds.
\end{itemize}
This establishes \eqref{eq:uniq.claim.in.reg}.

As a result of \eqref{eq:uniq.claim.in.reg}, it follows that the fixed solution $(g_{N_0 = N_r,s=5,n_0}, k_{N_0 = N_r,s=5,n_0})$ is in $H^{s_0+1}\times H^{s_0}$ for every $t\in (0, \min \{ T_{N_0 = N_r,s=5,n_0},\, T_{N_0 = N_r,s=s_0,n} \} ]$. Now we use energy estimates as in the proof of Theorem~\ref{thm:main.reduced} to show \emph{propagation of regularity}: it then follows that the solution is in $H^{s_0+1}\times H^{s_0}$ for every $t$ in the original time interval, i.e.~for every $t\in (0,  T_{N_0 = N_r,s=5,n_0}]$.

Since $s_0$ can be arbitrarily large, it follows from Sobolev embedding and the equations \eqref{eq:hyp.sys} that the fixed solution $(g_{N_0 = N_r,s=5,n_0}, k_{N_0 = N_r,s=5,n_0})$ is in fact smooth in $(0,  T_{N_0 = N_r,s=5,n_0}]\times \mathbb T^3$. This concludes the proof of the theorem.
\end{proof}

\def\cprime{$'$} \def\cprime{$'$} \def\cprime{$'$}

\end{document}